\documentclass[11pt]{article}
\pdfoutput=1
\usepackage[utf8]{inputenc}
\usepackage[T1]{fontenc}
\usepackage[margin=1in]{geometry}
\usepackage{amsmath,amsthm,amssymb,thmtools,thm-restate,booktabs,color,doi,graphicx,latexsym,url,xcolor,xspace}
\usepackage[numbers,sort&compress]{natbib}
\usepackage{microtype,hyperref}
\usepackage[inline]{enumitem}
\usepackage{subcaption}
\setlist[itemize]{label=--}
\setlist[enumerate]{label=(\arabic*),labelindent=\parindent,leftmargin=*}

\usepackage{sectsty}
\allsectionsfont{\boldmath}

\definecolor{citecolor}{HTML}{0000C0}
\definecolor{urlcolor}{HTML}{000080}

\hypersetup{
    colorlinks=true,
    linkcolor=black,
    citecolor=citecolor,
    filecolor=black,
    urlcolor=urlcolor,
    pdftitle={Near-Optimal Leader Election in Population Protocols on Graphs}, %
    pdfauthor={Dan Alistarh, Joel Rybicki, Sasha Voitovych} %
}

\newtheorem{theorem}{Theorem}
\newtheorem{lemma}[theorem]{Lemma}
\newtheorem{corollary}[theorem]{Corollary}
\newtheorem{proposition}[theorem]{Proposition}

\newenvironment{myabstract}
{\list{}{\listparindent 1.5em%
		\itemindent    \listparindent
		\leftmargin    1cm
		\rightmargin   1cm
		\parsep        0pt}%
	\item\relax}
{\endlist}

\newenvironment{mycover}
{\list{}{\listparindent 0pt
		\itemindent    \listparindent
		\leftmargin    1cm
		\rightmargin   1cm
		\parsep        0pt}%
	\raggedright
	\item\relax}
{\endlist}

\newcommand{\myemail}[1]{\,$\cdot$\, {\small #1}}
\newcommand{\myaff}[1]{\,$\cdot$\, {\small #1}\par\medskip}

\hypersetup{
 linkcolor=black
}

\newcommand{\namedref}[2]{\hyperref[#2]{#1~\ref*{#2}}}
\newcommand{\sectionref}[1]{\namedref{Section}{#1}}
\newcommand{\figureref}[1]{\namedref{Figure}{#1}}
\newcommand{\tableref}[1]{\namedref{Table}{#1}}
\newcommand{\equationref}[1]{\hyperref[#1]{Eq~(\ref*{#1})}}
\newcommand{\theoremref}[1]{\hyperref[#1]{Theorem~\ref*{#1}}}
\newcommand{\theoremrefshort}[1]{\hyperref[#1]{Thm.~\ref*{#1}}}

\newcommand{\lemmaref}[1]{\hyperref[#1]{Lemma~\ref*{#1}}}
\newcommand{\noteref}[1]{\hyperref[#1]{note~\ref*{#1}}}
\newcommand{\appendixref}[1]{\hyperref[#1]{Appendix~\ref*{#1}}}
\newcommand{\corollaryref}[1]{\hyperref[#1]{Corollary~\ref*{#1}}}

\renewcommand{\vec}[1]{\mathbf{#1}}
\newcommand{\var}[1]{\texttt{#1}}

\DeclareMathOperator*{\E}{\mathbf{E}}

\DeclareMathOperator*{\polylog}{polylog}

\DeclareMathOperator*{\dist}{dist}
\DeclareMathOperator*{\Geom}{Geom}

\newcommand{\Bin}[2]{\operatorname{Bin} \left(#1,#2\right)}

\newcommand{\tbroadcast}{\vec B}
\newcommand{\tbroadcastg}{\tbroadcast(G)}

\newcommand{\finit}{\mathsf{init}}
\newcommand{\fout}{\mathsf{out}}

\newcommand{\hit}[1]{\vec{H}^{\mathcal P}(#1)}
\newcommand{\chit}[1]{\vec{H}(#1)}
\newcommand{\stateparam}{h}

\newcommand{\exec}[1]{\operatorname{Exec}(#1)}
\newcommand{\stable}[1]{\operatorname{Stable}(#1)}

\newcommand{\alllowcount}[1]{\operatorname{LowCount}(#1)}

\begin{document}

\begin{mycover}
	{\huge\bfseries\boldmath Near-Optimal Leader Election in Population Protocols on Graphs \par}

	\bigskip
	\bigskip

\textbf{Dan Alistarh}
\myemail{dan.alistarh@ista.ac.at}
\myaff{Institute of Science and Technology Austria}

\textbf{Joel Rybicki}
\myemail{joel.rybicki@hu-berlin.de}
\myaff{Humboldt University of Berlin}

\textbf{Sasha Voitovych}
\myemail{sasha.voitovych@mail.utoronto.ca}
\myaff{University of Toronto}

\bigskip
\end{mycover}

\medskip
\begin{myabstract}
  \noindent\textbf{Abstract.}
  In the \emph{stochastic population protocol} model, we are given a connected graph  with $n$ nodes, and in every time step, a scheduler samples an edge of the graph uniformly at random and the nodes connected by this edge interact. A fundamental task in this model is \emph{stable leader election}, in which all nodes start in an identical state and the aim is to reach a  configuration in which (1) exactly one node is elected as leader and (2) this node remains as the unique leader no matter what sequence of interactions follows. On \emph{cliques}, the complexity of this problem has recently been settled: time-optimal protocols stabilize in $\Theta(n \log n)$ expected steps using $\Theta(\log \log n)$ states, whereas protocols that use $O(1)$ states require $\Theta(n^2)$ expected steps.

In this work, we investigate the complexity of stable leader election on graphs. We provide the first non-trivial time lower bounds on general graphs, showing that, when moving beyond cliques, the complexity of stable leader election can range from $O(1)$ to $\Theta(n^3)$ expected steps. We describe a protocol that is time-optimal on many graph families, but uses polynomially-many states. In contrast, we give a near-time-optimal protocol that uses only $O(\log^2n)$ states that is at most a factor $O(\log n)$ slower. Finally, we observe that for many graphs the constant-state protocol of Beauquier et al. [OPODIS 2013] is at most a factor $O(n \log n)$ slower than the fast polynomial-state protocol, and among constant-state protocols, this protocol has near-optimal \emph{average case complexity} on dense random graphs.
\end{myabstract}

\thispagestyle{empty}
\setcounter{page}{0}
\newpage

\section{Introduction}

Leader election is one of the most fundamental symmetry-breaking problems in distributed computing~\cite{angluin1980local}: given a distributed system consisting of $n$ identical nodes, the goal is to designate exactly one node as a leader and all others as followers.  In this work, we study the computational complexity of leader election in the \emph{stochastic population protocol model}, a popular model of distributed computation among a population of (initially) indistinguishable agents that reside on a graph and interact in an unpredictable, random manner~\cite{angluin2006computation,aspnes2009introduction}.

\subsection{The stochastic population model on graphs}

In the stochastic population protocol model, or simply the \emph{population model}, the system is described by a finite, connected graph $G = (V,E)$ with $n$ nodes. Each node represents an agent, corresponding to a finite state automaton.  Initially, all nodes are identical and anonymous.

\paragraph{Model of computation}
In the population model, computation proceeds asynchronously, in a series of random pairwise interactions between neighbouring nodes in the graph $G$. In each discrete time step,
\begin{enumerate}
\item the scheduler samples an ordered pair $(u,v)$ \emph{uniformly at random} among all pairs of nodes connected by an~edge in $G$, 
\item the selected nodes $u$ and $v$ interact by exchanging information and updating their local states, and
  \item every node maps their local state to an output value. 
\end{enumerate}
When the scheduler selects the ordered pair $(u,v)$ of nodes upon an interaction step, we say that $u$ is the \emph{initiator} of the interaction and $v$ is the \emph{responder}. The algorithm is described by a state transition function, which is typically given by a collection of local update rules of the form $A + B \to C + D$, where $A$ and $B$ are the states of the initiator and the responder at the start of an interaction, and $C$ and $D$ are the resulting states after the interaction.

\paragraph{Stable leader election}
In the case of leader election, nodes have two possible output values to indicate whether they are a \emph{leader} or a \emph{follower}.
The goal is to design the local update rules so that the system reaches a \emph{stable configuration} in which (1) exactly one node $v \in V$ is elected as the leader and all other nodes are followers and (2) the node $v$ remains as the unique leader no matter what sequence of interactions follows (i.e., all  configurations reachable from a stable configuration have the same output).

\paragraph{Complexity measures}
The \emph{time complexity} is measured by \emph{stabilization time}, which is the \emph{total number of interaction steps} needed to reach a stable configuration. The typical aim is to guarantee that stabilization time is small both in expectation and with high probability. Finally, we measure \emph{space complexity} as the maximum number of distinct node states employed by the protocol.

\subsection{Prior work on leader election in the population model}

Already the foundational work on population protocols raised the question how the structure of the interaction graph influences the computational power~\cite{angluin2006computation,aspnes2009introduction,angluin2008self} and complexity~\cite{angluin2005stably} of stable computation in the population model. In particular, the complexity of stable leader election on general interaction graphs has remained an open problem. Instead, most work in this area has focused on a special case where the interaction graph is restricted to be a clique~\cite{elsaesser-survey,alistarh-survey}. While this special case naturally corresponds to \emph{well-mixed systems}, it is often too simplistic when modelling systems where the interaction patterns among agents are influenced by some underlying spatial structure.

Leader election is known to be an especially important problem in the population model: for instance, the early work of Angluin, Aspnes and Eisenstat~\cite{angluin2008fast-computation} showed that having a leader can be useful in the population model on cliques: semilinear predicates can be stably computed in $O(n \log^5 n)$ time, and randomized LOGSPACE computation can be performed with small error~\cite{angluin2008fast-computation}. The above result has motivated a vast amount of follow-up work on the complexity of leader election on \emph{cliques}~\cite{doty2018stable,alistarhg15,alistarh2017time,berenbrink2018simple,gasieniec2018almost,gasieniec2018fast,gkasieniec2020enhanced,sudo2019logarithmic,sudo2020leader,berenbrink2020optimal,elsaesser-survey,alistarh-survey}. By now, the complexity of leader election on the clique is well-understood: there exists a protocol that solves leader election in $\Theta(n \log n)$ expected steps using $\Theta(\log \log n)$ states per node~\cite{berenbrink2020optimal}, which is optimal.  To elect a leader in the clique model, all protocols require $\Omega(n \log n)$ expected steps~\cite{sudo2020leader}, any $o(\log \log n)$-state protocol requires $n^2/\polylog n$ expected steps~\cite{alistarh2017time} and the time complexity bound for constant-state protocols is $\Theta(n^2)$ expected steps~\cite{doty2018stable}.

Somewhat surprisingly, much less is known about the complexity of leader election on general interaction graphs. Angluin, Aspnes, Fischer and Jiang~\cite{angluin2008self} showed that \emph{self-stabilizing} leader election is not generally possible on all connected interaction graphs. At the same time, Beauquier, Blanchard and Burman~\cite{beauquier2013self} showed that there exists a constant-state protocol that solves stable leader election as long as self-stabilization is not required. Subsequently, research on leader election in the population model has largely fallen into two  categories: (1) work that tries to understand computational complexity and space-time complexity trade-offs of leader election under uniform random pairwise interactions on the clique~\cite{doty2018stable,alistarhg15,berenbrink2018simple,gasieniec2018almost,gasieniec2018fast,gkasieniec2020enhanced,sudo2019logarithmic,sudo2020leader,berenbrink2020optimal,timeoptimalLE}, and (2) work that aims to understand in which interaction graphs and under what model assumptions leader election can be solved in, e.g., a fault-tolerant manner~\cite{angluin2008self,beauquier2013self,chen2019self,chen2020ssle,sudo2016same,sudo2020loosely,sudo2021self,yokota2020time}.

An interesting question left open by this line of work is the \emph{computational complexity} of stable leader election, without the requirement of self-stabilization, on \emph{general interaction  graphs}~\cite{alistarh2021graphical}. One reason is that algorithmic~\cite{elsaesser-survey,alistarh-survey,gkasieniec2020enhanced,sudo2019logarithmic,berenbrink2020optimal,timeoptimalLE} and lower bound techniques~\cite{doty2018stable,alistarh2017time,sudo2020leader} developed for the clique model do not readily extend to the case of general interaction graphs. More broadly, establishing tight bounds for randomized leader election is known to be challenging even in well-studied synchronous models of distributed computing, such as the LOCAL and CONGEST models~\cite{peleg00locality, kutten2015complexity}.

\subsection{Limits of existing techniques}

Existing \emph{upper bound techniques} on the clique naturally rely on the fact that every pair of nodes can potentially interact.  Specifically, fast and space-efficient algorithms~\cite{gkasieniec2020enhanced,berenbrink2020optimal,timeoptimalLE} combine (1) fast information dissemination dynamics, typical for the clique, with (2) careful time-keeping across ``juntas of nodes'' to obtain space-efficient phase clocks. It is not straightforward to generalize either of these techniques to e.g.\ sparse or poorly-connected~graphs.

 The only existing work to explicitly consider the complexity of non-self-stabilizing leader election on graphs is by Alistarh, Gelashvili and Rybicki~\cite{alistarh2021graphical}, whose general goal is to find general ways of porting clique-based algorithms to regular interaction graphs. (Chen and Chen~\cite{chen2020ssle} considered complexity of \emph{self-stabilizing} leader election in regular graphs, but this is computationally harder than stable leader election~\cite{angluin2008self}.)
Specifically for regular graphs, Alistarh et al.~\cite{alistarh2021graphical} gave a leader election protocol that stabilizes in $1/\phi^2 \cdot n \polylog n$ steps in expectation and with high probability and uses $1/\phi^2 \cdot \polylog n$ states, where $\phi$ is the conductance of the interaction graph. While this approach yields to fairly efficient leader election protocols in graphs with high conductance, it performs poorly in low-conductance graphs. For example, on cycles the protocol uses $n^2 \polylog n$ states and requires $\Omega(n^3)$ steps to stabilize.

Alistarh et al.~\cite{alistarh2021graphical} also showed that the constant-state protocol of Beauquier et al.~\cite{beauquier2013self} stabilizes in the order of $Dmn^2 \log n$ steps in expectation and with high probability on any graph with diameter~$D$ and $m$ edges. This upper bound can be further refined to $O(\vec C(G) \cdot n \log n)$, where $\vec C(G)$ is the cover time of a classic random walk on the graph $G$, by leveraging the recent results of Sudo, Shibata, Nakamura, Kim and Masuzawa~\cite{sudo2021self}. However, beyond the case of cliques, there are no results indicating whether this bound could be improved. 

Specifically, existing \emph{lower bound techniques} for population protocols on the clique~\cite{doty2018stable,alistarh2017time,sudo2020leader} do not directly generalize to general graphs.  
In particular, such approaches  usually rely on the fact that short executions can lead to ``populous'' configurations which have large ``leader generating'' sets of nodes; then, by carefully interleaving interactions between nodes in such sets, short executions can be extended to create new leaders. This suggests that short executions are unlikely to yield stable configurations. 
However, to create new leaders, existing arguments require the set of nodes to be connected in the underlying graph. This is straightforward on the clique, but non-trivial for general graphs.

The situation seems even more challenging when trying to establish space-time complexity trade-offs, such as showing that constant-state protocols cannot run in sublinear time. In this case, the only known approach is the \emph{surgery technique}~\cite{doty2018stable,alistarh2017time}, which requires keeping track of the distribution of certain states that can be used to generate a leader. On \emph{general graphs}, one would therefore also need to 
keep track of the \emph{spatial distribution} of  states created by the protocol, which appears highly complex for general protocols and interaction graphs.

To illustrate the difficulty of extending the above techniques to general interaction graphs, a useful exercise is to consider the case of star graphs: there is a $O(1)$-state protocol that elects a leader in a single interaction in any star of size $n$. Thus, the lower bound for constant-state protocols or the general lower bound of $\Omega(n \log n)$ expected steps cannot hold in general, as in some graphs, the graph structure can be used to break symmetry~fast using a small number of states.

\subsection{Our contributions}

In this work, we give new upper bounds and lower bounds for stable leader election in the population protocol model on general graphs. For many graph families, we obtain either tight or almost tight bounds; please see \tableref{tab:summary} for a summary of our results. We detail our main results in the following.

\begin{table}[t]
{
  \begin{tabular}{@{}lllll@{}}
    \toprule
Graph family             & Stabilization time  & States & Reference \\ \midrule
General        & $O(\tbroadcast(G) + n\log n )$   & $O(n^4)$         &    \theoremref{thm:le-in-broadcast-time}    \\
                       & $O(\tbroadcast(G)\cdot\log n)$   & $O(\log^2 n)$      &    \theoremref{thm:fast-le}     \\
                       & $O( \chit{G} \cdot n\log n)$   & $O(1)$         & \cite{beauquier2013self} + \cite{sudo2021self} + \theoremref{thm:6-state-le-cover-time}  \\ \midrule
Renitent*        & $\Omega(\tbroadcast(G))$   &   $\infty$      &  \theoremref{thm:constructive-lb} &     \\
                         \midrule
                         
Regular      & $O(\phi^{-2} n \log^6 n)$   & $O( \phi^{-2}\log^7 n)$      &  \cite{alistarh2021graphical}     \\
& $O(\phi^{-1}n \log n)$   & $O(n^3)$      &    \theoremref{thm:le-in-broadcast-time}     \\
& $O(\phi^{-1} n\log^2 n)$   & $O(\log n \cdot \log (\phi^{-1} \cdot \log n))$      &    \corollaryref{cor:fast-le-regular}     \\ 

& $O(\phi^{-1} n^2 \log^2 n)$   & $O(1)$         &    \theoremref{thm:6-state-le-cover-time}   \\ \midrule

Cliques  & $\Theta(n \log n)$ & $\Theta(\log \log n)$ & \cite{sudo2020leader,alistarh2017time,berenbrink2020optimal} \\
& $\Theta(n^2)$ & $O(1)$ & \cite{doty2018stable} \\ \midrule
Dense random**  & $\Theta(n \log n)$   & $O(n^4)$         & \theoremref{thm:dense-lower-bound-general} + \ref{thm:le-in-broadcast-time}  \\
                      & $O(n \log^2 n)$   & $O(\log^2 n)$         &   \theoremref{thm:fast-le}  \\
& $o(n^2)$ impossible & $O(1)$         &  \theoremref{thm:nsquared-lb} \\ \
& $O(n^2 \log^2 n)$  & $O(1)$         &  \theoremref{thm:6-state-le-cover-time} \\ \midrule
Stars        & $O(1)$   & $O(1)$  &  Trivial \\ \bottomrule
\end{tabular}
  \caption{Complexity bounds for stable leader election. Stabilization time refers to the expected number of steps required to reach a stable configuration. Here $\tbroadcast(G)$ is a characteristic of information dissemination dynamics in the population model defined in \sectionref{sec:broadcast}. The quantity $\chit{G}$ is the worst-case expected hitting time of a simple, classic random walk on $G$.  
  The $O(\log^2n)$ and $O(1)$-state protocols also stabilize in the reported time w.h.p. For regular graphs, $\phi$ denotes the conductance of the graph. (*) We define the class of renitent graphs in \sectionref{sec:lower}. (**) For dense random graphs, the bounds are for \emph{average-case complexity}, when the input graph is an Erd\"os-Re\'nyi random graph $G \sim G_{n,p}$ for any constant $p>0$. Here, the upper bounds follow from the fact that broadcast time is $O(n \log n)$ with high probability in these graphs.
  }
\label{tab:summary}
}
\end{table}

\paragraph{Bounds on information propagation in the population model}

We phrase our upper bounds in terms of \emph{worst-case expected broadcast time} $\tbroadcastg$ on the graph $G$.  Informally, $\tbroadcastg$ denotes the maximum expected time until a broadcast originating from a single node reaches all other nodes in the graph $G$. This process is often called ``one-way epidemics'' in the population protocol literature~\cite{angluin2008fast-computation}. %

In  \sectionref{sec:broadcast}, we establish the worst-case broadcast time upper bounds of $O(mD + m\log n)$ and $O(m/\beta \cdot \log n)$ for any $m$-edge graph with diameter $D$ and edge expansion $\beta$. Thus, $\tbroadcastg \in O(n^3)$ for any connected graph, but it can often be much smaller. We also provide lower bounds on the time that information propagates to a given distance $k$. These bounds are then used to lower bound both broadcast time and leader election time for general protocols.

While for regular graphs these dynamics correspond to well-studied asynchronous rumour spreading~\cite{chierichetti2010rumour,giakkoupis2016asynchrony}, when the graph is not regular, the dynamics in the population model behave differently: the interaction rate of a node depends on its degree. 

\paragraph{Fast space-efficient leader in close-to-broadcast time}
We first observe that, if we disregard space complexity, there exists a simple protocol that solves leader election in $O(\tbroadcastg + n \log n)$ expected steps, on any connected graph $G$: 
nodes can generate unique identifiers, and then broadcast them to elect a leader. While this protocol is time-optimal on many graphs, generating unique identifiers will require polynomially-many states, and thus, results in a high space complexity.

Our first contribution is a \emph{space-efficient protocol} that elects a leader in $O(\tbroadcastg \cdot \log n)$ steps in expectation and with high probability using only $O(\log n \cdot \stateparam(G))$ states, where $\stateparam(G) \in O(\log n)$ is a parameter depending on the broadcast time $\tbroadcastg$. 
Contrasting to the identifier-based approach, this space-eficient protocol achieves exponentially smaller space complexity of $O(\log^2n)$, with a factor $O(\log n)$ increase in stabilization time.

Our protocol builds on a time-optimal approach on the clique by Sudo, Fukuhito, Izumi, Kakugawa and Masuzawa~\cite{timeoptimalLE}, and significantly improves upon the state-of-the-art on general graphs. Specifically, Alistarh, Gelashvili and Rybicki~\cite{alistarh2021graphical} gave a protocol for leader election on $\Delta$-regular graphs that stabilizes in $O(n/\phi^{2} \cdot  \log^6 n)$ expected steps and uses $O(\log^7 n / \phi^2)$ states per node, where $\phi =\beta/\Delta$ is the conductance of the graph. Our protocol has stabilization time $O( n/ \phi \cdot \log^2 n)$ on regular graphs; this improves the dependency on the conductance $\phi$ by a linear factor and the polylogarithmic dependence from $\log^6n$ to $\log^2n$. In terms of space complexity, we get an \emph{exponential} improvement in conductance, as the parameter $\stateparam$ in the space complexity bound satisfies $\stateparam \in O(\log\log n + \log(1/\phi))$ in regular graphs.

We emphasize that our protocol also works in \emph{non-regular graphs}, and guarantees that the elected leader has degree $\Theta(\Delta)$ with high probability. Our protocol has high-degree nodes driving a space-efficient and approximate distributed phase clock: nodes with degree $\Theta(\Delta)$ generate ``clock ticks'' roughly every $\tbroadcast(G)$ steps with high probability. With this in place, we devise a protocol in which high-degree nodes participate in a tournament that lasts for $O(\log n)$ phases, and each phase lasts for $O(\tbroadcast(G))$ steps.

\paragraph{Time lower bounds for general protocols}

On the negative side, we show how to construct families of graphs in which leader election and broadcast have the same asymptotic time complexity. Our approach is based on a probabilistic indistinguishability argument similar in spirit to the lower bound argument of Kutten, Pandurangan, Peleg, Robinson and Trehan~\cite{kutten2015complexity} for randomized leader election in the synchronous LOCAL and CONGEST models. However, in the population model, communication patterns are asynchronous and stochastic instead of synchronous, so we need a more refined approach to establish the lower bounds.

Roughly speaking, we show that if (a) the nodes of the graph can be divided into constantly many subsets $V_1, \ldots, V_K$ such that the local neighbourhoods of these sets are isomorphic up to some distance $\ell$ and (b) there are sets whose distance-$\ell$ neighbourhoods are disjoint, then any leader election protocol must propagate information at least up to distance $\ell$ to reach a stable configuration. If  this propagation takes at least $f(n)$ steps with at least a constant probability, then we get a lower bound of order $f(n)$ for the expected stabilization time. We call such graphs $f$-renitent (see \sectionref{sec:lower} for a formal definition).

In general, it is fairly straightforward to construct graphs with diameter $\Theta(D)$ and $\Theta(m)$ edges, which are $\Omega(Dm)$-renitent for any  $1 \le D \le n$ and $n \le m \le n^2$. Moreover, in these graphs broadcast time is $\Theta(Dm)$. Our proof  works for a general variant of the population model, in which we do not restrict the state space of the protocol and give each node an \emph{infinite} stream of uniform, fair random bits that assign unique identifiers for each node with probability 1 at the start of the execution.  Finally, we also show that in \emph{any} sufficiently dense graph, leader election requires $\Omega(n \log n)$ expected steps. This part of the argument extends the lower bound argument of Sudo and Masuzawa~\cite{sudo2020leader} from cliques to dense graphs.

\paragraph{Worst-case and average-case complexity of constant-state protocols}
As a baseline result, we observe that the constant-state protocol of Beauquier et al.~\cite{beauquier2013self} stabilizes in $O(\chit{G} \cdot n \log n)$ steps in expectation and with high probability, where $\chit{G}$ is the worst-case expected hitting time of \emph{a classic random walk} on $G$. For any connected graph, it is known that $\chit{G} \in O(n^3)$, and if the graph is regular, then $\chit{G} \in O(n^2)$ holds~\cite{levin2017markov}. 

It follows from the analyses of Alistarh et al.~\cite{alistarh2021graphical} and Sudo et al.~\cite{sudo2021self} that this protocol stabilizes in $O(\chit{G} \cdot n \log n)$ steps in expectation and with high probability. %
In the preliminary conference version of this paper~\cite{alistarh2022leader}, we erroneously reported a running time in terms of $O(\tbroadcastg \cdot n \log n)$; the analysis was unfortunately flawed. It remains an interesting question to determine if we can bound the hitting times of tokens in the population model also as a function of the broadcast time $\tbroadcastg$, since this would immediately imply a bound for the constant-state protocol as a function of the broadcast time as well. This would allow for a cleaner comparison between the running times of the different protocols.

As our final contribution, we show that, in the class of constant-state protocols, the \emph{average-case complexity} of the protocol by Beauquier et al.~\cite{beauquier2013self} on dense random graphs is optimal up to a $O(\log n)$ factor. More formally, we show that the stabilization time of any leader election protocol, that works on all connected graphs, cannot be $o(n^2)$ on any connected Erd\"os-R\'enyi random graph $G \sim G_{n,p}$ with high probability for any constant $p>0$. This is tight up to a logarithmic factor, as the hitting time satisfies $\chit{G} \in O(n)$ asymptotically almost surely in these graphs~\cite{lowe2014hitting}. Therefore, the 6-state protocol stabilizes in $O(n^2 \log n)$ steps with %
probability $1-o(1)$ %
on a random graph $G \sim G_{n,p}$. 
As $\tbroadcastg \in \Omega(n)$ for any graph $G$, this implies that, on these graphs, constant-state protocol of Beauquier et al.~\cite{beauquier2013self} stabilizes slower than optimal by a factor of at most $O(n\log n)$.

To show the lower bound, we extend the \emph{surgery technique}, used to prove space-time lower bounds for population protocols so far only in the clique~\cite{doty2018stable,alistarh2017time}, to the case of (highly) dense random graphs.

\section{Preliminaries}

We now establish some key definitions and notation used throughout the paper.

\subsection{Graphs}

Let $G = (V,E)$ be a undirected graph, where $V = V(G)$ is the set of nodes and $E(G) = E$ is the set of edges of the graph. We use $n = |V(G)|$ to denote the number of nodes and $m = |E(G)|$ the number of edges.
The degree $\deg(v)$ of a node $v \in V$ is the number of edges incident to it.
We use $\Delta = \max\{ \deg(v) : v \in V\}$ to denote the maximum degree and $\delta = \min \{ \deg(v) : v \in V\}$ the minimum degree of the graph. Unless otherwise specified, we assume all graphs are connected.

Given a nonempty node set $S \subseteq V(G)$, the edge boundary $\partial S$ of $S$ is the set
\[
\partial S = \left\{ \{u,v\} \in E(G) : u \in S, v \in V(G) \setminus S \right\}.
\]
The \emph{edge expansion} of the graph $G$ is given by
\[
\beta(G) = \min \left\{ \frac{|\partial S|}{|S|} : \emptyset \neq S \subseteq V(G), |S| \le n/2 \right \}.
\]
We define $G[S]$ to be a subgraph of $G$ induced on vertices of $S$.

The distance between two vertices $u$ and $v$ is denoted by $\dist(u,v)$. The radius-$r$ neighbourhood of $u$ is $B_r(u) = \{ v \in V : \dist(u,v) \le r \}$ and 
\[
B_r(U) = \bigcup_{u \in U} B_r(u).
\]
For $r=1$, we use the short-hand $B(u)=B_1(u)$. The diameter $D(G)$ is given by $D(G) = \max \{ \dist(u,v) : u,v \in V(G)\}$. For any two graphs $G$ and $H$, we write $G \simeq H$ if they are isomorphic.

For random graphs, we use the Erd\H{o}s--R\'enyi random graph model $G_{n,p}$. In this model, a random graph $G \sim G_{n,p}$ is sampled as follows. We start with $n$ nodes and for each $u,v \in V$ such that $u \neq v$, we add the edge $\{u,v\}$ with probability $p$ independently of all other edges.

\subsection{Population protocols on graphs}
A (stochastic) \emph{schedule} on a graph $G$ is an infinite sequence $(e_t)_{t \ge 1}$ of ordered pairs of nodes $(v,u)$, where each $e_t$ is sampled independently and uniformly at random among all pairs of nodes connected by an edge in $G$ (there are $2m$ such pairs). The order of nodes in the pair is used to distinguish between initiator and a responder.  A \emph{protocol} is a tuple $\mathcal{A} = (\Lambda, \Xi, \Sigma_{\textrm{in}}, \Sigma_{\textrm{out}}, \finit, \fout)$, where
$\Lambda$ is the set of states,
$\Xi \colon \Lambda \times \Lambda \to \Lambda \times \Lambda$ is the state transition function,
$\Sigma_{\textrm{in}}$ and $\Sigma_{\textrm{out}}$ are the sets of input and output labels, respectively,
$\finit \colon \Sigma_{\textrm{in}} \to \Lambda$ is the initialization function, and
$\fout \colon \Lambda \to \Sigma_{\textrm{out}}$ is the output function.

A configuration is a map $x \colon V \to \Lambda$, where $x(v)$ is the state of the node $v$ in configuration $x$.  For any $e = (u,v)$ and configurations $x$ and $x'$, we write $x \Rightarrow_{e} x'$ if $x'(u),x'(v) =  \Xi(x(u),x(v))$ and $x'(w) = x(w)$ for all $w \in V \setminus \{u,v\}$. For any sequence $\sigma = (e_1, \ldots, e_t)$ we write $x_0 \Rightarrow_{\sigma} x_t$ if $x_i \Rightarrow_{e_{i+1}} x_{i+1}$ for each $i \ge 0$. We say that $x'$ is reachable from $x$ on  $G$ if there exists some $k \ge 1$ and $\sigma = (e_1, \ldots, e_k)$ such that $x \Rightarrow_\sigma x'$. Given input $f \colon V(G) \to \Sigma_{\textrm{in}}$, a protocol and  a schedule $(e_t)_{t \ge 1}$, an \emph{execution} is the infinite sequence $(x_t)_{t \ge 0}$ of configurations, where $x_0 = \finit \circ f$ is the initial configuration and $x_t \Rightarrow_{e_{t+1}} x_{t+1}$ for $t \ge 0$.
Note that throughout the paper, the time step $t$ denotes the total number of pairwise interactions that have occurred so far. 

In the case of leader election, we assume that the input is a constant function, unless otherwise specified. That is, all nodes start in the same initial state. We say that a configuration $x$ is \emph{correct} if  $\fout(x(v)) = \mathsf{leader}$ for exactly one node $v \in V$ and for all $u \in V \setminus \{ v \}$ we have $\fout(x(u)) = \mathsf{follower}$. A configuration $x$ is \emph{stable} if for every configuration $x'$ reachable from $x$ we have $\fout(x(v)) = \fout(x'(v))$ for every node $v \in V$. The \emph{stabilization time} of a leader election protocol $\mathcal{A}$ is the minimum $t$ such that $x_t$ is stable and correct. The \emph{state complexity} of a protocol is  $|\Lambda|$, the number of distinct states.

Some of the protocols we consider are non-uniform in the following sense: the state space and transition function of the protocol can depend on parameters that capture high-level structural information about the population and the interaction graph (e.g., number of nodes and edges, broadcast time or the maximum degree). However, upon initialization, all nodes receive exactly the same information. For example, nodes do not initially know their own degree or identity in the interaction graph.

\subsection{Probability-theoretic tools}

Let $X$ and $Y$ be real-valued random variables defined on the same probability space. We say that $X$ \emph{stochastically dominates} $Y$, written as $Y \preceq X$, if $\Pr[X \ge x] \ge \Pr[Y \ge x]$ for all $x \in \mathbb{R}$. We start with three concentration bounds. The first is a folklore result; see e.g.~\cite{cannonne2019poisson} for a proof. The second is also standard Chernoff bounds for sums of Bernoulli random variables. The third result gives tail bounds on the sums of geometric random variables, via Janson~\cite[Theorems 2.1 and 3.1]{janson2018tail}.

\begin{lemma}\label{lemma:poisson-concentration-c}
  Let $X \sim \operatorname{Poisson(\lambda)}$ be a Poisson random variable with mean $\lambda$. Then
   \begin{enumerate}[label=(\alph*)]
   \item $\Pr[X \ge c \lambda ] \le \exp\left(-\lambda \cdot (c-1)^2/c\right)$ for $c \ge 1$,
   \item $\Pr[X \le c \lambda ] \le \exp\left(-\lambda \cdot (1-c)^2/(2-c)\right)$ for $c \le 1$.
   \end{enumerate}
\end{lemma}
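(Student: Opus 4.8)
The plan is to prove both tails with the standard exponential-moment (Chernoff) method, exploiting the fact that the Poisson moment generating function is available in closed form, and then to relax the resulting sharp Chernoff exponent to the tidy closed forms stated in the lemma via two elementary one-variable inequalities.

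First I would record the identity $\E[e^{tX}] = \exp(\lambda(e^t-1))$, valid for all real $t$. For part (a), fix $c \ge 1$ and $t>0$, and apply Markov's inequality to the nonnegative variable $e^{tX}$:
$$\Pr[X \ge c\lambda] \le e^{-tc\lambda}\,\E[e^{tX}] = \exp\left(\lambda(e^t - 1 - ct)\right).$$
Minimizing the exponent over $t>0$ gives the optimizer $t = \ln c \ge 0$, and substituting back yields the sharp bound $\Pr[X \ge c\lambda] \le \exp(\lambda(c - 1 - c\ln c))$. Part (b) is symmetric: for $c \le 1$ and $t>0$ I would apply Markov to $e^{-tX}$ to obtain $\Pr[X \le c\lambda] \le \exp(\lambda(e^{-t} - 1 + ct))$, whose minimizer $t = -\ln c \ge 0$ produces the same exponent $\exp(\lambda(c - 1 - c\ln c))$.

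It then remains only to weaken this exponent to match the statement, i.e.\ to establish the two elementary inequalities
$$c - 1 - c\ln c \le -\frac{(c-1)^2}{c} \ \ (c \ge 1), \qquad c - 1 - c\ln c \le -\frac{(1-c)^2}{2-c} \ \ (0 < c \le 1).$$
Each reduces to showing that the difference $g(c)$ of the two sides is nonpositive on the relevant interval. In both cases one checks $g(1) = 0$ and $g'(1) = 0$, so the conclusion follows by tracking the sign of $g'$ (or $g''$) on either side of $c=1$; the boundary value $c=0$ in part (b) is handled directly, where the claim collapses to $\Pr[X=0] = e^{-\lambda} \le \exp(-\lambda/2)$.

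I expect the only genuine work to lie in this last step. The Chernoff derivation is entirely mechanical, but trading the tight exponent $c-1-c\ln c$ for the cleaner forms $-(c-1)^2/c$ and $-(1-c)^2/(2-c)$ requires a careful local analysis near $c=1$, where both sides and their first derivatives coincide. This degeneracy is exactly what makes the estimates slightly delicate, and is presumably why the lemma is cited as folklore (see~\cite{cannonne2019poisson}) rather than proved from scratch.
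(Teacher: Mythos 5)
Your Chernoff computation is correct, and it is in fact the standard route to this result --- note that the paper itself offers no proof of this lemma, deferring to the folklore reference \cite{cannonne2019poisson}. The fatal problem is the step you dismissed as elementary: both comparison inequalities in your final display are \emph{false}. At $c=2$ one has $c-1-c\ln c = 1-2\ln 2 \approx -0.386$, which is not $\le -(c-1)^2/c = -1/2$; at $c=1/2$ one has $c-1-c\ln c = (\ln 2 - 1)/2 \approx -0.153$, which is not $\le -(1-c)^2/(2-c) = -1/6 \approx -0.167$. The local analysis you sketch would have exposed this had you carried it out: for part (a), $g(c) = (c-1-c\ln c)+(c-1)^2/c$ satisfies $g(1)=g'(1)=0$ but $g''(1)=+1>0$, so $g>0$ immediately to the right of $c=1$; the inequality points the wrong way. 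Equivalently, near $c=1$ the Chernoff exponent is $-(c-1)^2/2 + O(|c-1|^3)$, while the lemma's exponents are $-(c-1)^2(1+o(1))$, i.e.\ twice too strong.

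Moreover, no repair of this step is possible, because the statement as printed is itself false. The Chernoff exponent $\lambda(c-1-c\ln c)$ is the exact large-deviation rate for the Poisson: by Stirling, $\Pr[X=\lceil c\lambda\rceil] = \exp\left(\lambda(c-1-c\ln c) - O(\log\lambda)\right)$, so no bound of the form $\exp(-\lambda\psi(c))$ with $\psi(c) > c\ln c - c + 1$ can hold for all $\lambda$. Since $(c-1)^2/c > c\ln c-c+1$ at $c=2$ and $(1-c)^2/(2-c) > c\ln c-c+1$ at $c=1/2$, both parts fail there; concretely, for $\lambda=100$, $\Pr[X\ge 200] \ge \Pr[X=200] \approx e^{-42.2}$, whereas part (a) claims a bound of $e^{-50}$. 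The lemma is a mis-transcription of the bounds actually proved in the cited note, namely $\Pr[X\ge c\lambda] \le \exp\left(-\lambda(c-1)^2/(2c)\right)$ for $c\ge 1$ and $\Pr[X\le c\lambda] \le \exp\left(-\lambda(1-c)^2/2\right)$ for $c\le 1$. Those correct forms do follow from your Chernoff step by exactly the argument you propose: for $h(c)=c\ln c-c+1-(c-1)^2/(2c)$ one checks $h(1)=h'(1)=0$ and $h''(c)=1/c-1/c^3\ge 0$ on $c\ge 1$, and for $h(c)=c\ln c-c+1-(1-c)^2/2$ one checks $h(1)=0$ and $h'(c)=\ln c+1-c\le 0$ on $(0,1]$. (The paper's two applications of the lemma, at $c=2$ and $c=1/2$, survive with these weaker exponents after minor adjustments of constants.) So: right method, but the reduction you labelled as routine is precisely where both your proof and the printed lemma break.
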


\begin{lemma}
    \label{lemma:bernoulli-chernoff}
    Let $X = X = Y_1 + \ldots + Y_k$ be a sum of independent Bernoulli random variables with $\Pr[Y_i = 1] = p_i$. Then 
\begin{enumerate}[label=(\alph*)]
\item $\Pr[X \ge (1+\lambda) \cdot \E[X] ] \le \exp\left(-\E[X] \cdot \lambda^2/3\right)$ for any $\lambda \ge 1$, and
\item $\Pr[X \le (1 - \lambda) \cdot \E[X]] \le \exp\left(-\E[X] \cdot \lambda^2/2 \right)$ for any $\lambda \le 1$.
\end{enumerate}
Note that in the special case when $p_i = p$ for all $1 \le i \le k$, the sum $X \sim \Bin{k}{p}$ is a Binomial random variable.
\end{lemma}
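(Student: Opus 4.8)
The plan is to prove both tail bounds via the standard exponential moment (Chernoff) method: first derive the classical multiplicative bound, then reduce it to the stated form using elementary scalar inequalities. Write $\mu = \E[X] = \sum_{i=1}^k p_i$. For the upper tail in part (a), I would fix a parameter $t > 0$ and apply Markov's inequality to $e^{tX}$, giving $\Pr[X \ge (1+\lambda)\mu] \le e^{-t(1+\lambda)\mu}\,\E[e^{tX}]$. Independence lets me factor the moment generating function as $\E[e^{tX}] = \prod_{i=1}^k \E[e^{tY_i}] = \prod_{i=1}^k\bigl(1 + p_i(e^t-1)\bigr)$, and the elementary inequality $1+x \le e^x$ bounds this by $\exp\bigl((e^t-1)\mu\bigr)$. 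Crucially, this bound depends on the $p_i$ only through their sum $\mu$, which is exactly why the final statement can be phrased purely in terms of $\E[X]$.

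Next I would optimize over $t$. Choosing $t = \ln(1+\lambda)$ yields the classical Chernoff bound $\Pr[X \ge (1+\lambda)\mu] \le \bigl(e^\lambda/(1+\lambda)^{1+\lambda}\bigr)^\mu$. It then remains to verify the scalar inequality $(1+\lambda)\ln(1+\lambda) - \lambda \ge \lambda^2/3$ over the relevant range of $\lambda$, which upgrades the classical bound to the claimed $\exp(-\mu\lambda^2/3)$; this follows from a short calculus argument comparing the two sides at $\lambda = 0$ together with their derivatives.

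The lower tail in part (b) is symmetric: I would instead bound $\Pr[X \le (1-\lambda)\mu] = \Pr[e^{-sX} \ge e^{-s(1-\lambda)\mu}]$ for $s > 0$, obtain $\E[e^{-sX}] \le \exp\bigl((e^{-s}-1)\mu\bigr)$ in the same way, and optimize with $e^{-s} = 1-\lambda$ to get $\bigl(e^{-\lambda}/(1-\lambda)^{1-\lambda}\bigr)^\mu$. The reduction to $\exp(-\mu\lambda^2/2)$ then uses the scalar inequality $-\lambda - (1-\lambda)\ln(1-\lambda) \le -\lambda^2/2$, valid for $0 \le \lambda \le 1$. The Binomial special case needs no separate argument, since when $p_i = p$ for all $i$ the variable $X \sim \Bin{k}{p}$ is already a sum of i.i.d.\ Bernoullis covered by the general bounds.

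I expect no genuine obstacle here, as this is a textbook argument; the only delicate points are bookkeeping ones. First, one must factor the moment generating function correctly and keep the direction of every inequality straight when passing to the lower tail with a negative exponent. Second, and slightly more care-demanding, is pinning down the exact range of $\lambda$ for which each elementary scalar inequality (the $\lambda^2/3$ bound for the upper tail and the $\lambda^2/2$ bound for the lower) actually holds, since the constants $1/3$ and $1/2$ are valid only in specific regimes, and selecting the wrong range would render the stated bound false.
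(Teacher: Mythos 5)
Your overall route --- Markov's inequality applied to the exponential moment, factoring the moment generating function by independence, using $1+x \le e^x$ so that only $\mu = \E[X]$ enters, then optimizing the parameter and finishing with a scalar inequality --- is the textbook argument. The paper itself gives no proof of this lemma (it invokes it as a standard Chernoff bound), so correctness is the only question. Part (b) of your proposal is correct: the inequality $\lambda + (1-\lambda)\ln(1-\lambda) \ge \lambda^2/2$ does hold for all $0 \le \lambda \le 1$, and your computation yields exactly the stated bound.

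Part (a), however, contains a genuine gap, located precisely at the step you flagged as delicate and then assumed away. The scalar inequality you need, $(1+\lambda)\ln(1+\lambda) - \lambda \ge \lambda^2/3$, is \emph{false} on the lemma's stated range $\lambda \ge 1$: the left-hand side grows like $\lambda \ln \lambda$ while the right-hand side is quadratic, and the inequality already fails at $\lambda = 2$, where $3\ln 3 - 2 \approx 1.296 < 4/3$ (it holds only for $0 \le \lambda \lesssim 1.86$). No calculus argument can repair this, because part (a) as literally stated is false rather than merely unprovable by this route: take $X \sim \Bin{k}{\mu/k}$ with $k$ large, so that $X$ is nearly Poisson with mean $\mu$ (take $\mu$ a large integer); then $\Pr[X \ge 3\mu] \ge e^{-\mu(3\ln 3 - 2)}/\Theta(\sqrt{\mu})$, which exceeds the claimed bound $e^{-4\mu/3}$ once $\mu$ is large. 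The standard statement attaches the exponent $\lambda^2/3$ to the range $0 < \lambda \le 1$ (the same range as part (b)), and for $\lambda \ge 1$ one only gets a linear exponent such as $\lambda/3$; so the flaw is in the lemma's quantifier as much as in your proof, but a blind proof attempt still has to notice it rather than assert the inequality holds. With the range in (a) corrected to $0 < \lambda \le 1$, your argument closes cleanly: setting $u(\lambda) = (1+\lambda)\ln(1+\lambda) - \lambda - \lambda^2/3$, one checks $u(0) = u'(0) = 0$ and $u'(\lambda) = \ln(1+\lambda) - 2\lambda/3 \ge 0$ on $[0,1]$, hence $u \ge 0$ there. That corrected form also supports the paper's typical invocations of the lemma, which take $\lambda = 1$, i.e.\ bound $\Pr[Y \ge 2\E[Y]]$.
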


\begin{lemma}\label{lemma:sum-of-geometric}
  Let $p_1, \ldots, p_k \in (0,1]$ and $X = Y_1 + \ldots + Y_k$ be a sum of independent geometric random variables with $Y_i \sim \operatorname{Geom}(p_i)$. Define $p = \min \{ p_i : 1 \le i \le k \}$ and $c(\lambda) = \lambda - 1 - \ln \lambda$. Then
  \begin{enumerate}[label=(\alph*)]
  \item $\Pr[ X \ge \lambda \cdot \E[X] ] \le \exp\left(- p \cdot \E[X] \cdot c(\lambda)\right)$ for any $\lambda \ge 1$, and

  \item  $\Pr[ X \le \lambda \cdot \E[X] ] \le \exp\left(- p \cdot \E[X] \cdot c(\lambda)\right)$ for any $0 < \lambda \le 1$.
  \end{enumerate}
\end{lemma}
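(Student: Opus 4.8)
The result is a Cram\'er--Chernoff large deviation bound, and the exponent $c(\lambda) = \lambda - 1 - \ln\lambda$ is precisely the rate function of the (rate-one) exponential distribution; intuitively, a sum of geometrics cannot have a heavier tail than the corresponding sum of exponentials whose rate is the smallest success probability $p$. The plan is to prove it by the exponential moment (Laplace transform) method. For the upper tail, fix a parameter $s \in (0, -\ln(1-p))$ and apply Markov's inequality to $e^{sX}$, using independence to factor the moment generating function:
\[
\Pr[X \ge \lambda\E[X]] \le e^{-s\lambda\E[X]}\prod_{i=1}^k \E[e^{sY_i}], \qquad \E[e^{sY_i}] = \frac{p_i e^s}{1-(1-p_i)e^s}.
\]
The restriction $s < -\ln(1-p)$ guarantees that every factor converges, since $p = \min_i p_i$ gives the most stringent convergence threshold.

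The first key step is to remove the dependence on the individual parameters and reduce to the extremal value $p$. Writing $\psi_q(s) = \ln\E[e^{sY}]$ for $Y \sim \operatorname{Geom}(q)$, I would establish the cumulant inequality $p_i\,\psi_{p_i}(s) \le p\,\psi_p(s)$ for every $p_i \ge p$ and every admissible $s$, i.e.\ that $q \mapsto q\,\psi_q(s)$ is non-increasing on $[p,1]$. Summing $\psi_{p_i}(s) \le \tfrac1{p_i}\, p\,\psi_p(s)$ over $i$ and using $\sum_i 1/p_i = \E[X]$ then yields
\[
\ln\Pr[X \ge \lambda\E[X]] \le p\,\E[X]\Bigl(\psi_p(s) - \tfrac{\lambda s}{p}\Bigr).
\]
I stress that the naive route---dominating each $Y_i$ stochastically by a $\operatorname{Geom}(p)$ variable and comparing with a negative binomial---fails, because it inflates the mean from $\E[X]$ to $k/p$ and so evaluates the threshold $\lambda\E[X]$ strictly below $\lambda$ times the new mean, giving a bound that is too weak. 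Keeping the true mean while charging only the smallest parameter in the cumulant is exactly what the monotonicity buys.

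The second step is a one-dimensional optimization: choose $s$ to minimize $\psi_p(s) - \lambda s/p$. Setting the derivative to zero gives $(1-p)e^{s^*} = 1 - p/\lambda$, which lies in the admissible range $(0,-\ln(1-p))$ for every $\lambda \ge 1$, and substituting back I would verify that $\psi_p(s^*) - \lambda s^*/p \le -c(\lambda)$, with equality in the limit $p \to 0$ (the exponential case). This gives part (a). Part (b) is entirely analogous, now taking $s < 0$---where the moment generating function converges for all parameters automatically---and optimizing over negative $s$ for $0 < \lambda \le 1$.

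I expect the main obstacle to be the monotonicity lemma for $q\,\psi_q(s)$ together with the final verification that the optimized single-parameter exponent is dominated by the clean exponential rate $c(\lambda)$ uniformly in $p$; both reduce to elementary but slightly delicate one-variable calculus inequalities, with the exponential distribution ($p \to 0$) arising as the extremal case. Alternatively, the statement follows immediately by specializing Janson's Theorems~2.1 and~3.1 with $p = \min_i p_i$.
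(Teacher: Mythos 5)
Your proposal is correct. Note first that the paper contains no proof of this lemma at all: it is imported verbatim from Janson's paper with the single specialization $p = \min_i p_i$, which is exactly your closing remark, so relative to the paper the self-contained Chernoff argument is the genuinely new content. That argument does go through, and the step you flag as the main obstacle is easier than you fear: the monotonicity of $q \mapsto q\,\psi_q(s)$ on $[p,1]$ holds for every admissible $s$, positive or negative, and is a one-line consequence of $\ln x \le x-1$. Indeed, writing $D = 1-(1-q)e^s$ and using the identity $qe^s = e^s - 1 + D$, the $q$-derivative of $q\,\psi_q(s)$ equals $\psi_q(s) - (e^s-1)/D$, and $\psi_q(s) = \ln\left(qe^s/D\right) \le qe^s/D - 1 = (e^s-1)/D$. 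Your optimizer is also correct, since $\psi_p'(s) = 1/\left(1-(1-p)e^s\right)$ gives $(1-p)e^{s^*} = 1 - p/\lambda$, and the final comparison $\psi_p(s^*) - \lambda s^*/p \le -c(\lambda)$ reduces to $\ln u \le u - 1$ with $u = \lambda(1-p)/(\lambda-p)$, with equality as $p \to 0$ as you predict. The only point glossed over is the degenerate lower-tail case $\lambda \le p$, where $s^*$ is undefined; there $\lambda \E[X] \le p\E[X] \le k$ while $X \ge k$ almost surely, so the probability is zero unless all $p_i = p$ and $\lambda = p$, in which case it equals $p^k \le e^{k(1-p)}p^k$, matching the claimed bound. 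For comparison, Janson's own proof is the same Chernoff scheme organized slightly differently: he dominates each geometric moment generating function by the corresponding exponential one, $\E[e^{sY_i}] \le p_i/(p_i - s)$ (equivalent to $e^s(1-s)\le 1$), and then transfers to the smallest parameter via $-\ln(1-x) \le -(x/y)\ln(1-y)$ for $0 \le x \le y < 1$; your single monotonicity lemma accomplishes both reductions at once, and both routes exhibit the exponential distribution as the extremal case, as you anticipated.
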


\begin{lemma}[Wald's identity]\label{lemma:wald}
  Let $(X_i)_{i \ge 1}$ be a sequence of real-valued independent and identically distributed random variables and $N$ a non-negative integer-valued random variable independent of $(X_i)_{i \ge 1}$.
  If $N$ and all $X_i$ have finite expectation, then $\E[X_1 + \cdots + X_N] = \E[N] \cdot \E[X_1]$.
\end{lemma}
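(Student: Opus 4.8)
The plan is to exploit the full independence of $N$ from the sequence $(X_i)_{i \ge 1}$ and reduce the identity to a term-by-term computation. First I would convert the random upper limit into an infinite series by writing
\[
\sum_{i=1}^{N} X_i \;=\; \sum_{i=1}^{\infty} X_i \cdot \mathbf{1}[i \le N],
\]
where the indicators $\mathbf{1}[i \le N]$ truncate the sum at $N$. The aim is then to interchange expectation and summation and evaluate each term separately, which is where the two finiteness hypotheses will be used.

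The key step is to justify this interchange. For each fixed $i$, the event $\{i \le N\}$ depends only on $N$ and is therefore independent of $X_i$; since the $X_i$ are identically distributed, this gives $\E[|X_i| \cdot \mathbf{1}[i \le N]] = \E[|X_1|] \cdot \Pr[N \ge i]$. Summing over $i$ and applying the tail-sum identity $\sum_{i \ge 1} \Pr[N \ge i] = \E[N]$, valid for the non-negative integer-valued $N$, yields
\[
\sum_{i \ge 1} \E\bigl[|X_i| \cdot \mathbf{1}[i \le N]\bigr] \;=\; \E[|X_1|] \cdot \E[N] \;<\; \infty,
\]
where finiteness uses both $\E[|X_1|] < \infty$ and $\E[N] < \infty$. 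This absolute summability is exactly the hypothesis needed to invoke Fubini--Tonelli (equivalently, dominated convergence) and move $\E$ inside the sum. I would then repeat the same computation without the absolute values: $\E\bigl[\sum_{i=1}^{N} X_i\bigr] = \sum_{i \ge 1} \E[X_i \cdot \mathbf{1}[i \le N]] = \sum_{i \ge 1} \E[X_1] \cdot \Pr[N \ge i] = \E[X_1] \cdot \E[N]$, using independence in the second equality and the tail-sum identity in the third.

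The only subtle point, and hence the main obstacle, is the justification of the interchange of expectation and the infinite sum: without the finiteness assumptions the identity can fail, so the proof must make explicit where each hypothesis enters, namely in establishing the absolute summability above. I expect the computation itself to be routine. It is worth emphasizing that the assumption that $N$ is independent of the \emph{entire} sequence $(X_i)_{i \ge 1}$ is what makes the term-by-term factorization $\E[X_i \cdot \mathbf{1}[i \le N]] = \E[X_1] \cdot \Pr[N \ge i]$ immediate; in the more general stopping-time version of Wald's identity one would instead have to argue that the event $\{N \ge i\}$ is measurable with respect to $X_1, \dots, X_{i-1}$ and hence independent of $X_i$, but the stronger independence hypothesis stated here lets me sidestep that argument entirely.
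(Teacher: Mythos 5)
Your proof is correct and complete: the decomposition $\sum_{i=1}^{N} X_i = \sum_{i \ge 1} X_i \cdot \mathbf{1}[i \le N]$, the factorization $\E[X_i \cdot \mathbf{1}[i \le N]] = \E[X_1] \cdot \Pr[N \ge i]$ via the independence of $N$ from the whole sequence, the absolute-summability bound $\E[|X_1|] \cdot \E[N] < \infty$ justifying Fubini, and the tail-sum identity together constitute the standard textbook proof of Wald's identity, with each hypothesis used exactly where it is needed. Note that the paper itself offers no proof to compare against --- it lists this lemma among its probability-theoretic tools as a classical result --- so your argument simply supplies the standard justification the paper takes for granted, and you are also right that the full independence assumption lets you avoid the measurability argument required in the stopping-time version.
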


\section{Bounds on information propagation}\label{sec:broadcast}

Our results will rely on notions of \emph{broadcast time} and \emph{propagation time} in the population model. For this, we define the following infection process on a graph $G$: initially, each node $v \in V$ holds a unique message. In every step, when nodes $u$ and $v$ randomly interact they inform each other about \emph{all messages} they have so far received.

The \emph{distance-$k$ propagation time} is the minimum time until some message has reached a node at distance $k$ from its source. The \emph{broadcast time} is the expected time until all nodes in the network are aware of all messages. Propagation times are used in our lower bounds, whereas broadcast time appears in our upper bounds. Before we formalize these notions below, we briefly discuss some work on related, but different stochastic information propagation dynamics on graphs.

\subsection{Information propagation in related models}
Many variants of the above broadcasting process have been studied in settings ranging from information dissemination~\cite{demers1987epidemic,giakkoupis2016asynchrony,acan2017push,sauerwald2010mixing,karp2000-randomized,chierichetti2010rumour,chierichetti2010almost} to models of epidemics~\cite{fates2008examples,ottino2017takeover,liggett2005interacting}. For example, in the synchronous push-pull model~\cite{demers1987epidemic,karp2000-randomized}, Chierichetti, Lattanzi and Panconesi~\cite{chierichetti2010rumour} first showed that broadcast succeeds with high probability in $O(\log^4 n/\phi^6)$ rounds on graphs of conductance $\phi$. Subsequently, they improved the running time bound to $O(\log n/\phi \cdot \log^2(1/\phi) )$ rounds~\cite{chierichetti2010almost}. Finally, Giakkoupis~\cite{giakkoupis2011tight} showed that the push-pull algorithm succeeds in $O(\log n/\phi)$ rounds with high probability, and showed that for all $\phi \in \Omega(1/n)$ there is a family of graphs in which this bound is tight.

In the asynchronous setting, Acan, Collevecchio, Mehrabian and Wormald~\cite{acan2017push} and Giakkoupis, Nazari and Woelfel~\cite{giakkoupis2016asynchrony} studied broadcasting in the continuous-time push-pull model, where each \emph{node} has a (probabilistic) Poisson clock that rings at unit rate. They showed that on graphs in which the protocol runs in $T$ rounds, the asynchronous protocol runs in $O(T + \log n)$ continuous time. Ottino-Löffler, Scott and Strogatz~\cite{ottino2017takeover} studied a discrete-time infection model that is similar to this asynchronous setting, and characterized broadcast time in cliques, stars, lattices and Erd\H{o}s--R\'enyi random graphs.

Although the interaction patterns in the stochastic population model and the above asynchronous models are the same for \emph{regular graphs}, they are different in general graphs. In the population model, instead of sampling a node and then one of its neighbours in each step, our scheduler samples an \emph{edge}.
In the continuous-time setting, this corresponds to having an independent Poisson clock at each \emph{edge} rather than each \emph{node} in the network. Thus, high-degree nodes interact more often than low-degree nodes in the population model.

\subsection{Information propagation in the population model} 
We now define information propagation dynamics in our setting. Let $(e_t)_{t \ge 1}$ be a stochastic schedule on a graph $G = (V,E)$. For each node $v \in V$, let $I_0(v) = \{ v \}$. For $t \ge 0$, define
\[
I_{t+1}(v) = \begin{cases}
  I_t(v) \cup I_t(u) & \textrm{if } e_{t+1} = (u,v) \textrm{ or } e_{t+1} = (v,u) \\
  I_t(v) & \textrm{otherwise.}
\end{cases}
\]
Following Sudo and Masuzawa~\cite{sudo2020leader}, we say that $I_t(v)$ is the \emph{set of influencers} of node $v$ at the end of step $t$. Nodes in $I_t(v)$ are nodes that can (in principle) influence the state of node $v$ at step $t$. The above dynamics can be seen as a rumour spreading process, where each node starts with a unique message, and whenever two nodes interact, they inform each other about each message they hold. 

\paragraph{Broadcast and propagation time}
 Let $T(v,u) = \min \{ t : v \in I_t(u) \}$ be the minimum time until node $u$ is influenced by node $v$. The \emph{broadcast time from source $v$} is
 \[
 T(v) = \max \{ T(v,u) : u \in V(G) \}.
 \]
 We define the \emph{worst-case expected broadcast time} on $G$ to be
 \[
 \tbroadcast(G) = \max\{ \E[T(v)] : v \in V\}.
 \]
For each $k \ge 0$ and $u \in V$, let
\[
T_k(u) = \min \{ T(u,v) : v \in V, \dist(u,v)=k \}.
\]
The \emph{distance-$k$ propagation time} in $G$ is $T_k(G) = \min \{ T_k(u) : u \in V \}$. If there are no nodes at distance $k$ from node $u$, then $T_k(u) = \infty$.
Moreover, $T_k(G) = \infty$ for all $k > D(G)$.
Note that the distance-$k$ propagation time gives lower bound for the expected broadcast time as for each $1 \le k \le D(G)$ we have
\[
\E[T_k(G)] \le \E[T_D(G)] \le \tbroadcast(G).
\]

\paragraph{Sampling edge sequences}
For a finite sequence $\rho \in E^k$ of $k$ edges, let $X(\rho)$ be the number of steps until the scheduler has sampled each edge from $\rho$ in order. Note that
\[
X(\rho) = Y_1 + \cdots + Y_k
\]
is a sum of i.i.d.\ geometric random variables, where $Y_i \sim \operatorname{Geom}(1/m)$ is the number of steps until the $i$th edge of $\rho$ is sampled after sampling the $(i-1)$th edge in the sequence $\rho$. The next lemma follows immediately from \lemmaref{lemma:sum-of-geometric}.

\begin{lemma}\label{lemma:rho-concentration}
  Let $c(\lambda) = \lambda - 1 - \ln \lambda$. 
  For any
  $\rho \in E^k$, we have $\E[X(\rho)] = km$ and
  \begin{enumerate}[label=(\alph*)]
  \item $\Pr[X(\rho) > \lambda k m] \le \exp\left(-kc(\lambda)\right)$ for $\lambda \ge 1$, and
  \item $\Pr[X(\rho) < \lambda k m] \le \exp\left(-kc(\lambda)\right)$ for $0 < \lambda \le 1$.
  \end{enumerate}
\end{lemma}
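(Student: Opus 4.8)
The plan is to recognize that $X(\rho)$ is exactly the type of random variable that \lemmaref{lemma:sum-of-geometric} is designed to handle, and then to specialize that general bound to the homogeneous case. As the discussion just above the statement records, we may write $X(\rho) = Y_1 + \cdots + Y_k$, where $Y_i$ is the number of steps needed to sample the $i$th edge of $\rho$ after the $(i-1)$th has been sampled. Each $Y_i$ is geometric with success probability $1/m$: the scheduler samples an ordered pair uniformly among the $2m$ ordered pairs, and a fixed undirected edge of $\rho$ corresponds to two of these, giving per-step success probability $2/(2m) = 1/m$. Because the scheduler draws each step independently, the waiting times $Y_1, \ldots, Y_k$ are mutually independent, which is precisely the hypothesis of \lemmaref{lemma:sum-of-geometric}.

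First I would establish the mean. Since $\E[\operatorname{Geom}(1/m)] = m$, linearity of expectation immediately gives $\E[X(\rho)] = \sum_{i=1}^k \E[Y_i] = km$, proving the first assertion. Next I would invoke \lemmaref{lemma:sum-of-geometric} with $p_i = 1/m$ for every $i$, so that the parameter $p = \min_i p_i$ equals $1/m$. The exponent in the general tail bound then collapses cleanly: $p \cdot \E[X(\rho)] \cdot c(\lambda) = \tfrac{1}{m} \cdot km \cdot c(\lambda) = k\,c(\lambda)$. Substituting into parts (a) and (b) of \lemmaref{lemma:sum-of-geometric} yields $\Pr[X(\rho) \ge \lambda km] \le \exp(-k\,c(\lambda))$ for $\lambda \ge 1$ and $\Pr[X(\rho) \le \lambda km] \le \exp(-k\,c(\lambda))$ for $0 < \lambda \le 1$. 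The statement is phrased with strict inequalities, but this is automatic since $X(\rho)$ is integer-valued, so $\Pr[X(\rho) > \lambda km] \le \Pr[X(\rho) \ge \lambda km]$ and likewise for the lower tail.

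There is essentially no obstacle here: the lemma is a direct corollary obtained by setting all success probabilities equal to $1/m$, and the only work is bookkeeping. If any point warrants a line of care, it is confirming that the per-step success probability is $1/m$ rather than $2/m$ or $1/(2m)$ under the convention that the scheduler samples \emph{ordered} pairs, of which there are $2m$; getting this factor right is what makes $\E[X(\rho)] = km$ come out exactly and the exponent simplify to $-k\,c(\lambda)$.
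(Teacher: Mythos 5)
Your proposal is correct and matches the paper's own treatment: the paper defines $X(\rho) = Y_1 + \cdots + Y_k$ as a sum of i.i.d.\ $\operatorname{Geom}(1/m)$ waiting times in the text preceding the lemma and then states that the bounds follow immediately from \lemmaref{lemma:sum-of-geometric}, which is exactly your specialization with $p_i = 1/m$ and exponent $p \cdot \E[X(\rho)] \cdot c(\lambda) = k\,c(\lambda)$. Your added care about the $2/(2m) = 1/m$ per-step success probability and the strict-versus-weak inequality is sound bookkeeping that the paper leaves implicit.
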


With the above lemma, it is fairly straightforward to establish 
the following upper bound on the worst-case expected broadcast time $\tbroadcastg$. We give the details in the next section.

\begin{restatable}{theorem}{broadcastupper}
  \label{thm:broadcast-upper}\label{thm:broadcast-upper-bounds}
  Let $G$ be a graph with $n$ nodes, $m$ edges, edge expansion $\beta$ and diameter $D$. Then the worst-case expected broadcast time satisfies
  \[
\tbroadcastg \in O\left( m \cdot \min \left\{ \frac{\log n}{\beta}, \log n + D \right\} \right).
  \]
\end{restatable}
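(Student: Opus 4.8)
The plan is to prove the two upper bounds $O\bigl(m(\log n + D)\bigr)$ and $O(m\log n/\beta)$ separately; since each holds, $\tbroadcastg$ is at most their minimum, which is exactly the claimed bound. In both cases I fix an arbitrary source $v$ and bound $\E[T(v)]$ for $T(v) = \max_u T(v,u)$; as the source is arbitrary, this bounds $\tbroadcastg = \max_v \E[T(v)]$.

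For the diameter bound I fix $v$ and, for every target $u$, fix a shortest path $\rho_u$ from $v$ to $u$ of length $k_u = \dist(v,u) \le D$. If the scheduler samples the edges of $\rho_u$ in their path order (not necessarily consecutively), then $v$'s message reaches $u$, so $T(v,u) \preceq X(\rho_u)$ and $\Pr[T(v,u) > t] \le \Pr[X(\rho_u) > t]$. Setting $t = m(2D + 4\ln n)$ and applying \lemmaref{lemma:rho-concentration}(a) with $\lambda = t/(k_u m) \ge 1$, the exponent equals $k_u\, c(\lambda) = t/m - k_u - k_u \ln\!\bigl(t/(k_u m)\bigr)$; differentiating shows this is decreasing in $k_u$ on $[1,D]$, so it is minimized at $k_u = D$, where it equals $g(s) := s - D\ln(1 + s/D)$ with $s = D + 4\ln n$. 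Since $g'(s) = s/(D+s) \ge 1/2$ for $s \ge D$, we get $g(s) \ge (s-D)/2 = 2\ln n$, hence $\Pr[X(\rho_u) > t] \le n^{-2}$. A union bound over the $n$ targets gives $\Pr[T(v) > t] \le 1/n$; applying the same estimate at $t' = m\mu$ for growing $\mu$ makes the tail decay geometrically, so integrating the tail yields $\E[T(v)] \le t + O(m/n) = O\bigl(m(D + \log n)\bigr)$.

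For the expansion bound I track the infected set $A_t = \{u : v \in I_t(u)\}$, starting from $A_0 = \{v\}$. Whenever a boundary edge of $A_t$ is sampled, the uninfected endpoint joins $A_t$, so the size strictly increases; this occurs in a given step with probability $|\partial A_t|/m$. While $|A_t| = a \le n/2$, edge expansion gives $|\partial A_t| \ge \beta a$, so the time spent at size $a$ is stochastically dominated by $\operatorname{Geom}(\beta a/m)$, of expectation $\le m/(\beta a)$. Once $|A_t| > n/2$ I switch to the complement $\bar A_t = V \setminus A_t$: since $|\partial \bar A_t| = |\partial A_t| \ge \beta |\bar A_t|$ whenever $|\bar A_t| \le n/2$, the time spent with $|\bar A_t| = b$ is likewise dominated by $\operatorname{Geom}(\beta b/m)$. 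Summing over the two phases gives
\[
\E[T(v)] \le \sum_{a=1}^{\lceil n/2\rceil} \frac{m}{\beta a} + \sum_{b=1}^{\lfloor n/2\rfloor} \frac{m}{\beta b} \le \frac{2m}{\beta}\, H_n = O\!\left(\frac{m\log n}{\beta}\right).
\]

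The routine parts are the stochastic-domination setup and the harmonic summation in the expansion argument. The step I expect to need the most care is the diameter bound: because shortest paths to different targets have different lengths $k_u \le D$, one must check that the tail from \lemmaref{lemma:rho-concentration} stays uniformly below $n^{-2}$ across all $k_u \in [1,D]$ at once, which is precisely the monotonicity-in-$k_u$ computation that pins the worst case at $k_u = D$; and one must upgrade the high-probability statement to an expectation bound by controlling the tail at every $t$ beyond the threshold rather than at a single value.
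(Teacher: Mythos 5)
Your proof is correct, and it follows the paper's overall route: the same decomposition into two separate bounds, with the diameter term handled by dominating $T(v,u)$ by a sum of $\Geom(1/m)$ variables along a shortest path (\lemmaref{lemma:rho-concentration}) plus a union bound and tail integration, and the expansion term handled by the boundary-edge growth process for the influenced set. Two differences are worth noting. First, for the expansion bound you bound $\E[T(v)]$ directly by linearity of expectation over the dominating geometric variables, whereas the paper (\lemmaref{lemma:expansion-broadcast-whp} followed by \lemmaref{lemma:expansion-broadcast}) first proves a high-probability tail bound via Janson's inequality (\lemmaref{lemma:sum-of-geometric}) and only then integrates the tail; your route is shorter and fully suffices for the stated theorem, while the paper's extra work also yields the with-high-probability version of the bound, which it needs elsewhere (e.g., for broadcast on $G_{n,p}$ and in the protocol analyses). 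Second, your treatment of the diameter exponent is more careful than the paper's: \lemmaref{lemma:diameter-tail-bound} asserts that $\eta - 1 - \ln \eta > \eta/2$ holds for all large $n$, but with $\eta = (\lambda/k)\cdot\max\{6\ln n, D\}$ this inequality can fail when $k \approx D \gg \log n$ and $\lambda$ is close to $1$, so the paper's tail bound is not justified uniformly over all pairs. By instead taking $t = m(2D + 4\ln n)$, checking that the exponent $t/m - k_u - k_u\ln\bigl(t/(k_u m)\bigr)$ is decreasing in $k_u$, and using the derivative bound $g'(s) = s/(D+s) \ge 1/2$, you obtain a tail estimate that is uniform over all path lengths $k_u \in [1,D]$, at the cost of a slightly larger constant in front of $D$.
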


Note that there are graphs in which $\ln n / \beta > D$, e.g. cycles, and $\ln n /\beta < D$, e.g., cliques. We will later give leader election protocols whose stabilization time is bounded as a function of $\tbroadcastg$ on any graph $G$. In general, for any increasing function $T$ between $\Omega(n \log n)$ and $O(n^3)$, we can find families of graphs in which both the expected broadcast time and leader election time are $\Theta(T)$. We give the construction in \sectionref{sec:lower}.

\subsection{Upper-bounding the broadcast time}

In this section, we give the proof of \theoremref{thm:broadcast-upper}. We first derive an upper bound of order $m \cdot (\ln n + D)$ and then a bound of order $m \cdot \ln n / \beta$.

\begin{lemma}\label{lemma:diameter-tail-bound}
  There exists a constant $n_0$ such that for all $n > n_0$ and $\lambda \ge 1$,
  \[
  \Pr[T(G) > \lambda m \cdot \max \{ 6 \cdot \ln n, D \} ] \le 1/n^{\lambda}.
  \]
\end{lemma}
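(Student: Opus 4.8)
The plan is to bound the broadcast time from a single source, and then union-bound over sources if needed (though the statement is about $T(G)$ from a worst-case source, so really we bound $T(v)$ for an arbitrary fixed $v$ and the max over $v$ is handled by the union bound absorbed into the $\log n$ factor). The key idea is that a message broadcast from $v$ must traverse paths in $G$, and to reach every node it suffices to push the message along a BFS tree rooted at $v$. Since $G$ has diameter $D$, every node is at distance at most $D$ from $v$, so each node's message-path has length at most $D$.

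Let me think about what I need. For a single target $u$ at distance $d \le D$ from $v$, fix a shortest path from $v$ to $u$. If the scheduler samples these $d$ edges in order, then $u$ is influenced by $v$. So $T(v,u) \le X(\rho)$ for the shortest-path edge sequence $\rho \in E^d$. Actually that's too crude if I want to reach all nodes simultaneously, because I'd need all the paths to complete, and their lengths sum up. Let me reconsider.

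Here's a cleaner approach. I claim the broadcast completes once, for a fixed spanning BFS tree $T$ rooted at $v$, every tree edge has been sampled "after" its parent edge. More carefully: order the tree edges by their distance from the root (breadth-first layers). A standard way to get the $m(\log n + D)$ bound is to note that the message reaches all of layer $k$ once each layer-$k$ node has interacted with its already-infected parent. But interactions happen at rate roughly $1/m$ per edge per step. I think the right framing is: consider the $n-1$ tree edges; the message reaches all nodes within time $X$, where informally the process is dominated by a sum of geometric delays along a longest root-to-leaf path (length $\le D$) plus a coupon-collector-type term for the width. Since this is the setup for the proof of \theoremref{thm:broadcast-upper}, I would lean on \lemmaref{lemma:rho-concentration}, which already gives the concentration for sampling a fixed edge sequence of length $k$ in $\lambda k m$ steps with failure probability $\exp(-k c(\lambda))$.

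Concretely, I would argue as follows. Fix the source $v$ and a BFS tree. Consider any node $u$ and the root-to-$u$ tree path $\rho_u$ of length $d_u \le D$. The event that $u$ is \emph{not} influenced by time $t$ is contained in the event that this particular path has not been "filled in order", but since edges can be sampled out of order and still transmit once the parent is infected, the honest bound is via the path sequence: $T(v,u) \preceq X(\rho_u)$. To reach the tail bound $1/n^\lambda$, set $t = \lambda m \cdot \max\{6\ln n, D\}$ and apply \lemmaref{lemma:rho-concentration}(a) to each $u$: with $k = d_u \le D$ and the threshold $\lambda m \cdot \max\{6\ln n, D\} \ge \lambda' k m$ where $\lambda' = \max\{6\ln n, D\}/d_u \ge \max\{6\ln n, D\}/D \ge 1$, the failure probability for each fixed $u$ is at most $\exp(-d_u\, c(\lambda'))$. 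Then union-bound over the $n$ choices of $u$. The point is that $c(\lambda') = \lambda' - 1 - \ln\lambda' \ge \lambda'/2$ for large $\lambda'$, and $d_u \lambda' \ge 6\ln n$ by construction, so each term is $\le n^{-3}$ or so, and summing over $n$ nodes gives $o(1/n^\lambda)$; I would tune the constant $6$ to make the $1/n^\lambda$ bound come out cleanly for all $\lambda \ge 1$.

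\textbf{The main obstacle} I expect is that the naive path-coupling $T(v,u) \preceq X(\rho_u)$ must be justified rigorously: the influencer dynamics are monotone (once $v \in I_t(w)$ it stays), so once the edges of $\rho_u$ are sampled in order the message has certainly arrived, giving the stochastic domination — this part is routine but needs a careful statement. The more delicate point is matching the threshold to both regimes of $\max\{6\ln n, D\}$ simultaneously and verifying that, after the union bound over the $n$ target nodes, the exponent still dominates $\lambda \ln n$ for \emph{every} $\lambda \ge 1$ (not just large $\lambda$); since $c(\lambda')$ grows only linearly while I need to beat $\lambda \ln n$, I must be careful that the factor $d_u \lambda' = \lambda\max\{6\ln n, D\}$ carries the $\lambda$ dependence correctly, and that the constant $6$ absorbs the $\ln\lambda'$ loss from $c$. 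This threshold-tuning is where the real bookkeeping lies, but there is enough slack (the factor $6$ against the required $1$) that it should go through.
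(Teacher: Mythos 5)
Your proposal follows essentially the same route as the paper's proof: dominate $T(v,u)$ by $X(\rho)$ along a shortest path from $v$ to $u$ (the paper works with per-pair shortest paths rather than a BFS tree, which amounts to the same thing), apply \lemmaref{lemma:rho-concentration}(a) with the rescaled parameter $\eta = (\lambda/k)\cdot\max\{6\ln n, D\} \ge 1$ for a pair at distance $k$, and finish with a union bound over pairs.

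However, the ``delicate point'' you flag at the end is a genuine gap, and your claim that ``there is enough slack (the factor $6$ against the required $1$)'' is wrong in one regime. The slack coming from the constant $6$ exists only when $6\ln n \ge D$. When $D \ge 6\ln n$, the maximum equals $D$, and for a pair at distance $k = D$ (such pairs exist by definition of the diameter) your concentration parameter is exactly $\lambda' = \lambda D/D = \lambda$; the exponent delivered by \lemmaref{lemma:rho-concentration} is then $D\cdot c(\lambda)$ with $c(\lambda) = \lambda - 1 - \ln\lambda$, which is $0$ at $\lambda = 1$, so the per-pair bound is vacuous no matter how the constant $6$ is tuned. Worse, this is not just a bookkeeping obstacle: on the $n$-cycle ($m = n$, $D = n/2$) with $\lambda = 1$, the time for a source to influence its antipodal node is the minimum of two sums of $n/2$ i.i.d.\ $\Geom(1/m)$ variables, each with mean exactly $mD$, so $\Pr[T(G) > m\cdot\max\{6\ln n, D\}]$ tends to a positive constant (about $1/4$) rather than being $O(1/n)$ --- the statement in this exact form cannot be proved for $\lambda$ near $1$ in the $D$-dominant regime. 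In fairness, the paper's own proof glosses over the identical point: its assertion that ``$\eta - 1 - \ln\eta > \eta/2$ holds for all $n > n_0$'' fails when $\eta$ is close to $1$, which is precisely the case $k = D \ge 6\ln n$ and $\lambda$ near $1$. So you correctly isolated the crux, but the step you deferred cannot be closed as stated. A clean repair is to strengthen the threshold to $\lambda m \cdot 6\max\{\ln n, D\}$: then $\eta \ge 6\lambda$ for every pair, hence $c(\eta) \ge 0.53\,\eta$, the exponent $k\,c(\eta) \ge 0.53\,\lambda\cdot 6\max\{\ln n, D\} \ge (\lambda+2)\ln n$ for all $\lambda \ge 1$, and the union bound over the at most $n^2$ pairs yields $1/n^{\lambda}$; the downstream asymptotic bounds are unaffected.
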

\begin{proof}
  Let $u, v \in V$. We first bound the probability that the propagation time $T(u,v)$ is large.
  Consider a shortest path $\rho$ of length $k$ between $u$ and $v$.
  Observe that $T(u,v) \preceq X(\rho)$.
  Define
  \[
  t = \lambda m \cdot \max \{ 6 \ln n, D \} \quad \textrm{ and } \quad \eta = (\lambda/k) \cdot \max \{ 6 \cdot \ln n, D \}.
  \]
  Note that there exists a constant $n_0$ so that for all $n > n_0$ the inequality $\eta - 1 - \ln \eta > \eta/2$ holds. Suppose $n > n_0$. Observing that $\eta \ge 1$ and $\eta \cdot \E[X(\rho)] = \eta k m = t$, we can apply 
  \lemmaref{lemma:rho-concentration} to get
\begin{align*}
\Pr[T(u,v) > t] &\le \Pr[X(\rho) > t] = \Pr[ X(\rho) > \eta \cdot \E[X(\rho)]] \\
&\le \exp( - k (\eta - 1 - \ln \eta) ) 
\le \exp( - k \eta / 2 ) \le \exp(-3\lambda \ln n) = 1/n^{3\lambda},
\end{align*}
where the second to last step follows, as  $\eta - 1 - \ln \eta > \eta/2$ holds.
Using union bound and $\lambda \ge 1$, we get
\begin{align*}
  \Pr[T(G) > t ] &= \Pr[ \max\{ T(u,v) : u,v \in V \} > t ] \\
  &\le \sum_{ u,v \in V} \Pr[T(u,v) > t] \le
  1/n^\lambda.
\end{align*}
\end{proof}

\begin{lemma} \label{lemma:diameter-broadcast}
  There exists a constant $n_0$ such that for all $n \ge n_0$,
  \[
  \tbroadcastg \le m \cdot \max \{ 6 \cdot \ln n, D \} + 2.
  \]
\end{lemma}
\begin{proof}
  Let $t^* = m \cdot \max \{ 6 \cdot \ln n, D \}$. Using \lemmaref{lemma:diameter-tail-bound}, we get 
  \begin{align*}
    \tbroadcastg &= \sum_{t=0}^\infty \Pr[T(G) > t]
    =\sum_{t=0}^{t^*} \Pr[ T(G) > t ] + \sum_{t=t^*+1}^{\infty} \Pr[ T(G) > t] \\
    &\le t^* +  1 + \sum_{t=t^* + 1}^{\infty} \Pr[T(G) > t] \\
    &\le t^* +  1 + \int_{0}^\infty \Pr[T(G) > (1+x)\cdot t^*] \ dx \\
    &\le t^* + 1 + \int_{0}^\infty \frac{1}{n^{1+x}} \ dx \le t^* + 2. \qedhere
  \end{align*}
\end{proof}

Next, we establish an upper bound of order $m \log n / \beta$ on the broadcast time using edge expansion. We will use a similar proof strategy as above.

\begin{lemma}\label{lemma:expansion-broadcast-whp}
  If $G$ has edge expansion $\beta > 0$ and $n \ge 2$, then for any $\lambda \ge 1$ the broadcast time satisfies
  \[
\Pr\left[T(G) \ge \lambda \cdot \frac{ 2m \cdot \log n}{\beta}\right] \le \left(\frac{1}{n-1}\right)^{ \lambda - e  - \ln \lambda}.
  \]
\end{lemma}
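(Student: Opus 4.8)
The plan is to track the growth of the informed set for a single fixed source, reduce the broadcast time to a sum of independent geometric random variables whose parameters are controlled by the edge expansion $\beta$, and then invoke the Janson-type tail bound of \lemmaref{lemma:sum-of-geometric}; this mirrors the tail-bound-then-integrate strategy of \lemmaref{lemma:diameter-tail-bound}, but uses $\beta$ in place of the diameter. First I fix a source $v$ and set $S_t = \{u : v \in I_t(u)\}$, the set of nodes informed of $v$'s message after step $t$, so that $S_0 = \{v\}$ and $T(v) = \max_u T(v,u)$ is the first time $S_t = V$. A single step enlarges $S_t$ by exactly one node precisely when the sampled ordered pair uses a boundary edge of $\partial S_t$; since there are $2m$ ordered pairs and each of the $|\partial S_t|$ boundary edges contributes two of them, this occurs with conditional probability $|\partial S_t|/m$. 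The crucial observation is that the expansion bound holds uniformly over all sets of a given size in both regimes: for any $S$ with $1 \le |S| \le n-1$ we have $|\partial S| \ge \beta\min(|S|,\,n-|S|)$, applying the definition of $\beta$ directly when $|S| \le n/2$ and to the complement otherwise. Hence, while $|S_t| = j$, the waiting time for the next node is dominated by $\operatorname{Geom}(p_j)$ with $p_j = \beta\min(j,n-j)/m$, and a sequential coupling gives $T(v) \preceq X := \sum_{j=1}^{n-1}\operatorname{Geom}(p_j)$ with independent summands.

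Next I would apply \lemmaref{lemma:sum-of-geometric}(a) to $X$. Here $p := \min_j p_j = p_1 = \beta/m$, and $\E[X] = \sum_j 1/p_j = (m/\beta)H$ where $H := \sum_{j=1}^{n-1} 1/\min(j,n-j)$, so that the product appearing in the exponent simplifies neatly to $p\,\E[X] = H$. Writing the target threshold as $s = \lambda\cdot 2m\log n/\beta = \lambda'\,\E[X]$ with $\lambda' = 2\lambda\log n / H$, the lemma yields $\Pr[X \ge s] \le \exp(-H\,c(\lambda'))$ provided $\lambda' \ge 1$. The elementary estimates $H \le 2 H_{\lfloor n/2\rfloor} \le 2\log n + O(1)$ (each value $k \le n/2$ is attained by $\min(j,n-j)$ at most twice) and $H \ge H_{n-1} \ge \log n$ guarantee $\lambda' \ge 1$ for $n \ge 2$ and let me lower bound $H\,c(\lambda')$ purely in terms of $\lambda$ and $\log n$.

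Finally, since $T(G) = \max_v T(v)$, a union bound over the $n$ sources gives $\Pr[T(G) \ge s] \le n\exp(-H\,c(\lambda'))$, and it remains to verify the arithmetic inequality $H\,c(\lambda') - \log n \ge (\lambda - e - \ln\lambda)\ln(n-1)$. Substituting the approximations $H \approx 2\log n$ and $\lambda' \approx \lambda$, the left side behaves like $2\log n\cdot c(\lambda)$ and the whole inequality reduces to $\lambda - \ln\lambda \ge 3 - e$, which holds with room to spare for every $\lambda \ge 1$ because $\lambda - \ln\lambda \ge 1$; the surplus $e - 2 > 0$ is exactly the slack that the deliberately weakened exponent $\lambda - e - \ln\lambda$ leaves available.

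The conceptual heart of the argument is the uniform expansion bound $|\partial S| \ge \beta\min(|S|,n-|S|)$, which forces the per-step progress probability to be at least $p_j$ regardless of the realized informed set and thereby makes the reduction to a clean geometric sum possible. I expect the main obstacle to be not the probabilistic core but the constant bookkeeping in the last step: one must pin down sharp enough two-sided bounds on $H$, and handle small $n$ via the hypothesis $n \ge 2$, so that the slack from the $-e$ term genuinely absorbs both the $+\log n$ from the union bound over sources and the mismatch between $\log n$ and $\ln(n-1)$.
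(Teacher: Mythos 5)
Your proposal is correct in substance and takes essentially the same route as the paper's proof: the same informed-set process $S_t$, the same stochastic domination of $T(v)$ by a sum of independent geometrics with success probabilities $\beta\min(j,n-j)/m$, the same application of \lemmaref{lemma:sum-of-geometric} with $p\,\E[X]=H\ge H_{n-1}$, and the same union bound over the $n$ sources absorbed into the slack of the weakened exponent $\lambda-e-\ln\lambda$. The only soft spot is your closing arithmetic: the reduction to $\lambda-\ln\lambda\ge 3-e$ uses $\ln(n-1)\le\ln n$ in a direction that is only valid when $\lambda-e-\ln\lambda\ge 0$ (when that exponent is negative the claimed bound exceeds $1$ and is vacuous, so nothing is actually lost), whereas the paper sidesteps this by bounding each per-source tail by $(n-1)^{-(\lambda-1-\ln\lambda)}$ and then absorbing the union-bound factor $n$ via $n\le(n-1)^{e-1}$ for $n\ge 3$.
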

\begin{proof}
  As the claim is vacuous for $n=2$, we may assume that $n > 2$. Consider a node $v \in V$. We first bound the broadcast time  $T(v)$ from node $v \in V$. Let
$S_t = \{ u : v \in I_t(u) \}$
be the set of nodes influenced by node $v$ at time step $t$.

Note that for any time step $t \ge 0$, the event $|S_{t+1}| > |S_t|$ occurs if and only if the scheduler samples an edge from the boundary $\partial S_t$ of the set $S_t$.
By definition of edge expansion, $|\partial S_t| \ge \beta |S_t|$ when $|S_t|\le n/2$ and $|\partial S_t| \ge \beta |V \setminus S_t|$ otherwise.
Since the scheduler samples edges in each step independently from all other steps, the probability of this event is at least $\beta i / m$ when $|S_t|=i \le n/2$ and at least $\beta (n-i)/m$ when $n/2 \le i \le n-1$. Thus, the number $X_{i}$ of steps it takes from the set of influenced nodes to grow from $i$ to $i+1$ is stochastically dominated by the geometric random variable $Y_{i} \sim \operatorname{Geom}(p_{i})$ with $p_{i} = \beta i / m$ for $1 \le i \le  n/2$  and $p_{i+1} = \beta (n-i)/m$ for $n/2 \le i \le n-1$. 

Let $Y = Y_1 + \cdots + Y_{n-1}$. The minimum time until all nodes are influenced by $v$ satisfies
$T(v)  \preceq Y$. By linearity of expectation, we get
\[
\E[Y] = \E\left[\sum_{i=1}^{n-1} Y_i\right] = \sum_{i=1}^{n-1} \E\left[Y_i\right]  = \sum_{i=1}^{n-1} \frac{1}{p_i} = \frac{m}{\beta} \cdot C(n),
\]
where $H_{n-1} \le C(n) \le 2H_n$ and $H_n$ denotes the $n$th harmonic number. 
Since  each $p_{i+1} \ge \beta/m$,  \lemmaref{lemma:sum-of-geometric} yields
\begin{align*}
  \Pr\left[T(v) \ge \frac{2\lambda m \log n}{\beta} \right] &\le \Pr\left[Y \ge  \frac{2\lambda m \log n}{\beta} \right] \le \Pr\left[Y \ge  \lambda \E[Y] \right] \\
&\le \exp\left(- H_{n-1} \cdot (\lambda - 1 - \ln \lambda)\right) \le \left(\frac{1}{n-1}\right)^{\lambda - 1 - \ln \lambda}. 
\end{align*}
where in the last inequality we used the fact that the harmonic number satisfies $H_n \ge \ln n$. Then by the union bound we get
\begin{align*}
  \Pr\left[T(G) \ge \lambda \cdot \frac{ 2m \cdot \log n}{\beta}\right] &\le n \left(\frac{1}{n-1}\right)^{\lambda - 1 - \ln \lambda} 
\le \left(\frac{1}{n-1}\right)^{\lambda - 2 - \ln \lambda - \varepsilon}
\end{align*}
for any $\varepsilon > 0$ such that $1 + \varepsilon \ge \ln n/\ln(n-1)$.
Since $n \ge 3$, the claim follows by setting $\varepsilon = e-2$.
\end{proof}

\begin{lemma}\label{lemma:expansion-broadcast}
There exists a constant $\lambda_0 \ge 2$ such that for any graph $G$ with edge expansion $\beta > 0$, the expected broadcast time satisfies
\[
\tbroadcastg \le (2\lambda_0 m \log n)/\beta + 2.
\]
\end{lemma}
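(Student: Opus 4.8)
The plan is to convert the high-probability tail bound of \lemmaref{lemma:expansion-broadcast-whp} into a bound on the expectation, exactly as \lemmaref{lemma:diameter-broadcast} does for the diameter bound. Since $T(G)$ is a nonnegative integer-valued random variable, we use the identity $\tbroadcastg = \E[T(G)] = \sum_{t=0}^\infty \Pr[T(G) > t]$. I would fix a constant $\lambda_0 \ge 2$ (to be chosen at the end) and set the threshold $t^* = 2\lambda_0 m \log n / \beta$. Splitting the sum at $t^*$, the head $\sum_{t=0}^{t^*} \Pr[T(G) > t]$ is at most $t^* + 1$ since each probability is at most $1$, so it remains only to show that the tail $\sum_{t > t^*} \Pr[T(G) > t]$ is at most $1$.

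For the tail, since $t \mapsto \Pr[T(G) > t]$ is non-increasing I would bound the sum by an integral and change variables via $s = \lambda \cdot 2m\log n/\beta$:
\[
\sum_{t > t^*} \Pr[T(G) > t] \le \int_{t^*}^\infty \Pr[T(G) > s]\, ds = \frac{2m \log n}{\beta} \int_{\lambda_0}^\infty \Pr\!\left[T(G) > \lambda \cdot \frac{2m\log n}{\beta}\right] d\lambda.
\]
Applying \lemmaref{lemma:expansion-broadcast-whp} inside the integral (noting $\lambda \ge \lambda_0 \ge 1$) bounds the integrand by $(n-1)^{-(\lambda - e - \ln\lambda)}$, so it suffices to establish
\[
\frac{2m\log n}{\beta} \int_{\lambda_0}^\infty (n-1)^{-(\lambda - e - \ln\lambda)}\, d\lambda \le 1.
\]

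The main obstacle — and essentially the only content — is that the prefactor $2m\log n/\beta$ can be polynomially large in $n$, so $\lambda_0$ must be chosen large enough that the super-polynomial tail decay absorbs it. Concretely, since $G$ is connected and $|S| \le n/2$ forces $|\partial S| \ge 1$, we have $\beta \ge 2/n$; together with $m \le n^2/2$ this gives $2m\log n/\beta \le n^3 \log n$. For the integral, once $\lambda$ exceeds a fixed constant we have $\ln \lambda \le \lambda/2$, hence $\lambda - e - \ln\lambda \ge \lambda/2 - e$ and
\[
\int_{\lambda_0}^\infty (n-1)^{-(\lambda - e - \ln\lambda)}\, d\lambda \le (n-1)^{e} \int_{\lambda_0}^\infty (n-1)^{-\lambda/2}\, d\lambda = \frac{2(n-1)^{e - \lambda_0/2}}{\ln(n-1)}.
\]
Combining the two estimates, the tail is at most $O\!\big(n^3 (n-1)^{e-\lambda_0/2}\big)$ up to logarithmic factors, which drops below $1$ for all large $n$ as soon as $\lambda_0/2 - e \ge 3 + \varepsilon$, i.e. for any sufficiently large absolute constant $\lambda_0$ (which is automatically $\ge 2$, consistent with \lemmaref{lemma:expansion-broadcast-whp}). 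This yields $\tbroadcastg \le t^* + 2 = 2\lambda_0 m \log n/\beta + 2$, as claimed. I expect the only delicate point to be keeping the constant $\lambda_0$ explicit while verifying that the crude bounds $\beta \ge 2/n$ and $m \le n^2/2$ are enough to control the prefactor.
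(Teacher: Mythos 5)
Your proof is correct in substance and follows the paper's overall strategy---split $\tbroadcastg=\sum_{t\ge0}\Pr[T(G)>t]$ at $t^*=2\lambda_0 m\log n/\beta$, bound the head by $t^*+1$, and kill the tail with \lemmaref{lemma:expansion-broadcast-whp}---but it departs from the paper at the sum-to-integral step, and there your version is the more rigorous one. The paper passes from $\sum_{t=t^*+1}^{\infty}\Pr[T(G)>t]$ directly to $\int_0^\infty\Pr[T(G)>t^*(1+x)]\,dx$, which silently discards the Jacobian of the substitution: for a non-increasing tail the valid comparison is
\[
\sum_{t=t^*+1}^{\infty}\Pr[T(G)>t]\;\le\;\int_{t^*}^{\infty}\Pr[T(G)>s]\,ds\;=\;\frac{2m\log n}{\beta}\int_{\lambda_0}^{\infty}\Pr\left[T(G)>\lambda\cdot\frac{2m\log n}{\beta}\right]d\lambda,
\]
exactly as you write it. Keeping this polynomially large prefactor is what forces your extra work---the crude bounds $\beta\ge 2/n$ (legitimate, since $\beta>0$ forces connectivity and any $S$ with $|S|\le n/2$ has $|\partial S|\ge 1$) and $m\le n^2/2$, plus a larger absolute constant $\lambda_0\ge 2e+6+\varepsilon$ instead of the paper's $\lambda_0\approx 12$---but that work is precisely the repair the paper's argument needs; both routes yield the same asymptotic bound, which is all that \theoremref{thm:broadcast-upper} uses.

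One loose end: your tail estimate is only verified for $n$ above some absolute constant $n_1$, whereas the lemma quantifies over every graph with $\beta>0$. This is immaterial and easily patched: there are finitely many connected graphs on fewer than $n_1$ vertices, each has finite $\tbroadcastg$ while $2m\log n/\beta\ge 2\log 2>0$, so enlarging $\lambda_0$ once more covers all of them. (The paper's own proof has the analogous blind spot at $n=2$, where $\ln(n-1)=0$ makes its final integral diverge.)
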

\begin{proof}
  Choose $\lambda_0 \ge 2$ such that $\lambda - e - \ln \lambda \ge \lambda/2 \ge 1$ for all $\lambda \ge \lambda_0$. Set $t^* = (2\lambda_0 m \log n)/\beta$. Now
  \begin{align*}
    \tbroadcastg = \E[T(G)] &= \sum_{t=0}^\infty \Pr[T(G) > t] \\
    &= \sum_{t=0}^{t^*} \Pr[T(G) > t] +  \sum_{t=t^*+1}^{\infty} \Pr[T(G) > t] \\
    &\le t^* + 1 + \sum_{t=t^*+1}^{\infty} \Pr[T(G) > t].
  \end{align*}
  The claim now follows by using \lemmaref{lemma:expansion-broadcast-whp} with $\lambda(x) = (1+x)\lambda_0$, as we get
  \begin{align*}
    \sum_{t=t^*+1}^{\infty} \Pr[T(G) > t] &\le \int_{0}^\infty \Pr[T(G) > t^*(1+x)] \ dx  \\
    &\le  \int_{0}^\infty \left(\frac{1}{n-1}\right)^{(1+x)\lambda_0 - e - \ln((1+x)\lambda_0)} \ dx  \\
    &\le  \int_{0}^\infty \left(\frac{1}{n-1}\right)^{(1+x)\lambda_0/2} \ dx \\
    &\le \int_{0}^\infty \left(\frac{1}{n-1}\right)^{(1+x)} \ dx \le 1 . \qedhere
  \end{align*}
\end{proof}

Now \lemmaref{lemma:diameter-broadcast} and \lemmaref{lemma:expansion-broadcast} together imply the bounds given in \theoremref{thm:broadcast-upper}.
Finally, we briefly remark that broadcast time on dense random regular graphs is $O(n \log n)$ with high probability.

\begin{lemma}
  \label{remark:gnp-broadcast}
  Let $p > 0$ be a constant and $G \sim G_{n,p}$ with $G$ conditioned on being connected. Then $\tbroadcast(G) \in O(n\log n)$ in expectation and with high probability. 
\end{lemma}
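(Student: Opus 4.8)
The plan is to reduce the claim to two structural properties of $G \sim G_{n,p}$ that hold with overwhelming probability, and then to invoke \theoremref{thm:broadcast-upper}, which already gives $\tbroadcast(G) \in O(m\log n/\beta)$ in terms of the number of edges $m$ and the edge expansion $\beta$. Since $m \le \binom{n}{2} < n^2$ holds deterministically, the whole argument comes down to showing that the edge expansion is large, namely $\beta(G) \ge pn/4$ with probability $1 - e^{-\Omega(n)}$. Plugging $m < n^2$ and $\beta \ge pn/4$ into \theoremref{thm:broadcast-upper} then yields $\tbroadcast(G) \in O(m\log n/\beta) = O(n\log n)$, since $p$ is constant. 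Note that the alternative diameter-based bound is too weak here: $D = O(1)$ for constant $p$, but $m(\log n + D) = \Theta(n^2\log n)$, so we genuinely need the expansion estimate.

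The heart of the proof is the expansion bound, which I would establish by a Chernoff-and-union-bound argument. Fix a set $S$ with $|S| = s \le n/2$. The boundary size $|\partial S|$ is a binomial random variable $\Bin{s(n-s)}{p}$ with mean $\mu = p\,s(n-s) \ge psn/2$, since $n - s \ge n/2$. Applying \lemmaref{lemma:bernoulli-chernoff} with deviation parameter $1/2$ gives $\Pr[|\partial S| < psn/4] \le \Pr[|\partial S| \le \mu/2] \le \exp(-\mu/8) \le \exp(-psn/16)$. Taking a union bound over the at most $\binom{n}{s} \le e^{s\ln n}$ sets of each size yields a failure probability of at most $\exp(s(\ln n - pn/16))$ per size class; for $n$ large enough that $pn/16 \ge 2\ln n$, summing the resulting geometric series over $1 \le s \le n/2$ gives total failure probability $e^{-\Omega(n)}$. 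On the complementary event, every such $S$ satisfies $|\partial S|/|S| \ge pn/4$, so $\beta(G) \ge pn/4$.

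Finally I would assemble the two claims, handling the conditioning on connectivity essentially for free. Because $\beta(G) > 0$ already implies that $G$ is connected, the good event $\{\beta \ge pn/4\}$ is contained in the connectivity event, so $\Pr[\text{connected}] \ge 1 - e^{-\Omega(n)}$ and the conditional probability of the good event tends to $1$; this gives the with-high-probability statement. For the expectation, I would split $\E[\tbroadcast(G) \mid \text{connected}]$ according to whether the good event holds: its contribution is $O(n\log n)$, while on the exceptional (connected but low-expansion) event I would use the crude worst-case bound $\tbroadcast(G) \in O(n^3)$, valid for every connected graph since $m < n^2$ and $D < n$ in \theoremref{thm:broadcast-upper}. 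As this exceptional event has probability $e^{-\Omega(n)}$, its contribution is $o(1)$, and dividing by $\Pr[\text{connected}] = 1 - o(1)$ leaves $\E[\tbroadcast(G) \mid \text{connected}] \in O(n\log n)$. The main obstacle is the expansion estimate, where one must ensure the per-set Chernoff decay $e^{-\Theta(sn)}$ comfortably beats the $e^{\Theta(s\log n)}$ subsets of size $s$; this works precisely because a linear edge density forces every small cut to have expected size $\Theta(sn)$, which is super-logarithmic in the number of subsets.
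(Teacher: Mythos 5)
Your proof is correct, but it takes a genuinely different route from the paper's. The paper establishes the key fact $m/\beta \in O(n)$ indirectly: it cites an external spectral result (a normalized-Laplacian spectral gap bound for $G_{n,p}$ with constant $p$, holding with exponential probability), converts it into a conductance bound via the Cheeger inequality, and then uses almost-regularity of dense random graphs to translate conductance into edge expansion, before invoking \theoremref{thm:broadcast-upper} exactly as you do. You instead prove the expansion bound $\beta(G) \ge pn/4$ from first principles: for each $S$ with $|S| = s \le n/2$ the boundary is $\Bin{s(n-s)}{p}$ with mean at least $psn/2$, the Chernoff bound (\lemmaref{lemma:bernoulli-chernoff}) gives per-set failure probability $e^{-psn/16}$, and since this decays like $e^{-\Theta(sn)}$ while there are only $e^{O(s\log n)}$ sets of size $s$, a union bound and geometric summation give overall failure probability $e^{-\Omega(n)}$ --- this computation checks out. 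Your approach buys self-containedness (it uses only the paper's own concentration lemma, with explicit constants, rather than quoting spectral machinery), and you are also noticeably more careful than the paper about two points it treats tersely: the conditioning on connectivity (your observation that $\beta > 0$ already implies connectivity, so the good event sits inside the conditioning event, is a clean way to handle this) and the expectation statement, where you explicitly control the exceptional event by the deterministic bound $\tbroadcast(G) \in O(n^3)$; the paper gestures at the same deterministic bound via $m/\beta \in O(n^3)$ but does not spell out the splitting. What the paper's route buys in exchange is brevity and a statement (spectral gap $1-o(1)$, conductance $1-o(1)$) that is sharper than the constant-factor expansion bound you derive, though that sharpness is not needed for the $O(n \log n)$ conclusion.
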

\begin{proof} 

In the following, we say that an event holds with exponential probability if it happens with probability at least $1-\exp\left(-n^{\Omega(1)}\right)$.
It is well-known that sufficiently dense random graphs have a small spectral gap. For example, from \cite[Theorem 1.1]{Hoffman2019SpectralGO} it follows that the spectral gap of a normalized Laplacian is $o(1)$ with exponential probability for Erd\"os-R\'enyi graphs $G_{n,p}$ with $p \in \Theta(1)$.

From the Cheeger inequality, it follows that conductance $\phi$ of the graph is $1 - o(1)$, which implies that $m/\beta \in n(1 + o(1))$, since an Erd\"os-R\'enyi graph with $p \in \Theta(1)$ is almost regular: all degrees are within $[(1 - \varepsilon)(n-1)p, (1 + \varepsilon)(n-1)p]$ with exponential probability. Additionally, $m/\beta \in O(n^3)$ deterministically for any connected graph $G$. Thus, \theoremref{thm:broadcast-upper} implies that $\tbroadcast(G) \in O(m/\beta \cdot \log n)$ in expectation and with high probability for $G \sim G_{n,p}$ conditioned on connectivity. 
\end{proof}

\subsection{Lower bounds for propagation and broadcast time}

We now establish lower bounds on the propagation time. This implies lower bounds for broadcast time. Note that $T(G) \ge n/2$ is a trivial lower bound, as every node needs to interact at least once for the all nodes to become influenced by the source node. We start with a simple bound that applies to any graph.

\begin{lemma}\label{lemma:nlogn-bc-lb}
For any graph $G$ with maximum degree $\Delta$, we have
\[
\tbroadcast(G) \ge  m/\Delta \cdot \ln(n-1).
\]
\end{lemma}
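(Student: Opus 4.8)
The plan is to lower-bound the expected broadcast time from a single source by tracking how the set of influenced nodes grows, exploiting the fact that the maximum degree $\Delta$ caps the number of boundary edges and hence the growth rate. Fix any source node $v \in V$ and, as in the proof of \lemmaref{lemma:expansion-broadcast-whp}, let $S_t = \{u : v \in I_t(u)\}$ be the set of nodes influenced by $v$ after $t$ steps, so $|S_0| = 1$ and broadcast from $v$ completes exactly when $|S_t| = n$.

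First I would observe that $|S_t|$ increases by at most one per step, and increases precisely when the scheduler samples an edge of the boundary $\partial S_t$; on every other sample $S_t$ is unchanged. Consequently, while $|S_t| = i$ the influenced set is frozen at some fixed $S$ with $|S| = i$, and at each step the probability of an increment is exactly $|\partial S|/m$ (there are $2|\partial S|$ ordered boundary pairs out of $2m$, and either orientation informs the outside endpoint). Since every boundary edge is incident to a node of $S$, we have $|\partial S| \le \sum_{u \in S}\deg(u) \le \Delta i$.

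Next, let $X_i$ denote the number of steps during which $|S_t| = i$, i.e.\ the time to grow from $i$ to $i+1$ influenced nodes. Conditioned on the frozen set $S$ of size $i$, $X_i$ is geometric with parameter $|\partial S|/m$, so $\E[X_i \mid S] = m/|\partial S| \ge m/(\Delta i)$, and by the tower rule $\E[X_i] \ge m/(\Delta i)$. Since $n-1$ increments are required, $T(v) = \sum_{i=1}^{n-1} X_i$, and linearity of expectation gives
\[
\E[T(v)] = \sum_{i=1}^{n-1}\E[X_i] \ge \frac{m}{\Delta}\sum_{i=1}^{n-1}\frac{1}{i} = \frac{m}{\Delta}H_{n-1} \ge \frac{m}{\Delta}\ln(n-1),
\]
using $H_{n-1} \ge \ln(n-1)$. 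As $\tbroadcast(G) = \max_{v}\E[T(v)] \ge \E[T(v)]$, the claim follows.

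The argument is essentially bookkeeping, so I anticipate no serious obstacle; the one point requiring care is justifying that the per-step increment probability is genuinely at most $\Delta i/m$ independently of the execution history. This is handled by the observation that while $|S_t|$ remains equal to $i$ the influenced set itself does not change, so its boundary—and hence the success probability—is a fixed quantity bounded by $\Delta i / m$, which is exactly what makes the conditional expectation bound clean and the summation over $i$ valid.
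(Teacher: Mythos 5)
Your proof is correct and follows essentially the same route as the paper's: both decompose $T(v)$ into the waiting times for the influenced set to grow from size $i$ to $i+1$, bound the per-step growth probability by $\Delta i/m$ using the degree bound on the boundary, and sum the resulting harmonic series $\frac{m}{\Delta}H_{n-1} \ge \frac{m}{\Delta}\ln(n-1)$. The only cosmetic difference is that the paper invokes stochastic domination by a $\Geom(\Delta i/m)$ variable, whereas you condition on the frozen set and apply the tower rule—an equivalent (and if anything slightly more explicit) justification of the same step.
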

\begin{proof}
  Let $v \in V$. In any given step, the probability that the number of nodes influenced by node $v$ increases from $i$ to $i+1$ by one is at most $p_{i+1} = \Delta i/m$. Let $X_{i+1}$ be the number of steps it takes from the set of nodes influenced by $v$ to grow from $i$ to $i+1$. Now $X_{i+1}$ stochastically dominates the geometric random variable $Y_{i+1} \sim \Geom(p_{i+1})$. Therefore, 
  \[
  \E[T(v)] = \sum_{i=1}^{n-1} \E[X_{i+1}] \ge \sum_{i=1}^{n-1} \E[Y_{i+1}]  = \sum_{i=1}^{n-1} \frac{m}{\Delta i} = \frac{m}{\Delta} \cdot H_{n-1},
  \]
  where the harmonic number $H_{n-1}$ satisfies $H_{n-1} \ge \ln (n-1)$.
\end{proof}

For the lower bounds, we bound the \emph{distance-$k$ propagation times} $T_k(G)$ using the notion of an obstructing set, which acts as a bottleneck for information propagation. For any $r,k \in \mathbb{N}$ such that $r \le k$, we say that a set $K \subseteq E^r$ of length-$r$ edge sequences is an \emph{$(r,k)$-obstructing} set for node $v \in V$ if every path $\rho(v,u)$ from $v$ to a node $u$ with $\dist(u,v)=k$ contains some $\sigma \in K$ as a subsequence. The next lemma is useful for bounding propagation times.

\begin{restatable}{lemma}{obstructinglemma}
\label{lemma:obstructing-lemma}
Let $K$ be a $(r,k)$-obstructing set for $v$. Then for any $0 < \lambda \le 1$, we have
\[
\Pr[T_k(v) < \lambda rm] \le |K| \cdot e^{-r c(\lambda)},
\]
where $c(\lambda) = \lambda - 1 - \ln \lambda$.
\end{restatable}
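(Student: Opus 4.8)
The plan is to reduce the event that information from $v$ reaches distance $k$ quickly to the event that some short edge sequence in $K$ is sampled in order quickly, and then to finish with a union bound over $K$ together with the concentration estimate of \lemmaref{lemma:rho-concentration}.

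First I would make precise the structural fact underlying the infection dynamics: for node $v$ to influence a node $u$ by step $t$ (that is, $v \in I_t(u)$), there must exist a graph path $v = w_0, w_1, \dots, w_\ell = u$ and strictly increasing times $t_1 < t_2 < \cdots < t_\ell \le t$ such that the edge $\{w_{i-1}, w_i\}$ is precisely the one sampled at step $t_i$. This follows by unrolling the recursive definition of $I_t(\cdot)$: influence spreads across an edge only at the moment that edge is sampled, so any node at distance $k$ can be reached only along a path whose edges are traversed in increasing time order.

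Next I would invoke the obstructing property. Suppose $T_k(v) < \lambda r m$. Then some node $u$ with $\dist(u,v) = k$ is influenced by $v$ before step $\lambda r m$, so by the previous paragraph there is a realizing path $\rho$ from $v$ to $u$ whose edges are sampled in strictly increasing time order, all before step $\lambda r m$. Since $K$ is $(r,k)$-obstructing for $v$, the path $\rho$ contains some $\sigma \in K$ as a subsequence; hence the $r$ edges of $\sigma$ are sampled in order as a subsequence of the schedule, and this happens within the first $\lambda r m - 1$ steps. In the notation of the paragraph on sampling edge sequences, this says exactly that $X(\sigma) < \lambda r m$. Consequently,
\[
\{ T_k(v) < \lambda r m \} \subseteq \bigcup_{\sigma \in K} \{ X(\sigma) < \lambda r m \}.
\]

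Finally I would conclude with a union bound and the tail estimate for a fixed edge sequence. For each $\sigma \in K$ we have $\E[X(\sigma)] = r m$, so part~(b) of \lemmaref{lemma:rho-concentration} applied with the given $0 < \lambda \le 1$ gives $\Pr[X(\sigma) < \lambda r m] \le e^{-r c(\lambda)}$. Summing over the at most $|K|$ sequences yields $\Pr[T_k(v) < \lambda r m] \le |K| \cdot e^{-r c(\lambda)}$, as claimed. I expect the only genuinely delicate step to be the first: correctly formalizing the space--time path characterization of the influence sets $I_t(\cdot)$ and verifying that the obstructing sequence $\sigma$, as a subsequence of the realizing path $\rho$, really is sampled as an ordered subsequence of the schedule within the time budget. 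Once this is pinned down, the union bound and the direct appeal to \lemmaref{lemma:rho-concentration} are routine.
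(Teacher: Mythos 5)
Your proposal is correct and follows essentially the same route as the paper's proof: reduce the event $\{T_k(v) < \lambda rm\}$ to the event that some $\sigma \in K$ is sampled in order within the time budget, then apply the union bound over $K$ and part~(b) of \lemmaref{lemma:rho-concentration}. If anything, your write-up is slightly cleaner than the paper's, which routes through an intermediate sum over all realizing paths before passing to $K$, whereas you establish the event containment $\{T_k(v) < \lambda rm\} \subseteq \bigcup_{\sigma \in K}\{X(\sigma) < \lambda rm\}$ directly.
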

\begin{proof}
  Let $P$ be the set of paths of from $v$ to any node $u$ at distance $k$ from $v$.
  Since any path $\rho \in P$ contains some $\sigma \in K$ as a subsequence,
  if the stochastic schedule contains $\rho$ as a subsequence, then it also contains $\sigma$ as a subsequence.
  Therefore, by using the union bound and \lemmaref{lemma:rho-concentration}, we get that 
  \begin{align*}
    \Pr[T_k(v) < \lambda rm] &\le \sum_{\rho \in P} \Pr[ X(\rho) < \lambda rm] \\
    &\le \sum_{\sigma \in K} \Pr[ X(\sigma) < \lambda rm] \\
    &\le |K| \cdot e^{-rc(\lambda)}. \qedhere
  \end{align*}
\end{proof}

\begin{lemma}\label{lemma:max-degree-propagation}
  If $G$ is a graph with maximum degree $\Delta$ and $k \ge \ln n$, then
  \[
  \Pr\left[T_k(G) < (km)/(\Delta e^3) \right] \le 1/n.
  \]
\end{lemma}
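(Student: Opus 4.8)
The plan is to reduce the claim to a per-node tail bound via a union bound, and then control each node's distance-$k$ propagation time using the obstructing-set machinery of \lemmaref{lemma:obstructing-lemma}. Since $T_k(G) = \min_{v} T_k(v)$, I would first write
\[
\Pr[T_k(G) < t] \le \sum_{v \in V} \Pr[T_k(v) < t]
\]
for the threshold $t = (km)/(\Delta e^3)$, so it suffices to prove $\Pr[T_k(v) < t] \le 1/n^2$ for each fixed $v$; the factor of $n$ lost to the union bound then yields the desired $1/n$.

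For the per-node bound I would build an explicit $(k,k)$-obstructing set for $v$. Take $K_v$ to be the collection of all length-$k$ edge sequences that arise as the first $k$ edges of some path from $v$ to a node at distance $k$. Every such path has length at least $k$, so its length-$k$ prefix lies in $K_v$, and hence $K_v$ is genuinely $(k,k)$-obstructing. The key quantitative point is that each element of $K_v$ is a length-$k$ walk starting at $v$, and in a graph of maximum degree $\Delta$ the number of such walks is at most $\Delta^k$, since there are at most $\Delta$ choices for each successive edge. Thus $|K_v| \le \Delta^k$.

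Now I would apply \lemmaref{lemma:obstructing-lemma} with $r = k$ and the specific choice $\lambda = 1/(\Delta e^3)$, which makes $\lambda k m$ equal to the target threshold $t$. The crucial computation is that with this $\lambda$,
\[
c(\lambda) = \lambda - 1 - \ln \lambda = \ln \Delta + 2 + \lambda,
\]
so that the combinatorial factor $\Delta^k$ is exactly absorbed by the concentration gain:
\[
\Pr[T_k(v) < t] \le |K_v| \cdot e^{-k c(\lambda)} \le \exp(k \ln \Delta - k c(\lambda)) = \exp(-k(2+\lambda)) \le e^{-2k}.
\]
Finally, using the hypothesis $k \ge \ln n$ gives $e^{-2k} \le n^{-2}$, and summing over the $n$ nodes closes the argument.

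The only real obstacle is the calibration of $\lambda$: one must pick it so that the exponential explosion $\Delta^k$ in the size of the obstructing set is cancelled by the tail bound $e^{-k c(\lambda)}$, while leaving enough slack to beat the union bound. The choice $\lambda = 1/(\Delta e^3)$ is engineered precisely so that $c(\lambda) - \ln \Delta \ge 2$; everything else is a direct substitution into \lemmaref{lemma:obstructing-lemma} together with the routine union-bound bookkeeping.
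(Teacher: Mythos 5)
Your proposal is correct and follows essentially the same route as the paper: a union bound over vertices, a $(k,k)$-obstructing set of size at most $\Delta^k$ built from length-$k$ paths out of $v$, and the calibration $\lambda = 1/(\Delta e^3)$ so that $c(\lambda) = \ln\Delta + 2 + \lambda$ absorbs the $\Delta^k$ factor, leaving $e^{-2k} \le 1/n^2$ per node. The only cosmetic difference is that the paper bounds $\ln\Delta - c(\lambda) < -2$ by an inequality chain rather than computing $c(\lambda)$ exactly, and it explicitly notes the trivial case $T_k(v) = \infty$ when no node at distance $k$ exists.
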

\begin{proof}
  Consider some node $v \in V$ and let $K$ be the set of all paths of length $k$ originating from $v$. If no such paths exist, then $T_k(v) = \infty$. Observe that $K$ is a $(k,k)$-obstructing set for $v$ and that $|K| \le \Delta^{k} = \exp( k\ln \Delta)$.
  Choose $\lambda = 1/(e^3 \cdot \Delta)$ and observe that
\begin{align*}
  \ln \Delta - c(\lambda) &< \ln \Delta - (\lambda - 1 - \ln \lambda) \\
  &= \ln \Delta - \lambda + 1 + \ln \lambda  \\
  &= \ln \Delta - \frac{1}{e^3 \Delta} + 1 - \ln(e^3 \Delta) < -2.
\end{align*}
Applying \lemmaref{lemma:obstructing-lemma} with $K$ and $\lambda$ yields
\begin{align*}
  \Pr\left[T_k(v) < \frac{km}{\Delta e^3} \right] &\le |K| \cdot e^{-k \cdot c(\lambda)} \\
  &\le
  e^{k \ln \Delta} \cdot e^{-kc(\lambda)} =
  e^{k(\ln \Delta - c(\lambda))} \le e^{-2k} \le 1/n^2,
\end{align*}
since $k \ge \ln n$.
The lemma follows by taking the union bound over all $v \in V$.
\end{proof}

\begin{restatable}{theorem}{regbroadcastlb}
\label{thm:regular-broadcast-lb}
  Let $G$ be a graph with $n$ nodes and diameter $D$. Then the following hold:
  \begin{itemize}
    \item If $G$ is regular, then $\tbroadcastg \in \Omega( n \cdot \max\{ D, \log n\} )$.
    \item If $G$ is a bounded-degree graph, then $\tbroadcastg \in \Theta( n \cdot \max\{ D, \log n\} )$.
  \end{itemize}
\end{restatable}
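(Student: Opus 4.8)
The plan is to derive the lower bound $\Omega(n \max\{D, \log n\})$, valid for both regular and bounded-degree graphs, by separately establishing an $\Omega(n \log n)$ bound and an $\Omega(nD)$ bound, and then to obtain the matching upper bound in the bounded-degree case directly from \theoremref{thm:broadcast-upper}.

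For the $\Omega(n \log n)$ contribution I would invoke \lemmaref{lemma:nlogn-bc-lb}, which gives $\tbroadcastg \ge (m/\Delta) \cdot \ln(n-1)$. In a $\Delta$-regular graph we have $m = n\Delta/2$, so $m/\Delta = n/2$ and hence $\tbroadcastg \ge (n/2)\ln(n-1) \in \Omega(n \log n)$. For a bounded-degree graph, connectivity forces $m \ge n-1$ while $\Delta \in O(1)$, so again $m/\Delta \in \Omega(n)$ and the same conclusion follows.

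For the $\Omega(nD)$ contribution, I would assume $D \ge \ln n$ (otherwise the previous bound already dominates) and apply \lemmaref{lemma:max-degree-propagation} with $k = D$. Since the diameter is realized by a pair of nodes at distance exactly $D$, the lemma yields $\Pr[T_D(G) < Dm/(\Delta e^3)] \le 1/n$. Taking expectations gives $\E[T_D(G)] \ge (1 - 1/n)\cdot Dm/(\Delta e^3)$, and using $m/\Delta \in \Omega(n)$ (exactly $n/2$ in the regular case) together with the chain $\E[T_D(G)] \le \tbroadcastg$ noted after the definition of broadcast time, I obtain $\tbroadcastg \in \Omega(nD)$. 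Combining the two regimes, when $D \ge \ln n$ the $\Omega(nD)$ bound gives $\Omega(n \max\{D,\log n\})$, and when $D < \ln n$ the $\Omega(n \log n)$ bound does, so the lower bound holds in all cases for both graph classes.

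For the upper bound in the bounded-degree case, \theoremref{thm:broadcast-upper} gives $\tbroadcastg \in O(m(\log n + D))$, and since $m \in O(n)$ when $\Delta \in O(1)$, this is $O(n(\log n + D)) = O(n \max\{D,\log n\})$, matching the lower bound and yielding the claimed $\Theta$. The only delicate points are the passage from the high-probability propagation bound of \lemmaref{lemma:max-degree-propagation} to a bound in expectation, and the clean separation of the two regimes $D \ge \ln n$ and $D < \ln n$; both are routine once the $1/n$ failure probability is absorbed into a constant factor and the harmless $\ln n$ versus $\log n$ discrepancy is noted.
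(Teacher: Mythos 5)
Your proposal is correct and follows essentially the same route as the paper: the $\Omega(n\log n)$ part via \lemmaref{lemma:nlogn-bc-lb}, the $\Omega(nD)$ part via \lemmaref{lemma:max-degree-propagation} with $k=D$ in the regime $D \ge \ln n$ (converting the high-probability bound to expectation exactly as the paper does), and the upper bound from \theoremref{thm:broadcast-upper}. No gaps.
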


\begin{proof}
  Note that for both graph classes \lemmaref{lemma:nlogn-bc-lb} gives the lower bound $\Omega(n \log n)$.
  Thus, it suffices to consider the case $D \ge \ln n$. Suppose that $D \ge \ln n$ and let $t = (Dm)/(\Delta e^3)$.  \lemmaref{lemma:max-degree-propagation} implies that $\Pr[T_D(G) < t] \le 1/n$. Since $T_D(G) \le T(G)$, we have $\E[T_D(G)] \le \tbroadcastg$ by monotonicity of expectation. Now
\[
\tbroadcastg \ge \E[T_D(G)] \ge \E[T_D(G) \mid T_D(G) \ge t] \cdot \Pr[T_D(G) \ge t] \ge t(1 - 1/n). 
\]
The first lower bound follows from observing that for regular graphs $m = \Delta n/2$. The lower bound for the second claim follows by observing that $\Delta$ is a constant. \theoremref{thm:broadcast-upper-bounds} gives the upper~bound.
\end{proof}

\section{Baselines for stable leader election on graphs}\label{sec:baseline}

 In this section, we discuss two protocols, which act as our baselines for time complexity and space complexity of stable leader election on general graphs. First, we note that constant-state protocol given by Beauquier et al.~\cite{beauquier2013self} stabilizes in $O(\chit{G}\cdot n \log n)$ steps in expectation and with high probability, where $\chit{G}$ is the worst-case hitting time of a simple, \emph{classic random walk}~\cite{levin2017markov}. Second, we observe that if we allow for polynomially-many states, then there is a simple protocol that elects a leader in $O(\tbroadcastg + n \log n)$ expected steps. This protocol is time-optimal for a large class of graphs.

\subsection{First baseline: A space-efficient protocol}
Our baseline for \emph{space-efficient protocols} is the constant-state leader election protocol given by Beauquier et al.~\cite{beauquier2013self}. This protocol stabilizes in any connected graph in finite expected time.

Here, we first observe the following bound on the stabilization time as a function 
worst-case hitting time of a \emph{classic} random walk. This result essentially follows from the recent techniques developed by Sudo et al.~\cite{sudo2021self} combined with the algorithm of Beauquier et al.~\cite{beauquier2013self}.

\begin{restatable}{theorem}{lecovertime}
  \label{thm:6-state-le-cover-time}
  Given a nonempty set of leader candidates as input, there is a 6-state protocol that elect exactly one candidate as a leader in $O(\chit{G}\cdot n \log n)$ steps in expectation and with high probability, where $\chit{G}$ is the worst-case hitting time of a classic random walk on $G$.
\end{restatable}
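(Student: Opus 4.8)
The plan is to take the 6-state protocol of Beauquier et al.~\cite{beauquier2013self} essentially off the shelf and to supply a running-time analysis in the spirit of Sudo et al.~\cite{sudo2021self}, phrased in terms of the quantity $\chit{G}$. The protocol keeps the input set of candidates alive as leaders, lets each leader emit a single traveling token that moves one step whenever the endpoint holding it is selected for an interaction, and demotes a leader to a follower whenever a token arrives at it; the constant-state transition rules guarantee that at least one candidate always survives, and once a single leader remains no rule can manufacture a new one, so every configuration reachable from a single-leader configuration is correct. I would take these elimination rules, together with their correctness and stability, directly from~\cite{beauquier2013self}. Thus the entire content of the theorem is the time bound.

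The key observation is that a token's trajectory is an embedded simple random walk on $G$: in one interaction step the scheduler picks one of the $2m$ ordered pairs uniformly, so a token at a node $v$ moves with probability $\deg(v)/m$ and, conditioned on moving, goes to a uniformly random neighbour. Hence one walk step costs a $\Geom(\deg(v)/m)$ number of interactions, and the number of interactions until a token launched from $u$ reaches a fixed target $w$ equals $m\cdot\hat H(u,w)$, where $\hat H$ is the hitting time of the \emph{continuized} walk that holds at $v$ for mean time $1/\deg(v)$ (equivalently, the walk in which every edge fires at unit rate). This continuized walk has the \emph{uniform} stationary distribution, which is exactly what lets me convert to the classic hitting time. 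Writing $H(\cdot,\cdot)$ for the classic hitting time, so that $\chit{G}=\max_{u,w}H(u,w)$, and $R(u,w)$ for the effective resistance, I would use the commute-time identities $\hat H(u,w)\le \hat C(u,w)=n\cdot R(u,w)$ together with $2m\cdot R(u,w)=H(u,w)+H(w,u)\le 2\chit{G}$. Chaining these gives the clean bound $m\,\hat H(u,w)\le n\,\chit{G}$: a single token hits any prescribed target in at most $n\,\chit{G}$ interactions in expectation.

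With the single-target bound in hand, I would observe that stabilization is driven by a token reaching \emph{every} other surviving candidate, so the process is dominated by the cover time of one embedded walk. A Matthews-type argument bounds the continuized cover time by $\max_{u,w}\hat H(u,w)\cdot H_n$, and multiplying by $m$ yields an interaction cover time of $m\,\hat H(u,w)\cdot O(\log n)\le O(n\,\chit{G}\log n)$, i.e.\ the claimed $O(\chit{G}\cdot n\log n)$ in expectation. For the high-probability statement I would replace each expected hitting time by a tail bound: the interaction count of a token walk is a sum of independent geometric random variables, since distinct interaction steps sample edges independently, so \lemmaref{lemma:sum-of-geometric} gives exponential concentration; a union bound over the at most $n$ candidates then gives stabilization within $O(\chit{G}\cdot n\log n)$ interactions with high probability.

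The main obstacle I expect is not the single-token hitting bound, which the effective-resistance identities make clean, but controlling the interaction among many tokens and candidates that move, collide, and disappear over time: the Matthews/coupon-collector bound is cleanest for a single walk covering a \emph{fixed} target set, whereas here the live candidate set shrinks and several tokens evolve at once. I would handle this by a monotone stochastic-domination argument, showing that the true elimination process finishes no later than an idealized process in which one persistent token must visit a fixed set of $n$ targets, so that the clean single-token analysis transfers. The remaining care is purely bookkeeping: confirming that the degree-dependent jump rates do not disturb the independence needed for \lemmaref{lemma:sum-of-geometric}, which holds because the per-step edge samples are independent, and that the $O(1)$ states of~\cite{beauquier2013self} suffice to encode the token-carrying and candidate flags used above.
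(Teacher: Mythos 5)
There is a genuine gap, and it sits exactly where you flagged your ``main obstacle.'' In the Beauquier et al.\ protocol the elimination of leader candidates is driven by two different random events: (i) two \emph{black} tokens, both performing random walks, must \emph{meet} (one then turns white), and (ii) a white token must \emph{hit} a leader candidate, which is a fixed vertex. Your analysis covers only events of type (ii): the commute-time/Matthews argument bounds the time for one walk to visit fixed targets. Events of type (i) are meetings of two \emph{moving} tokens, and these cannot be dispatched by the ``monotone stochastic-domination argument'' you propose --- there is no coupling under which the meeting time of two walks is dominated by the hitting time of a fixed target; bounding meeting times by hitting times is a theorem (Coppersmith--Tetali--Winkler), not an obvious monotonicity. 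The paper's proof devotes a dedicated lemma to exactly this point (\lemmaref{lemma:meet-broadcast}, proved in Appendix~A via the CTW potential-function argument, giving $\vec M(u,v) \le 2\cdot\hit{G}$), and additionally introduces ``ghost tokens'' so that tokens which disappear or change colour do not invalidate the meeting/hitting analysis; combined with the population-model-to-classic-walk conversion (\lemmaref{lemma:hit-broadcast}) this yields \lemmaref{lemma:all-hit-meet-broadcast} and the theorem. Without a meeting-time bound your argument cannot even show that the number of black tokens ever drops below two. (Relatedly, your simplified protocol description, in which \emph{any} token arrival demotes a leader, is not the actual protocol --- it would demote the last leader too; the black/white distinction that makes the protocol correct is precisely what forces the meeting-time analysis.)

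A secondary flaw: your high-probability claim via \lemmaref{lemma:sum-of-geometric} does not go through as stated. The interaction count until a token hits a target is a sum of geometrics only \emph{conditioned on the trajectory} of the embedded walk, and the number of summands (the embedded walk's hitting time) is itself random with heavy tails, so the sum-of-independent-geometrics concentration does not apply. The paper instead gets the w.h.p.\ statement by the standard restart argument: split time into $\Theta(\log n)$ epochs of length twice the expected hitting/meeting time, apply Markov in each epoch, and union bound over pairs (\lemmaref{lemma:all-hit-meet-broadcast}); your proof should do the same. On the positive side, your derivation of the single-token bound $m\,\hat H(u,w)\le n\,\chit{G}$ via the continuized walk and the commute-time identity is correct and is a clean, self-contained alternative to the paper's citation of Sudo et al.'s bound $\hit{G}\le 27n\cdot\chit{G}$, with a better constant; it is worth keeping, but it replaces only the easiest ingredient of the proof.
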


\paragraph{Overview of the protocol}
The idea of the protocol is simple: as input, we are given a nonempty set of leader candidates. i.e., each node receives a single bit as local input denoting whether it is a leader candidate or not. At the start of the execution, each leader candidate creates a ``black token''. In each interaction, the selected nodes swap their tokens with each other. Whenever two black tokens meet, one of them is colored white and the other token is left black. Whenever a leader candidate receives a white token, this candidate becomes a follower and removes the token from the system. Eventually exactly one black token and leader candidate remain. That is, the protocol is always correct.

The formal description of this protocol appears in~\cite{beauquier2013self} and \cite{alistarh2021graphical}. In particular, \cite{beauquier2013self} gives a proof of correctness, i.e., that the protocol eventually stabilizes in any connected graph. Similarly to Alistarh et al.~\cite{alistarh2021graphical}, we exploit this property by using this constant-state protocol as a \emph{backup protocol} for faster protocols that may fail to elect a unique leader with a polynomially small probability.

\paragraph{Comparison to previous analyses}

The main idea for the analysis is that black and white tokens in the protocol perform random walks on the interaction graph $G$. Recently, Alistarh et al.~\cite{alistarh2021graphical} analysed this protocol using the meeting and hitting times of randomly walking tokens in the population model. They obtain a bound on these times through the diameter and the number of edges of the graph. Sudo et al.~\cite{sudo2021self} provide a more refined analysis of the meeting and hitting times in the population model by relating them to hitting times of a classic random walk, which are well-understood for large families of graphs. 

In this section, we adapt these results to obtain a upper bound on the stabilization time of the protocol of Beauquier et al.~\cite{beauquier2013self}. 
In \sectionref{sec:constant-lb}, we show that the \emph{average-case time complexity} of this protocol on dense random graphs is almost-optimal among constant-state protocols.

\paragraph{Classic random walks and random walks in the population model}

First, we give a definition of a simple \emph{random walk in the population model}. Let $(e_t)_{t \ge 1}$ be the sequence of (undirected) edges sampled by the stochastic scheduler. We use $X_t(v)$ to represent the position of a random walk started at node $v$ at time $t \ge 0$. The dynamics of the random walk are given by 
\[
X_{t+1}(v) =
\begin{cases}
    w & \text{ if } e_{t + 1} = \{X_t(v), w\} \\
    X_t(v) & \text{ otherwise}.
\end{cases}
\]
In other words, random walk travels to the other end of an edge, if edge containing its current position is sampled. 

In contrast, we use $X_t^{\mathcal S}(v)$ to denote the position of a \emph{classic} random started at $v$ at time $t\ge0$. For a classic random walk, we have $X_0^{\mathcal S}(v) =v$ and the conditional distribution $X_{t+1}^{\mathcal S}(v) \mid X_t^{\mathcal S}(v)$ is uniform on the neighbourhood of $X_t^{\mathcal S}(v)$. That is, if the classic random walk is at node $u$, then the next location of the random walk is sampled uniformly at random among the neighbours of $u$. 

\paragraph{Hitting times}
Let $\hit{u,v}$ denote the expected time for a random walk started at node $u$ to reach node $v$ in the population model. Define $\hit{G} = \max \{ \hit{u,v} : u,v \in V(G) \}$ to be the worst-case expected hitting time in the population model. Similarly, let $\chit{v,u}$ denote the expected hitting time from $u$ to $v$ of a classic random walk, and define $\chit{G}= \max \{ \chit{u,v} : u,v \in V(G) \}$.
Sudo et al.~\cite[Lemma 2]{sudo2021self} show the following relationship between $\chit{G}$ and $\hit{G}$.

\begin{lemma}\label{lemma:hit-broadcast}
  For any graph $G$, we have $\hit{G} \le  27n \cdot \chit{G}$.
\end{lemma}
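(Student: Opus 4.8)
The plan is to exploit that the population-model walk is exactly a \emph{lazy} version of the classic walk, and then convert the resulting occupation-time sum into classic hitting times via a Green's-function identity. First I would observe that the jump chain of $X_t(v)$ coincides with the classic walk: from a node $w$, the probability that the scheduler samples an incident edge in a given step is $\deg(w)/m$, and conditioned on this the walk moves to a uniformly random neighbour of $w$. Thus the sequence of distinct positions $W_0 = u, W_1, W_2, \dots$ visited by the population walk is distributed exactly as the classic walk $X^{\mathcal S}_t(u)$, while the time spent at $W_i$ before the next move is an independent $\Geom(\deg(W_i)/m)$ waiting time. Writing $\tau = \min\{t : W_t = v\}$, the population-model hitting time is $\sum_{i=0}^{\tau-1} G_i$ with $G_i \sim \Geom(\deg(W_i)/m)$.

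Next, conditioning on the entire trajectory of the jump chain (so that $\tau$ is fixed and each $G_i$ is conditionally independent with mean $m/\deg(W_i)$), linearity of expectation gives
\[
\hit{u,v} = \E\!\left[\sum_{i=0}^{\tau-1}\frac{m}{\deg(W_i)}\right] = m\sum_{w \in V}\frac{\E[N_w]}{\deg(w)},
\]
where $N_w$ counts the visits of the classic walk to $w$ strictly before hitting $v$. The key step is to invoke the standard reversible-chain identity $\E[N_w] = \pi(w)\big(\chit{u,v} + \chit{v,w} - \chit{u,w}\big)$ with stationary distribution $\pi(w) = \deg(w)/(2m)$; see e.g.~\cite{levin2017markov}. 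Crucially, the factor $\deg(w)$ then cancels, since $\pi(w)/\deg(w) = 1/(2m)$ is independent of $w$, leaving
\[
\hit{u,v} = \tfrac12 \sum_{w \in V}\big(\chit{u,v} + \chit{v,w} - \chit{u,w}\big).
\]

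Finally I would bound each term crudely: $\chit{u,v} \le \chit{G}$, $\sum_w \chit{v,w} \le n\chit{G}$, and $\sum_w \chit{u,w} \ge 0$, so that $\hit{u,v} \le \tfrac12\big(n\chit{G} + n\chit{G}\big) = n\chit{G} \le 27n\chit{G}$. Taking the maximum over $u,v \in V$ yields the claim (in fact with the sharper constant $1$).

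The main obstacle is establishing the Green's-function identity cleanly: it is where reversibility enters and where the degree cancellation — the entire point of the argument — takes place. If one prefers to stay self-contained and avoid quoting it, the identity can be re-derived from the harmonic equations for $h(\cdot) = \chit{\cdot,v}$ together with the occupation-measure interpretation of $\E[N_w]$, but this is the only non-routine ingredient; everything else is the lazy-walk reduction and a union-type bound over the $n$ nodes.
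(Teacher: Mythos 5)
Your proof is correct, but note that the paper itself contains no proof of this lemma to compare against: the bound is quoted verbatim from Sudo, Shibata, Nakamura, Kim and Masuzawa \cite[Lemma 2]{sudo2021self}. Your argument is therefore a genuinely self-contained alternative, and a stronger one: every step checks out — the jump chain of the population-model walk is indeed the classic walk with conditionally independent $\Geom(\deg(W_i)/m)$ holding times (the holding time at a node is independent of the next jump target, so conditioning on the whole jump-chain trajectory, under which $\tau$ is measurable, is legitimate), and the occupation-measure identity $\E[N_w]=\pi(w)\bigl(\chit{u,v}+\chit{v,w}-\chit{u,w}\bigr)$ combined with $\pi(w)=\deg(w)/(2m)$ gives the exact formula $\hit{u,v}=\frac{1}{2}\sum_{w\in V}\bigl(\chit{u,v}+\chit{v,w}-\chit{u,w}\bigr)$, whence $\hit{G}\le n\cdot\chit{G}$, improving the stated constant from $27$ to $1$. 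Two minor points. The identity you call the ``reversible-chain identity'' in fact holds for every irreducible finite chain (it is the mean-occupation-time formula in Aldous and Fill, Chapter 2), so reversibility is not actually where the work happens — the real content is the degree cancellation $\pi(w)/\deg(w)=1/(2m)$, which you correctly identify; and ``sequence of distinct positions'' should read ``sequence of successive positions'', since the jump chain of course revisits nodes. Neither affects correctness. (Also, the paper writes $\chit{v,u}$ for the hitting time from $u$ to $v$, opposite to your ordering, but this is immaterial since only the maximum $\chit{G}$ enters the final bound.)
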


\paragraph{Meeting times}
We say that a random walk \emph{meets} another random walk at time step $t$, if the two random walks are located at the opposite ends of the edge $e_t$ sampled at time step $t$. Let $\vec M(u,v)$ denote the expected time until the random walks started at nodes $u$ and $v$ meet.

Following Sudo et al.~\cite{sudo2021self}, we note that the meeting times in the population model can be bounded using the hitting times using essentially the same proof as the proof of Coppersmith, Tetali and Winkler~\cite[Theorem 2]{coppersmith1993collisions} for classic random walks on graphs, with minor modifications. This yields the next lemma; for the sake of completeness, we provide its proof in Appendix~\ref{apx:meet-broadcast}.

\begin{lemma}\label{lemma:meet-broadcast}
  For any $u \neq v$, we have $\vec M(u,v) \le 2\cdot \hit{G}$.
\end{lemma}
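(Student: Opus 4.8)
The plan is to adapt the classic argument of Coppersmith, Tetali and Winkler for meeting times of random walks on graphs to the population model, exploiting the fact that both notions of meeting and hitting are defined via the same edge-sampling scheduler. The key idea is to reduce the two-walk meeting process to a single-walk hitting process on the \emph{product} of the walk with itself, and then relate the resulting quantity back to $\hit{G}$. Concretely, I would consider two random walks started at $u$ and $v$ that both evolve according to the edge schedule $(e_t)_{t \ge 1}$: whenever the sampled edge $e_t$ is incident to a walk's current position, that walk moves to the other endpoint. The walks \emph{meet} at step $t$ when $e_t$ connects their two current positions.

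First I would set up the coupling/product-space viewpoint: track the ordered pair $(X_t, Y_t)$ of positions of the two walks and analyze the expected time for the pair to reach a ``meeting'' configuration (positions joined by the sampled edge). The technical heart of Coppersmith--Tetali--Winkler is a clever potential or a symmetrization argument showing that the expected meeting time of two walks is at most twice the worst-case hitting time of a single walk. The reason the factor $2$ appears is that the meeting process can be compared to a single walk trying to hit a fixed target: intuitively, freezing one walk and letting the other hit it takes at most $\hit{G}$ in expectation, and accounting for the fact that both walks move (roughly doubling the effective relative speed, but also the difficulty of the analysis) yields the factor of $2$.

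The main subtlety — and the reason the lemma needs a separate proof rather than a direct citation — is that in the population model the scheduler samples a single \emph{edge} per step, so when an edge incident to \emph{both} walks' positions is sampled, one must carefully specify and handle how the two walks update simultaneously (they do not move independently in a single step). I would therefore verify that the step in the Coppersmith--Tetali--Winkler argument that assumes independent moves of the two walks can be replaced by the correct joint-update rule of the population model, checking that the inequality $\vec M(u,v) \le 2 \cdot \hit{G}$ survives this modification. Since the statement explicitly restricts to $u \neq v$, the degenerate case where the walks start at the same node is excluded, which avoids having to define a self-meeting.

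The hardest part will be getting the bookkeeping of the joint update right: the event that $e_t$ is incident to both walks (and in particular the event that it is exactly the edge joining them, versus an edge joining one walk to a third node) must be partitioned correctly so that the potential-function argument still telescopes to $2\hit{G}$. I expect that, once the joint transition is written out, the remainder of the argument is essentially the classic proof with $\hit{G}$ (the population-model hitting time) substituted throughout, and I would record the full details in Appendix~\ref{apx:meet-broadcast} as the paper indicates. I would assume \lemmaref{lemma:hit-broadcast} is available should it be convenient to pass through the classic hitting time, though the cleanest route is to bound $\vec M(u,v)$ directly against $\hit{G}$.
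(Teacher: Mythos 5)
Your proposal points at the right source---the paper's proof is indeed an adaptation of Coppersmith--Tetali--Winkler (CTW) to the population model---but as a proof it has a genuine gap: the entire technical mechanism is deferred to ``the classic proof,'' and the two mechanisms you do sketch are not the ones that make CTW work. First, the product-space reduction (track $(X_t,Y_t)$ and view meeting as hitting a diagonal-like set) is vacuous on its own: the difficulty is precisely to relate the hitting time of that product chain back to $\hit{G}$, and you give no mechanism for doing so. Second, your explanation of the factor $2$---freeze one walk, let the other hit it in $\hit{G}$ expected steps, then ``account for both moving''---is exactly the naive argument whose failure (the cat-and-mouse phenomenon) is the reason CTW's theorem is nontrivial; it cannot be repaired into a proof. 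In the actual argument the $2$ arises differently: one fixes a pivot node $z$ that is minimal in the preorder $x \le y \Leftrightarrow \hit{x,y} \le \hit{y,x}$, defines the potential $\Phi(x,y) = \hit{x,y} + \hit{y,z} - \hit{z,y}$, and proves $\vec M(x,y) \le \Phi(x,y)$ by an extremal/contradiction argument; the bound then follows since $\Phi(x,y) \le 2\cdot\hit{G}$.

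Concretely, the ingredient your plan never identifies is the one that justifies the adaptation: reversibility of the population-model walk, in the form that every length-$k$ edge sequence is sampled with probability exactly $m^{-k}$, so reversing walks gives a measure-preserving bijection between walks $x \leadsto y \leadsto z \leadsto x$ and $x \leadsto z \leadsto y \leadsto x$. This yields the cycle identity $\hit{x,y}+\hit{y,z}+\hit{z,x} = \hit{x,z}+\hit{z,y}+\hit{y,x}$, which is what makes $\Phi$ symmetric and non-negative and guarantees the pivot $z$ exists; without it, ``check that the CTW steps carry over'' is unsubstantiated. By contrast, the ``main subtlety'' you flag---simultaneous updates when the sampled edge is incident to both walks---is a non-issue: while the walks occupy distinct nodes, an edge incident to both positions is exactly the edge joining them, and sampling it is by definition the meeting event, so before meeting at most one walk moves per step.
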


With the above lemma, we can now prove the next result, which slightly refines the bounds on hitting times given by Sudo et al.~\cite{sudo2021self} to also hold with high probability in addition to expectation. 

\begin{lemma}\label{lemma:all-hit-meet-broadcast}
  Suppose we start a random walk at every node of $G$.
  Then each random walk visits every node and meets every other random walk within $O(\chit{G}\cdot n \log n)$ time in expectation and with high probability.
\end{lemma}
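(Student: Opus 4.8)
The plan is to reduce both claims—covering and meeting—to a single ``reset-and-amplify'' argument built on the worst-case bounds of \lemmaref{lemma:hit-broadcast} and \lemmaref{lemma:meet-broadcast}. Combining these, the worst-case hitting time satisfies $\hit{G} \le 27 n \cdot \chit{G}$ and the worst-case meeting time satisfies $\vec M(u,v) \le 2\hit{G} \le 54 n \cdot \chit{G}$, so both relevant quantities are $O(n \cdot \chit{G})$. It is convenient to fix a common upper bound $H = O(n \cdot \chit{G})$ for both.

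First I would handle a single (walk, target) pair. Fix a walk and a target node $v$. Since $\hit{G}$ bounds the expected time to reach $v$ from \emph{every} node, Markov's inequality gives that, starting from any position, the walk reaches $v$ within $2\hit{G}$ steps with probability at least $1/2$. If it fails, then at time $2\hit{G}$ the walk sits at some random node, and by the strong Markov property at this deterministic time the conditional expected remaining time to hit $v$ is again at most $\hit{G}$, because the bound is worst-case over starting nodes. Iterating over $k$ consecutive blocks of length $2\hit{G}$, the probability that $v$ is still unvisited is at most $2^{-k}$. Taking $k = \Theta(\log n)$ makes this at most $n^{-c}$ for any desired constant $c$, so a single walk visits a single target within $O(\hit{G} \log n)$ steps with high probability.

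The meeting claim follows the identical scheme with $\hit{G}$ replaced by the bound $2\hit{G}$ from \lemmaref{lemma:meet-broadcast}: the expected time for two designated walks to meet is at most $2\hit{G}$ from any pair of positions, so Markov's inequality combined with the block/reset argument yields that any two fixed walks meet within $O(\hit{G} \log n)$ steps with probability at least $1 - n^{-c}$. A union bound over the at most $n^2$ (walk, target) pairs and the $\binom{n}{2}$ walk pairs—crucially, the union bound needs no independence, which matters here because all walks are driven by the same schedule and are therefore coupled—shows that within $O(\hit{G} \log n) = O(\chit{G} \cdot n \log n)$ steps every walk has visited every node and met every other walk with high probability.

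Finally, to pass from the high-probability statement to the expectation bound, I would reuse the reset structure at the outer scale. Let $T = O(\chit{G} \cdot n \log n)$ be the horizon from the previous paragraph over which all events complete with probability at least $1 - 1/n \ge 1/2$. If completion fails, restart the analysis from the configuration at time $T$; since the bounds $\hit{G}$ and $\vec M(u,v)$ are worst-case over all starting positions, the same estimate applies regardless of where the walks have drifted, so everything completes within the next $T$ steps with probability at least $1/2$. Hence the number of length-$T$ blocks until completion is stochastically dominated by a geometric random variable with success probability at least $1/2$ and expectation $O(1)$, giving expected completion time $O(T) = O(\chit{G} \cdot n \log n)$. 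The main delicacy throughout is justifying the legitimacy of each reset: it rests entirely on the fact that $\hit{G}$ and $\vec M(u,v)$ are worst-case over starting positions, which is what lets the strong Markov property at block boundaries re-apply the same bound at every stage.
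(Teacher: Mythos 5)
Your proposal is correct and follows essentially the same route as the paper's proof: both invoke Lemma~\ref{lemma:hit-broadcast} and Lemma~\ref{lemma:meet-broadcast} to get worst-case $O(n\cdot\chit{G})$ bounds, apply Markov's inequality on $\Theta(\log n)$ consecutive blocks (using that the bounds hold from any starting position, so the argument restarts at each block boundary), and finish with a union bound over all pairs. Your explicit geometric-domination step for converting the w.h.p.\ bound into an expectation bound is just a more spelled-out version of what the paper leaves implicit.
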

\begin{proof}
  Let $v,u \in V$ and consider a random walk started at vertex $v$. By \lemmaref{lemma:hit-broadcast}, the expected hitting time $\hit{w,u}$ is upper bounded by $27n\cdot\hit{G}$ for any node $w \in V$. 

  Let $k \ge 3$ be a constant
and consider $k\log n$ time intervals of length $54n\cdot\hit{G}$. On each such interval, the random walk started at node $v$ hits node $u$ with probability at least $1/2$ by Markov inequality, regardless of its position at the beginning of the interval. Hence, it hits node $u$ in $54k \cdot \chit{G}\cdot n \log n $ steps with probability at least
  \[
  1 - \left(\frac{1}{2}\right)^{k\log n} \ge 1 - \frac{1}{n^k}.
\]
  Then, by union bound over all pairs of vertices, every random walk hits every node in $O( \chit{G}\cdot n \log n)$ steps in expectation and with high probability.
    
    Now let $v,u \in V$, and consider random walks started at vertices $v$ and $u$. By \lemmaref{lemma:meet-broadcast} and \lemmaref{lemma:hit-broadcast}, the expected meeting time $\vec M(v,u)$ is upper bounded by $54n\cdot \chit{G}$. Consider $k\log n$ intervals of length $108n\cdot\chit{G}$. On each interval, the random walk started at $v$ meets a random walk started $u$ with probability at least $1/2$ by Markov inequality, regardless of their positions at the beginning of the interval. Hence, they meet in $108k\cdot\chit{G}\cdot n \log n$ steps with probability at least
    \[
    1 - \left(\frac{1}{2}\right)^{k\log n} \ge 1 - \frac{1}{n^k}.
    \]
    Then, by union bound over all pairs of vertices, every pair of random walks meet in $O(\chit{G} \cdot n \log n)$ time in expectation and with high probability, since $k \ge 3$ was chosen to be an arbitrary constant.
\end{proof}

\paragraph{Analysis of the token-based protocol}

We can finally apply the above results to analyze the token dynamics of the constant-state leader election protocol. It is easy to see that each individual black or white token follows the simple random walk in the population model defined above.

For the purposes of our analysis, when a white token disappears, we replace it by a ``white ghost token''. Similarly, we treat a black token that turns white as a ``black ghost token'', while keeping the white token as well. This is done so that we can reason about the meeting/hitting times without worrying about the fact that black or white token may disappear before a meeting/hitting event occurs. With the above, we are now ready to prove \theoremref{thm:6-state-le-cover-time}.

\lecovertime*
\begin{proof}
   First, the number of steps until there remains exactly one black token in the system is $O(\chit{G} \cdot n \log n)$ steps in expectation and with high probability. This follows from \lemmaref{lemma:all-hit-meet-broadcast}, as all black tokens (or their ghosts) have met after $O(\chit{G} \cdot n \log n)$ steps (in expectation and w.h.p.), and every time two (non-ghost) black tokens meet, one of them turns into a white token. This implies only one black token remains in the system by this time. Indeed, if there were at least two black tokens, they would meet before that, causing one of them to turn into a white token. Moreover, the number of black tokens can never be less than one.

  Suppose $w$ white tokens remain after all but one black token have been eliminated. This means there are $w + 1$ leaders in the system. Now \lemmaref{lemma:all-hit-meet-broadcast} implies that each white token (or its ghost) will hit every one of the $w + 1$ leader nodes after $O(\chit{G}\cdot n \log n)$ steps in expectation and with high probability. Therefore, there is exactly one leader candidate left in the system after $O(\chit{G}\cdot n \log n)$ steps in expectation and with high probability. At this point the protocol has stabilized.
 \end{proof}

The bound in~\theoremref{thm:6-state-le-cover-time} simplifies, when a random Erd\"{o}s-Renyi graph is considered. The result below is a direct consequence of a bound given by L\"owe and Torres~\cite[Corollary 2.1]{lowe2014hitting}.

\begin{proposition}
    Let $G \sim G_{n,p}$ with $p \in \log^{\omega(1)}(n)/n$. Then, with probability $1-o(1)$, we have
    \[
    \chit{G} \in O(n).
    \]
\end{proposition}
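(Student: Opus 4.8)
The plan is to combine the refined bound of \theoremref{thm:6-state-le-cover-time}, which expresses the stabilization time of the 6-state protocol in terms of the worst-case classic hitting time $\chit{G}$, with known estimates on the hitting time of a classic random walk on a dense Erd\H{o}s--R\'enyi graph. Concretely, the entire content of the proposition is to verify that $\chit{G} \in O(n)$ asymptotically almost surely for $G \sim G_{n,p}$ with $p \in \log^{\omega(1)}(n)/n$, which is exactly the regime covered by the cited result of L\"owe and Torres~\cite[Corollary 2.1]{lowe2014hitting}. So the main step is simply to instantiate their corollary in our notation and confirm that the density condition $p \in \log^{\omega(1)}(n)/n$ falls within its hypotheses.

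First I would recall what L\"owe and Torres establish: for sufficiently dense random graphs, the hitting time between any fixed pair of vertices $u,v$ is, up to lower-order terms, governed by the stationary distribution and concentrates around the ``mean-field'' prediction. In particular, under a density assumption of the form $np \gg \log n$ (which is implied by $p \in \log^{\omega(1)}(n)/n$, since then $np \in \log^{\omega(1)}(n) = \omega(\log n)$), their corollary yields that $\chit{u,v} = (1+o(1)) \cdot n$ for all pairs $u,v$ with probability $1-o(1)$. Taking the maximum over the $\binom{n}{2}$ pairs, the worst-case hitting time $\chit{G} = \max\{\chit{u,v} : u,v \in V\}$ remains $O(n)$; one only has to check that the failure probabilities are controlled uniformly over all pairs, which is part of the high-probability statement in their result (or follows from a union bound if their estimate holds with sufficiently strong probability per pair).

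The proof therefore reduces to a short citation-and-substitution argument: invoke \cite[Corollary 2.1]{lowe2014hitting} to obtain $\chit{G} \in O(n)$ with probability $1-o(1)$, then note the claim follows immediately. The only genuine obstacle is a bookkeeping one, namely verifying that the density regime $p \in \log^{\omega(1)}(n)/n$ lies strictly inside the range where their corollary applies, and that the quantity they bound (their definition of hitting time for the classic walk) coincides with our $\chit{G}$. Since the statement is phrased as a ``direct consequence,'' I do not anticipate needing to reprove any concentration or spectral estimate internally; the work of bounding the hitting time has been outsourced to the cited result, and the role of this proposition is merely to translate that bound into the form needed to simplify \theoremref{thm:6-state-le-cover-time} on dense random graphs.
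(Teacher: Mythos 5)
Your proposal is correct and matches the paper's own treatment: the paper likewise presents this proposition as a direct consequence of L\"owe and Torres~\cite[Corollary 2.1]{lowe2014hitting}, with the only work being to check that the density regime $p \in \log^{\omega(1)}(n)/n$ falls within their hypotheses and that their hitting-time quantity coincides with $\chit{G}$. Your added remarks on uniformity over pairs are just the bookkeeping implicit in that citation.
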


\subsection{Second baseline: A time-efficient protocol}
We next discuss our baseline for \emph{fast protocols}. For this, we use a simple protocol that stabilizes in $O(\tbroadcast(G) + n \log n)$ expected steps using polynomially many states. The results in \sectionref{sec:lower} show that this protocol is time-optimal for a large class of graphs, as there are graphs where leader election requires $\Theta(\tbroadcastg)$ steps and $\tbroadcastg \in \Omega(n \log n)$.

In this protocol, we first generate unique identifiers with high probability by using the stochasticity of the scheduler and a large state space. Once we have unique identifiers, we can elect the node with the largest identifier as the leader by a broadcast process.

The only non-trivial part is to get finite expected stabilization time. This is achieved by interleaving the always-correct constant-state protocol with a broadcasting process: Once a node has generated its identifier, the node starts an instance of the constant-state protocol labelled with its own identifier and designating itself as a leader candidate in this instance. If a node encounters an instance of the constant-state protocol labeled with an identifier that is higher than the identifier of its current instance (or its own identifier), the node joins as a follower to the instance with the higher identifier. In the case that two or more nodes generated the same (highest) identifier, the constant-state protocol ensures that eventually only one leader candidate remains. 
Specifically, we show the next result.

\begin{restatable}{theorem}{lebroadcast}\label{thm:le-in-broadcast-time}
  There is a protocol that uses $O(n^4)$ states on general graphs and $O(n^3)$ states on regular graphs that elects a leader in $O(\tbroadcast(G) + n \log n)$ steps in expectation.
\end{restatable}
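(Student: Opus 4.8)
The plan is to combine three ingredients into a single, always-correct protocol: (i) a randomized \emph{identifier generation} phase, (ii) a \emph{broadcast tournament} in which the largest identifier wins, and (iii) the always-correct $6$-state protocol of \theoremref{thm:6-state-le-cover-time} run as a backup. Each node maintains a register holding its own identifier (initially empty), the largest identifier it has seen so far, a flag recording whether it currently considers itself a candidate, and the state of one instance of the backup protocol keyed by the current maximum identifier. Identifiers are drawn from a range of size $R = \poly(n)$ using the only available source of randomness, namely the initiator/responder bit produced by the scheduler in each interaction. A node becomes a \emph{candidate} for its own identifier once that identifier is complete; whenever two nodes interact they exchange maxima, a node that sees a strictly larger identifier than its own demotes itself to follower and adopts the larger value, and the backup protocol is executed among the candidates that currently share the maximum identifier.

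For correctness, observe that identifiers never change once fixed and the running maximum is monotone, so eventually every node holds the \emph{global} maximum identifier $M$. From that point on the only candidates are the (one or more) nodes whose identifier equals $M$, and the backup protocol of \theoremref{thm:6-state-le-cover-time}, restricted to this candidate set, elects exactly one of them while guaranteeing that no new candidate can appear. Hence the protocol always reaches a correct and stable configuration, and its expected stabilization time is finite on every connected graph regardless of the outcome of the randomized phase.

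For the running time I would bound the fast path by three quantities. First, identifier generation: a node that participates in $\Theta(\log n)$ interactions can mint a complete identifier from its own coin flips, and by \lemmaref{lemma:bernoulli-chernoff} every node interacting at rate $\Omega(1/n)$ does so within $O(n\log n)$ steps; for rarely-interacting (low-degree) nodes I instead let an already-initialized neighbour hand over a freshly drawn identifier upon their first interaction. The key observation is that a node cannot be influenced before its first interaction, so for every source $s$ and node $v$ the influence time dominates $v$'s first-interaction time; taking maxima gives $\tbroadcastg \ge \E[\max_v \tau_v]$, where $\tau_v$ is the first-interaction time of $v$. Combined with the tail bounds of \lemmaref{lemma:diameter-tail-bound} and \lemmaref{lemma:expansion-broadcast-whp}, this shows that all nodes hold complete identifiers within $O(\tbroadcastg + n\log n)$ steps with high probability. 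Second, once the holder of $M$ is initialized, broadcasting $M$ to all nodes takes $O(\tbroadcastg)$ steps with high probability by the same tail bounds. Third, choosing $R$ polynomially large makes the global maximum \emph{unique} with high probability, so the backup protocol terminates with a single candidate and contributes nothing on the fast path.

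Finally I would assemble the expectation. On the high-probability event that identifiers are unique and both phases meet their bounds, the protocol stabilizes in $O(\tbroadcastg + n\log n)$ steps; on the complementary event, whose probability is $O(1/\poly(n))$ after tuning $R$, I fall back on the worst-case guarantee $O(\chit{G}\cdot n\log n) \subseteq O(n^4\log n)$ of \theoremref{thm:6-state-le-cover-time}, and choosing $R$ so that this failure probability times the backup cost is $O(n\log n)$ keeps the total expectation at $O(\tbroadcastg + n\log n)$. The state count is dominated by storing two identifiers from $[R]$ together with a constant amount of bookkeeping, giving $O(n^4)$ states in general; on regular graphs every node interacts at rate $\Theta(1/n)$, so own-coin-flip generation already suffices and a smaller range is enough, yielding $O(n^3)$ states. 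The main obstacle I anticipate is precisely the identifier-generation step for low-degree nodes: naively requiring every node to flip $\Theta(\log n)$ of its \emph{own} coins costs $\Theta((m/\delta)\log n)$, which can exceed $\tbroadcastg$ by a logarithmic factor, so the delivery-on-first-interaction mechanism and the inequality $\tbroadcastg \ge \E[\max_v \tau_v]$ are essential, and the entropy accounting for donated identifiers (ensuring independence and uniqueness) must be handled with care.
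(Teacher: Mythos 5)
Your high-level architecture (identifiers minted from the scheduler's initiator/responder bits, broadcast of the maximum, and the 6-state protocol of \theoremref{thm:6-state-le-cover-time} as an always-correct backup) is exactly the paper's, but two of your concrete design choices create genuine gaps. The first is quantitative: you store \emph{two} identifiers from $[R]$ per node, giving $\Theta(R^2)$ states, and simultaneously need the failure probability (a non-unique maximum) times the backup cost $O(n^4\log n)$ to be $O(n\log n)$, i.e.\ failure probability $O(n^{-3})$. Even with the most optimistic union bound this forces $R \ge n^4$, hence $R^2 \ge n^8$ states --- far above the claimed $O(n^4)$; with $R \le n^2$ (which your state bound requires) the collision probability at the maximum is at best $\Theta(1/n)$ and the backup term alone is $\Omega(n^3\log n)$. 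The paper avoids this by keeping a \emph{single} identifier register: a node that sees a larger complete identifier overwrites its own and becomes a follower, so the state count is $O(2^k)$ with $2^k = \Theta(n^4)$ (see \lemmaref{lemma:unique-ids} and the final proof). Relatedly, your explanation of the $O(n^3)$ bound for regular graphs (``every node interacts at rate $\Theta(1/n)$, so a smaller range is enough'') is not the right mechanism: the saving comes from the cheaper backup on regular graphs, where $\chit{G} \in O(n^2)$ makes the backup cost $O(n^3\log n)$, so a failure probability of $n^{-2}$, hence $k = 3\log n$, suffices.

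The second gap is the identifier-donation mechanism for low-degree nodes. As specified, a node receives a fresh identifier at its first interaction only if its partner is \emph{already initialized}, which fails in general (the very first interaction of the execution involves two uninitialized nodes), so the bound via $\tbroadcastg \ge \E[\max_v \tau_v]$ does not cover all nodes; moreover, the ``fresh'' identifier needs $\Theta(\log n)$ independent random bits that the donor does not obviously possess, which is exactly the entropy problem you flag but do not resolve. The paper sidesteps all of this: low-degree nodes never need identifiers of their own --- they simply adopt any complete identifier they encounter and become followers. Only \emph{one} node of maximum degree must finish self-generation, which takes at most $km/\Delta \le kn$ expected steps, and then two successive broadcasts (expected cost $2\tbroadcastg$, \lemmaref{lemma:join-time}) first freeze the set of generated identifiers and then disseminate the maximum. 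This yields the expectation bound directly by Wald's identity and linearity, with no high-probability detours and no donated randomness. Your proof can be repaired by adopting these two fixes, but as written neither the state bounds nor the timing argument for low-degree nodes goes through.
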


\paragraph{Description of the time-efficient protocol}
Let $\Lambda$ denote the set of states of the constant-state protocol $\mathcal{A}$ given by \theoremref{thm:6-state-le-cover-time}. Recall that the protocol can be given an input of a nonempty \emph{subset} of nodes out of which the leader is elected.

Let $\finit \colon \{ \mathsf{leader}, \mathsf{follower} \} \to \Lambda$ be the initialization function of $\mathcal{A}$ so that $\finit(\mathsf{leader})$ gives the state in which a node starts as a leader candidate and $\finit(\mathsf{follower})$ the state in which the node has been designated as a follower.
Each node $v \in V$ maintains two local variables:
\begin{itemize}
  \item $\var{state}(v) \in \Lambda$, and
  \item $\var{id}(v) \in \{1, \ldots, 2^{k+1} - 1\}$, where $k \in \Theta(\log n)$ is a parameter.
\end{itemize}
At the start of the protocol, each node $v$ initialises its state variables to
\[
\var{id}(v) \gets 1 \quad \textrm{ and } \quad \var{state}(v) \gets \finit(\mathsf{follower}).
\]
During an interaction $(v_0,v_1)$, where $v_0$ is the initiator and $v_1$ is the responder, node $v_i$ updates its state by applying the following rules in sequence:

\begin{enumerate}
\item If $\var{id}(v_i) < 2^k$, then set
  \[
  \var{id}(v_i) \gets 2 \cdot \var{id}(v_i) + i.
  \]
  If now $\var{id}(v_i) \ge 2^k$, then initialise
  \[ \var{state}(v_i) \gets \finit(\mathsf{leader}).
  \]

\item If $\var{id}(v_i) < \var{id}(v_{1-i})$ and $\var{id}(v_{1-i}) \ge 2^k$, then set
 \[
 \var{id}(v_i) \gets \var{id}(v_{1-i}) \quad \textrm{ and } \quad \var{state}(v_i) \gets \finit(\mathsf{follower}).
 \]
\item Update $\var{state}(v_i)$ as per the rules of the constant-state protocol $\mathcal{A}$, by using $\var{state}(v_0)$ and $\var{state}(v_1)$ as input for the transition function of $\mathcal{A}$. 

\end{enumerate}

In each time step, the output of node $v$ is defined to be the output of the constant-state protocol $\mathcal{A}$ in state $\var{state}(v)$.

\paragraph{Analysis of the time-efficient protocol}

We say that \emph{node $v$ starts an instance of $\mathcal{A}$ with the identifier $j$} if it executes $\var{state}(v) \gets \finit(\mathsf{leader})$ and $\var{id}(v) = j$ at some step $t>0$. This can only happen when applying Rule (1). If node executes $\var{state}(v) \gets \finit(\mathsf{follower})$ in Rule (2) we say that \emph{node $v$ joins an instance of $\mathcal{A}$}.

Note that a node can join many instances, but it if it joins a new instance, the new instance must have a higher identifier than the previous instance. Moreover, node can start an instance only once and this must happen before the node joins any instance. While two nodes can start an instance with the same identifier, they do so with a small probability, as shown next.

\begin{lemma}\label{lemma:unique-ids}
  Suppose node $u$ starts an instance with identifier $i$ and node $v$ starts an instance with identifier $j$. Then $\Pr[i = j] \le 1/2^k$. 
\end{lemma}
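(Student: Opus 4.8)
The plan is to understand the mechanism by which identifiers are generated. Examining Rule (1), each node starts with $\var{id} = 1$ and, as long as $\var{id} < 2^k$, each interaction appends a single bit to the identifier: if the node acts as initiator ($i=0$) it appends a $0$, and if it acts as responder ($i=1$) it appends a $1$, via the update $\var{id} \gets 2\cdot\var{id} + i$. So after a node has participated in exactly $k$ interactions (starting from the seed value $1$, which contributes a leading $1$ bit), its identifier becomes a $(k+1)$-bit string whose $k$ trailing bits record the sequence of initiator/responder roles it played. A node starts an instance precisely when its identifier first reaches $2^k$, i.e.\ after its $k$-th qualifying interaction, and the resulting identifier is $2^k + b$ where $b \in \{0,\ldots,2^k-1\}$ is the integer encoded by that length-$k$ role-bit string.

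The key observation is that the $j$ trailing role-bits that determine node $u$'s identifier are each $\{0,1\}$-valued, and each such bit equals $0$ or $1$ according to whether $u$ was selected as the initiator or the responder in the corresponding interaction. First I would argue these bits are uniform and independent: in any interaction involving $u$, conditioned on $u$ being one of the two selected endpoints, the scheduler picks the \emph{ordered} pair uniformly among the $2m$ ordered pairs, so by symmetry $u$ is equally likely to be the initiator or the responder, independently of the history and independently of which neighbour it interacts with. Hence the $k$ role-bits contributing to $u$'s final identifier form $k$ i.i.d.\ fair coin flips, and so $u$'s identifier is uniformly distributed over $\{2^k, \ldots, 2^{k+1}-1\}$, a set of size $2^k$.

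The next step is to handle the identifier of node $v$ and the correlation between the two identifiers. The cleanest approach is to condition: fix the identifier $i$ produced by $u$ (equivalently, fix the outcomes of $u$'s role-bits). I then claim that, conditioned on this, the identifier $j$ produced by $v$ is still uniform on a set of $2^k$ values. This holds because $v$'s identifier is determined by the role-bits of the interactions in which $v$ participates, and in each such interaction the role assigned to $v$ is, by the same ordered-pair symmetry, an independent fair coin flip — even in the interactions where $u$ and $v$ interact with \emph{each other}, the two nodes receive \emph{complementary} roles ($i$ and $1-i$), so fixing $u$'s bit also fixes $v$'s bit in that shared interaction, but it remains a fair coin over the conditioning since exactly one of the two orderings is chosen uniformly. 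Thus $\Pr[j = i \mid i] \le 1/2^k$, and averaging over $i$ gives $\Pr[i=j] \le 1/2^k$.

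I expect the main obstacle to be the shared-interaction correlation: when $u$ and $v$ interact directly, their role-bits are perfectly anticorrelated rather than independent, so one cannot simply claim the two identifier strings are independent. The argument must therefore avoid asserting independence of the two full strings and instead argue one string at a time: condition fully on all of $u$'s role-bits (hence on $i$), and then show that each individual bit of $v$'s string is an unbiased $\pm$ coin flip with respect to this conditioning, so that $v$'s string is uniform and in particular takes any specific value (such as the value making $j=i$) with probability at most $2^{-k}$. One subtlety to note carefully is that the \emph{number} of interactions a node participates in before reaching $2^k$ is itself random and intertwined across the two nodes, but this does not matter: the bound only requires that, given that both nodes do eventually start instances, $v$'s final $k$-bit suffix is uniform conditioned on $u$'s, which follows from the per-interaction role symmetry regardless of timing.
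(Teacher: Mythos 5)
Your reading of the identifier mechanism is correct, and you correctly isolate the crux: the anticorrelation of role bits when $u$ and $v$ interact directly. But your resolution of that crux fails, and the sentence addressing it is self-contradictory: if fixing $u$'s bit in a shared interaction ``also fixes $v$'s bit'', then $v$'s bit cannot ``remain a fair coin'' under that conditioning --- it is deterministic. Consequently your central claim, that conditioned on all of $u$'s role bits the string of $v$ is uniform and hence $\Pr[j = i \mid i] \le 2^{-k}$ for every $i$, is false. Concretely, take $k=2$: let $u$ first interact with a third node (generating its first bit $i_1$), then let $u$ and $v$ interact --- since they have had different numbers of prior interactions, this shared interaction writes to \emph{different} positions, appending $u$'s second bit $i_2$ and $v$'s \emph{first} bit $v_1 = 1 - i_2$ --- and finally let $v$ interact with a third node (generating $v_2$). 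Conditioned on any identifier $i$ of $u$ whose generated bits satisfy $i_1 \neq i_2$, the forced bit $v_1 = 1-i_2$ already equals $i_1$, so $\Pr[j = i \mid i] = \Pr[v_2 = i_2] = 1/2 > 1/4$. The unconditional bound $\Pr[i=j]=1/4$ is rescued only because the favourable identifiers $i$ themselves occur with probability $1/2$; making that averaging rigorous requires showing that the constraints which shared interactions impose on $i$ are jointly independent, and that rests on a timing fact your argument never uses (for a single shared interaction writing $u$'s position $b$ and $v$'s position $b'$, having both counterpart bits already written would force $b < b'$ and $b' < b$ simultaneously; more generally the position pairs created by shared interactions cannot form cycles).

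The paper's proof avoids this trap by arguing position by position rather than conditioning on $u$'s whole string: for each bit position $b$, either both nodes write that position in the same shared interaction, in which case the two bits disagree with probability $1$, or one node writes its $b$-th bit strictly later, in which case that later bit is a fresh fair coin given everything written before, so the bits agree with probability at most $1/2$; these per-position bounds are then compounded into $1/2^k$. (The paper's compounding step is itself terse, but its skeleton --- resolving each position at the moment its later bit is written, with each interaction able to resolve at most one position --- is exactly the structure a complete proof needs, and it is the structure your global conditioning destroys.) So the gap is genuine: your intermediate claim is not merely unproved but false, even though the lemma itself is true.
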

\begin{proof}
We now bound the probability of the event
that $u$ and $v$ generate the same identifier (which happens by applying the first rule $k$ times). For node $v$, the probability that it is an initiator on any of its interactions is $1/2$. This means that the identifier of $v$ will be uniformly distributed across the set $\{2^k,\ldots,2^{k+1} - 1\}$. While the identifiers are \emph{not} in general independent, we show that the identifiers will be the same with probability at most $1/2^k$. To this end, we distinguish three cases:
  \begin{enumerate}

  \item If $u$ and $v$ assign their $b$th bits at the same time step $t>0$, then one of them will be the initiator and one the responder. Therefore, the $b$th bit and the identifiers of $u$ and $v$ will be different with probability 1.

  \item If $u$ and $v$ never interact before assigning all of their $k$ bits in their identifiers, then their identifiers are different with probability $1/2^k$. In particular, the $b$th bits will be different with probability $1/2$.

\item If $v$ assigns its $b$th bit at an earlier time step $t>0$ than node $u$, then $v$ assigns the bit at position $b$ to 0 or 1 with equal probability, regardless of who is its interaction partner or previous bit assignments of $v$ and $u$. Thus, the probability that $u$ and $v$ have the same bit in position $b$ is at most $1/2$.
  \end{enumerate}
  From the above, we have that nodes $u$ and $v$ have the same $b$th bit, for $1 < b \le k+1$, in their generated identifiers for with probability at most $1/2$. Therefore, the probability that all of the $k$ assigned bits are equal is at most $1/2^k$.
\end{proof}

Define $M(t)$ to be the maximum $\var{id}$ value in the system at time step $t$ and $M = \max \{ M(t) : t \ge 0 \}$. Let $T$ be the minimum time until all nodes have either started an instance of the constant-state protocol $\mathcal{A}$ with identifier $M$ or joined an instance with identifier $M$. Note that by time $T$ all nodes start executing the same instance of the constant-state protocol $\mathcal{A}$, which is guaranteed to stabilize in finite expected time. In particular, after time $T$ if there are more than one leader, then $\mathcal{A}$ will reduce the number of leaders to one. From \lemmaref{lemma:unique-ids} it follows that the probability that there is more than one leader is small.

\begin{lemma}\label{lemma:join-time}
  We have $\E[T] \le kn + 2 \cdot \tbroadcast(G)$.
\end{lemma}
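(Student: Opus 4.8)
The plan is to decompose the process into a short identifier-\emph{generation} phase followed by two broadcast phases, exploiting the monotonicity of the $\var{id}$ variables. First I would record the structural facts. Every application of Rule~(1) or Rule~(2) can only increase a node's $\var{id}$, so $\var{id}(v)$ is non-decreasing along the execution and, by definition of $M$, never exceeds $M$; moreover, once $\var{id}(v) \ge 2^k$ this stays true forever, and Rule~(1) can fire at $v$ only while $\var{id}(v) < 2^k$. The two properties I care about are ``$\var{id} \ge 2^k$'' and ``$\var{id} = M$'', and I would check that each spreads exactly like the one-way infection process of \sectionref{sec:broadcast}: if a node interacts with a partner already having $\var{id} \ge 2^k$, then after the interaction it too has $\var{id} \ge 2^k$ (either Rule~(1) pushes it to $2^k$ or Rule~(2) copies the partner), and if it interacts with a holder of $M$ it adopts $M$ (by Rule~(2), since $M$ is strictly larger than any other value present).

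Second, let $\tau_1$ be the first time some node completes generation, i.e.\ reaches $\var{id} \ge 2^k$. A fixed node of degree $d$ is selected in a given step with probability $d/m$, so its completion time is a sum of $k$ geometric variables of mean $km/d$. Applying this to a node of maximum degree $\Delta \ge 2m/n$, and using that $\tau_1$ is at most that node's completion time, gives $\E[\tau_1] \le km/\Delta \le kn/2$.

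Third, I would launch two broadcasts off stopping times. Let $T_a$ be the first time \emph{every} node has $\var{id} \ge 2^k$. From $\tau_1$ on there is a node with $\var{id} \ge 2^k$, and the monotone property ``$\var{id}\ge 2^k$'' dominates a single-source infection from it; by the strong Markov property at $\tau_1$ and the worst-case bound $\tbroadcast(G)$, the residual time has expectation at most $\tbroadcast(G)$, so $\E[T_a] \le kn/2 + \tbroadcast(G)$. The crucial observation is that at $T_a$ Rule~(1) can never fire again, so no new $\var{id}$ value is ever created afterwards: the maximum $\var{id}$ is frozen at $M$, its generator must have completed by $T_a$, and hence some node already holds $M$. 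A second strong-Markov restart at $T_a$, together with the fact that ``$\var{id}=M$'' also spreads as a single-source broadcast, bounds the remaining time until all nodes hold $M$ by $\tbroadcast(G)$ in expectation. Summing the three contributions yields $\E[T] \le kn/2 + 2\tbroadcast(G) \le kn + 2\tbroadcast(G)$.

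The hard part will be the bookkeeping for the two strong-Markov restarts: I must verify that $\tau_1$ and $T_a$ are genuine stopping times, that conditioned on the reached configuration the future schedule is fresh and the relevant infected set stochastically dominates a single-source broadcast (so extra sources or self-completions only help), and that $\tbroadcast(G)$ — a quantity depending only on $G$, not on node states — indeed upper bounds the conditional expected residual spreading time. The clean conceptual point I would emphasize is that freezing the maximum at $T_a$ decouples generation from the final broadcast of $M$, so slow low-degree nodes never need to finish generating on their own: they simply adopt $M$ once the broadcast reaches them, which is precisely why the leading term is $kn$ rather than the much larger worst-case completion time $km/\delta$.
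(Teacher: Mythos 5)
Your proof is correct and takes essentially the same route as the paper's: both decompose $T$ into (i) the time for a maximum-degree node to reach $\var{id} \ge 2^k$ (at most $k$ of its activations, so expected at most $km/\Delta \le kn$ steps), (ii) one broadcast that spreads the property $\var{id} \ge 2^k$, which freezes Rule (1) and hence the maximum value $M$, and (iii) a second broadcast that spreads $M$ itself, each broadcast costing at most $\tbroadcast(G)$ in expectation by monotonicity and linearity. Your explicit strong-Markov/stopping-time bookkeeping and the factor-$2$ sharpening via $\Delta \ge 2m/n$ are refinements of details the paper leaves implicit, not a different argument.
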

\begin{proof}
Let $v \in V$ be a node with $\deg(v) = \Delta$ and $X$ be the number of steps until node $v$ has been activated $k$ times. Clearly, node $v$ satisfies $\var{id}_X(v) \ge 2^k$ at time step $X$, as node $v$ has either executed Rule (1) of the algorithm $k$ times or it has satisfied the condition in Rule (2) by this time. Let $T(v)$ denote the number of steps until a broadcast initiated at node $v$ at time step $X$ reaches all nodes. Therefore, for all $t \ge X + T(v)$, we have that $\var{id}_t(u) \ge 2^k$ for all $u \in V$ and no node will satisfy the condition in Step (1) at time step $t$.

  Note that $M(t) = M(t')$ for all $t' \ge t \ge X + T(v)$. Now let $u$ that attained the maximum value at time step $X + T(v)$. The broadcast initiated from node $u$ at time step $X + T(v)$ will reach all nodes in the system by some random time $T(u)$. Hence, by time step $X + T(v) + T(u)$ all nodes have the same $\var{id}$ value. Hence, $T \le X + T(v) + T(u)$.
  By monotonicity and linearity of expectation,
  \[
  \E[T] \le \E[X] + \E[T(v)] + \E[T(u)].
  \]
  Since $X = Y_1 + \cdots + Y_k$ is the sum of $k$ geometric random variables with mean $m/\deg(v) = m/\Delta \le n$, we get that $\E[X] \le kn$. As $T(v) + T(v) \le 2 \tbroadcast(G)$ by definition of the broadcast time, we get that $\E[T] \le kn + 2 \tbroadcast(G)$.
\end{proof}

\lebroadcast*
\begin{proof}
Let $F$ be the event that more than one node starts an instance with identifier $M$ and let $v$ be any such node. The probability that some other node $u$ starts an instance with identifier is at most $\Pr[F] \le n/2^k$ by using \lemmaref{lemma:unique-ids} and the union bound over all nodes.
  Let $S$ be the stabilization time of protocol $\mathcal{A}$. Using \lemmaref{lemma:join-time}, the expected stabilization time of the overall protocol is
  \begin{align*}
  \E[T \mid \overline{F}] \cdot \Pr[\overline{F}] + \E[S \mid \overline{F}]\cdot \Pr[F]  &\le \E[T] + \E[S] \cdot \Pr[F] \\
  &\le kn + 2 \cdot \tbroadcastg + \E[S] \cdot n/2^k.
  \end{align*}
  Because the worst-case hitting time of a classic random walk on general connected graphs is $O(n^3)$ and $O(n^2)$ on regular graphs~\cite{levin2017markov}, by \theoremref{thm:6-state-le-cover-time}, we have $\E[S] \le Cn^4 \log n$ for some constant $C$ for any connected graph and $\E[S] \le Cn^3 \log n$ for any connected, regular graph.  
  For general graphs, we can set the parameter controlling the identifier space to $k = \lceil 4 \log n \rceil$ so that number of states is $O(n^4)$ and the expected stabilization time becomes $O(\tbroadcastg + n \log n)$. For regular graphs, it suffices to set $k = \lceil 3 \log n \rceil$ to get the desired bound.
\end{proof}

\section{Space-efficient and fast leader election}\label{sec:impulse}

We now give a leader election protocol whose stabilization time is parameterized by the worst-case expected broadcast time $\tbroadcast(G)$ and whose state complexity depends on the expansion properties of the graph. The approach is inspired by a time-optimal algorithm on the clique due to Sudo et al.~\cite{timeoptimalLE}, but with significant differences: for instance, our algorithm works on any connected graph, and guarantees that a high-degree node is elected as a leader.

\begin{restatable}{theorem}{fastle}\label{thm:fast-le}
  For any graph $G$ with maximum degree $\Delta$, there is a leader election protocol that uses $O(\log n \cdot \stateparam(G))$ states and stabilizes in $O( \tbroadcastg \cdot \log n)$ steps in expectation and with high probability, where $\stateparam(G) \in O( \log( \Delta/\beta \cdot \log n))$.
\end{restatable}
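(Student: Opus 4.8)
The plan is to build a protocol with three interacting components: an approximate distributed phase clock driven by the high-degree nodes, a coin-flipping tournament running on top of the clock, and the constant-state protocol of \theoremref{thm:6-state-le-cover-time} as an always-correct backup. The guiding observation is that a node of degree $\Theta(\Delta)$ is selected by the scheduler roughly once every $m/\Delta$ steps, so over a window of $\tbroadcastg \in O((m/\beta)\log n)$ steps (\theoremref{thm:broadcast-upper}) it is activated $\tau := \Theta((\Delta/\beta)\log n)$ times. Thus if a high-degree node ``ticks'' after counting $\Theta(\tau)$ of its own activations, consecutive ticks are spaced $\Theta(\tbroadcastg)$ steps apart — exactly long enough for one broadcast to traverse the graph. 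A tick advances the node to the next phase, and the current phase number is disseminated by the one-way epidemic of \sectionref{sec:broadcast}, with every node adopting the largest phase it has seen.

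First I would specify the space-efficient counter. Storing $\tau$ activations exactly would cost $\Theta(\tau)$ states, so instead I would use a randomized, geometrically-leveled counter (a Morris-type counter) that reaches its top level $L = \Theta(\log \tau)$ after $\Theta(\tau)$ activations with good probability; this uses only $O(\log \tau) = O(\stateparam(G))$ states. Combined with an $O(\log n)$-valued phase register, the total state count is $O(\log n \cdot \stateparam(G))$, as required. I would then prove two properties of the clock: (i) each phase lasts $\Omega(\tbroadcastg)$ steps with high probability, so the phase-$i$ broadcast completes before phase $i+1$ begins; and (ii) no high-degree node's counter runs so far ahead that the global maximum phase overshoots — i.e., all high-degree clocks stay within $O(1)$ phases of one another throughout the execution. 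Property (ii) is where the broadcast re-synchronization is essential: adopting the maximum phase prevents slow nodes from lagging, while concentration of the counter over $\Theta(\tau)$ activations prevents fast nodes from racing ahead.

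On top of a synchronized clock the tournament is standard. Only nodes whose own clock keeps pace — necessarily those of degree $\Theta(\Delta)$ — remain candidates; a node that falls behind the broadcast phase drops to follower, which both drives the candidate set down and guarantees the eventual leader has degree $\Theta(\Delta)$. In each phase every surviving candidate draws a fresh fair bit, the maximum bit among candidates is broadcast, and candidates that drew $0$ while some candidate drew $1$ are eliminated; since the holder of the maximum is never removed, the candidate set stays nonempty. A Chernoff/coupon-collector argument shows the set shrinks geometrically, so after $O(\log n)$ phases a unique candidate survives with high probability, and since each phase costs $O(\tbroadcastg)$ steps the total is $O(\tbroadcastg \cdot \log n)$ in expectation and with high probability. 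To obtain a protocol correct on every execution (with finite expected time), I would run the $6$-state backup of \theoremref{thm:6-state-le-cover-time} in parallel, seeded by the surviving candidates: in the high-probability case exactly one candidate remains and the backup stabilizes on it, while in the rare failure case the backup still elects a unique leader from the nonempty candidate set. Making the high-probability bounds fail with probability at most $1/n^{c}$ for large $c$ ensures the rare backup invocations (costing $\poly(n)$ steps) add only a negligible term to the expectation.

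The main obstacle I anticipate is the joint analysis of the clock: simultaneously controlling the \emph{drift} between the counters of different high-degree nodes over all $O(\log n)$ phases while using a high-variance logarithmic-space counter. A single Morris-type counter has constant relative variance, so the delicate point is to argue that the per-phase broadcast re-synchronization repeatedly ``resets'' the accumulated error, preventing it from compounding across phases, and that a phase is never simultaneously too short for broadcast to finish and too long for the clock to stay bounded. Establishing this concentration — and confirming that the threshold $\tau$ can be calibrated purely from the non-uniform parameters $n, \Delta, \beta$ — is the crux of the argument.
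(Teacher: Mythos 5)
Your overall architecture (space-efficient local clock of $O(\stateparam(G))$ states, phases of length $\approx \tbroadcastg$, an $O(\log n)$-phase tournament, the 6-state protocol as backup) matches the paper's, but the load-bearing step fails. Your plan rests on property (i): each phase, delimited by a single top-out of an $O(\log\tau)$-state Morris-type counter, lasts $\Omega(\tbroadcastg)$ steps \emph{with high probability}. No counter with $O(\stateparam)$ states can deliver this. The time to a single tick is dominated by a geometric-type random variable with constant relative variance and, crucially, a heavy \emph{lower} tail: the probability of ticking within an $\epsilon$-fraction of the expected time is $\Theta(\epsilon)$, not $n^{-\Omega(1)}$ (this is exactly what the paper's \lemmaref{lemma:k-expectation-and-stuff}(c) quantifies for its streak counter, which has the same defect). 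Worse, your max-broadcast resynchronization aggravates rather than "resets" the problem: since every node adopts the largest phase seen, the global phase advances at the rate of the \emph{fastest} ticker among all current candidates — initially all $n$ nodes — so the minimum of $n$ near-geometric tick times makes early phases far shorter than $\tbroadcastg$ with overwhelming probability. Then broadcasts do not complete within phases, the coin-flip eliminations are made against stale information, and the geometric-shrinking analysis of the tournament has no foundation. You correctly flagged this as the crux, but the proposed resolution does not resolve it.

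The paper circumvents exactly this obstacle with two ideas you are missing. First, it splits the execution into a \emph{waiting phase} of $L = \Theta(\log n)$ ticks during which levels are \emph{not} broadcast at all: each node must complete $L$ streaks on its own, and a sum of $\Theta(\log n)$ i.i.d.\ tick times \emph{is} concentrated within constant factors w.h.p.\ (\lemmaref{lemma:r-bounds}, \lemmaref{lemma:log-streaks}); this is what w.h.p.\ eliminates low-degree nodes and prevents any node from racing ahead. Second, in the subsequent elimination phase (where max-broadcast of levels is switched on), the paper never claims w.h.p.\ synchronization of individual phases; instead it proves a \emph{pairwise} pruning lemma — for any two surviving candidates, within a window of $O(\E[X(d)] + \tbroadcastg)$ steps at least one becomes a follower with constant probability (\lemmaref{lemma:pairwise-pruning}) — and amplifies this over $O(\log n)$ windows by a union bound over pairs (\lemmaref{lemma:whp-pruning}), with a level cap $\alpha(\tau)L$ that both bounds the state space and triggers the hand-off to the backup protocol (\lemmaref{lemma:fast-stabilisation}). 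In short: where your argument needs per-phase w.h.p.\ guarantees that the clock cannot provide, the paper either aggregates $\Theta(\log n)$ ticks (waiting phase) or downgrades to constant-probability-per-window guarantees amplified across windows (elimination phase). Without one of these two mechanisms your proof cannot be completed as stated.
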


Observe that parameter $\stateparam(G) \in O(\log n)$, so the protocol uses $O(\log^2 n)$ states. Moreover, for graphs where the ratio $\Delta/\beta$ is small, we can obtain $o(\log^2 n)$ space complexity.  For example, in regular graphs we get the following bounds.

\begin{corollary}\label{cor:fast-le-regular}
In any regular graph with conductance $\phi = \beta/\Delta$, there is a leader election protocol that stabilizes in $O( 1/\phi \cdot n \log^2 n)$ steps in expectation and with high probability using $O(\log n \cdot (\log \log n - \log \phi))$ states.
\end{corollary}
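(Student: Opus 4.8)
The plan is to derive this as a direct specialization of \theoremref{thm:fast-le}, substituting the structural parameters of a regular graph and bounding the broadcast time via \theoremref{thm:broadcast-upper}. First I would record the two facts that hold for any $\Delta$-regular graph: the edge count is $m = \Delta n / 2$, and by definition of conductance $\beta = \phi \Delta$. These let me rewrite every quantity appearing in the general theorem as an expression in $n$, $\phi$, and $\log n$.

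The first step is to bound the broadcast time. Applying the edge-expansion branch of \theoremref{thm:broadcast-upper}, namely $\tbroadcastg \in O(m \log n / \beta)$, and substituting $m = \Delta n / 2$ and $\beta = \phi \Delta$ gives $\tbroadcastg \in O\!\left(\frac{\Delta n \log n}{2 \phi \Delta}\right) = O(n \log n / \phi)$, since the factors of $\Delta$ cancel. Feeding this into the stabilization-time bound $O(\tbroadcastg \cdot \log n)$ of \theoremref{thm:fast-le} then yields the claimed $O(\phi^{-1} n \log^2 n)$ expected steps, which also holds with high probability by the same guarantee in the theorem.

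The second step handles the state complexity. The state count in \theoremref{thm:fast-le} is $O(\log n \cdot \stateparam(G))$ with $\stateparam(G) \in O(\log(\Delta/\beta \cdot \log n))$. For a regular graph, $\Delta/\beta = \Delta/(\phi \Delta) = 1/\phi$, so $\stateparam(G) \in O(\log(\phi^{-1} \log n)) = O(\log \phi^{-1} + \log \log n) = O(\log \log n - \log \phi)$, using $\log \phi^{-1} = -\log \phi \ge 0$ since $\phi \le 1$. Multiplying by the outer $\log n$ factor gives the stated state complexity $O(\log n \cdot (\log \log n - \log \phi))$. Since every step is a plain substitution into results already established earlier in the paper, there is no genuine obstacle; the only thing to verify carefully is that the edge-expansion branch of the broadcast bound (rather than the diameter branch) is the one producing the tightest dependence on $\phi$, and that the $\Delta$ factors cancel exactly as claimed.
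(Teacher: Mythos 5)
Your proposal is correct and matches the paper's own (implicit) derivation exactly: the paper presents this corollary as a direct specialization of \theoremref{thm:fast-le}, using $m = \Delta n/2$ and $\beta = \phi\Delta$ to get $\tbroadcastg \in O(\phi^{-1} n \log n)$ from the edge-expansion branch of \theoremref{thm:broadcast-upper}, and $\stateparam(G) \in O(\log(\phi^{-1}\log n)) = O(\log\log n - \log\phi)$ for the state bound. Nothing is missing.
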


The algorithm consists of three parts. First, we describe a space-efficient way for nodes to approximately count the number of \emph{local} interactions. The second part is a two-phase protocol that first removes low-degree nodes from the set of leader candidates and then reduces the number of high-degree leader candidates to one with high probability. Finally, to guarantee finite expected stabilization time, we use the constant-state token-based leader election protocol given in \theoremref{thm:6-state-le-cover-time} as a backup protocol to handle the unlikely cases where the fast part fails.

\subsection{Local approximate clocks on graphs} 

We first describe a subroutine that is used to trigger events every $\Theta(2^{\stateparam})$ expected interactions using exactly $\stateparam + 1$ local states, where $\stateparam \ge 1$ is a given parameter controlling the frequency of the triggered events. Each node $v$ maintains a variable $\var{streak}(v) \in \{0, \ldots, \stateparam\}$, which is initialized to 0.
When node $v$ interacts, it updates its streak counter as follows:
\begin{itemize}
\item If $v$ is the initiator, then set
  \[
  \var{streak}(v) \gets \var{streak}(v) + 1.
  \] Otherwise, set
  \[
  \var{streak}(v) \gets 0.
  \]
\item If $\var{streak}(v) = \stateparam$, then node $v$ is said to \emph{complete a streak} and set
  \[
  \var{streak}(v) \gets 0.
  \]
\end{itemize}
In the above clock protocol, a local event is triggered at a node whenever the node completes a streak.

\paragraph{Analysis of the clock protocol}

 Let $K$ denote the number of times a fixed node needs to interact until it first completes a streak. Here $K$ is the number of fair coin flips needed to observe $\stateparam$ consecutive heads, as the scheduler picks the role of initiator and responder uniformly at random and independently from previous interactions.

 We start with a technical result that approximates the distribution of $K$ using geometric random variables. Recall that
$X \preceq Y$ denotes that $Y$ stochastically dominates $X$.

\begin{restatable}{lemma}{kdominationlemma}
\label{lemma:k-domination}
  The random variable $K$ satisfies $Z_0 \preceq K \preceq Z_1 + \stateparam$, where $Z_0 \sim \Geom(2^{-\stateparam})$ and $Z_1 \sim \Geom(2^{-\stateparam-1})$.
\end{restatable}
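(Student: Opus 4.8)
The plan is to recognise $K$ as the classical waiting time for a run of $\stateparam$ consecutive successes under fair coin tossing: each interaction of the fixed node is an independent fair coin flip (``heads'' $=$ being the initiator, which increments the streak; ``tails'' $=$ being the responder, which resets it), and $K$ is the first step at which $\stateparam$ consecutive heads have occurred. Both claimed dominations are equivalent to tail estimates on $a_t := \Pr[K > t]$: for the $\{1,2,\dots\}$-valued geometrics we have $\Pr[Z_0 > t] = (1-2^{-\stateparam})^t$ and $\Pr[Z_1 + \stateparam > t] = (1-2^{-\stateparam-1})^{t-\stateparam}$ for $t \ge \stateparam$, so it suffices to show $(1-2^{-\stateparam})^t \le a_t \le (1-2^{-\stateparam-1})^{t-\stateparam}$.

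For the lower bound $Z_0 \preceq K$ I would use a renewal decomposition of the flip sequence into \emph{trials}: a trial is a maximal block of heads, terminated either by the first tail (a \emph{failed} trial) or by reaching $\stateparam$ heads (a \emph{successful} trial, which ends the process). The trials are i.i.d.\ and each succeeds with probability $2^{-\stateparam}$, so the number $N$ of trials is distributed as $\Geom(2^{-\stateparam})$. Since every trial consumes at least one flip, $K \ge N$ holds pointwise under this coupling, which gives $Z_0 \preceq K$ directly.

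For the upper bound I would condition on the position $\ell$ of the first tail. For $t \ge \stateparam$ this yields the renewal recursion
\[
a_t = \sum_{\ell=1}^{\stateparam} 2^{-\ell}\, a_{t-\ell},
\]
with $a_t = 1$ for $0 \le t < \stateparam$; the omitted event that the first $\stateparam$ flips are all heads contributes nothing, as a run already exists. I would then prove $a_t \le \gamma^{t-\stateparam}$ with $\gamma = 1-2^{-\stateparam-1}$ by strong induction. Substituting the inductive hypothesis into the recursion, the inductive step reduces to the algebraic inequality $\sum_{\ell=1}^{\stateparam} (2\gamma)^{-\ell} \le 1$, which holds because $2\gamma = 2-2^{-\stateparam} \le 2$ gives $(2\gamma)^{-1} \ge 1/2$, so the finite geometric sum equals $(1-(2\gamma)^{-\stateparam})/(1-2^{-\stateparam}) \le 1$. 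This establishes $a_t \le \Pr[Z_1 + \stateparam > t]$ and hence $K \preceq Z_1 + \stateparam$.

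The main obstacle is calibrating the geometric parameter: the induction closes with $Z_1 \sim \Geom(2^{-\stateparam-1})$ but would fail with the naive guess $\Geom(2^{-\stateparam})$, so the crux is verifying that $\gamma = 1-2^{-\stateparam-1}$ is exactly the per-step decay rate licensed by $\sum_{\ell=1}^{\stateparam}(2\gamma)^{-\ell}\le 1$. Minor care is also needed at the recursion boundary (the all-heads prefix) and with the $\{1,2,\dots\}$-valued indexing convention for $\Geom$.
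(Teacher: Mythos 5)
Your proof is correct, but it takes a genuinely different route from the paper's. The paper works with $f(k) = \Pr[K \ge k]$ and derives a single \emph{backward} recursion, $f(k+1) = f(k) - f(k-\stateparam)/2^{\stateparam+1}$, by conditioning on the last $\stateparam+1$ flips (when $K=k$ these must be a tail followed by $\stateparam$ heads); both dominations are then proved by induction on this one identity --- the lower bound via the auxiliary claim $f(k) > f(k-\stateparam)/2$, the upper bound by unrolling the recursion over $\stateparam$ steps and combining monotonicity of $f$ with Bernoulli's inequality. You instead prove the two dominations by separate arguments. Your lower bound is a direct coupling with no induction at all: splitting the flips into i.i.d.\ trials, each succeeding with probability $2^{-\stateparam}$, gives $K=\sum_{i=1}^{N} L_i \ge N$ pointwise with $N \sim \Geom(2^{-\stateparam})$, hence $Z_0 \preceq K$; this is arguably cleaner than the paper's corresponding induction. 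Your upper bound uses the \emph{forward} renewal recursion $a_t = \sum_{\ell=1}^{\stateparam} 2^{-\ell}\, a_{t-\ell}$ (conditioning on the first tail, with the all-heads prefix correctly contributing nothing), and the closing algebra checks out:
\[
\sum_{\ell=1}^{\stateparam} (2\gamma)^{-\ell} \;=\; \frac{1-(2\gamma)^{-\stateparam}}{2\gamma-1} \;=\; \frac{1-(2\gamma)^{-\stateparam}}{1-2^{-\stateparam}} \;\le\; 1,
\]
since $(2\gamma)^{-1} \ge 1/2$ forces the numerator below the denominator, and the base cases $t<\stateparam$ hold because $\gamma^{t-\stateparam}\ge 1$ there. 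As for what each buys: the paper's approach is economical, feeding both inductions from one identity, while yours separates the probabilistic content (the coupling) from the analytic content (the renewal induction), and the geometric-sum criterion makes transparent why $2^{-\stateparam-1}$, rather than $2^{-\stateparam}$, is the right parameter for the upper bound --- exactly the calibration issue you flag. One convention remark: your tail formulas take $\Geom$ supported on $\{1,2,\dots\}$, which matches how the lemma is applied downstream (e.g.\ $\E[\Geom(p)]=1/p$ in \lemmaref{lemma:k-expectation-and-stuff} and \lemmaref{lemma:r-bounds}), whereas the appendix proof is calibrated to the $\{0,1,\dots\}$ convention; on the lower-bound side your version is accordingly the marginally sharper statement.
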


The proof of \lemmaref{lemma:k-domination} is given in \appendixref{apx:k-domination-lemma}.
We use  $X(d)$ to denote the number of \emph{steps} (i.e., the number of node pairs sampled by the scheduler) until a fixed node of degree $d$ completes a streak. Note that high degree nodes have a higher probability to complete their streaks, as they interact more often.

The next lemma summarizes some useful properties of  $K$ and $X(d)$. The proof follows by application of concentration bounds on geometric random variables (\lemmaref{lemma:sum-of-geometric}) and Wald's identity (\lemmaref{lemma:wald}).

\begin{restatable}{lemma}{kexpectationandstuff}
\label{lemma:k-expectation-and-stuff}
  Let $1 \le d \le n$. The random variables $K$ and $X(d)$ satisfy the following:
  \begin{enumerate}[label=(\alph*)]
  \item The expected value of $K$ is $\E[K] = 2^{\stateparam+1} - 2$.
  \item The expected value of $X(d)$ is $\E[X(d)] = \E[K] \cdot m/d$.
  \item For any $0 \le \lambda \le 1$  we have $\Pr[X(d) \le \lambda \E[X(d)]] \le 4\lambda + 2^{1 - \lambda \stateparam}$. 
  \end{enumerate}
\end{restatable}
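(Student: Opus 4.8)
The plan is to prove each of the three parts in order, building on Lemma~\ref{lemma:k-domination} (the stochastic domination $Z_0 \preceq K \preceq Z_1 + \stateparam$) and the basic relationship between $K$ and $X(d)$.

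For part (a), I would compute $\E[K]$ directly rather than through the domination bounds, since those would only give two-sided estimates and we want an exact value. The cleanest approach is a first-step / recurrence argument: let $\E[K]$ be the expected number of fair coin flips to see $\stateparam$ consecutive heads. The standard identity for the expected waiting time for a run of $\stateparam$ heads with a fair coin is $\sum_{j=1}^{\stateparam} 2^j = 2^{\stateparam+1} - 2$, which one obtains by conditioning on the first tail or by the martingale/recursion $a_j = a_{j-1} + 2(1 + a_{j-1})$ with $a_0 = 0$. This gives $\E[K] = 2^{\stateparam+1} - 2$ exactly.

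For part (b), I would use Wald's identity (Lemma~\ref{lemma:wald}). A node of degree $d$ is selected as an endpoint of a sampled edge with probability $d/m$ at each step, and each such selection is an \emph{interaction} of that node; the number of steps between consecutive interactions is geometric with mean $m/d$. The number of interactions needed to complete a streak is exactly $K$, and $K$ is independent of the sequence of inter-interaction waiting times (the coin-flip outcomes that govern $K$ are independent of \emph{when} the interactions occur). Hence $X(d)$ is a sum of $K$ i.i.d.\ geometric$(d/m)$ variables, and Wald gives $\E[X(d)] = \E[K] \cdot (m/d)$. The only care needed is to justify independence of $N = K$ from the waiting-time variables so that Lemma~\ref{lemma:wald} applies.

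For part (c), I would combine the upper domination $K \preceq Z_1 + \stateparam$ with the concentration bound for sums of geometrics (Lemma~\ref{lemma:sum-of-geometric}), decomposing the event $\{X(d) \le \lambda \E[X(d)]\}$ into two contributions: one where too few interactions occur in a short window (controlled by the geometric lower-tail bound from Lemma~\ref{lemma:sum-of-geometric}(b)), and one where $K$ itself is atypically small (controlled by the domination $Z_0 \preceq K$, so $\Pr[K \le t] \le \Pr[Z_0 \le t] \le t \cdot 2^{-\stateparam}$ using $1 - (1-p)^t \le pt$). Balancing these should produce the additive $4\lambda + 2^{1-\lambda\stateparam}$ form: the $4\lambda$ term arising from the geometric concentration of the number of steps given a typical $K$, and the $2^{1-\lambda\stateparam}$ term from the probability that $K$ is much smaller than its mean. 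The main obstacle I anticipate is part (c): getting the constants to line up cleanly requires choosing the right threshold at which to split between ``$K$ is small'' and ``the $\Geom(d/m)$ clock runs fast,'' and tracking the factor-$2$ slack between $Z_0$ and $Z_1$ from Lemma~\ref{lemma:k-domination} carefully so that the stated bound $4\lambda + 2^{1-\lambda\stateparam}$ holds for \emph{all} $0 \le \lambda \le 1$ rather than only asymptotically.
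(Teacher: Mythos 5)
Your plan for (a) and (b) matches the paper's proof: (a) is the standard run-of-$\stateparam$-heads waiting time (though the recursion you quote, $a_j = a_{j-1} + 2(1+a_{j-1})$, is garbled --- the correct recursion is $a_j = a_{j-1} + 1 + \tfrac{1}{2}a_j$, i.e.\ $a_j = 2a_{j-1}+2$ --- while your closed form $\sum_{j=1}^{\stateparam} 2^j = 2^{\stateparam+1}-2$ is right), and (b) is Wald's identity applied exactly as in the paper, with the same independence caveat flagged.

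For (c), your decomposition --- split on whether $K$ is below a threshold, bound $\Pr[K \text{ small}]$ via $Z_0 \preceq K$ and the Bernoulli inequality, bound the conditional part via Lemma~\ref{lemma:sum-of-geometric} --- is also the paper's. But your attribution of the two terms is backwards, and the balancing as you describe it would fail. The workable choice (and the paper's) is the threshold $k = \lceil 2\lambda \E[K] \rceil$: then $\Pr[K < k] \le k\cdot 2^{-\stateparam} \le 4\lambda + 2^{-\stateparam}$, so the event that $K$ is atypically small is what produces the \emph{linear} term $4\lambda$; and conditioned on $K \ge k$, $X(d)$ stochastically dominates a sum of $k$ i.i.d.\ $\Geom(d/m)$ variables whose mean is at least $2\lambda \E[X(d)]$, so the event $X(d) \le \lambda\E[X(d)]$ forces that sum below half its mean, and Lemma~\ref{lemma:sum-of-geometric}(b) gives $\exp(-\Omega(k)) = \exp(-\Omega(\lambda 2^{\stateparam})) \le 2^{-\lambda\stateparam}$ --- the \emph{exponential} term. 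If you instead choose the threshold $t$ so that $\Pr[K < t] \le 2^{1-\lambda\stateparam}$, as your final sentence suggests, the Bernoulli bound forces $t \le 2^{(1-\lambda)\stateparam+1}$, which for any constant $\lambda$ (say $\lambda = 1/2$) is far below $\lambda\E[K] \approx \lambda 2^{\stateparam+1}$; conditioned only on $K \ge t$, the dominated sum of $t$ geometrics then has mean much \emph{smaller} than $\lambda\E[X(d)]$, so the conditional probability you must bound is close to $1$, not $O(\lambda)$, and the split does not close. So the argument works only in the direction the paper uses; once the threshold is tied to $\lambda\E[K]$, the rest of your toolkit suffices (note that only $Z_0 \preceq K$ is needed in part (c) --- neither $Z_1$ nor the factor-2 slack between the two dominations enters).
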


\begin{proof}
  For claim (a), let $s(\stateparam)$ denote the expected number of fair coin flips to obtain $\stateparam$ consecutive heads. Note that $s(1) = 2$ and $s(\stateparam) = s(\stateparam+1)/2 -1$, which implies that 
$s(\stateparam+1) = 2s(\stateparam) + 2$.
Solving this recurrence yields $\E[K] = s(\stateparam) = 2^{\stateparam+1} - 2$.

 For claim (b) and claim (c), note that $X(d)$ is the random sum
 \[
 X(d) = Y_1 + \cdots + Y_{K}
 \]
 of $K$ independent and identically distributed geometric random variables, where $Y_i \sim \Geom(d/m)$ is the number of steps until the fixed node of degree $d$ interacts for the $i$th time after its $(i-1)$th interaction for all $i \ge 1$. Since the random variable $K$ is independent of the sequence $(Y_i)_{i \ge 1}$, Wald's identity (\lemmaref{lemma:wald}) implies
 \[
 \E[X(d)] = \E[K] \cdot \E[Y_1] = \E[K] \cdot m/d,
 \]
 which establishes claim (b).

  Finally, to prove claim (c), set $k = \lceil 2\lambda \cdot \E[K] \rceil$ and define $X = Y_1' + \cdots + Y_k'$, where $Y_i' \sim \Geom(d/m)$ and  all $Y_i'$'s are independent. By linearity of expectation, we have
  \[
  \E[X] = \frac{km}{d} = \frac{k}{\E[K]}\cdot \E[X(d)] \ge 2\lambda \cdot \E[X(d)].
  \]
  Then if $K \ge k$, we know that $X(d)$ stochastically dominates $X$. This means that
  \begin{align*}
\Pr[X(d) \le \lambda\cdot \E[X(d)] \mid K \ge k] &\le \Pr[X \le \lambda \cdot \E[X(d)]] \le \Pr[X \le \E[X]/2] \\&\le \exp\left(-\frac{d}{m} \cdot \E[X] \cdot c(1/2)\right) \le \exp(-k/4)\\&\le \exp\left(- \lambda \cdot 2^{\stateparam-1}\right) \le 2^{-\lambda \stateparam},
\end{align*}
where the third inequality follows from \lemmaref{lemma:sum-of-geometric}b and last inequality follows from $2^{\stateparam-1} \ge \stateparam$. Using the law of total probability, we get 
  \begin{align*}
    \Pr[X(d) \le \lambda\cdot \E[X(d)]] &= \Pr[X(d) \ge \lambda\cdot \E[X(d)] \mid K < k] \cdot \Pr[K < k ] \\
    &+ \Pr[X(d) \ge \lambda\cdot \E[X(d)] \mid K \ge k] \cdot \Pr[K \ge k ]  \\
    &\le \Pr[K < k ] + \Pr[X(d) \le \lambda\cdot \E[X(d)] \mid K \ge k] \\
    &\le \Pr[K < k ] + 2^{-\lambda \stateparam}.
  \end{align*}
  Recall that for a geometric random variable $Z \sim \Geom(1/2^{\stateparam})$, we have
  \[
  \Pr[Z \le k]  = 1 - (1 - 1/2^\stateparam)^k.
  \]
  Note also that $k \le 2\lambda \cdot \E[K] + 1 \le \lambda 2^{\stateparam + 2} + 1$. The claim now follows using \lemmaref{lemma:k-domination} and applying the Bernoulli inequality to the term  $\Pr[K < k]$, as
  \begin{align*}
  \Pr[K < k] &\le \Pr[Z < k]  \le 1 - (1 - 1/2^\stateparam)^{k} \le \frac{k}{2^\stateparam} \le 4\lambda + 2^{-\stateparam}.
  \end{align*}
  Therefore, since $\lambda \le 1$, it follows that
  \[
    \Pr[X(d) \le \lambda\cdot \E[X(d)]] \le 4\lambda + 2^{-\stateparam} + 2^{-\lambda \stateparam} \le 4\lambda + 2^{1-\lambda \stateparam}. \qedhere
  \]
\end{proof}

Together with the above results, we can show that the number $R$ of interactions to complete $\ell \ge \ln n$ streaks is strongly concentrated around the interval $[\E[R]/2, 4\cdot \E[R]]$. Thus, the above process can be essentially used as space-efficient local clock which ticks at a desired (approximate) frequency. In the following, we write $c(\lambda) = \lambda - 1 - \ln \lambda$ for any $\lambda > 0$.

\begin{restatable}{lemma}{rboundslemma}
\label{lemma:r-bounds}
  Let $\ell \ge \ln n$ and $R$ be the number of interactions a node needs to complete $\ell$ streaks. Then   
  \begin{enumerate}[label=(\alph*)]
  \item The expected value of $R$ is $\E[R] = (2^{\stateparam+1} - 2)\ell$.
  \item $\Pr[R \le \lambda \cdot \E[R]/2] \le 1/n^{c(\lambda)}$ for all $0 < \lambda \le 1$
  \item $\Pr[ R \ge 4\lambda \cdot \E[R]] \le 1/n^{c(\lambda)}$ for all $\lambda \ge 1$
   \end{enumerate}
\end{restatable}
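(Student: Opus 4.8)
The plan is to express $R$ as a sum of $\ell$ independent copies of the streak-completion time $K$ and then apply the concentration bounds for sums of geometric random variables from \lemmaref{lemma:sum-of-geometric}, using the stochastic domination established in \lemmaref{lemma:k-domination}. Write $R = K_1 + \cdots + K_\ell$, where each $K_i$ is distributed as $K$ (the number of fair coin flips needed to observe $\stateparam$ consecutive heads) and the $K_i$ are independent, since each streak restarts from scratch after the previous one completes. Claim (a) then follows immediately from \lemmaref{lemma:k-expectation-and-stuff}(a) and linearity of expectation: $\E[R] = \ell \cdot \E[K] = (2^{\stateparam+1}-2)\ell$.

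For the tail bounds (b) and (c), the idea is to sandwich $R$ between two sums of genuine geometric random variables using \lemmaref{lemma:k-domination}, which gives $Z_0 \preceq K \preceq Z_1 + \stateparam$ with $Z_0 \sim \Geom(2^{-\stateparam})$ and $Z_1 \sim \Geom(2^{-\stateparam-1})$. For the lower tail (b), I would use the left domination $Z_0 \preceq K$, so that $R$ stochastically dominates $\sum_{i=1}^\ell Z_0^{(i)}$, a sum of $\ell$ i.i.d.\ geometrics with success probability $p = 2^{-\stateparam}$ and mean $\ell \cdot 2^{\stateparam} \approx \E[R]/2$. Applying \lemmaref{lemma:sum-of-geometric}(b) with $p \cdot \E[\text{sum}] = 2^{-\stateparam} \cdot \ell 2^{\stateparam} = \ell \ge \ln n$ yields a bound of the form $\exp(-\ell \, c(\lambda)) \le 1/n^{c(\lambda)}$, which is exactly the target after accounting for the factor-$2$ slack built into the statement's threshold $\E[R]/2$. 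For the upper tail (c), I would symmetrically use $K \preceq Z_1 + \stateparam$, bounding $R$ above by $\sum_{i=1}^\ell Z_1^{(i)} + \ell\stateparam$; the geometric sum has mean $\ell \cdot 2^{\stateparam+1} \approx \E[R]$ and success probability $2^{-\stateparam-1}$, so \lemmaref{lemma:sum-of-geometric}(a) again gives $\exp(-\ell \, c(\lambda))$-type concentration, and the additive $\ell\stateparam$ term together with the generous factor-$4$ in the threshold $4\lambda\E[R]$ absorbs the lower-order discrepancies.

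The main obstacle I anticipate is bookkeeping the constant factors so that the crude thresholds $\E[R]/2$ and $4\lambda\E[R]$ actually dominate the means of the sandwiching geometric sums, since $Z_0$ and $Z_1$ have means $2^{\stateparam}$ and $2^{\stateparam+1}$ respectively, neither of which equals $\E[K] = 2^{\stateparam+1}-2$ exactly. The factor-$2$ and factor-$4$ fudge factors in the statement are precisely there to swallow this gap plus the additive $\stateparam$ shift, so the key is to check that, e.g., $\lambda\E[R]/2 \le \lambda \cdot \E[\sum Z_0^{(i)}]$ and $4\lambda\E[R] \ge 4\lambda \cdot \E[\sum Z_1^{(i)} + \ell\stateparam]$ hold for the relevant ranges of $\lambda$, after which the exponent $p\cdot\E[\text{sum}] \ge \ell \ge \ln n$ converts $\exp(-\ell\,c(\lambda))$ into the stated $1/n^{c(\lambda)}$. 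These are routine but must be done carefully to ensure the $c(\lambda)$ in the exponent is not degraded.
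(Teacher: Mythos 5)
Your proposal is correct and follows essentially the same route as the paper's proof: decompose $R$ into $\ell$ i.i.d.\ copies of $K$ for (a), sandwich $R_0 \preceq R \preceq R_1 + \ell\stateparam$ via \lemmaref{lemma:k-domination} where $R_0, R_1$ are sums of $\ell$ i.i.d.\ geometrics with parameters $2^{-\stateparam}$ and $2^{-\stateparam-1}$, and apply \lemmaref{lemma:sum-of-geometric} with exponent $p \cdot \E[\cdot] = \ell \ge \ln n$ to get $e^{-\ell c(\lambda)} \le n^{-c(\lambda)}$. One small correction to your bookkeeping for (c): the condition to verify is not $4\lambda\E[R] \ge 4\lambda \cdot \E\bigl[\sum_i Z_1^{(i)} + \ell\stateparam\bigr]$ (which is actually false, since $\E[R] = \ell(2^{\stateparam+1}-2)$ is smaller than $\ell 2^{\stateparam+1} + \ell\stateparam$), but rather that the threshold minus the additive shift dominates $\lambda$ times the mean of $R_1$, i.e.\ $4\lambda\E[R] - \ell\stateparam \ge \lambda\E[R_1]$, which reduces to $\lambda(6 \cdot 2^{\stateparam} - 8) \ge \stateparam$ and holds for all $\lambda \ge 1$, $\stateparam \ge 1$ — this is exactly how the paper discharges the shift, so the gap is only in the statement of the check, not in the feasibility of the plan.
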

\begin{proof}
Note that (a) follows immediately from linearity of expectation and the fact that $R$ is a sum of $\ell$ i.i.d.\ copies of $K$. To show (b) and (c), define the following sums
\[
R_0 = Y_1 + \cdots + Y_\ell \quad \textrm{ and } \quad R_1 = Y'_1 + \cdots + Y'_\ell
\]
of independent geometric random variables, where $Y_i \sim \Geom(1/2^\stateparam)$ and $Y'_i \sim \Geom(1/2^{\stateparam+1})$ for each $1 \le i \le \ell$. By linearity of expectation, we have that
\[
\E[R_0] = \ell 2^\stateparam \quad \textrm{ and } \quad \E[R_1] = \ell 2^{\stateparam+1}.
\]
\lemmaref{lemma:k-domination} implies that $R_0 \preceq R \preceq R_1 + \ell \stateparam$. Note that $R_0 \preceq R$ implies
  \begin{align*}
    \Pr[R \le \lambda \cdot \E[R]/2] &= \Pr[R \le \lambda \cdot \ell(2^\stateparam - 1)] \le  \Pr[R \le \lambda \ell 2^\stateparam]\\
                                     &= \Pr[R \le \lambda \E[R_0]] \le \Pr[R_0 \le \lambda \E[R_0]]\\ &\le e^{-\ell c(\lambda)} \le n^{-c(\lambda)},
  \end{align*}
  where the last two inequalities follow from \lemmaref{lemma:sum-of-geometric} and $\ell \ge \ln n$. Similarly, $R \preceq R_1 + \ell \stateparam$ implies 
    \begin{align*}
    \Pr[R \ge 4\lambda \cdot \E[R]] &= \Pr[R \ge \lambda \ell (2^{\stateparam + 3} - 8)] \le  \Pr[R \ge \lambda \ell 2^{\stateparam+2}]\\
     &= \Pr[R \ge 2 \lambda \cdot \E[R_1]] \le \Pr[R_1 + \ell \stateparam \ge 2 \lambda \ell 2^{\stateparam+1}] \\
    &= \Pr[R_1 \ge 2 \lambda \ell 2^{\stateparam+1} - \ell \stateparam]
    \le \Pr[R_1 \ge \lambda \ell 2^{\stateparam+1}] \\
    &\le e^{-\ell c(\lambda)} \le n^{-c(\lambda)},
    \end{align*}
    where the third inequality follows from $\lambda \ge 1$ and $\ell \stateparam \le \ell 2^{\stateparam+1}$ and the last two from \lemmaref{lemma:sum-of-geometric} and $\ell \ge \ln n$.
\end{proof}

Finally, we examine the concentration of the number of steps until a node completes a certain number of streaks.

\begin{restatable}{lemma}{logstreakslemma}
\label{lemma:log-streaks}
Suppose $\stateparam \in \omega(1)$. Let $\ell \ge \ln n$ and $S = S(d,\ell)$ be the number of steps until a fixed node of degree $d$ completes $\ell$ streaks. Then for all sufficiently large $n$, %
\begin{enumerate}[label=(\alph*)]
\item $\E[S] = \E[K] \cdot \ell m/d = (2^{\stateparam+1}-2) \cdot \ell m/d$,
  \item $\Pr[ S \le \lambda^2 \cdot \E[S]/4 ] \le 2/n^{c(\lambda)}$ for any $\lambda \le 1$, and
  \item $\Pr[ S \ge 8\lambda^2 \cdot \E[S] ] \le 2/n^{c(\lambda)}$ for any $\lambda \ge 1$.
\end{enumerate}
\end{restatable}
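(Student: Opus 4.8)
The plan is to write $S$ as a random sum and then compose the concentration of the number of terms with the concentration of the sum itself. Writing $K_1, K_2, \dots$ for the i.i.d.\ numbers of \emph{interactions} needed for successive streaks, the number of interactions the node uses for $\ell$ streaks is $R = K_1 + \cdots + K_\ell$, which is exactly the random variable analysed in \lemmaref{lemma:r-bounds}. Letting $G_1, G_2, \dots$ denote the i.i.d.\ numbers of scheduler steps between consecutive interactions of the fixed node, each $G_i \sim \Geom(d/m)$, we have $S = \sum_{i=1}^{R} G_i$. The key structural observation is that $R$ depends only on the initiator/responder \emph{roles} the node takes, whereas the $G_i$ depend only on \emph{when} the node is involved in an interaction; since the scheduler picks the role uniformly and independently of the timing, $R$ is independent of the sequence $(G_i)_{i \ge 1}$. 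Claim~(a) is then immediate from Wald's identity (\lemmaref{lemma:wald}): $\E[S] = \E[R]\cdot\E[G_1] = \E[R]\cdot m/d$, with $\E[R] = (2^{\stateparam+1}-2)\ell$ by \lemmaref{lemma:r-bounds}(a).

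For the lower tail~(b) I would split $\{S \le \lambda^2 \E[S]/4\}$ into the event $\{R \le \lambda\E[R]/2\}$ and the event that $R \ge k := \lceil \lambda\E[R]/2\rceil$ yet $\sum_{i=1}^{k} G_i$ is small. The first has probability at most $n^{-c(\lambda)}$ by \lemmaref{lemma:r-bounds}(b). On the second, independence of $R$ and $(G_i)$ gives the stochastic domination $S \succeq \sum_{i=1}^{k} G_i$, whose mean $km/d$ is at least $\tfrac{\lambda}{2}\E[S]$, so the threshold $\lambda^2\E[S]/4$ is at most a $\mu \le \tfrac{\lambda}{2} \le 1$ fraction of that mean. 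Applying \lemmaref{lemma:sum-of-geometric}(b) with $p = d/m$ (so that $p\cdot\E[\sum_{i=1}^k G_i] = k$) yields a bound $\exp(-k\,c(\mu))$; since $c$ is decreasing on $(0,1]$ and $\mu \le \lambda$, we get $c(\mu) \ge c(\lambda)$, hence $\exp(-k\,c(\lambda)) \le n^{-c(\lambda)}$ as long as $k \ge \ln n$. Summing the two contributions gives $2/n^{c(\lambda)}$, and the slack constants $1/4$ and the leading factor $2$ are exactly what is needed to absorb the $\tfrac12$ factor inherited from \lemmaref{lemma:r-bounds} and the rounding in $k$.

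The upper tail~(c) is handled symmetrically: condition on the complement of $\{R \ge 4\lambda\E[R]\}$ (bounded by \lemmaref{lemma:r-bounds}(c)), use $S \preceq \sum_{i=1}^{k'} G_i$ with $k' = \lceil 4\lambda\E[R]\rceil$, and apply \lemmaref{lemma:sum-of-geometric}(a). Here the target $8\lambda^2\E[S]$ is at least a $\mu' \ge \lambda \ge 1$ multiple of the mean of the dominating sum; because $c$ is increasing on $[1,\infty)$ we again keep the exponent at $c(\lambda)$, and the number of terms $k' = \Omega(\ell) \ge \ln n$ is automatically large, so this direction causes no trouble.

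The one delicate point, and the main obstacle, is the requirement $k \ge \ln n$ in the lower tail, which can fail when $\lambda$ is as small as $2^{-\stateparam}$. To close this gap I would invoke the deterministic floor $S \ge R \ge \ell\stateparam$ (each streak requires at least $\stateparam$ interactions and each interaction at least one step, while $m/d \ge 1$ for a connected graph): for such small $\lambda$ the threshold $\lambda^2\E[S]/4$ drops below $\ell\stateparam$, so the event is impossible and contributes probability $0$. Because $\stateparam \in \omega(1)$, the range where the concentration argument applies ($\lambda \gtrsim 2^{-\stateparam}$) and the range where the event is vacuous ($\lambda \lesssim 2^{-\stateparam/2}$) overlap for all sufficiently large $n$, which is precisely the hypothesis of the lemma. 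Carefully verifying the constant bookkeeping across these two regimes is the part that needs the most care.
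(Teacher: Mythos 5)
Your decomposition and the way you exploit it are exactly the paper's proof: write $S = \sum_{i=1}^{R} G_i$ with the $G_i \sim \Geom(d/m)$ i.i.d.\ and independent of $R$, obtain (a) from Wald's identity together with \lemmaref{lemma:r-bounds}(a), and for (b) and (c) truncate the sum at roughly $\lambda \E[R]/2$ (resp.\ $4\lambda\E[R]$) terms, bound the probability that $R$ violates the truncation via \lemmaref{lemma:r-bounds}(b)/(c), and bound the truncated sum via \lemmaref{lemma:sum-of-geometric}, using the monotonicity of $c$ on $(0,1]$ and $[1,\infty)$ to keep the exponent at $c(\lambda)$. Up to constant bookkeeping this matches the paper's argument step for step, and it is correct whenever the truncated sum has at least $\ln n$ terms, i.e.\ whenever $\lambda \ge 1/(2^{\stateparam}-1)$ or so.

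The flaw is in your patch for smaller $\lambda$. You claim the event $\{S \le \lambda^2\E[S]/4\}$ becomes vacuous once $\lambda \lesssim 2^{-\stateparam/2}$ because the threshold falls below the deterministic floor $\ell\stateparam \le S$. But the threshold is $\lambda^2(2^{\stateparam+1}-2)\ell(m/d)/4$: it scales with $m/d$, while the floor $\ell\stateparam$ does not, and the inequality $m/d \ge 1$ that you invoke pushes the threshold \emph{up}, not down. Vacuousness requires $\lambda^2 < 4\stateparam d/\bigl((2^{\stateparam+1}-2)m\bigr)$, so your two regimes overlap only when $m/d = O(\stateparam)$; for a node of degree $O(1)$ in a graph with $m = \Theta(n)$ edges the ``vacuous'' range is far narrower than $\lambda \lesssim 2^{-\stateparam/2}$, and the gap remains. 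Two remarks. First, this regime is immaterial for the paper: the lemma is only ever invoked with constant $\lambda$ (\lemmaref{lemma:pruning}, \lemmaref{lemma:fast-stabilisation}), and the paper's own proof silently makes the same restriction (it asserts $r_0 = \lfloor \lambda\E[R]/2\rfloor \in \Omega(\ell 2^{\stateparam})$, which presumes $\lambda$ is a fixed constant). Second, if you do want every $\lambda \le 1$, the right repair uses the floor differently: since $R \ge \ell\stateparam$ pathwise, $S \ge \sum_{i=1}^{\ell\stateparam} G_i$ always; this sum has $\ell\stateparam \ge \ln n$ terms and mean $\ell\stateparam m/d$, so the threshold is a $\mu = \lambda^2(2^{\stateparam+1}-2)/(4\stateparam)$ fraction of its mean, and for all $\lambda \le 1/(2^{\stateparam}-1)$ one checks $\mu \le \lambda/(2\stateparam) \le \lambda$, whence \lemmaref{lemma:sum-of-geometric}(b) gives exponent $\ell\stateparam\, c(\mu) \ge c(\lambda)\ln n$ --- covering exactly the range your truncation argument misses.
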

\begin{proof}
  Let $(Y_i)_{i\ge 1}$ an infinite sequence of i.i.d.\ geometric random variables, where  $Y_i \sim \Geom(d/m)$. Note that $S$ is the random sum $S = Y_1 + \cdots + Y_R$,
where $R$ is the (random) number of interactions node takes to complete $\ell$ streaks.  Since $R$ is independent of the sequence $(Y_i)_{i \ge 1}$, Wald's inequality (\lemmaref{lemma:wald}) implies that $\E[S] = \E[R] \cdot \E[Y_1]$. Together with \lemmaref{lemma:r-bounds}a this implies claim (a) of the lemma, as 
\[
\E[S] = \E[R] \cdot \E[Y_1] = \left(2^{\stateparam+1}-2\right) \cdot \left(\frac{\ell m}{d}\right).
\]
To show claim (b), let $\lambda \le 1$ and define
\[
r_0 = \lfloor\lambda \cdot \E[R]/2 \rfloor \quad \textrm{ and } \quad S_0 = \sum_{i=1}^{r_0} Y_i.
\]
Note that since $\ell \ge \ln n$, for large enough $n$ we have
\[
\lambda \cdot \E[R]/4 \le r_0 \le \lambda \cdot \E[R]/2.
\]
Observe that $\E[S_0] \ge \lambda \E[S]/4$.
If the event  $R \ge r_0$ happens, then $S_0 \le S$. Hence,
\begin{align*}
  \Pr[S \le \lambda^2 \cdot \E[S]/4 \mid R \ge r_0] &\le \Pr[S_0 \le \lambda^2 \cdot \E[S]/4] \\
  &\le \Pr[S_0 \le \lambda \cdot \E[S_0]] \\
  &\le e^{-r_0 \cdot c(\lambda)} \le 1/n^{c(\lambda)},
\end{align*}
where the last two inequalities follow from \lemmaref{lemma:sum-of-geometric} and the fact that for large enough $n$ we have $r_0 \in \Omega(\ell 2^\stateparam) \subseteq \omega(\log n)$. By law of total probability,
\begin{align*}
  \Pr[ S \le \lambda^2 \cdot \E[S]/4] &= \Pr[S \le \lambda^2 \cdot \E[S]/4 \mid R \ge r_0] \cdot \Pr[ R \ge r_0 ] \\
  &+ \Pr[S \le \lambda^2/4 \cdot \E[S] \mid R < r_0] \cdot \Pr[ R < r_0 ] \\
  &\le \Pr[S \le \lambda^2/4 \cdot \E[S] \mid R \ge r_0]   +\Pr[ R < r_0 ] \\
  &\le 1/n^{c(\lambda)} + 1/n^{c(\lambda)},
\end{align*}
as the term $\Pr[ R < r_0]$ is at most $1/n^{c(\lambda)}$ by \lemmaref{lemma:r-bounds}.

Finally, the proof of the claim (c) follows a similar pattern.
Let $\lambda \ge 1$ and define
\[
r_1 = \lceil \lambda \cdot 4\E[R] \rceil \quad \textrm{ and } \quad S_1 = \sum_{i=1}^{r_1} Y_i.
\]
Note that since $\ell \ge \ln n$, for large enough $n$ we have
\[
\lambda \cdot 8\E[R] \ge r_0 \ge \lambda \cdot 4\E[R].
\]
Observe that $8\lambda \cdot \E[S] \ge \E[S_1]$. If $R < r_1$, then $S < S_1$. Now we get that
\begin{align*}
  \Pr[S \ge 8\lambda^2 \cdot \E[S] \mid R < r_1] &\le \Pr[S_1 \ge 8\lambda^2 \cdot \E[S]]  \\
  &\le \Pr[S_1 \ge \lambda \cdot \E[S_1]] \\
  &\le e^{-r_1 \cdot c(\lambda)} \le 1/n^{c(\lambda)},
\end{align*}
where the last two inequalities follow from \lemmaref{lemma:sum-of-geometric} and the fact that $r_1 \in \Omega(\ell 2^\stateparam) \subseteq \omega(\log n )$. By law of total probability,
\begin{align*}
  \Pr[ S \ge 8\lambda^2 \cdot \E[S]] &= \Pr[S \ge 8\lambda^2 \cdot \E[S] \mid R \ge r_1] \cdot \Pr[ R \ge r_1 ] \\
  &+ \Pr[S \ge 8\lambda^2 \cdot \E[S] \mid R < r_1] \cdot \Pr[ R < r_1 ] \\
  &\le \Pr[ R \ge r_1 ]  + \Pr[S \ge \lambda^2 \cdot \E[S] \mid R < r_1] \\
  &\le 1/n^{c(\lambda)} + 1/n^{c(\lambda)},
\end{align*}
as $\Pr[ R \ge r_1] \ge 1/n^{c(\lambda)}$ by \lemmaref{lemma:r-bounds} and the second term we bounded above. %
\end{proof}

\subsection{The fast leader election protocol}

With the time-keeping mechanism in place, we now describe and analyse the leader election protocol that reduces the number of leader candidates to \emph{one}, with high probability, in $O(\tbroadcastg \cdot \log n)$ steps.

Let $\tau \ge 1$ be an arbitrary fixed constant that controls the probability that the protocol fails (increasing $\tau$ decreases the probability of failure). Fix the parameters 
\[
\stateparam = 8 + \left \lceil \log \left( \frac{\tbroadcastg \cdot \Delta}{m} \right) \right \rceil \quad \textrm{ and } \quad 
 L =  \lceil 2\tau \log n \rceil,
 \]
where $\Delta$ denotes the maximum degree of the graph $G$.

Note that with this choice of parameters $X(d)\in O(\tbroadcast(G))$ in expectation and w.h.p. for any $d \le \Delta$ and also $X(d)\in \Theta(\tbroadcast(G))$ in expectation and w.h.p. for $d \in \Theta(\Delta)$. We also note that $\stateparam \in \Omega(\log \log n)$, as in any graph $\tbroadcastg \ge m \ln( n -1) / \Delta$; see \lemmaref{lemma:nlogn-bc-lb}.

\paragraph{The fast, space-efficient leader election protocol}
As a subroutine, each node runs the streak counter protocol with $\stateparam$ fixed as above. In addition, every node $v$ maintains two state variables $\var{status}(v) \in \{ \mathsf{leader}, \mathsf{follower} \}$ and a counter $\var{level}(v) \in \{0, \ldots, \alpha(\tau) \cdot L \}$, where $\alpha(\tau) > 1$ is a constant we fix later in the analysis.

Initially, each node $v$ initializes the variables to $\var{status}(v) \gets \mathsf{leader}$ and $\var{level}(v) \gets  0$. When $v$ interacts with $u$, node $v$ updates its state using the following rules applied in sequence:
\begin{enumerate}
\item If $v$ completes a streak and $\var{status}(v) = \mathsf{leader}$, then set
  \[ \var{level}(v) \gets \min \{ \var{level}(v) + 1 , \alpha(\tau) \cdot L \}.
  \]
\item If $\var{level}(v) < \var{level}(u)$ and $\var{level}(u) \ge L$, then set
  \[
  \var{status}(v) \gets \mathsf{follower}.
  \]
 \item If $\max \{ \var{level}(u),  \var{level}(v) \} \ge L$, then set
   \[
   \var{level}(v) = \max \{ \var{level}(u),  \var{level}(v) \}.
   \]
\end{enumerate}

\paragraph{Analysis of the fast, space-efficient leader election protocol}
We now analyse the above protocol. We say that a node is at level $\ell$ at time step $t$ if its $\var{level}$ variable is $\ell$ at time step $t$. A node is in the \emph{elimination phase} if it is at least at level $L$. Otherwise, it is in the \emph{waiting phase}. When a node $v$ in the waiting phase interacts with a node in the elimination phase, then $v$ moves to the elimination phase (as a follower).

Note that a node can only remain a leader and increase its level if it completes a streak. When a node increases its level without completing a streak, it must become a follower by Rules (2) and (3). Moreover, by Rule (2), in every time step one of the nodes in the graph with the highest level must be a leader. Thus, the protocol guarantees that there is always at least one leader in every step. Finally, as $\tbroadcastg$ is the worst-case expected broadcast time, Rule~(3) implies that if some node is at level $\ell \ge L$ at step given step, then within $\tbroadcast(G)$ expected steps all nodes are at level at least $\ell$.

The first step in the analysis considers fixed pairs of nodes, characterizing the period of time after which at least one node from a given pair drops out of contention. 

\begin{restatable}{lemma}{pairwisepruninglemma}
\label{lemma:pairwise-pruning}
  Let $u$ and $v$ be nodes with degree at least $d$. If $u$ and $v$ have level at least $L$ and less than $\alpha(\tau) \cdot L$ at step $t$, then at least one of them is a follower at time step $t + 16 \cdot ( \E[X(d)] +  \tbroadcastg)$ with probability at least $1/8$.
\end{restatable}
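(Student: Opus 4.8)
The plan is to exhibit a constant-probability chain of events that forces one of $u,v$ into the follower state through Rule~(2). WLOG both nodes are leaders at step $t$ (otherwise the claim is immediate), and relabel them so that, writing $a = \var{level}(u)$ and $b = \var{level}(v)$ for the levels at step $t$, we have $a \ge b$. Two structural facts drive everything. First, because Rule~(2) is applied before Rule~(3), a node that is still a leader can never \emph{increase} its level by copying a neighbour: the very interaction that would raise its level via Rule~(3) first demotes it to follower. Hence, while a node remains a leader, its level equals its step-$t$ level plus the number of streaks it has completed since $t$. Second, since levels never decrease and spread by taking maxima, for any fixed threshold $\ell \ge L$ the set of nodes at level $> \ell$ grows exactly like the broadcast/influence process of \sectionref{sec:broadcast}: once some node holds a level exceeding $\ell$, that ``high level'' reaches every node within $\tbroadcastg$ expected steps, and the first moment it reaches a leader still sitting at level $\le \ell$, Rule~(2) turns that leader into a follower.

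The quantitative heart is the separation of timescales built into the choice $\stateparam = 8 + \lceil \log(\tbroadcastg \cdot \Delta/m)\rceil$. Indeed, for any degree $d' \le \Delta$, \lemmaref{lemma:k-expectation-and-stuff} gives $\E[X(d')] = \E[K]\, m/d' \ge \E[K]\,m/\Delta \in \Omega(\tbroadcastg)$ with a large hidden constant, so a \emph{single} streak of any relevant node takes far longer than \emph{one} broadcast. With $W_1 = 8\,\E[X(d)]$ and $W_2 = 8\,\tbroadcastg$ (both comfortably inside the window $16(\E[X(d)] + \tbroadcastg)$), I would track three events. Event $A$: within $[t, t+W_1]$ node $u$ completes a streak and thereby reaches level $> b$; since $\E[X(d_u)] \le \E[X(d)]$, Markov gives $\Pr[A] \ge 7/8$. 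Event $C$: a broadcast started at $u$ at that streak-completion reaches $v$ within the next $W_2$ steps; Markov on $\tbroadcastg$ gives $\Pr[C] \ge 7/8$. Event $B$: node $v$ completes no streak during the short propagation window $W_2$; because $\E[X(d_v)] \gg \tbroadcastg$, a fresh streak of $v$ finishing inside $W_2$ is very unlikely. On $A \cap B \cap C$, node $u$'s level is strictly above $v$'s (unchanged) level when the high level reaches $v$, so Rule~(2) makes $v$ a follower; a union bound over the three complements leaves probability well above $1/8$, and the generous constants are exactly what absorb the Markov losses.

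The main obstacle --- and where the above must be made precise --- is the \emph{catch-up problem}: I must prevent $v$'s current level from matching the incoming threat. Two difficulties arise. First, $v$ might complete streaks in $[t,t+W_1]$ \emph{before} $u$ does, so that at $u$'s streak-completion $v$ has already climbed to $u$'s level and Rule~(2) cannot fire; this is worst when $a = b$, and I would handle it by strengthening $A$ to the \emph{race} event that $u$ completes its first post-$t$ streak before $v$ completes its first one (probability $\approx 1/2$ in the symmetric case), so that $v$ has made zero progress at the instant the threat is created. Second, $v$ may be mid-streak at step $t$ and so complete one streak quickly even within $W_2$; since a leader advances only through its own streaks, one residual completion raises $v$ by exactly one, so it suffices to secure a head start of two in the equal-level case (and to use the pre-existing gap $a - b \ge 1$ otherwise) so that a single catch-up cannot erase the strict inequality. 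The decisive input is that, once $v$'s counter has reset, its next streak needs $\Omega(\tbroadcastg)$ interactions in expectation, ruling out any further catch-up during propagation. Pinning down these event definitions so that each probability stays bounded below by a constant --- and checking that the correlation between the streak processes of $u$ and $v$ (which may interact directly) does not spoil the bounds --- is the technical crux.
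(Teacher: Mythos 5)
Your high-level architecture (one node completes a streak and becomes the ``threat'', a broadcast carries its level to the other node, the other node fails to complete a streak in the short broadcast window, all justified by the timescale separation $\E[X(\Delta)] \ge 256\,\tbroadcastg$ built into $\stateparam$) is exactly the paper's, but your proof has a genuine gap at precisely the point you call the ``technical crux'', and your proposed repairs do not close it. The problem is that you fix the threatener \emph{in advance} to be the higher-level node $u$. The lemma only assumes both degrees are at least $d$; it allows $\deg(v) \gg \deg(u)$, e.g.\ $\deg(u) = d$ and $\deg(v) = \Delta$. Streak completion needs the same number of \emph{interactions} for both nodes, but interactions arrive at rate proportional to degree, so $v$ completes streaks far sooner in wall-clock steps than $u$. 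Consequently your race event (``$u$ completes its first post-$t$ streak before $v$ does''), and a fortiori the head-start-of-two event, has probability $o(1)$ rather than a constant --- the $\approx 1/2$ you quote only holds in the symmetric case. Swapping roles does not help either: if the high-degree node has the lower level, it may need up to $\Theta(\log n)$ streaks to overtake, which does not fit in the stated window. So the event decomposition as proposed cannot be pushed through, and the probability accounting needed to land on the specific constant $1/8$ is never carried out.

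The paper's resolution is to let the threatener be chosen \emph{a posteriori}: set $\ell = \max\{\var{level}(u),\var{level}(v)\}$ and let $X$ be the time until the \emph{first} of the two nodes reaches level $\ell+1$. Since the node currently at level $\ell$ needs only one streak and has degree $\ge d$, one gets $X \preceq X(d)$, hence $\Pr[X \ge 16\,\E[X(d)]] \le 1/16$ by Markov --- regardless of which node wins. At step $t+X$ the winner is at $\ell+1$ while the loser is still at level $\le \ell$ \emph{by definition of $X$}, so no race argument is needed; the only degenerate case, both nodes interacting with each other at step $t+X$, immediately demotes the responder by Rule~(2). The residual ``mid-streak catch-up'' that worries you is then not prevented but simply charged to probability: with probability $1/2$ the victim's first interaction after $t+X$ is as responder, resetting its counter, after which a fresh streak within $16\,\tbroadcastg$ steps has probability at most $1/4 + 2^{1-\stateparam/16}$ by \lemmaref{lemma:k-expectation-and-stuff}(c) and $\E[X(\Delta)] \ge 256\,\tbroadcastg$; this yields $\Pr[\text{victim completes a streak too early}] \le 3/4$, and the total failure probability $1/16 + 1/16 + 3/4 = 7/8$ gives exactly the claimed $1/8$. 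In short: your missing idea is to define the threat via the common threshold $\ell+1$ and condition on whoever crosses it first, and to absorb the catch-up risk into the constant via the responder-reset coin flip, rather than trying to engineer events under which catch-up is impossible.
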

\begin{proof} 
  Let $L \le \ell < \alpha(\tau) \cdot L$ be the maximum of the levels of $u$ and $v$ at step $t$. Let $X$ be the minimum number of steps until the one of the nodes increments its level to $\ell+1$ after step $t$. Since  both nodes have degree at least $d$, we get that $X \preceq X(d)$, which implies 
  \[
  \Pr[X \ge 16 \cdot \E[X(d)]] \le \Pr[X(d) \ge \E[X(d)]] \le 1/16
  \]
  by Markov's inequality. If either $u$ or $v$ is a not leader at step $t + X$, then the claim follows.
  Suppose that both are leader candidates at step $t + X$. This means that only one of the nodes interacts at step $t + X$: if both interact at step $t + X$, then the initiator will complete its streak, increasing its level to $\ell+1$, and the responder will fail to complete its streak and become a follower due to Rule (2).

Without loss of generality, suppose that node $v$ interacts at step $t + X$. Denote $T$ to be the number of steps it takes for broadcast from $v$ to reach $u$. Then $T \sim T(v,u)$. Note that if by the time $t + X + T$ node $u$ has not completed a streak, then it is at level at most $\ell$ and will be eliminated by the broadcast from $v$ by Rule (2). Let $Y$ be the number of steps after time $t + X$ until $u$ completes a streak. 

Note that the probability that $u$ is a leader at time $t + X + 16 \cdot \tbroadcastg$ is at most $\Pr[T > 16 \cdot \tbroadcastg] + \Pr[Y \le 16 \cdot \tbroadcastg]$.
We show that this is at most $3/4$. The bound on the first term follows from Markov's inequality and the fact that $\tbroadcastg$ is the worst-case expected broadcast time, as
\[
\Pr[ T \ge 16 \cdot \tbroadcast(G) ]] \le \Pr[ T \ge 16 \cdot \E[T]] \le 1/16.
\]
To bound the second term, let $A$ denote the event that during its first interaction after time $t+X$, node $u$ resets its streak counter (i.e., $u$ is a responder). Clearly, $\Pr[A] = 1/2$. Note that $Y \mid A \preceq X(d) \preceq X(\Delta)$, and in particular by \lemmaref{lemma:k-expectation-and-stuff}b we have
\[
\E[Y \mid A] \ge \E[X(\Delta)] \ge 2^8 \cdot \frac{\tbroadcast(G) \Delta}{m} \cdot \frac{m}{\Delta} = 256 \cdot \tbroadcast(G).
\]
Then, by law of total probability and \lemmaref{lemma:k-expectation-and-stuff}c, we have
\begin{align*}
  \Pr[Y \le 16 \cdot \tbroadcast(G)] &= \Pr[ Y \le 16 \cdot \tbroadcast(G) \mid \overline{A}] \cdot \Pr[\overline{A}] + \Pr[ Y \le 16 \cdot \tbroadcast(G) \mid A] \cdot \Pr[A] \\ 
  &\le \frac{1}{2} + \frac{1}{2}\Pr[ X(\Delta) \le 1/16 \cdot \E[X(\Delta)]] \\
  &\le \frac{1}{2} + \frac{1}{2} \cdot \left(\frac{1}{4} + 2^{1 - \stateparam/16}\right) \le \frac{5}{8} + 2^{-\stateparam/16} \le \frac{3}{4},
\end{align*}
as $\stateparam \ge 48$ for large enough $n$. Thus, $u$ is a leader  by time $t + X + 8 \cdot \tbroadcastg$ with probability at most $3/4 + 1/16 = 13/16$. Since $\Pr[X \ge 16 \cdot \E[X(d)]] \le 1/16$, the lemma follows by union bound.
\end{proof}

The second technical lemma leverages this to show a higher concentration result for eliminating all-but-one candidate from contention, assuming all nodes are in the elimination phase. 

\begin{restatable}{lemma}{whppruninglemma}
\label{lemma:whp-pruning}
  Suppose all leader candidates have degree at least $d$ and all nodes are in the elimination phase. Let 
  \[
  t(d) = 16(\tau+2) \cdot \log_{8/7} n \cdot \left( \E[ X(d) ] + \tbroadcast(G) \right).
  \]
  If no node reaches level $\alpha(\tau) L$ by time $t + t(d)$, then exactly one leader candidate remains at time step $t + t(d)$ with probability at least $1 - O(n^{-\tau})$.
\end{restatable}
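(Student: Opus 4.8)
The plan is to reduce the ``many leaders to one'' statement to a pairwise survival bound and then amplify \lemmaref{lemma:pairwise-pruning} by chaining it over time. First I would partition the interval $[t, t + t(d)]$ into $K = (\tau + 2)\log_{8/7} n$ consecutive sub-phases, each of length $16\left(\E[X(d)] + \tbroadcastg\right)$, so that their total length is exactly $t(d)$. The key structural observation is that the set of leader candidates is monotonically non-increasing (Rules (2) and (3) only ever turn leaders into followers) and at least one leader always survives; hence ``exactly one leader remains at time $t + t(d)$'' is equivalent to the statement that for every pair $(u,v)$ of nodes that are leaders at time $t$, at least one has become a follower by time $t + t(d)$. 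It therefore suffices to bound, for a fixed such pair, the probability that both survive, and then take a union bound over the at most $\binom{n}{2} \le n^2$ pairs.

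For a fixed pair $(u,v)$ of leader candidates---both of degree at least $d$ by hypothesis---I would define $A_i$ to be the event that both $u$ and $v$ are still leaders at the end of sub-phase $i$ and that no node has reached level $\alpha(\tau) \cdot L$ by that time. Since the set of leaders only shrinks and the cap $\alpha(\tau) \cdot L$ is only ever approached over time, these events are nested: $A_K \subseteq A_{K-1} \subseteq \cdots \subseteq A_1$. Conditioned on $A_{i-1}$, at the start of sub-phase $i$ both $u$ and $v$ are leaders with levels in the range required by \lemmaref{lemma:pairwise-pruning}; by the Markov property of the scheduler and that lemma, the probability that both are still leaders at the end of sub-phase $i$ is at most $7/8$, so $\Pr[A_i \mid A_{i-1}] \le 7/8$. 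Telescoping then gives
\[
\Pr[A_K] \le (7/8)^{K} = \bigl((7/8)^{\log_{8/7} n}\bigr)^{\tau+2} = n^{-(\tau+2)},
\]
where I use the identity $(7/8)^{\log_{8/7} n} = 1/n$, which holds because $\ln(7/8) = -\ln(8/7)$. A union bound over all pairs then yields that, on the event that no node reaches level $\alpha(\tau) \cdot L$, more than one leader survives with probability at most $n^2 \cdot n^{-(\tau+2)} = O(n^{-\tau})$, which is the claim.

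The main thing to get right is the interaction between the conditioning ``no node reaches level $\alpha(\tau) \cdot L$'' and the time-chaining: rather than condition on this future event and risk breaking the Markov structure, I fold it directly into the nested events $A_i$, so that the hypothesis of \lemmaref{lemma:pairwise-pruning} (both nodes leaders with levels strictly below the cap) is automatically in force at the start of every sub-phase, and the per-sub-phase factor of $7/8$ can be applied cleanly via the Markov property. The remaining points are routine: verifying that the sub-phase length multiplied by $K$ equals $t(d)$, that every surviving pair has both endpoints of degree at least $d$ so that \lemmaref{lemma:pairwise-pruning} applies with parameter $d$, and the elementary exponent bookkeeping that makes the $\log_{8/7}$ base line up with a clean $1/n$ per factor.
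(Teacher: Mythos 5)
Your proposal is correct and takes essentially the same route as the paper: the paper's (much terser) proof obtains the same per-pair bound $(7/8)^{(\tau+2)\log_{8/7} n} = n^{-(\tau+2)}$ by implicitly iterating \lemmaref{lemma:pairwise-pruning} over sub-phases of length $16(\E[X(d)]+\tbroadcastg)$ and then union-bounding over all $n^2$ pairs. Your nested events $A_i$, which fold in the level-cap condition and invoke the Markov property of the scheduler, simply make explicit the chaining and conditioning that the paper leaves implicit.
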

\begin{proof}
  Let $A(u,v)$ denote the event that at time $t +t(d)$ both $u$ and $v$ are leader candidates. The claim follows by using \lemmaref{lemma:pairwise-pruning} and the union bound over all $u \neq v$, as
  \[
\sum_{u \neq v} \Pr[A(u,v)] \le \sum_{u \neq v} \left(\frac{7}{8}\right)^{(\tau+2) \cdot \log_{8/7} n } \le n^2/n^{\tau+2} \le 1/n^{\tau}. \qedhere
\]
\end{proof}

Next, we provide an upper bound on the time when, with high probability, all nodes are in the elimination phase, and all nodes of small degree have become followers. 

\begin{restatable}{lemma}{pruninglemma}
\label{lemma:pruning}
  There exist constants $\lambda \ge 1$ and $\gamma \ge 1$ such that at time $\lambda L \cdot \tbroadcast(G)$  the following holds with probability $1-O(n^{-\tau})$:
\begin{enumerate}
 \item all nodes are in the elimination phase, and
 \item all nodes of degree at most $\Delta/\gamma$ are followers.
\end{enumerate}
\end{restatable}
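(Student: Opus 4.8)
The plan is to establish the two claims by tracking the level of a fixed high-degree node and then propagating this information through the graph via broadcast. For concreteness, fix a node $w$ with $\deg(w) = \Delta$. The key intuition is that $w$, being a maximum-degree node, completes streaks at the fastest rate: by \lemmaref{lemma:k-expectation-and-stuff}b, $\E[X(\Delta)] = \E[K] \cdot m/\Delta \in \Theta(\tbroadcastg)$ with our choice of $\stateparam$, so $w$ completes $L = \lceil 2\tau \log n\rceil$ streaks in roughly $L \cdot \tbroadcastg$ steps.

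First I would show that $w$ reaches level $L$ quickly, so that the elimination phase is entered. As long as $w$ remains a leader, each completed streak increments its level by Rule (1); it can only stop being a leader if some node already has level $\geq L$ (by Rules (2)--(3)), which would already put the system into the elimination phase. So either $w$ climbs to level $L$ on its own, or someone else has already reached it. Using \lemmaref{lemma:log-streaks} with $d = \Delta$ and $\ell = L \ge \ln n$, the number of steps $S(\Delta, L)$ for $w$ to complete $L$ streaks concentrates: part (c) gives $\Pr[S(\Delta,L) \ge 8\lambda^2 \E[S]] \le 2/n^{c(\lambda)}$, and since $\E[S(\Delta,L)] = \E[K]\cdot L m/\Delta \in O(L \cdot \tbroadcastg)$, choosing $\lambda$ a sufficiently large constant makes $c(\lambda) > \tau$. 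Hence some node reaches level $L$ within $O(L\cdot\tbroadcastg)$ steps with probability $1 - O(n^{-\tau})$. Once one node is at level $\ge L$, Rule (3) together with the broadcast property guarantees that within an additional $O(\tbroadcastg)$ expected steps \emph{all} nodes reach the elimination phase; concentrating this broadcast via the tail bounds of \lemmaref{lemma:expansion-broadcast-whp} (or \lemmaref{lemma:diameter-tail-bound}) and a Markov/union-bound argument over the $O(L)$ levels contributes only a further $O(\log n \cdot \tbroadcastg) = O(L\cdot\tbroadcastg)$ additive term. This establishes claim (1) for a suitable constant $\lambda$.

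For claim (2), I would argue that low-degree nodes fall behind in the level race and get eliminated once the broadcast of a high level reaches them. A node $v$ of degree at most $\Delta/\gamma$ has $\E[X(\deg v)] = \E[K]\cdot m/\deg(v) \ge \gamma \cdot \E[X(\Delta)]$, so its streaks are slower by a factor $\gamma$. Using \lemmaref{lemma:log-streaks}b applied to $v$, with high probability $v$ completes \emph{fewer} than $L$ streaks by the time $w$ has completed $L$: the number of streaks $v$ completes in $\Theta(L\cdot\tbroadcastg)$ steps concentrates below $L$ for $\gamma$ a large enough constant, since the time for $v$ to complete $L$ streaks is $\Omega(\gamma L \tbroadcastg)$ with probability $1-O(n^{-\tau})$. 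Therefore $v$ is still at level $< L$ (still in the waiting phase, or freshly entered) when a node of level $\ge L$ broadcasts to it; by Rule (2), $v$ becomes a follower. Taking a union bound over all such low-degree $v$ costs only a factor $n$, absorbed into the $n^{-\tau}$ failure probability by increasing the constant $\lambda$ slightly (which strengthens the concentration exponents).

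The main obstacle I anticipate is handling the interaction between the two phases cleanly: I must ensure that a low-degree node $v$ does \emph{not} sneak up to level $L$ on its own before the broadcast from a high-degree node eliminates it. This requires carefully coupling the timing of $v$'s streak completions (slow, by \lemmaref{lemma:log-streaks}b) against both $w$'s faster climb and the broadcast delay, and showing all the relevant tail events hold simultaneously at the \emph{same} stopping time $\lambda L\cdot\tbroadcastg$. The clean way to organize this is to fix the deadline $t^\star = \lambda L \cdot \tbroadcastg$ first, then bound separately: (i) the probability $w$ has not reached level $L$ by time $t^\star/2$, (ii) the probability the resulting broadcast has not saturated the graph by $t^\star$, and (iii) the probability some low-degree $v$ has completed $\ge L$ streaks by $t^\star$. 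Each is $O(n^{-\tau})$ for appropriate constants $\lambda, \gamma$, and a final union bound yields the claim. The constant $\alpha(\tau)$ bounding the maximum level enters only as a technical cap preventing overflow, and does not affect this argument since we work strictly below level $\alpha(\tau) L$.
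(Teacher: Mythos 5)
Your proposal is correct and takes essentially the same route as the paper's proof: apply the upper tail of \lemmaref{lemma:log-streaks} to a maximum-degree node (noting, as you do, that either it climbs to level $L$ itself or Rule (2) already implies some node has level $\ge L$), then saturate the graph via broadcast, and finally use the lower tail of \lemmaref{lemma:log-streaks} with a large enough constant $\gamma$ to show that nodes of degree at most $\Delta/\gamma$ cannot complete $L$ streaks by the deadline, so Rule (2) makes them followers, finishing with a union bound. One caution: the broadcast-saturation step must rest on the Markov-plus-amplification argument you mention (failure probability at most $1/2$ per window of length $2\tbroadcastg$ by Markov and the definition of $\tbroadcastg$ as a worst-case expectation, amplified over $\tau \log n$ windows using $L \ge 2\tau\log n$), and \emph{not} on the tail bounds of \lemmaref{lemma:expansion-broadcast-whp} or \lemmaref{lemma:diameter-tail-bound}: those bound $T(G)$ by quantities of order $(m/\beta)\log n$ and $m(D+\log n)$, which are not $O(\tbroadcastg \cdot \log n)$ on all graphs (on two cliques joined by a single edge, $(m/\beta)\log n \in \Theta(n^3\log n)$ while $\tbroadcastg \in \Theta(n^2)$; on the clique, $m\log n \in \Theta(n^2\log n)$ while $\tbroadcastg\log n \in \Theta(n\log^2 n)$). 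Finally, your deadline-based organization for claim (2) --- bounding low-degree streak completions against the full deadline $t^\star$ rather than only against the time at which the first node reaches level $L$ --- is exactly the right way to handle the ``sneaking up'' issue you flag, and is if anything stated more carefully than in the paper, where the corresponding bound is written only up to the time $t = 8\lambda_0^2\E[S(\Delta,L)]$.
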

\begin{proof}
  Let $\lambda_0 \ge 1$ be a constant such that $c(\lambda_0) \ge \tau$, where $c(\lambda) = \lambda  - 1 - \ln \lambda$. Define $S(d,\ell)$ as in \lemmaref{lemma:log-streaks} and define $t$ as $t = 8\lambda^2_0 \E[S(\Delta, L)]$.
  Note that $t \in O(L \cdot \tbroadcast(G))$ by \lemmaref{lemma:log-streaks} and by choice of $\stateparam$. First we show that by time step $t$ at least one node has reached level $L$ with probability $1 - O(n^{-\tau})$. Consider a node $v$ with degree $\Delta$.  If at step $t$ it is not a leader, then there must exist another node at level at least $L$ by Rule (2). Otherwise, by \lemmaref{lemma:log-streaks}, after $t$ steps this node completed $L$ streaks with probability at least $1 - O(n^{-\tau})$. Then there is at least one node at level at least $L$ by time $t$ with probability $1 - O(n^{-\tau})$. Let $u$ be such node.
  
  Notice that once broadcast from $u$ reaches all other nodes, all nodes will be at level at least $L$. The probability that the broadcast from $u$ does not reach all other nodes after $2\tbroadcast(G)$ steps is at most $1/2$ by Markov's inequality. Then the probability that broadcast from $u$ reaches all other nodes by time $\tbroadcast(G) \cdot L$ is at least $1 - (1/2)^{\tau \log n} \in 1 - O(n^{-\tau})$ since $L \ge 2 \tau \log n$. Thus, all nodes will be at level at least $L$ within $t  +  \tbroadcast(G)\cdot L$ steps with probability $1 - O(n^{-\tau})$. As $t \in O(L \cdot \tbroadcast(G))$, for large enough $n$, we have that $t +   \tbroadcast(G) \cdot L \le \lambda L \cdot \tbroadcast(G)$ for some constant $\lambda \ge 1$. This completes the proof of the claim (1).

  For claim (2), choose a constant $\lambda_1 \le 1$ such that $c(\lambda_1) \ge \tau + 1$ and let $\gamma = 32\lambda_0^2 / \lambda_1^2$. Clearly, $\gamma$ is a constant at least $1$, since $\lambda_0 \ge 1$ and $\lambda_1 \le 1$ are constants. Consider a node $w$ with degree $\deg(w) = d \le \Delta/\gamma$. Let $S \sim S(d, L)$ be the number of steps until $w$ completes $L$ streaks. Then by \lemmaref{lemma:log-streaks} we have that
  \begin{align*}
    \Pr[S \le t] &= \Pr[S \le 8\lambda^2_0 \E[S(d, L)]] \le \Pr[S \le 8\lambda^2_0/\gamma \E[S]] \\
    &= \Pr[S \le \lambda_1^2/4 \E[S]] \le 2/n^{c(\lambda_1)} \le 2/n^{\tau + 1} .
  \end{align*}
  Claim (2) now follows by the union bound over all nodes with degrees less than $\Delta/\gamma$.
\end{proof}

Finally, we put everything together to obtain a w.h.p.\ bound on the time by which there is a single leader candidate left. 

\begin{restatable}{lemma}{faststabilisationlemma}
\label{lemma:fast-stabilisation}
  There exist constants $\alpha(\tau)$ and $C = C(\tau)$ such that there is exactly one leader candidate at time step $C \cdot \tbroadcast(G) \cdot \log n$ with probability at least $1-O(n^{-\tau})$.
\end{restatable}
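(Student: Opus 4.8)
The plan is to chain together the three preceding lemmas and then argue separately that the level counters never saturate during pruning. The argument rests on two structural invariants of the protocol: the number of leader candidates is \emph{non-increasing} (no rule ever turns a follower back into a leader), and there is \emph{always} at least one leader (by Rule~(2), some node of globally maximum level is always a leader). Hence it suffices to exhibit a step $T \le C \cdot \tbroadcastg \cdot \log n$ at which exactly one candidate remains; monotonicity then forces exactly one candidate at step $C \cdot \tbroadcastg \cdot \log n$ as well.

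First I would invoke \lemmaref{lemma:pruning} to obtain constants $\lambda \ge 1$ and $\gamma \ge 1$ such that at time $T_1 = \lambda L \cdot \tbroadcastg$, with probability $1 - O(n^{-\tau})$, all nodes are in the elimination phase and every node of degree at most $\Delta/\gamma$ is a follower. On this event every surviving candidate has degree at least $d := \lceil \Delta/\gamma \rceil = \Theta(\Delta)$, so the hypotheses of \lemmaref{lemma:whp-pruning} hold at time $T_1$. Applying that lemma with this $d$, and using that $\E[X(d)] \in O(\tbroadcastg)$ for $d = \Theta(\Delta)$ by the choice of $\stateparam$, there is a time $T_2 = T_1 + t(d)$ with $t(d) \in O(\tbroadcastg \cdot \log n)$ by which exactly one candidate remains with probability $1 - O(n^{-\tau})$, \emph{provided} no node reaches the maximum level $\alpha(\tau) L$ by step $T_2$. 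I then fix $C = C(\tau)$ large enough that $T_2 \le C \cdot \tbroadcastg \cdot \log n$; crucially, $C$ depends only on $\tau$ and the constants $\lambda,\gamma$ of \lemmaref{lemma:pruning}, and \emph{not} on $\alpha(\tau)$.

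The main obstacle, and the last ingredient, is discharging the proviso that no node reaches level $\alpha(\tau) L$ before pruning finishes, which is exactly what dictates the choice of $\alpha(\tau)$. A node's level increases only by completing a streak as a leader (Rule~(1)) or by copying a neighbour's larger level (Rule~(3)); since Rule~(3) merely propagates an existing value, the global maximum level at any step can increase only via Rule~(1), so it is bounded by the largest number of streaks completed by any single node up to that step. Thus [some node reaches level $\alpha(\tau) L$ by time $T$] implies [some node has $S(d', \alpha(\tau) L) \le T$] for its degree $d' \le \Delta$, where $S$ is as in \lemmaref{lemma:log-streaks} (whose hypothesis $\stateparam \in \omega(1)$ holds since $\stateparam \in \Omega(\log\log n)$). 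As $\E[S(d', \alpha(\tau) L)] = (2^{\stateparam+1}-2)\cdot \alpha(\tau) L\, m/d'$ is smallest for $d' = \Delta$, it suffices to control this worst case. Since this expectation grows linearly in $\alpha(\tau)$ while the target time $C \cdot \tbroadcastg \cdot \log n$ is already fixed, I would choose a constant $\lambda_2 \le 1$ with $c(\lambda_2) \ge \tau + 1$ and then take $\alpha(\tau)$ large enough that $\lambda_2^2 \cdot \E[S(\Delta, \alpha(\tau) L)]/4 \ge C \cdot \tbroadcastg \cdot \log n$. The lower-tail bound of \lemmaref{lemma:log-streaks}(b) then yields $\Pr[S(d', \alpha(\tau) L) \le C \tbroadcastg \log n] \le 2/n^{c(\lambda_2)} \le 2/n^{\tau+1}$ for every node, and a union bound over all $n$ nodes shows that no node reaches the cap by step $T_2 \le C \cdot \tbroadcastg \cdot \log n$ except with probability $O(n^{-\tau})$.

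Finally I would take a union bound over the three failure events — \lemmaref{lemma:pruning} failing, \lemmaref{lemma:whp-pruning} failing, and some node saturating its level — each of probability $O(n^{-\tau})$. On the complementary event exactly one candidate remains by step $T_2 \le C \cdot \tbroadcastg \cdot \log n$, and by the monotonicity invariant exactly one candidate remains at step $C \cdot \tbroadcastg \cdot \log n$, proving the claim. The one genuinely delicate point is the \emph{order} in which the constants are committed: $C$ must be pinned down purely from the running times of the first two phases, and only then is $\alpha(\tau)$ chosen as a function of $C$, so that the saturation argument does not become circular.
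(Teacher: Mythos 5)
Your proposal is correct and follows essentially the same route as the paper's proof: \lemmaref{lemma:pruning} to restrict to high-degree candidates in the elimination phase, \lemmaref{lemma:whp-pruning} for the pairwise elimination bound, and the lower-tail estimate of \lemmaref{lemma:log-streaks} plus a union bound to rule out premature saturation at level $\alpha(\tau)L$, with $\alpha(\tau)$ chosen after the phase-time constants. Your bookkeeping is in fact slightly cleaner than the paper's (you bound the global maximum level by the maximum \emph{total} streak count of a single node from time $0$, fix the time budget $C \cdot \tbroadcast(G) \cdot \log n$ first and scale only $\alpha(\tau)$, and handle the ``at step'' versus ``by step'' issue via monotonicity), but these are presentational refinements of the same argument.
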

\begin{proof}
Let $\lambda$ and $\gamma$ by constants given by \lemmaref{lemma:pruning} and $d = \lceil\Delta / \gamma\rceil$. Then with probability $1 - O(n^{-\tau})$ before $\lambda L \cdot \tbroadcast(G)$ steps all nodes are at level at least $L$ and all nodes with degrees less than $d$ are followers by \lemmaref{lemma:pruning}. For the remaining part of the proof, suppose this holds.
We now lower bound the time it takes for some node to reach the maximum level $\alpha(\tau) L$. Note that by \lemmaref{lemma:k-expectation-and-stuff} we get that
\begin{align*}
\E[X(d)] &= (2^{\stateparam+1} - 2)\cdot \frac{m}{d} \\
         &\le 2^{\stateparam+1}\cdot\frac{m}{d} \\
         &\le 512 \cdot \tbroadcast(G) \cdot \frac{\Delta}{m} \cdot \frac{m}{d} 
\le 512\gamma \cdot \tbroadcast(G).
\end{align*}
Let $\tau' \le 1$ be a constant such that $c(\tau') \ge \tau + 1$. Set $t$ as 
\begin{align*}
t &= 16(\tau+2)\left( 512 \gamma + 1 \right) \cdot \log_{8/7} n  \cdot \tbroadcast(G) \\
  &\ge 16(\tau+2) \cdot \log_{8/7} n \cdot \left( \E[ X(d) ] + \tbroadcast(G) \right)
\end{align*}
and set $t'$ as
\[
t' = 4096 (\tau')^2 \cdot \tbroadcast(G) \cdot (\alpha(\tau) - 1) L.
\]
We can pick $\alpha(\tau)$ to be a large enough constant such that $t' > t$ holds for all large enough $n$, since both $t$ and $t'$ are of order $\Theta(\tbroadcast(G) \cdot \log n)$.

Suppose there is a node $v$ that reached level $\alpha(\tau) L$. We aim to show this happened after time $\lambda L \cdot \tbroadcast(G) + t'$ with probability $1 - O(n^{-\tau})$. As we mentioned above, we assume all nodes reached level $L$ before time $\lambda L \cdot \tbroadcast(G)$. After that, to reach level $\alpha(\tau)L$ node $v$ needs to complete a streak at least $(\alpha(\tau) - 1) L$ times. Let $S$ be the time for that to happen. Then $S(\Delta, (\alpha(\tau) - 1)L) \preceq S$. Then by \lemmaref{lemma:log-streaks}
\begin{align*}
\Pr[S \le t'] &\ge \Pr[S(\Delta, (\alpha(\tau) - 1)L) \le t']\\ &\ge \Pr[S(\Delta, (\alpha(\tau) - 1)L) \le (\tau')^2 \cdot 8\E[S(\Delta, (\alpha(\tau) - 1)L)] ] \\
&\ge 1 - n^{-c(\tau')} \in 1 - O(n^{-\tau - 1}).
\end{align*}
Then, by union bound, in $t'$ steps none of the nodes have reached the maximal level $\alpha(\tau) L$ with probability $1 - O(n^{-\tau})$.
Since $t' > t$, \lemmaref{lemma:whp-pruning} implies that a unique leader will be elected before time $t$ with probability $1-O(n^{-\tau})$. It remains to note that there exists a constant $C(\tau)$ such that $t < C(\tau) \tbroadcast(G) \log n$ since $L \in \Theta(\log n)$.
\end{proof}

Finally, to guarantee finite expected stabilization time, the protocol includes a backup phase following the same approach as in~\cite{alistarh2021graphical}. The first node to reach level $\alpha(\tau) L$ must be a leader candidate. When a node $v$ reaches level $\alpha(\tau) L$, it switches to executing the constant-state token-based leader election protocol. When this happens, node initializes the constant-state protocol with the input $\var{status}(v) \in \{ \mathsf{leader}, \mathsf{follower} \}$ and starts running the protocol while simultaneously continues broadcasting its $\var{level}(v)$ value using Rule~(3). Within $\tbroadcast(G)$ expected steps, all nodes are running the constant-state protocol. This protocol guarantees that eventually only one leader remains after polynomially many expected steps.

\fastle*
\begin{proof}
  The protocol uses $O(\stateparam L)$ states. By \theoremref{thm:broadcast-upper-bounds}, $\tbroadcastg \le C m/\beta \log n$ for some constant $C$. Thus,
  \[
\stateparam = 8 + \left \lceil  \log \left( \frac{\tbroadcast(G) \cdot \Delta}{m} \right) \right \rceil \le 8 + \left \lceil \log\left(\frac{Cm \log n \cdot \Delta}{m \beta} \right) \right\rceil \in O\left( \log \left( \frac{\log n \cdot \Delta}{\beta} \right) \right).
\]
Clearly, $L \in \Theta(\log n)$, so the claim on the state complexity follows.
By \lemmaref{lemma:fast-stabilisation}, the fast protocol stabilizes in $O(\tbroadcast(G) \cdot \log n)$ steps with probability at least $1-O(n^{-\tau})$.
By \theoremref{thm:6-state-le-cover-time}, the constant-state backup protocol stabilizes in $O(n^4 \log n)$ expected steps, as the worst-case hitting time of a classic random walk is $O(n^3)$ steps~\cite{levin2017markov}. With probability at most $O(n^{-\tau})$, at least two leader candidates enter the backup phase. Choose $\tau \ge 4$ and let $T(\tau)$ be stabilization time of the protocol. Then
\[
T(\tau) \in O\left(\tbroadcast(G) \cdot \log n + n^{-\tau} \cdot n^4 \log n\right),
\]
which is $O(\tbroadcast(G) \cdot \log n)$.
\end{proof}

\section{Time lower bounds for general protocols}\label{sec:lower}

In this section, we establish time lower bounds for stable leader election for general protocols with unbounded state space. First, we give a fairly general technique for constructing graphs, where leader election has a given time complexity of any order between $\Omega(n \log n)$ and $O(n^3)$. This technique can be also applied to specific graph families to characterize the complexity of leader election in these families. Finally, we also give a result that shows that in any sufficiently dense graph leader election requires $\Omega(n \log n)$ expected steps.

\subsection{The lower bound construction for renitent graphs}

We first introduce the notion of \emph{isolating covers}. The idea is that we can cover the nodes of the graph with at most $K$ subsets of the same size, each of which has isomorphic neighbourhood up to some distance $\ell \ge 0$, and that there are at least two such sets that are sufficiently far apart.

Let $G = (V,E)$ be a graph and $\mathcal{C} = \{ V_0, \ldots, V_{K-1} \} \subseteq 2^{V}$ be a collection of subsets of $V$. We say that $\mathcal{C}$ is a $(K,\ell)$-cover of the graph $G$ if
\begin{enumerate}[noitemsep]
\item for each $0 \le i < j < K$ there exists an isomorphism $\phi$ between 
 $G[B_{\ell}(V_i)]$ and $G[B_{\ell}(V_j)]$ such that $\phi(V_i) = V_j$, 
\item there exists some $V_i$ and $V_j$ such that $B_{\ell}(V_i) \cap B_{\ell}(V_j) = \emptyset$, and
\item  $V_0 \cup \cdots \cup V_{K-1} = V(G)$.
\end{enumerate}
That is, (1) the local neighbourhoods are isomorphic up to distance $\ell$ and this isomorphism maps vertices of $V_i$ to $V_j$, (2) there are two sets whose vertices are all far apart, and (3) the union of the sets covers the entire graph. 

We define 
\[
Y(\mathcal{C}) = \min \{ t : I_t(V_i) \setminus B_\ell(V_i) \neq \emptyset \textrm{ for some } V_i \}
\]
to be the \emph{isolation time} of the cover~$\mathcal{C}$. This is the minimum time until some node in $V_i$ is influenced by some node at distance greater than $\ell$ from all nodes of $V_i$.
We say that $\mathcal{C}$ is \emph{$t$-isolating} if $\Pr[Y(\mathcal{C}) \ge t] \ge 1/2$. This property states that it is unlikely that during the first $t$ steps, nodes in the set $V_i$ can be influenced by nodes that are far away from nodes in $V_i$.

Note that if the distance-$\ell$ propagation time on $G$ satisfies $\Pr[T_{\ell}(G) < t] \le 1/2$, then any $(K,\ell)$-cover of $G$ is $t$-isolating. Thus, we may bound the minimum propagation times to show that a cover is isolating.

Let $\mathcal{G}$ be an infinite family of graphs and $f \colon \mathbb{N} \to \mathbb{N}$ be an increasing function. We say that graphs in $\mathcal{G}$ are \emph{$f$-renitent} if there exists a constant $K \ge 2$ and function  $\ell \colon \mathbb{N} \to \mathbb{N}$ and such that every $n$-node graph $G \in \mathcal{G}$ has an $f(n)$-isolating $(K,\ell(n))$-cover. In this section, we prove the following result.

\begin{restatable}{theorem}{thmlelower}
\label{thm:renitent-lb}
If the graph $G$ is $f$-renitent, then any leader election protocol takes $\Omega(f)$ expected steps to stabilize on $G$.
\end{restatable}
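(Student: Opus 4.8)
The plan is to prove the contrapositive quantitatively: no correct protocol can reach a stable, correct configuration before information has crossed the distance-$\ell$ boundary guaranteed by the cover. Concretely, I would fix $t := f(n)$ and aim to show $\Pr[\text{stabilization time} > t] \ge c$ for an absolute constant $c$, which gives $\E[\text{stabilization time}] \ge t \cdot c \in \Omega(f)$. Since $G$ is $f$-renitent, it carries an $f(n)$-isolating $(K,\ell)$-cover $\mathcal{C} = \{V_0,\dots,V_{K-1}\}$, so the isolation event $\mathcal{I} = \{Y(\mathcal{C}) \ge t\}$ satisfies $\Pr[\mathcal{I}] \ge 1/2$. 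The point of conditioning on $\mathcal{I}$ is that, by definition of $Y(\mathcal{C})$, we then have $I_t(v) \subseteq B_\ell(V_a)$ for every region $V_a$ and every $v \in V_a$; hence the state of each node of $V_a$ at time $t$ is a deterministic function only of the random bits assigned inside $B_\ell(V_a)$ and of the time-ordered interactions occurring strictly within $G[B_\ell(V_a)]$. I would fix the far-apart pair $V_i,V_j$ from condition (2), together with the isomorphism $\phi\colon G[B_\ell(V_i)] \to G[B_\ell(V_j)]$ satisfying $\phi(V_i)=V_j$ and $B_\ell(V_i)\cap B_\ell(V_j)=\emptyset$.

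The core device is a measure-preserving involution $\Psi$ on the probability space of (random-bit assignments, schedules) that exchanges the two regions through $\phi$: it swaps the bit streams of $v$ and $\phi(v)$ for every $v \in B_\ell(V_i)$, and relabels each sampled edge lying inside $G[B_\ell(V_i)]$ by $\phi$ and each edge inside $G[B_\ell(V_j)]$ by $\phi^{-1}$, while leaving every other edge fixed. Measure-preservation is immediate for the i.i.d.\ bit streams and for the internal edges, since $\phi$ is an isomorphism of the induced subgraphs and the thinning of an i.i.d.\ edge sequence onto the two disjoint edge sets yields independent, uniformly relabelled subsequences. The hard part is to show that $\Psi$ preserves the event $\mathcal{I}$ and that, on $\mathcal{I}$, it transports the state configuration on $B_\ell(V_i)$ exactly onto its $\phi$-image on $B_\ell(V_j)$. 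I would resolve this with the observation that, under $\mathcal{I}$, no boundary-crossing interaction (one endpoint inside a region, one outside $B_\ell(V_i)\cup B_\ell(V_j)$) ever enters $I_t(V_i)$ or $I_t(V_j)$; such interactions are therefore causally irrelevant to the states of $V_i$ and $V_j$, so leaving them fixed under $\Psi$ does not disturb the $\phi$-transport of the two regions' states. Granting this, conditioned on $\mathcal{I}$ the states of $V_i$ and $V_j$ at time $t$ are exchangeable, and because the two internal edge-subsequences and the two bit-blocks are independent, the leader-indicator events $A_i=\{\text{some node of } V_i \text{ outputs } \mathsf{leader}\}$ and $A_j$ are independent and identically distributed given $\mathcal{I}$, say with common probability $p$.

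To close the argument I need $p$ to be bounded away from $0$, which I would obtain by a surgery exploiting correctness. Suppose, for contradiction, that $\Pr[\text{stabilized by } t] > 1-\delta$ for small $\delta$; then the event $\{\text{stabilized by } t\}\cap\mathcal{I}$ has probability at least $1/2-\delta>0$, and on it $x_t$ has exactly one leader, sitting in some region $V_c$, whose leadership is caused purely by a local interaction history $h_c$ inside $B_\ell(V_c)$. Since condition (1) gives isomorphisms $G[B_\ell(V_c)]\cong G[B_\ell(V_i)]\cong G[B_\ell(V_j)]$ mapping $V_c$ to $V_i$ and to $V_j$, the $\phi$-transported histories together with the correspondingly transported (finite, hence positive-probability) bit patterns constitute a leader-producing history for $V_i$ and for $V_j$; because $B_\ell(V_i)$ and $B_\ell(V_j)$ are disjoint, a single schedule can realise both simultaneously while keeping $\mathcal{I}$ intact. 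This shows $p>0$ and, more strongly, that with positive probability \emph{both} far-apart regions carry a leader under $\mathcal{I}$. But independence gives $\Pr[A_i\cap A_j\mid\mathcal{I}]=p^2$, and any configuration in $A_i\cap A_j$ has at least two leaders in disjoint regions, so it is incorrect; moreover such a configuration cannot become correct until information crosses the distance-$\ell$ gap separating the regions, i.e.\ not before $Y(\mathcal{C})$.

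Assembling these pieces, on the positive-probability event $A_i\cap A_j\cap\mathcal{I}$ the configuration is incorrect throughout $[0,t]$, contradicting $\Pr[\text{stabilized by } t]>1-\delta$ once $\delta$ is small enough; hence $\Pr[\text{stabilization time}>t]$ is at least an absolute constant, yielding $\E[\text{stabilization time}]\in\Omega(f)$. I expect the genuine obstacles to be twofold: first, making the boundary-edge handling in $\Psi$ fully rigorous, so that isolation is provably preserved and the state transport is exact; and second, promoting the surgery's ``positive probability of two leaders'' into the \emph{constant} probability needed for an $\Omega(f)$ (rather than merely superconstant) bound, for which I would control the length of the transplanted history $h_c$ and argue that maintaining isolation until time $t$ costs only the constant factor already supplied by $\Pr[\mathcal{I}]\ge 1/2$.
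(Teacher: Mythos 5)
Your high-level architecture matches the paper's: condition on an isolation event, use the isomorphism between the two far-apart regions to make their leader-indicator events exchangeable and conditionally independent, and derive a contradiction from both regions holding a leader. But the step where the argument must produce a \emph{constant} is exactly where your proposal fails. You lower-bound $p=\Pr[A_i\mid\mathcal I]$ by transplanting a specific leader-generating history $h_c$ into the two regions. A specific history of length $L$ has probability at most $(2m)^{-L}$, and $L$ necessarily grows with $n$ for any nontrivial protocol, so this route yields only an exponentially small $p>0$; then $\Pr[T>t]\ge\Pr[\mathcal I]\cdot p^2$ is useless against a constant failure probability $\delta$, and your proposed fix (controlling the length of $h_c$) cannot repair this. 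The paper gets the constant from the two properties of the cover that you never use quantitatively: \emph{all} $K$ regions have isomorphic $\ell$-neighbourhoods (property (1)) and they \emph{cover} $V$ (property (3)). Conditioned on isolation, every region is equally likely to contain a leader; a correct configuration must have its leader in some region; hence, if the protocol were stabilized at the evaluation time with conditional probability $1-\alpha/\epsilon$, a union bound over the $K$ regions forces $p\ge(1-\alpha/\epsilon)/K$. Feeding this into the two-leader event yields the quadratic inequality $1-\alpha/\epsilon\le K\sqrt{\alpha/\epsilon}$, whose solution bounds $\alpha=\Pr[T>X]$ below by a positive constant depending only on $K$. This pigeonhole-over-the-cover is the key missing idea.

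There are also two technical flaws in your exchangeability/independence step at a deterministic time $t$. First, the claim that thinning an i.i.d.\ edge sequence of \emph{fixed} length onto two disjoint edge sets yields independent subsequences is false: the subsequence lengths are multinomially (negatively) correlated --- already with $t=1$ the two regions cannot both receive an interaction. The paper sidesteps this by evaluating the configuration at an independent Poisson-distributed time $X$ with mean $f(n)/2$ (equivalently, embedding the schedule into a continuous-time process in which each ordered pair activates at unit rate); restrictions of that process to disjoint edge sets are genuinely independent, and Poisson concentration keeps $\Pr[X<Y]\ge 1/4$ at only a constant cost. Second, your conditional exchangeability needs the involution $\Psi$ to preserve the isolation event $\mathcal I$, and the ``causal irrelevance'' observation does not give this: it shows that on $\mathcal I$ the states of $V_i$ and $V_j$ are functions of internal data, but $\Psi$ fixes boundary steps while swapping internal ones, so it rearranges how externally injected influence would propagate inside each region, and $\Psi(\mathcal I)=\mathcal I$ is not automatic. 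The paper avoids involutions altogether: on its isolation event, the configuration of each region is a function of the pair (random bits, internal interaction sequence) of $B_\ell(V_a)$, and these pairs are identically distributed across all regions and independent for the disjoint pair, which is all that is needed.
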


Our approach is inspired by the lower bound construction for randomized leader election in  synchronous message-passing models by Kutten et al.~\cite[Theorem 3.13]{kutten2015complexity}. However, unlike in synchronous message-passing models, in the population model communication is both stochastic and asynchronous with sequential interactions, so we need to further refine the approach to make it work in our setting.

We prove our result in a stronger variant of the population model: we do not restrict the number of states used by the nodes and give each node access to its own (independent and infinite) sequence of random bits. Formally, we assume that each node $v \in V$ is given as input a random value $y(v)$ sampled independently and uniformly at random from the unit interval $[0,1)$. Since we do not restrict the state space of the nodes, the nodes can locally store this value to access an infinite sequence of i.i.d.\ random bits. The random bits assign nodes unique identifiers with probability 1. Any protocol that does not use these random bits can ignore them.

\paragraph{Proof of \theoremref{thm:renitent-lb}}
Suppose $G$ is $f$-renitent and $\mathcal{A}$ is leader election protocol on $G$ that stabilizes in $T$ steps.
Without loss of generality, assume $f(n) \ge 6$, as otherwise the claim of \theoremref{thm:renitent-lb} is trivially true.

Fix any $f(n)$-isolating $(K,\ell)$-cover $\mathcal{C} = \{V_0, \ldots, V_{K-1}\}$ of the graph $G$, where $K$ is a constant independent of $n$, and let $Y = Y(\mathcal{C})$ be the isolation time of the cover. Let $X \sim \operatorname{Poisson}(\lambda)$ be a Poisson random variable with mean $\lambda = f(n)/2$ and $\mathcal{E}$ be the event that $X < Y$. Here, $X$ represents a random time step (independent of $Y$) at which we investigate the state of the system. To this end, we define $L_i$ to be the event that some node $v \in V_i$ outputs that it is a leader at step $X$.
  
\begin{lemma}\label{lemma:leader-events}
The following hold:
  \begin{enumerate}[noitemsep,label=(\alph*)]
  \item $\Pr[L_0 \mid \mathcal{E} ] = \Pr[L_i \mid \mathcal{E}]$ for each $0 \le i < K$, and
\item $\Pr[L_i \cap L_j \mid \mathcal{E}] = \Pr[L_i \mid \mathcal{E}] \cdot \Pr[L_j \mid \mathcal{E}]$ for some $0 \le i < j < K$.
  \end{enumerate}
\end{lemma}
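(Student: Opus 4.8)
The plan is to prove both parts by first showing that, conditioned on $\mathcal E$, the output of each set $V_i$ at the observation step $X$ is determined by a \emph{local execution} confined to $B_\ell(V_i)$, and then to exploit the symmetry and the disjointness guaranteed by the $(K,\ell)$-cover. The crucial first step is a \emph{locality lemma}: on the event $\mathcal E = \{X < Y\}$, with $Y = Y(\mathcal C)$, the definition of the isolation time gives $I_X(V_i) \subseteq B_\ell(V_i)$ for every $i$. I would argue from the transitivity of the influence relation $I_t$ that, on $\mathcal E$, the state (and hence the output) of every node of $V_i$ at step $X$ is a deterministic function of (i) the random inputs $y(\cdot)$ of the nodes in $B_\ell(V_i)$ and (ii) the ordered subsequence of sampled edges lying inside the induced subgraph $G[B_\ell(V_i)]$ up to step $X$. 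The point is that any node that had already been ``contaminated'' through a boundary edge and then influenced $V_i$ before step $Y$ would force $I_X(V_i)$ to leave $B_\ell(V_i)$, contradicting $X < Y$; hence only purely internal interactions can have shaped the states of $V_i$'s influencers. Writing $\hat L_i$ for the corresponding local event, this yields $L_i \cap \mathcal E = \hat L_i \cap \mathcal E$.

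For part (a), I would combine this locality with the isomorphism $\phi$ from cover condition (1), which maps $G[B_\ell(V_i)]$ onto $G[B_\ell(V_j)]$ with $\phi(V_i) = V_j$. Since the protocol is uniform and the nodes are anonymous, the local decision function is the same on both neighbourhoods after relabelling by $\phi$; since the inputs $y(\cdot)$ are i.i.d.\ and every ordered edge is sampled with the same probability $1/(2m)$, the correspondence induced by $\phi$ between the internal edge-samples and inputs of the two balls is measure-preserving on the relevant local randomness. Using that $X$ is independent of the schedule, this shows that $\hat L_i$ and $\hat L_j$ have the same conditional probability given $\mathcal E$, whence $\Pr[L_i \mid \mathcal E] = \Pr[L_j \mid \mathcal E]$.

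For part (b), I would use the specific pair $V_i, V_j$ with $B_\ell(V_i) \cap B_\ell(V_j) = \emptyset$ from cover condition (2), so that $\hat L_i$ and $\hat L_j$ depend on disjoint sets of input bits and on disjoint sets of edges. The essential device here is that $X$ is \emph{Poisson}: by the splitting (thinning) property of a Poisson number of i.i.d.\ samples, the subsequence of sampled edges falling inside $G[B_\ell(V_i)]$ up to step $X$ and the subsequence falling inside $G[B_\ell(V_j)]$ become \emph{independent} (their lengths are independent Poisson variables rather than negatively correlated multinomial counts). Consequently the local data governing $\hat L_i$ is independent of that governing $\hat L_j$, which, together with the locality reduction, gives the factorisation $\Pr[L_i \cap L_j \mid \mathcal E] = \Pr[L_i \mid \mathcal E]\,\Pr[L_j \mid \mathcal E]$.

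The main obstacle, in both parts, is that the interaction graph is only \emph{locally} symmetric: the isomorphism $\phi$ acts on the induced balls but need not extend to a global automorphism, because the boundary edges leaving $B_\ell(V_i)$ have no counterpart under $\phi$. This is precisely why the conditioning on the global event $\mathcal E$ is delicate, since whether a set is contaminated by step $X$ depends on those boundary edges; I must therefore verify that the measure-preserving relabelling of part (a) and the Poisson splitting of part (b) remain compatible with $\mathcal E$. I expect to handle this by exploiting the isolation property itself---on $\mathcal E$ no boundary edge has yet transmitted outside influence into any $V_i$---together with the separation in cover condition (2), grouping the edges so that the event ``$V_i$ is uncontaminated'' depends only on edges incident to $B_\ell(V_i)$ and so keeps the contributions of $V_i$ and $V_j$ on disjoint coordinates of the Poissonised schedule. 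Making this compatibility rigorous, rather than establishing the symmetry and the splitting themselves, is where the real work lies.
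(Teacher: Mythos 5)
Your proposal is correct and takes essentially the same route as the paper's proof: the locality reduction (on $\mathcal{E}$, the configuration of $V_i$ at step $X$ is a deterministic function of the inputs $y_i$ and the internal interaction sequence $\sigma_i$), the isomorphism $\phi$ for part (a), and Poissonization for part (b) --- which the paper implements by equipping each ordered pair with an independent unit-rate Poisson clock and observing the window $[0,\lambda/(2m))$, i.e.\ exactly your thinning argument. The compatibility-with-$\mathcal{E}$ issue you single out as ``the real work'' is handled in the paper at the same level of detail you sketch, namely by noting that conditioning on $\mathcal{E}$ enters only through the locality reduction, after which the identical distribution (resp.\ conditional independence) of the pairs $(y_i,\sigma_i)$ is invoked directly.
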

\begin{proof}
  Let $P(G) = \{(v,u), (u,v): \{v,u\} \in E(G)\}$ be the set of ordered pairs of nodes that can interact and let $(e_t)_{t \ge 1}$ be a stochastic schedule on $G$.
  Note that each $e_t$ is sampled from $P$ unfiromly at random independent of other interaction pairs. Let $\sigma = (e_1, \ldots, e_X)$ be the sequence of first $X$ interactions, where $X \sim \operatorname{Poisson}(\lambda)$ is a Poisson random variable with mean $\lambda = f(n)/2$.

First we note that the distribution of $\sigma$ can be equivalently expressed using the following continuous-time process. Suppose each pair $(v,u) \in P(G)$ is activated at unit rate \emph{independently} of all other elements of $P(G)$ and previous activations of $(v,u)$, that is, the number of activations of $(v,u)$ on time interval $[0,t)$ is a Poisson random variable with mean $t$. The sum $Z = Z_1 + Z_2$ of two independent Poisson random variables with mean $\lambda_1$ and $\lambda_2$ is a Poisson random variable with mean $\lambda_1 + \lambda_2$. Therefore, since $|P(G)| = 2m$, 
  the total number $Z$ of activated pairs during the interval $[0, \lambda/(2m))$ is Poisson random variable with mean $\lambda$. That is, $Z \sim X$ and the sequence $\sigma' = (e'_1, \ldots, e'_X)$, where $e'_i$ is the $i$th pair activated in the continuous-time process, has the same distribution as $\sigma$.

    For each $0 \le i < K$, let $U_i = B_\ell(V_i)$ and $\sigma_i$ be the sequence of pairs of nodes from $G[U_i]$ that are activated during the time interval $[0, \lambda/(2m))$ in the continuous time process. If the edge sets of $G[U_i]$ and $G[U_j]$ are disjoint, then $\sigma_i$ and $\sigma_j$ are independent. Moreover, since $G[U_i] \simeq G[U_j]$ for all $0 \le i \le j < K$, the sequences $\sigma_0, \ldots, \sigma_{K-1}$ are identically distributed up to isomorphism, i.e., $\Pr[\sigma_i = a] = \Pr[ \sigma_j = \phi(a)]$, where $\phi$ is the isomorphism between $G[U_i]$ and $G[U_j]$. Note that $\sigma_i$ corresponds to the longest subsequence of $(e_1, \ldots, e_X)$ of pairs of nodes which are both in $U_i$.

    Now suppose the event $\mathcal{E}$ occurs. Then the set of influencers satisfies $I_X(V_i) \subseteq U_i$ for each $0 \le i < K$. Property (1) of $(K,\ell)$-covers implies that $|V_i| = |V_j|$ for each $i\neq j$. Let $N = |V_0|$ and $x_{i,t}: V_i \to \Lambda$  be the configuration of nodes in the set $V_i$ at after $t$ steps. If $\mathcal{E}$ occurs, then $x_{i,t}$ can only depend on the sequence $\sigma_i$ of interactions between nodes in $U_i$ and the initial random values $y_i \in [0,1)^{U_i}$ assigned to nodes in $U_i$. In particular, we get that $x_{i,X}$ is a function of the pair $(y_i, \sigma_i)$.

Since the isomorphism $\phi$ takes $V_i$ to $V_j$, 
       $(y_i, \sigma_i)$ and $(y_j, \sigma_j)$ 
      are identically distributed  (but not necessarily independent). Hence, the configurations $x_{i,X}$ and $x_{j,X}$ are identically distributed given that $\mathcal{E}$ occurs. This implies claim (a) of the lemma. By the second property of $(K,\ell)$-covers, there exist some $i < j$ such that $B_\ell(V_i) \cap B_\ell(V_j) = \emptyset$ and so $U_i \cap U_j = \emptyset$. If the event $\mathcal{E}$ happens, then edges of $G[U_i]$ and $G[U_j]$ are disjoint, and so the random sequences $\sigma_i$ and $\sigma_j$ are conditionally independent given $\mathcal{E}$.
Thus,  $(y_i, \sigma_j)$ and $(y_j, \sigma_j)$ are also conditionally independent given $\mathcal{E}$, which implies that $x_{i,X}$ and $x_{j,X}$ are conditionally independent given $\mathcal{E}$. This implies  claim (b).
\end{proof}

\begin{lemma}\label{lemma:t-x-lb}
  There exists a constant $C(K) > 0$ such that the stabilization time $T$ of protocol $\mathcal{A}$ satisfies
  $\Pr[T > X] \ge C(K)$.
\end{lemma}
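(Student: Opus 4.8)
The plan is to exploit the symmetry and independence established in \lemmaref{lemma:leader-events} together with two structural observations about leaders. Write $q = \Pr[T > X \mid \mathcal{E}]$ and $p = \Pr[L_i \mid \mathcal{E}]$, which by \lemmaref{lemma:leader-events}(a) is the same for every $i$. I would first fix the pair of indices $i^*, j^*$ for which \lemmaref{lemma:leader-events}(b) holds; by property (2) of $(K,\ell)$-covers these are precisely the indices with $B_\ell(V_{i^*}) \cap B_\ell(V_{j^*}) = \emptyset$, so in particular $V_{i^*} \cap V_{j^*} = \emptyset$.

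The first observation is that $L_{i^*} \cap L_{j^*}$ forces $T > X$. Indeed, if both of these disjoint sets contain a leader at step $X$, then $x_X$ has at least two leaders and is not correct; but if $T \le X$ held, then $x_X$ would be reachable from the stable, correct configuration $x_T$ and would therefore share its single-leader output, a contradiction. Combined with \lemmaref{lemma:leader-events}, this gives $q \ge \Pr[L_{i^*} \cap L_{j^*} \mid \mathcal{E}] = p^2$. The second observation runs in the opposite direction: whenever $T \le X$ the unique leader lies in some $V_i$ because the sets cover $V$, so the event $\bigcup_i L_i$ occurs; that is, $\{T \le X\} \subseteq \bigcup_i L_i$. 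Hence $\Pr[\bigcup_i L_i \mid \mathcal{E}] \ge \Pr[T \le X \mid \mathcal{E}] = 1 - q$, while the union bound gives $\Pr[\bigcup_i L_i \mid \mathcal{E}] \le K p$, so $p \ge (1-q)/K$.

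These two bounds combine into the self-referential inequality $q \ge p^2 \ge (1-q)^2/K^2$, which I would close with a short case split: either $q \ge 1/2$, or else $q < 1/2$ so that $1-q > 1/2$ and hence $q \ge 1/(4K^2)$; in both cases $q \ge 1/(4K^2)$, a constant depending only on $K$. Finally I would bound $\Pr[\mathcal{E}]$ away from zero using that $X$ and $Y$ are independent: since the cover is $f(n)$-isolating, $\Pr[Y \ge f(n)] \ge 1/2$, and by the Poisson tail of \lemmaref{lemma:poisson-concentration-c}(a) with $c = 2$ and $\lambda = f(n)/2$ we get $\Pr[X \ge f(n)] \le \exp(-f(n)/4) \le \exp(-3/2)$ because $f(n) \ge 6$; thus $\Pr[\mathcal{E}] \ge \Pr[X < f(n)]\,\Pr[Y \ge f(n)] \ge 1/4$. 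Then $\Pr[T > X] \ge q \cdot \Pr[\mathcal{E}] \ge 1/(16K^2) =: C(K)$.

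The main obstacle is the self-referential step. The content of \lemmaref{lemma:leader-events} is symmetry and independence, not an absolute probability; the trick that converts these into a constant lower bound is to play off the two structural facts against each other, namely that leaders are symmetrically spread across the $V_i$ (pushing $p$ up when stabilization is likely) while two simultaneous leaders in disjoint sets certify non-stabilization (pushing $q$ up via $p^2$). Getting the direction of both inequalities right, and ensuring the disjointness used in the first observation matches the index pair supplied by \lemmaref{lemma:leader-events}(b), is the crux of the argument; the Poisson and isolation bounds on $\Pr[\mathcal{E}]$ are then routine.
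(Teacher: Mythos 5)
Your proof is correct and follows essentially the same route as the paper's: both rest on \lemmaref{lemma:leader-events}, the two structural observations (a unique leader at step $X$ must lie in some $V_i$, while two leaders in the disjoint pair of sets force $T > X$), the resulting self-referential quadratic inequality, and the Poisson/isolation argument giving $\Pr[\mathcal{E}] \ge 1/4$. The only differences are cosmetic: you condition on $\mathcal{E}$ throughout and close the quadratic inequality by a case split, whereas the paper works with the unconditional probability $\Pr[T>X]$ and solves the quadratic explicitly.
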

\begin{proof}
  Let $x$ be the configuration at the random time step $X$. Note that $T \le X$ if and only if $x$ is a stable configuration with a unique leader. Let $A$ denote the event $T \le X$ and define
  $\Pr[ \overline{A} ] = \alpha$,  $\Pr[\mathcal{E}] = \epsilon$ and $\Pr[L_0 \mid \mathcal{E}] = \gamma$.
Observe that 
\[
\Pr[ A \mid \mathcal{E} ] = \frac{ \Pr[A \cap \mathcal{E}] }{ \Pr[\mathcal{E}] } = \frac{ \Pr[A] - \Pr[A \mid \overline{\mathcal{E}}] \cdot \Pr[\overline{\mathcal{E}}] }{\Pr[\mathcal{E}]} \ge \frac{1-\alpha-(1-\epsilon)}{\epsilon} = 1 - \frac{\alpha}{\epsilon}.
\]
Since the configuration $x$ contains a node in a leader state, union bound yields
\[
\Pr[ A \mid \mathcal{E}] \le \sum_{i=0}^{K-1} \Pr[L_i \mid \mathcal{E}] = K\gamma,
\]
where the equality of probabilities follows from \lemmaref{lemma:leader-events}a.
Let $i < j$ be as given by \lemmaref{lemma:leader-events}b.
Since the configuration $x$ cannot be stable if at least two nodes are in a leader state, we have that
\[
\Pr[ \overline{A} \mid \mathcal{E}] \ge \Pr[L_i \cap L_j \mid \mathcal{E}] = \Pr[L_i \mid \mathcal{E}] \cdot \Pr[L_j \mid \mathcal{E}] = \gamma^2,
\]
because  by \lemmaref{lemma:leader-events}b the events $L_i$ and $L_j$ are conditionally independent given $\mathcal{E}$.  Combining all of the above yields 
\[
1 - \alpha/\epsilon \le \Pr[ A \mid \mathcal{E}] \le K \gamma \le K \sqrt{\alpha/\epsilon}.
\]
Thus, $\alpha$ needs to satisfy the quadratic inequality
$1 - \alpha/\epsilon \le K \sqrt{\alpha/\epsilon}$.  
The solution of this inequality for $\epsilon > 0$ is given by
\[
\sqrt{\alpha} \ge \frac{\sqrt{\epsilon}}{2}\left(\sqrt{K^2+4} - K \right) \ge 0.
\]
Finally, we show that $\epsilon = \Pr[\mathcal{E}] > 1/4$. Since the cover $V_0, \ldots, V_{K-1}$ is $f(n)$-isolating, we have $\Pr[Y \ge f(n)] \ge 1/2$. By %
the law of total probability,
  \begin{align*}
    \Pr[ Y < X ]  
    &\le \Pr[X > 2\lambda] + \Pr[ Y \le 2\lambda] \\
    &\le \exp(-f(n)/4) + \Pr[ Y \le f(n)] \\
    &\le \exp(-3/2) + 1/2 \le 3/4,
  \end{align*}
  where we applied the definition of $\lambda = f(n)/2$, \lemmaref{lemma:poisson-concentration-c}a with $c=2$, and the assumption $f(n) \ge 6$.
  Thus, $\alpha > 0$ is a positive constant denoting the probability of the event $T> X$.
\end{proof}

\thmlelower*
\begin{proof}
  We first show that there exist constants $n_0$ and $C < 1$ such that for any $n \ge n_0$, we have $\Pr[T \le \lambda/2] \le C$. By law of total probability and the definition of conditional probability, we get that 
  \begin{align*}
    \Pr[T \le \lambda/2 ]
    &\le  \Pr[T \le X] + \Pr[T \le \lambda/2 \land T > X] \\
    &\le \Pr[T \le X ] + \Pr[X \le \lambda/2] \\
    &\le 1 - C(K) + \exp(-f(n)/12) \le C.
  \end{align*}
  We can pick constant $C < 1$ satisfying the above since $C(K) > 0$ 
  is a constant  by \lemmaref{lemma:t-x-lb} and $\exp(-f(n)/12)$ is a term that tends to 0 as $n$ increases, as $f$ is an increasing function. 
  Finally, since $1-C > 0$ is a positive constant and $T$ is a non-negative random variable, we have
  \[
  \E[T] \ge \E[T \mid T \ge \lambda/2]\cdot \Pr[T \ge \lambda/2] \ge \frac{(1-C)\lambda}{2} = \frac{1 - C}{4}f(n). \qedhere
  \]
\end{proof}

\subsection{Constructing renitent graphs}

We now give examples of $f$-renitent graphs; by \theoremref{thm:renitent-lb} the expected stabilization time on these graphs will be $\Omega(f)$. For example, as a warmup, it is not hard to see that cycles are $\Omega(n^2)$-renitent: we can split the cycle into four paths $V_0, \ldots, V_3$ of length roughly $n/4$ and information propagation from set $V_1$ to $V_3$ requires $\Omega(n^2)$ steps with constant probability.

\begin{lemma}\label{lemma:cycle-renitent}
  Cycle graphs are $\Omega(n^2)$-renitent.
\end{lemma}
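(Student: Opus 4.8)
The plan is to exhibit an explicit $(K,\ell)$-cover of the cycle $C_n$ that is $\Omega(n^2)$-isolating, with $K=4$ a constant and $\ell(n) = \Theta(n)$, and then invoke the observation that lower-bounding the distance-$\ell$ propagation time $T_\ell(G)$ is enough to certify isolation.

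First I would construct the cover. Identify the nodes of $C_n$ with $\mathbb{Z}_n$ and set $s = \lceil n/4 \rceil$ and $\ell = \lfloor n/9 \rfloor$. Take four arcs $V_0,V_1,V_2,V_3$, each consisting of $s$ consecutive nodes, with left endpoints $0$, $\lfloor n/4\rfloor$, $\lfloor n/2\rfloor$, $\lfloor 3n/4\rfloor$. Since $4s \ge n$, these arcs cover all of $V(C_n)$, giving property (3). Each $B_\ell(V_i)$ consists of $s+2\ell < n/2 < n$ consecutive nodes, so it induces a path $P_{s+2\ell}$ in which $V_i$ sits as the central sub-arc; hence for any $i,j$ the obvious path isomorphism $G[B_\ell(V_i)] \to G[B_\ell(V_j)]$ takes $V_i$ to $V_j$, giving property (1). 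Finally, the opposite arcs $V_0$ and $V_2$ are separated by gaps of length $\approx n/4$ on both sides, and since $2\ell < n/4$ their $\ell$-neighbourhoods remain disjoint, giving property (2). (For $n$ divisible by $4$ the arcs partition the cycle exactly; the minor overlaps in the general case affect none of the three checks, as all arcs still have the common length $s$.)

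Next I would lower-bound the isolation time. By the remark following the definition of isolation, it suffices to show $\Pr[T_\ell(C_n) < t] \le 1/2$ for some $t = \Omega(n^2)$. Since $C_n$ has maximum degree $\Delta = 2$, has $m = n$ edges, and $\ell = \lfloor n/9\rfloor \ge \ln n$ for all large $n$, \lemmaref{lemma:max-degree-propagation} applies directly and yields
\[
\Pr\!\left[T_\ell(C_n) < \frac{\ell n}{2 e^3}\right] \le \frac{1}{n} \le \frac{1}{2}.
\]
As $t = \ell n/(2e^3) \in \Omega(n^2)$, the cover is $t$-isolating. With $K = 4$ a constant and $\ell(n) = \lfloor n/9\rfloor$, it follows that the family of cycles is $\Omega(n^2)$-renitent.

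The argument is essentially bookkeeping, since the heavy lifting is done by \lemmaref{lemma:max-degree-propagation}: intuitively, on a cycle there are only the two directed length-$\ell$ paths leaving any node, so influence can reach distance $\ell$ only after the scheduler has sampled all $\ell$ edges of one such path in order, which takes $\Omega(\ell m) = \Omega(n^2)$ steps with constant probability. The only point requiring genuine care is the choice of $\ell$: it must be $\Theta(n)$ for the propagation bound to reach $\Omega(n^2)$, yet strictly below $n/8$ so that the two opposite arcs retain disjoint $\ell$-neighbourhoods; a constant fraction such as $n/9$ meets both constraints while also guaranteeing $\ell \ge \ln n$. I expect no obstacle beyond confirming these constants and the equal-length requirement that underlies the neighbourhood isomorphisms in property (1).
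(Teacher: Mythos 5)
Your proof is correct and follows the same route as the paper's: cover the cycle by four arcs of length $\approx n/4$, invoke \lemmaref{lemma:max-degree-propagation} (with $\Delta = 2$, $m = n$, $k = \ell = \Theta(n)$) to bound the distance-$\ell$ propagation time by $\Omega(\ell n) = \Omega(n^2)$ with probability at least $1/2$, and conclude isolation via the remark that lower-bounding $T_\ell(G)$ suffices. The one substantive difference is that you decouple the arc length $s = \lceil n/4 \rceil$ from the neighbourhood radius $\ell = \lfloor n/9 \rfloor$, and this extra care is in fact necessary: the paper's own proof takes both equal to $\lceil n/4 \rceil$, but then $B_\ell(V_0)$ and $B_\ell(V_2)$ each consist of roughly $3n/4$ consecutive nodes and so cannot be disjoint, violating property (2) of a $(K,\ell)$-cover; disjointness of the $\ell$-neighbourhoods of opposite arcs requires the radius to be strictly less than half the gap between them (any $\ell = \Theta(n)$ with $\ell < n/8$ works), which is exactly what your choice of $n/9$ guarantees while still keeping the propagation bound at $\Omega(n^2)$. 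So your version is the same argument with the constants repaired.
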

\begin{proof}
  Let $G$ be an $n$-cycle with nodes 
  $V = \{ v_0, \ldots, v_{n-1} \}$. Define $\ell = \lceil n/4 \rceil$ and 
  \[
  V_i = \{ v_{i\ell}, \ldots, v_{(i+1)\ell-1} \}
  \]
  for $i \in \{0,1,2,3\}$. Observe that each $B_\ell(V_i)$ is isomorphic to a path of the same length, $B_{\ell-1}(V_0) \cap B_{\ell-1}(V_2) = \emptyset$ and $V_0 \cup \cdots \cup V_3  =  V$.
\lemmaref{lemma:max-degree-propagation} implies that 
\[
\Pr[T_{\ell}(G) \le c\ell n] \le 1/n \le 1/2
\]
for some constant $c>0$. Thus, $\{V_0, \ldots, V_3\}$ is $\Omega(n^2)$-isolating $(4,\ell)$-cover.
\end{proof}

In fact, for any constant $k >0$, the above idea generalizes to higher dimensions: $k$-dimensional toroidal grids are $\Omega(n^{1+1/k})$-renitent; one can partition such grids into constantly many subcubes of diameter $\Theta(n^{1/k})$ and observe that propagating information to distance $\Theta(n^{1/k})$ in regular graphs requires $\Omega(n^{1+1/k})$ steps with constant probability.

The next lemma  allows us to obtain $\Omega(Dm)$-renitent graphs for essentially any diameter $D$ and number $m$ of edges.

\begin{lemma}\label{lemma:general-construction}
  Let $G$ be a connected graph with $n$ nodes, $m$ edges and diameter $D$. For any integer $\ell$ such that $D \le \ell \le n$, 
  there exists an $\Omega(\ell m)$-renitent graph $G'$ with $\Theta(n)$ nodes, $\Theta(m)$ edges and diameter $\Theta(\ell)$. In addition, $\tbroadcast(G') \in \Omega(\ell m)$.
\end{lemma}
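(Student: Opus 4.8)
The plan is to build $G'$ by attaching a constant number of pendant copies of $G$ to a single long cycle, so that the cycle forces information to cross $\Theta(\ell)$ bottleneck edges, while the copies supply the $\Theta(m)$ edges that dilute the scheduler and slow every individual edge sample to rate $1/\Theta(m)$. Concretely, I would fix a constant $K \ge 8$, set $p = 3\ell$, take $K$ disjoint copies $G^{(0)}, \dots, G^{(K-1)}$ of $G$, and in copy $i$ mark the image $r_i$ of a fixed vertex $r$ of $G$. I then join the marked vertices cyclically by inserting, for each $i$, a fresh path $P_i$ of length $p$ between $r_i$ and $r_{(i+1) \bmod K}$; call the resulting cycle $C$. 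Writing $n' = |V(G')|$ and $m' = |E(G')|$, one checks $n' = Kn + K(p-1) = \Theta(n)$ and $m' = Km + Kp = \Theta(m)$ (using $D \le \ell \le n \le m+1$ and $K = O(1)$), while the circumference $Kp = \Theta(\ell)$ of $C$ together with the bound $D \le \ell$ on each pendant gives $\diam(G') = \Theta(\ell)$.

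Next I would exhibit the cover. Set $\ell' = 2\ell$ and let $V_i$ consist of the vertices of copy $i$ together with the internal vertices of $P_i$; these sets partition $V(G')$, giving property (3). The rotation $\rho$ shifting every block index by one is a graph automorphism with $\rho(V_i) = V_{i+1}$, so $\rho^{j-i}$ witnesses the required isomorphism between $G[B_{\ell'}(V_i)]$ and $G[B_{\ell'}(V_j)]$ mapping $V_i$ to $V_j$, giving property (1). For property (2), since $\ell' = 2\ell < p$ and $K \ge 8$, the balls $B_{\ell'}(V_0)$ and $B_{\ell'}(V_{K/2})$ occupy disjoint arcs of $C$ (together with their disjoint pendant copies), so they are disjoint. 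Thus $\mathcal{C} = \{V_0, \dots, V_{K-1}\}$ is a $(K, \ell')$-cover with $K$ constant.

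The crux is the isolation bound. The key observation is that, because $B_{\ell'}(V_i)$ is an arc of $C$ with exactly two endpoints, any $V_i$-message can escape $B_{\ell'}(V_i)$ only through one of two ``gates'': the run $\sigma_L$ of $\ell'+1$ consecutive edges of $C$ leaving the left end of $V_i$, or the symmetric run $\sigma_R$ at the right end. Every path from a vertex of $V_i$ to a vertex outside $B_{\ell'}(V_i)$ contains $\sigma_L$ or $\sigma_R$ as a subsequence, so $\{\sigma_L, \sigma_R\}$ obstructs escape in the sense of \lemmaref{lemma:obstructing-lemma}. Applying the union-bound argument of that lemma together with \lemmaref{lemma:rho-concentration} (each gate has length $\Theta(\ell)$, and $\E[X(\sigma)] = |\sigma|\, m'$), and then a union bound over the $2K$ gates, I would obtain, for a suitable constant $\lambda \in (0,1]$ with $c(\lambda) \ge \ln(4K)$,
\[
\Pr\!\left[ Y(\mathcal{C}) < \lambda \ell' m' \right] \le 2K \cdot e^{-\ell' c(\lambda)} \le 1/2,
\]
using only $\ell' \ge 1$. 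Since $\lambda \ell' m' = \Theta(\ell m)$, the cover is $\Omega(\ell m)$-isolating, so $G'$ is $\Omega(\ell m)$-renitent and \theoremref{thm:renitent-lb} applies. The same obstructing-set estimate applied to the single source $r_0$ with the two geodesic directions along $C$ shows $\Pr[T_{\ell'}(r_0) < \Omega(\ell m)] \le 1/2$, whence $\tbroadcast(G') \ge \E[T_{\ell'}(r_0)] = \Omega(\ell m)$.

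The main obstacle I anticipate is exactly this isolation step, and in particular making it hold uniformly in $\ell$, including $\ell$ as small as a constant. A naive bound on the global propagation time $T_{\ell'}(G')$ would union over all $\Theta(n)$ possible sources and hence require $\ell' c(\lambda) \gtrsim \ln n$ with a constant $\lambda$, which fails once $\ell = o(\log n)$. The fix is to bound the isolation time directly: because each $V_i$ has only two escape gates, the union runs over the constant $2K$ rather than over all vertices, so $\ell' \ge 1$ already suffices. Care is also needed to verify that $\{\sigma_L,\sigma_R\}$ genuinely obstructs every escaping path — including paths that detour through pendant copies or wrap around $C$ — which follows because $\ell' < p$ keeps $B_{\ell'}(V_i)$ confined to the two paths adjacent to block $i$, and the pendant copies are reachable only through their attachment vertices.
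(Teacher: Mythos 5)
Your proposal is correct and follows essentially the same route as the paper's proof: both attach a constant number of copies of $G$ cyclically via fresh paths of length $\Theta(\ell)$, exhibit the cover through the rotational symmetry of the construction, and bound the isolation time by a constant-size union bound over the $\Theta(\ell)$-edge ``gate'' sequences via \lemmaref{lemma:rho-concentration}, which is exactly how the paper avoids any $\log n$ requirement on $\ell$. The only differences (using $K \ge 8$ copies and paths of length $3\ell$ instead of $4$ copies and paths of length $2\ell$, and partitioning rather than covering) are immaterial.
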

\begin{proof}
  Fix a node $v^*$ of $G$. We construct the graph $G'$ as follows: take four copies $G_0, \ldots, G_3$ of the graph $G$ and connect the $i$th copy $v^*_i$ of $v^*$ to $v_{(i+1) \bmod 4}^*$ by a using a path $P_i$ of length $2\ell$.
  Define $V_i =  V(G_i) \cup V(P_i)$ for $i \in \{0,1,2,3\}$.
  By construction $\mathcal{C} = \{ V_0, \ldots, V_3 \}$ gives a $(4,\ell)$-cover of $G'$, as  $G'[B_k(V_i)] \simeq G'[B_k(V_j)]$ for all $i,j \in \{0,1,2,3\}$ and $k \ge 0$, $B_\ell(V_0) \cap B_\ell(V_2) = \emptyset$ and $V_0 \cup \cdots \cup V_3 = V(G')$.

  We now show that the cover is $t$-isolating for $t = \lambda \ell m$ for some constant $\lambda > 0$. For each $0 \le i \le 3$, let $\rho_i$ and $\rho'_i$ be the sequence of edges in $P_{(i+1) \bmod 4}$ and $P_{(i-1) \bmod 4}$ ordered towards $v^*_i$. Note that for $I_t(V_i) \setminus B_\ell(V_i) \neq \emptyset$ to hold, the scheduler must have sampled either at least half of sequence $\rho_i$ or at least half of $\rho'_i$ (both sequences have length $2\ell$), since $D \le \ell$. Denote by
  $\hat{\rho}_i$ and $\hat{\rho}'_i$
  the halves of $\rho_i$ and $\rho'_i$ respectively that are closest to $V_i$. Now for any $0 \le i \le 3$, we have that
  \begin{align*}
\Pr[ I_t(V_i) \setminus B_\ell(V_i) \neq \emptyset ] &\le \Pr[ X({\hat{\rho}_i}) > t ] + \Pr[ X({\hat{\rho}'_i}) > t] \\
& \le 2 \cdot \exp(-\ell c(\lambda)) \le 1/8
\end{align*}
by using \lemmaref{lemma:rho-concentration} and choosing $\lambda > 0$ to be a sufficiently small constant. By union bound, $\Pr[Y(\mathcal{C}) \ge \lambda \ell m] \ge 1/2$.
The above also implies that the distance $\ell$-propagation time of $v_i^*$ satisfies $\Pr[T_\ell(v_i^*) > t] \ge 7/8$. It now follows that $\tbroadcastg = \max \{ \E[T(u)] : u \in V\} \in \Omega(\ell m)$, as
\[
\E[T(v_i^*)] \ge \E[T_\ell(v_i^*) \mid T_\ell(v_i^*) > t] \cdot \Pr[T_\ell(v_i^*) > t] \ge \frac{7t}{8}. \qedhere
\]
\end{proof}

\begin{restatable}{theorem}{ConstructiveLB}\label{thm:constructive-lb}
  For any increasing function $T \colon \mathbb{N} \to \mathbb{N}$ such that $n \log n \le T(n) \le n^3$, there is an infinite  family of graphs in which stable leader election takes $\Theta(T(n))$ expected steps and the broadcast time satisfies $\tbroadcastg \in \Theta(T)$ .
\end{restatable}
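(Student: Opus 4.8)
The plan is to realize the target time $T$ by feeding a suitable base graph into \lemmaref{lemma:general-construction} and then choosing its parameter $\ell$ so that the product $\ell m_0$ (base diameter-scale times base edge count) equals $\Theta(T)$, measured at the \emph{final} number of nodes. Concretely, for a target size $n$ I would build $G'$ from a base graph $G$ on $n_0 = \Theta(n)$ nodes with $m_0$ edges and small diameter, connected up with paths of length $\Theta(\ell)$. \lemmaref{lemma:general-construction} then guarantees that $G'$ is $\Omega(\ell m_0)$-renitent with $\tbroadcast(G') \in \Omega(\ell m_0)$, and has $\Theta(n_0)$ nodes, $\Theta(m_0)$ edges and diameter $\Theta(\ell)$. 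The lower bound on leader election is immediate from \theoremref{thm:renitent-lb}: any protocol needs $\Omega(\ell m_0) = \Omega(T)$ expected steps. For the matching upper bound, \theoremref{thm:le-in-broadcast-time} supplies a protocol running in $O(\tbroadcast(G') + n\log n)$ steps, so it suffices to show $\tbroadcast(G') \in O(T)$; since $T(n) \ge n\log n$ the additive $n\log n$ term is absorbed. Thus the whole statement reduces to picking base parameters with $\ell m_0 = \Theta(T)$, with $\ell = \Omega(\log n)$, and checking $\tbroadcast(G') \in O(\ell m_0)$.

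For the broadcast upper bound I would invoke \theoremref{thm:broadcast-upper-bounds}, which bounds $\tbroadcast(G')$ by $O\!\left(m'(\log N + D')\right)$ with $m' = \Theta(m_0)$ and $D' = \Theta(\ell)$; provided $\ell = \Omega(\log n)$ this is $O(\ell m_0)$, so $\tbroadcast(G') \in \Theta(\ell m_0) = \Theta(T)$ and the broadcast claim follows as well. It then remains to exhibit, for each target, a base graph with the desired $(n_0, m_0, \text{diameter})$ and a valid $\ell \in [D, n_0]$ with $\ell = \Omega(\log n)$. A convenient base graph is a balanced binary tree on $n_0$ nodes (connected, diameter $O(\log n_0)$) augmented with arbitrary extra edges until it has exactly $m_0 \in [n_0 - 1, \binom{n_0}{2}]$ edges; since extra edges only shrink the diameter, it stays $O(\log n_0) = O(\log n)$. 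I would split into two regimes. When $n\log n \le T \le n^2\log n$, set $\ell = \Theta(\log n)$ (large enough to exceed the base diameter) and $m_0 = \Theta(T/\log n)$; here $m_0 \in [\Theta(n),\Theta(n^2)]$ is admissible and $\ell m_0 = \Theta(T)$. When $n^2\log n < T \le n^3$, take a dense base with $m_0 = \Theta(n_0^2)$ and set $\ell = \Theta(T/n^2) \in [\Omega(\log n), n_0]$, so again $\ell m_0 = \Theta(T)$ with all constraints met.

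Combining the two bounds, each $G'$ has $\tbroadcast(G') \in \Theta(T)$ and admits leader election in $\Theta(T)$ expected steps. The hard part will be the bookkeeping between the base size $n_0$ and the true final size $n = 4n_0 + 8\ell - 4$: because $T$ is an arbitrary increasing function, $T(\Theta(n))$ need not be $\Theta(T(n))$, so I cannot evaluate $T$ at $n_0$ in place of the true node count. The clean way to handle this is to treat the final size $n$ as the primary variable, fix $\ell$ and $m_0$ as functions of $T(n)$ as above, and then tune the free base size $n_0$ (within its $\Theta(n)$ window) so that $4n_0 + 8\ell - 4$ equals the intended $n$ exactly; this produces graphs for an infinite set of sizes, which is all the statement requires. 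In the large-$T$ regime, where $\ell$ and $n_0$ are both $\Theta(n)$ and each feeds back into the node count, this matching is the genuinely delicate step, and is most cleanly argued by a monotonicity/intermediate-value argument: as $\ell$ sweeps its admissible interval, both the final size and the realized time increase, so one can pin the realized time to $\Theta(T(n))$ at the attained size $n$.
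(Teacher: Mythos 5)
Your proposal is correct and follows essentially the same route as the paper's proof: both feed a small-diameter base graph (clique, or star/tree augmented with extra edges) into \lemmaref{lemma:general-construction} with the parameter $\ell$ chosen in two regimes around $T \approx n^2\log n$, then obtain the lower bound from \theoremref{thm:renitent-lb} and the matching upper bound from \theoremref{thm:broadcast-upper-bounds} combined with \theoremref{thm:le-in-broadcast-time}. The one place you go beyond the paper is the node-count bookkeeping you flag: the paper's own proof sets its parameters using $T(N)$ at the \emph{base} size $N$ and never addresses that $T(\Theta(n))$ need not be $\Theta(T(n))$ for an arbitrary increasing $T$, so your intermediate-value adjustment of $n_0$ to hit the final size $n$ exactly is a legitimate (and arguably needed) refinement of the same argument rather than a deviation from it.
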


\begin{proof}
  For any $N \ge 1$, we construct a graph $G$ with $n \ge N$ nodes as follows. We distinguish two cases:
  \begin{itemize}

\item First, if $T \in \omega(n^2 \log n)$, then take a clique $H$ of size $N$ and set $\ell = \lceil T(N) / N^2 \rceil$.
  \item 
Otherwise, if $T \in O(n^2 \log n)$, then set $\ell = \lceil \log N + T(N)/(N \log N) \rceil$, take a star graph and add $\Theta(T(N)/\ell)$ edges in an arbitrary fashion to obtain the graph $H$. Adding this many edges is always possible since $T(N)/\ell \in O(N^2)$.
\end{itemize}
In both cases, we apply \lemmaref{lemma:general-construction} with $H$ and $\ell$ to obtain a graph $G$. 
Note that  \lemmaref{lemma:general-construction} implies that the graph $G$ will be $\Omega(T)$-renitent and satisfy $\tbroadcastg \in \Omega(T)$. By \theoremref{thm:renitent-lb} stable leader election will take $\Omega(T)$ expected steps on this graph.
The graph $H$ has constant diameter, and since $\ell \in \Omega(\log n)$, the graph $G$ has diameter $\Omega(\log n)$. By construction, $Dm \in \Theta(T)$, so \theoremref{thm:broadcast-upper-bounds} implies that $\tbroadcastg \in O(T)$. Now
\theoremref{thm:le-in-broadcast-time} implies the upper bound for leader election time.
\end{proof}

\subsection{A lower bound for dense graphs}

The above construction gives graph families in which expected leader election and broadcast time are of the same order. However, this is not generally true. Leader election time can be much lower than broadcast time in graphs, where the local structure helps break symmetry fast. The star graph (i.e., a tree of depth one) is the simplest example: there is a trivial constant-state protocol that elects a leader in one interaction, but broadcast time in a star is $\Theta(n \log n)$ by a simple coupon collector argument.

The above example rules out, for example, the existence of a general $\Omega(n \log n)$ lower bound for leader election in sparse graphs. In this section, we show that in dense graphs with sufficiently high minimum degrees, we cannot easily exploit local graph structure to break symmetry fast, even if we use any number of states per node.

\begin{restatable}{theorem}{denselowerbound}
  \label{thm:dense-lower-bound-general}
  Let $0 < \lambda < 1$ and $0 < \phi < 1$ be constants. If $G$ has minimum degree $\delta \ge \lambda n^\phi$ and $m \ge \lambda n^2$ edges, then any stable leader election protocol requires $\Omega(n\log n)$ expected steps to stabilize on $G$. 
\end{restatable}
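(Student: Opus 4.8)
The plan is to exhibit a constant $c>0$ such that, with $t = \lceil c\, n \log n\rceil$, the stabilization time $T$ satisfies $\Pr[T > t] \ge \Omega(1)$; since $T \ge 0$, this immediately yields $\E[T] \ge t \cdot \Pr[T>t] = \Omega(n\log n)$. As in the renitent lower bound, I would investigate the configuration at a Poissonized random time and argue that, with constant probability, the graph still contains two \emph{causally isolated} pieces that are \emph{exchangeable}, so that the protocol cannot have committed to a unique leader by time $t$.

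First, the \textbf{coupon-collector step}. The only place the time bound needs density is the per-step interaction rate: since $m \ge \lambda n^2$ and $\deg(v) \le n-1$, in any single step a fixed node $v$ is selected with probability $\deg(v)/m \le 1/(\lambda n)$. Hence the probability that $v$ has \emph{never} interacted during the first $t$ steps is at least $(1-1/(\lambda n))^{t} \ge n^{-2c/\lambda}$, so the expected number of such \emph{untouched} nodes is at least $n^{1-2c/\lambda}$. Choosing $c < \lambda/2$ makes this diverge, and a second-moment argument — cleanest via the Poissonization used in \lemmaref{lemma:leader-events}, under which the ``$v$ untouched'' events for distinct nodes are essentially independent — shows that with high probability there are at least two untouched nodes at time $t$. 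Note this step uses \emph{only} the density hypothesis $m \ge \lambda n^2$.

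Second, the \textbf{symmetry step}, mirroring \lemmaref{lemma:leader-events} and \lemmaref{lemma:t-x-lb}. An untouched node sits in its initial state, so in the strong model its output is a function only of its private random bits $y(v)$; these are i.i.d.\ across nodes and, conditioned on the untouched event, independent of the entire rest of the execution. Thus each untouched node independently outputs $\mathsf{leader}$ with one and the same probability $q$. If $q$ is non-negligible relative to the number $N = n^{\Omega(1)}$ of untouched nodes (in particular under the usual convention that the initial state is a leader candidate, where $q=1$), then with constant probability at least two untouched nodes simultaneously output $\mathsf{leader}$, so the configuration at time $t$ has two leaders and is incorrect; hence $\Pr[T>t] = \Omega(1)$ and we are done.

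The \textbf{main obstacle} is the complementary regime, where untouched nodes almost never output $\mathsf{leader}$ (morally, the initial state is a follower). Here I would manufacture a second leader \emph{inside} the untouched region and invoke stability: from the stable configuration $x_t$, whose unique leader $\ell$ lies in the touched part, schedule interactions confined to the untouched set $U$, which then evolve exactly as a fresh run of the protocol from the all-initial configuration on the induced subgraph $G[U]$. Since a leader must be created somewhere along $x_0 \Rightarrow x_t$, there is a finite ``leader-creation gadget'' $(H,\pi)$; embedding $H$ into $G[U]$ and replaying $\pi$ creates a second leader, yielding a configuration reachable from $x_t$ with two leaders and contradicting the stability of $x_t$. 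Making this rigorous is the crux: one must bound the size of the gadget $H$ (delicate, since the state space is unbounded in this model) and guarantee an induced copy of $H$ inside $G[U]$ that remains causally disjoint from $\ell$. This is exactly where the two hypotheses combine — density keeps $U$ large and causally isolated, while the polynomial minimum degree $\delta \ge \lambda n^\phi$ ensures the untouched set induces a subgraph rich enough to host the gadget — whereas on the clique, as in Sudo and Masuzawa~\cite{sudo2020leader}, the gadget embeds trivially.
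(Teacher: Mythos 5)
Your skeleton coincides with the paper's proof strategy---an untouched region, a replay of a leader-creation pattern inside it, and a contradiction with stability---and your first two steps are sound (the paper's \lemmaref{lemma:do-it-all} plays the role of your coupon-collector step, applied with $U=V$ and also, crucially, with $U = B(v)$ for every node $v$). But the crux you explicitly leave open is not a technicality: it is the entire content of the paper's argument, and the plan as stated (``embed the gadget $H$ into $G[U]$'') fails without a further idea. There are two separate problems. First, gadget size: with unbounded state space there is no uniform bound on leader-creation gadgets, so one must bound the \emph{specific} pattern that created the actual leader in the actual execution; the paper does this by introducing the influencer multigraph $\mathcal I_t(v)$ and proving (\lemmaref{lemma:set-of-influencers}, \lemmaref{lemma:internal-interactions}) that with high probability it has at most $n^\varepsilon$ nodes and only $O(\log n)$ cycle-creating (``internal'') interactions whenever $t \le c n \log n$. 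Second, and more fundamentally, embedding: the gadget is a sub-multigraph of $G$ that may contain cycles, and the only structural property your hypotheses yield about the untouched region is ``every node has many untouched neighbours''---this supports greedy embedding of \emph{trees}, but not of graphs with cycles, since closing a cycle requires a specific edge between two already-chosen nodes to be present. Concretely, take $K_{n/2,n/2}$ plus a single edge $\{a,b\}$ inside one side (this satisfies both hypotheses): every triangle of $G$ contains both $a$ and $b$, so once $a$ has interacted no triangle lies inside $G[U]$, and a gadget containing a triangle can never be replayed among untouched nodes.

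The paper's resolution of exactly this obstacle is the unfolding lemma (\lemmaref{lemma:unfold}): each internal interaction is removed by splicing in fresh, disjoint, isomorphic copies of the two participants' influencer trees---legitimate precisely because nodes are anonymous and start in identical states---at the cost of at most doubling the number of nodes. After the $O(\log n)$ internal interactions are unfolded, the pattern is a genuine tree of size at most $n^{\varepsilon} \cdot 2^{O(\log n)} = n^{\varepsilon + c}$, and trees of this size do embed greedily into the untouched subgraph, because the minimum-degree hypothesis combined with \lemmaref{lemma:do-it-all} guarantees that every node has at least $n^{\phi(1-\varepsilon)} \gg n^{\varepsilon+c}$ untouched neighbours (\lemmaref{lemma:tree-embedding}, with $\varepsilon$ chosen so that $\phi(1-\varepsilon) > 2\varepsilon$). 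Two smaller remarks: your concern about keeping the embedded copy ``causally disjoint from $\ell$'' is a non-issue, since stability only requires exhibiting \emph{some} continuation of the schedule with two leaders, and scheduling interactions entirely inside the untouched set achieves this; and your case split on the probability $q$ is unnecessary, since if untouched nodes already output leader the pattern is trivial and the same replay argument applies.
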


At its core, the argument is an extension of lower bound result of Sudo and Masuzawa~\cite{sudo2020leader} from cliques to general high-degree graphs. However, to deal with the general structure of the interaction graph, we introduce the two new concepts: multigraphs of influencers and leader generating interaction patterns.

Our proof strategy is roughly as follows. We assume that there is a fast protocol that stabilizes in $o(n \log n)$ steps in a graph with the properties as in the above theorem. First, we capture the spatial structure of the part of the graph that influences a node to be elected as a leader; we call such structures ``leader generating interaction patterns''. We show that there must be such patterns that are fairly small and almost tree-like. This means they can be unfolded into trees, without growing their size too much. Then we argue that, because the graph has high degrees, such a tree is likely to be found in the set of nodes that have not interacted by time $o(n \log n)$. Since a new leader can be generated in this part of the graph, this implies that any configuration reached in $o(n \log n)$ steps is unlikely to be stable.

Recall that $I_t(v)\subseteq V$ denotes the set of influencers of node $v$ at time $t$. We start with the following lemma showing that the sets of influencers grow slowly on dense graphs.

\begin{lemma}
  \label{lemma:set-of-influencers}
  Let $0 < \varepsilon < 1$ be a constant.
  There exists a constant $0 < c < c(\varepsilon,\lambda)$ such that for any node $v \in V$ and any $0 \le t \le c n \log n$, we have
  \[
  \Pr[ |I_{t}(v)| > n^\varepsilon] \in \exp\left(- \Omega\left(\sqrt{n^\varepsilon}\right)\right).
  \]
\end{lemma}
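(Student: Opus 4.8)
The plan is to analyze the growth of $I_t(v)$ through a \emph{backward exploration} of the schedule, and then reduce the bound to a lower-tail estimate for a sum of geometric random variables via \lemmaref{lemma:sum-of-geometric}. First I would reveal the interactions $e_t, e_{t-1}, \ldots, e_1$ in reverse order and track the set $B_s$ of nodes that can reach $v$ through a time-respecting walk using only interactions in $\{s+1, \ldots, t\}$. Formally, $B_t = \{v\}$, and $B_{s-1} = B_s \cup \{a,b\}$ whenever $e_s = \{a,b\}$ has an endpoint in $B_s$, while $B_{s-1} = B_s$ otherwise; one checks that $B_0 = I_t(v)$, so $|I_t(v)|-1$ equals the number of backward steps at which $B$ grows. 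Since $e_s$ is sampled independently of $e_{s+1}, \ldots, e_t$, conditioned on $B_s$ the edge $e_s$ is still uniform over the $m$ edges, so $B$ grows by exactly one node in step $s \to s-1$ with probability $|\partial B_s|/m$, independently of the past. Using $\Delta < n$ and $m \ge \lambda n^2$, whenever $|B_s| = i$ this probability is at most $\sum_{w \in B_s}\deg(w)/m \le i(n-1)/m < i/(\lambda n)$.

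Next I would couple the size process $|B_s|$ with a dominating pure-birth chain whose per-step up-probability at size $i$ is exactly $q_i = \min\{1, i/(\lambda n)\}$. Because $q_i$ dominates the true up-probability and is increasing in $i$, this chain climbs at least as fast as $|B_s|$, so the event $|I_t(v)| > n^\varepsilon$ implies that the dominating chain reaches size $n^\varepsilon+1$ within $t$ steps. The holding times $G_i \sim \Geom(q_i)$ of this chain are independent, and the time to climb from $1$ to $n^\varepsilon+1$ is $\sum_{i=1}^{n^\varepsilon} G_i$; hence it suffices to bound $\Pr\big[\sum_{i=1}^{n^\varepsilon} G_i \le t\big]$ for $t \le cn\log n$.

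The key point, and the main obstacle, is that applying \lemmaref{lemma:sum-of-geometric} to the full sum is far too weak: its bound degrades with $p = \min_i q_i = 1/(\lambda n)$, yielding only $p\,\E\big[\sum_i G_i\big] = \Theta(\log n)$ in the exponent, hence a merely polynomial tail. To recover the stretched-exponential bound I would discard the slow small-$i$ increments and apply the lemma only to the \emph{window} $T' = \sum_{i=\lceil n^{\varepsilon/2}\rceil}^{n^\varepsilon-1} G_i$. Over this range $p = \min q_i = \Theta(n^{\varepsilon/2-1})$ while $\E[T'] = \lambda n\big(H_{n^\varepsilon-1} - H_{n^{\varepsilon/2}-1}\big) = \Theta(n\log n)$, so choosing $c$ a small enough constant (e.g. $c < \lambda\varepsilon/8$, which fixes the threshold $c(\varepsilon,\lambda)$) guarantees $t \le \E[T']/4$. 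Since $T' \le \sum_{i=1}^{n^\varepsilon} G_i$, \lemmaref{lemma:sum-of-geometric}(b) with $\mu = 1/4$ gives
\[
\Pr[\,|I_t(v)| > n^\varepsilon\,] \le \Pr[T' \le t] \le \exp\!\left(-p\,\E[T']\,c(1/4)\right) = \exp\!\left(-\Omega\!\left(n^{\varepsilon/2}\log n\right)\right),
\]
which is $\exp\!\left(-\Omega(\sqrt{n^\varepsilon})\right)$ as claimed.

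The remaining work is routine bookkeeping: justifying the step-by-step stochastic domination of $|B_s|$ by the pure-birth chain (the up-probability is bounded by $q_{|B_s|}$ uniformly over the configuration of $B_s$, and $q_i$ is monotone), verifying the harmonic-number estimate $H_{n^\varepsilon} - H_{n^{\varepsilon/2}} = (\varepsilon/2)\ln n + O(1)$, checking that $q_i \le 1$ throughout (which holds for large $n$ since $\varepsilon < 1$), and absorbing the floors and ceilings. I would note that only the edge count $m \ge \lambda n^2$ enters this argument, so the minimum-degree hypothesis is reserved for the later steps of the dense-graph lower bound.
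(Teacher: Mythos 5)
Your proposal is correct, and its backbone coincides with the paper's proof: the paper also runs the influence process backwards in time (its process $\mathcal J_s(v)$ is exactly your $B_s$, with the same identity $\mathcal J_{t_0}(v) = I_{t_0}(v)$), and bounds the per-step growth probability at size $k$ by $nk/m \le k/(\lambda n)$ using only $m \ge \lambda n^2$, exactly as you do. Where you genuinely diverge is the concentration step. The paper partitions the size range into $K = \Theta(n^{\varepsilon/2})$ consecutive intervals of length $\ell = \Theta(n^{\varepsilon/2})$, gives the $k$th interval a time budget $r_k = \lfloor \lambda n/(2k) \rfloor$, bounds the probability of crossing any single interval within its budget by a binomial Chernoff bound (the number of growth events in $r_k$ steps is dominated by $\Bin{r_k}{p(b_k)}$ with mean at most $\ell/2$, so crossing requires a deviation of probability $\exp(-\Omega(n^{\varepsilon/2}))$), and finishes with a pigeonhole argument plus a union bound over the $K$ intervals; the harmonic sum $\sum_k r_k = \Theta(n \log n)$ is what produces the $c n \log n$ horizon. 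You instead dominate the size process by a pure-birth chain with independent geometric holding times and apply \lemmaref{lemma:sum-of-geometric}(b) a single time, to the truncated window $i \in [n^{\varepsilon/2}, n^\varepsilon]$. Your diagnosis of why truncation is needed (the full sum only yields a polynomial tail because $p = \min_i q_i = 1/(\lambda n)$) is precisely the non-uniformity that the paper's scale-by-scale budgets are designed around: within each of its intervals the success probabilities are comparable, which is what makes its Chernoff bound effective. The trade-off: your route avoids both the pigeonhole step and the union bound, and gives a slightly stronger exponent $\Omega(n^{\varepsilon/2} \log n)$ rather than $\Omega(n^{\varepsilon/2})$; the paper's decomposition keeps uniform control over the whole trajectory of the growth process, but at the cost of more bookkeeping. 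Your coupling argument (monotonicity of $q_i$ plus the uniform bound on the conditional up-probability) is the standard one and is sound, so the proof goes through as written.
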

\begin{proof}
  Define
  \[
  K = \left\lfloor \lambda/2 \cdot \left(n^{\varepsilon/2} - 1\right)\right\rfloor \quad \ell = \left\lfloor 2/\lambda \cdot \left(n^{\varepsilon/2} - 1\right)\right\rfloor \quad r_k = \left \lfloor \frac{\lambda n}{2k} \right\rfloor \quad t_0 = \sum_{k=1}^K r_k.
  \]
  Note that $K \ell \le n^{\varepsilon} - 1$. 
  Assume that $n$ is large enough so that $t_0>0$.
  Let $J_0(v) = \{ v \}$, and for $0 \le t < t_0$
  define
  \[
J_{t+1}(v) = \begin{cases}
  J_t(v) \cup \{ w \} & \textrm{if } e_{t_0-t} = (u,w) \textrm{ or } e_{t_0-t} = (w,u) \textrm{ for some } w \in J_v(t) \\
  J_t(v) & \textrm{otherwise,}
    \end{cases}
\]
where we take $e_0 = \emptyset$. That is, $J_{t+1}(v) = J_t(v) \cup \{w \}$ if some node in $u \in J_v(t)$ interacted with node $w$ at step $t_0 - t$. We can think of $J_t(v)$ as an evolution of $I_t(v)$ played in reverse. Note that $J_{t_0}(v) = I_{t_0}(v)$. Hence, it is enough to show that the event $|J_{t_0}(v)| < n^\varepsilon$ is likely to happen.

We now lower bound the number of steps it takes for $J_t(v)$ to grow from 1 to $K\ell + 1 \le n^{\varepsilon}$.
Note that $J_{t}(v)$ changes by at most one every step. Moreover,
\[
\Pr[|J_{t+1}(v)| = k + 1 \mid |J_t(v)| = k] \le \frac{nk}{m} \le \frac{k}{\lambda n} = p(k),
\]
as $m \ge \lambda n^2$ and there are most $n|J_t(v)|$ edges
incident to nodes in $J_t(v)$. Let $t(a,b)$  be the time it takes for $|J_t(v)|$ to grow from size $a$ to size $b$. Note that when $J_t(v)$ has size at most $b$, then the probability of $J_t(v)$ growing in a single step is upper bounded by $p(b) = b/(\lambda n)$.

We divide  the interval $[1, K\ell + 1]$  into $K$ disjoint intervals of length $\ell$. Let $[a_k, b_k]$ to be $k$th such interval.
For the count to grow from $a_k$ to $b_k$ we need $b_k - a_k = \ell$ steps in which the size of $J_t(v)$ increases. 
The probability that it takes  no more than $r_k$ steps for the count to grow from $a_k$ to $b_k$ can be upper bounded by the probability of the event $Y_k \ge \ell$, where $Y_k \sim \Bin{r_k}{p(b_k)}$ is a binomial random variable. Since 
\[
\E[Y_k] = r_k \cdot p(b_k) \le \frac{\lambda n}{2k} \cdot \frac{b_k}{\lambda n} \le \frac{\ell k}{2k} = \ell/2,
\]
the Chernoff bound from \lemmaref{lemma:bernoulli-chernoff} implies that
\begin{align*}
  \Pr[t(a_k, b_k) \le r_k] \le \Pr\left[Y_k \ge \ell \right] 
                        \le \Pr[Y_k \ge 2 \cdot \E[Y_k]] 
                         \in \exp\left(- \Omega\left(\sqrt{n^\varepsilon}\right)\right),
\end{align*}
as $\E[Y_k] \in \Theta(\sqrt{n^\varepsilon})$.
Now by using the union bound over all $K$ intervals, we have
\begin{align*}
 \Pr[|J_{t_0}(v)| \le n^\varepsilon] &\le \Pr\left[t(1, n^\varepsilon) \le t_0 \right] \\
  &\le \Pr\left[t(1, K\ell + 1) \le \sum_{k=1}^K r_k\right] \\
  &\le \sum_{k=1}^K \Pr[t(a_k, b_k) \le r_k] 
  \in \exp\left(- \Omega\left(\sqrt{n^\varepsilon}\right)\right),
\end{align*}
since $\E[Y_k], K, \ell \in \Theta(\sqrt{n^\varepsilon})$.
 Now we observe that for all sufficiently large $n$, we have
 \begin{align*}
t_0 = \sum_{k = 1}^{K} r_k &\ge \sum_{k = 1}^{K} \left(\frac{\lambda n}{2k} - 1\right) = \frac{\lambda n}{2}\left(\sum_{k = 1}^{K} \frac{1}{k}\right) - K 
= \frac{\lambda n H_k}{2} - K \ge c n \log n,
\end{align*}
where $H_K$ is the $K$th harmonic number and $c>0$ is some constant such that
\[
c \log n \le \frac{\lambda H_K}{2} - \frac{K}{n},
\]
for all sufficiently large $n$. Such a constant exists, since $K \in \Theta(\sqrt{n^{\varepsilon}})$ and $0 < \varepsilon < 1$ is a constant. Thus, the claim of the lemma follows for this $c$.
\end{proof}

Next we show that for any set $U$ of nodes of size $N$ and any constant $\varepsilon > 0$, there likely is a subset of $\Omega(N^{1-\varepsilon})$ nodes in $U$ that have not interacted by time step $o(n \log n)$.

\begin{lemma}\label{lemma:do-it-all}
  Let $0 < \varepsilon \le 1$ be a constant. Fix a set $U \subseteq V$ of size $N$. Let $X(t)$ be the number of nodes in $U$ that have not interacted by step $t \ge 0$. Then for any small enough constant $0 < c < c(\varepsilon,\lambda)$ and for all $0 \le t < c n \log N$, we have 
  \[
  \Pr[X(t) \le N^{1 - \varepsilon}] \in \exp\left(-\Omega\left(N^{1-\varepsilon}\right)\right).
  \]
\end{lemma}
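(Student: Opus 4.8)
The plan is to prove the statement by first pinning down the expected number of survivors $\E[X(t)]$ and then establishing a sharp lower-tail concentration for $X(t)$. Throughout I use the ambient density hypothesis $m \ge \lambda n^2$ together with the trivial bound $\deg(u) < n$, so that any fixed node is selected in a single step with probability $\deg(u)/m < 1/(\lambda n)$. Writing $A_u$ for the event that $u \in U$ is never selected during the first $t$ steps, we have $X(t) = \sum_{u \in U}\mathbb 1[A_u]$ and $\Pr[A_u] = (1-\deg(u)/m)^t$. For $t \le cn\log N$ the elementary estimate $(1-x)^t \ge e^{-2xt}$ (valid for $x \le 1/2$) yields $\Pr[A_u] \ge N^{-2c/\lambda}$, hence $\E[X(t)] \ge N^{1-2c/\lambda}$. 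Choosing the constant $c \le \lambda\varepsilon/8$ forces $\E[X(t)] \ge N^{1-\varepsilon/4}$, which is polynomially larger than the threshold $N^{1-\varepsilon}$; this polynomial gap is exactly what the concentration step will exploit, and it fixes the meaning of $c(\varepsilon,\lambda)$.

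The crux is the lower-tail concentration, and the obstacle is that the indicators $\mathbb 1[A_u]$ are dependent and, on a general graph, not even negatively associated: two adjacent nodes can be positively correlated (consider a graph made of one isolated edge plus a dense component), so a direct Chernoff bound is unavailable. The naive method of bounded differences is also too weak, since each of the $t \approx n\log N$ steps changes $X(t)$ by $O(1)$, and McDiarmid then only controls deviations of order $\sqrt{t}=\sqrt{n\log N}$, which is useless once $N \ll n$. Instead I would expose the interactions one at a time and study the Doob martingale $M_s = \E[X(t)\mid e_1,\dots,e_s]$, with $M_0=\E[X(t)]$ and $M_t=X(t)$. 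A short calculation gives $M_s = \sum_u (1-\deg(u)/m)^{t-s}$, the sum ranging over the $U$-nodes still untouched after step $s$, and shows each increment $\xi_s = M_s-M_{s-1}$ satisfies $|\xi_s| \le 2 + N\Delta/m = O(1)$ (at most two nodes are first-touched per step, and the horizon-shrinking drift term is $O(1)$ because $\sum_{u\in U}\deg(u)/m \le N/(\lambda n)\le 1/\lambda$). Concentration then follows from a Bernstein/Freedman-type martingale inequality, provided I can bound the predictable quadratic variation $\langle M\rangle_t = \sum_s \E[\xi_s^2 \mid \mathcal F_{s-1}]$.

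Bounding $\langle M\rangle_t$ is the main technical obstacle, and the key insight is that it is of order $N\log N$ rather than the trivial $O(t)=O(n\log N)$. Writing $\xi_s = \sum_u (1-\deg(u)/m)^{t-s}\big(\mathbb 1[u\text{ avoids }e_s]-(1-\deg(u)/m)\big)$, the conditional variance splits into diagonal and covariance terms. The diagonal terms contribute $\sum_u \tfrac{\deg(u)}{m}\sum_s (1-\deg(u)/m)^{2(t-s)} \le \sum_{u\in U}1 = N$, since each geometric sum is at most $m/\deg(u)$. The covariance terms are nonpositive except for adjacent pairs, where each contributes at most $1/m$ per step; since there are at most $\binom{N}{2}$ such pairs, only $t$ steps, and all survival weights are at most $1$, their total is at most $\binom{N}{2}\cdot t/m \le N^2 t/(2m) \le \tfrac{c}{2\lambda}N\log N$ using $N\le n$ and $m\ge\lambda n^2$. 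Hence $\langle M\rangle_t = O(N\log N)$ deterministically. Feeding $V=O(N\log N)$, $b=O(1)$, and deviation $a=\E[X(t)]-N^{1-\varepsilon}=\Theta(N^{1-\varepsilon/4})$ into Freedman's inequality yields a tail bound $\exp\!\big(-\Omega(a^2/V)\big)=\exp\!\big(-\Omega(N^{1-\varepsilon/2}/\log N)\big)$, which lies in $\exp(-\Omega(N^{1-\varepsilon}))$ for all large $N$, as required. The one delicate point in the accounting is that the inner geometric sums for the adjacent-pair covariances must be bounded by the finite horizon $t$ rather than by the infinite-horizon limit $m/\delta$; this is precisely what keeps the quadratic variation at $O(N\log N)$ and makes the exponent come out correctly.
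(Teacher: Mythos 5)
Your proof is correct, but it takes a genuinely different route from the paper's. The paper never computes $\E[X(t)]$: it tracks the hitting time of the level $N^{1-\varepsilon}$ by the non-increasing process $X(t)$ directly, splitting the descent from $N$ down to $N^{1-\varepsilon}$ into $K \in \Theta(N^{\varepsilon})$ stages of length $\ell = 2\lceil N^{1-\varepsilon}\rceil$. In the stage where the count lies in $[k\ell,(k+1)\ell]$, the per-step decrease probability is at most $(k+1)\ell/(\lambda n)$, so the number of decrease events during $t_k = \lfloor \lambda n/(4(k+1))\rfloor$ steps is stochastically dominated by a binomial with mean at most $\ell/4$; since traversing the stage requires at least $\ell/2$ decrease events (each step removes at most two nodes), a Chernoff bound gives failure probability $\exp(-\Omega(N^{1-\varepsilon}))$ per stage, a union bound over the $K$ stages finishes, and the harmonic sum $\sum_k t_k \ge cn\log N$ is what produces the time horizon. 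That argument is elementary --- only binomial Chernoff bounds --- and it entirely sidesteps the correlation structure of the survival indicators, which is precisely the issue your martingale machinery (Doob decomposition, Freedman's inequality, quadratic-variation accounting) is brought in to resolve. What your route buys in exchange: an explicit first-moment formula, a transparent view of where the density hypothesis $m \ge \lambda n^2$ enters (both in the drift term and in the adjacent-pair covariances), and a somewhat stronger tail of $\exp\left(-\Omega\left(N^{1-\varepsilon/2}/\log N\right)\right)$. Your two key technical steps check out: the diagonal variance terms telescope geometrically to $O(N)$, and your ``delicate point'' is real --- bounding each adjacent-pair covariance by $1/m$ per step over the finite horizon $t$ is necessary, since the lemma assumes no minimum degree and the infinite-horizon geometric sum could be as large as $\Theta(m)$ for low-degree pairs. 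The only slip is a harmless factor of $2$: in the variance expansion each adjacent unordered pair appears twice, so the cross term is at most $N^2t/m$ rather than $N^2t/(2m)$, which does not affect the $O(N\log N)$ bound or the conclusion.
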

\begin{proof}
  Clearly $X(0) = N$. We estimate the time until $X(t) < N^{1 - \varepsilon}$ holds.
Note that
\[
 \Pr[X(t+1) < X(t) \mid X(t) = k] \le \frac{kn}{m} \le \frac{k}{\lambda n} = p(k),
\]
as $m \ge \lambda n^2$. Moreover, $0 \le X(t)-X(t+1) \le 2$, as in each iteration the number of nodes that have interacted can increase by at most two. Let $t(a,b)$ denote the number of steps until $X(t)$ decreases from $a$ to $b$. Given that $X(t) =a$, the probability that the number of nodes that have not yet interacted decreases is at most $p(a)$. For their count to drop below $b$, we need at least $(a-b)/2$ steps in which the count decreases.

Define 
\[
\ell = 2\left\lceil N^{1-\varepsilon} \right\rceil \quad K = \left\lfloor \frac{N}{\ell} \right\rfloor - 1 \quad t_k = \left\lfloor \frac{\lambda n}{4(k+1)}\right\rfloor \quad t = \sum_{k=1}^K t_k
\]
and partition the interval $[(K+1)\ell, \ell]$ into $K$ disjoint intervals of length $\ell$. Let $[a_k,b_k]$ be the $k$th interval, where $a_k = (k+1)\ell$ and $b_k = k \ell$.

The probability of the event $t(a_k, b_k) \le t_k$ is at most the probability of the event $Y_k \le (a_k - b_k)/2 = \ell/2$, where $Y_k \sim \Bin{t_k}{p(a_k)}$ is a binomial random variable.
From the Chernoff bound of \lemmaref{lemma:bernoulli-chernoff} and the fact that $\E[Y_k] \le \ell/4$, we get that
\begin{align*}
    \Pr[t(a_k, b_k) \le t_k]  &= \Pr\left[Y_k \ge \frac{\ell}{2}\right] \le \Pr[Y_k \ge 2 \cdot \E[Y_k]] \in \exp\left(\Omega\left(N^{1-\varepsilon}\right)\right),
\end{align*}
as $\E[Y_k] \in \Theta(N^{1-\varepsilon})$.
Taking the union bound over all $K \in \Theta(N^\varepsilon)$ intervals yields
\begin{align*}
  \Pr[ X(t) \le N^{1-\varepsilon}] &= \Pr\left[ t(N, N^{1 - \varepsilon}) \le \sum_{k = 1}^K t_k \right] \\
  &\le \sum_{k=1}^K \Pr[t(a_k, b_k) \le t_k] 
    \in \exp\left(-\Omega\left(N^{1-\varepsilon}\right)\right).
\end{align*}
Now observe that for all sufficiently large $N$, we have
\begin{align*}
  t = \sum_{k = 1}^K t_k 
  & \ge \sum_{k = 1}^K \left( \frac{\lambda n}{4(k+1)} - 1\right)
  \ge \frac{\lambda n}{4} \left( \sum_{k = 1}^K \frac{1}{k+1} \right) - K \\
  &\ge \frac{\lambda n ( H_K - 1)}{4} - K 
  \ge cn\log N,
\end{align*}
where $H_K$ is the $K$th harmonic number and $c>0$ is some constant. Such a constant exists as $K \in \Theta(N^\varepsilon)$ and $0 < \varepsilon < 1$ is a constant.
\end{proof}

Now we show that the set of nodes that remain in their initial state after $o(n \log n)$ steps is likely to contain a large induced subgraph that contains a polynomially-sized tree. Let $S(t)$ denote the set of nodes that have not interacted by step $t$.

\begin{lemma}\label{lemma:tree-embedding}
  Suppose $G$ has $m \ge \lambda n^2$ edges and minimum degree $\delta \ge \lambda n^\phi$.  
  There exist constants $\varepsilon > 0$ and $c > 0$ such that for all $t \le c n \log n$, the subgraph induced by nodes in $S(t)$ contains any tree of size $n^{\varepsilon + c}$ with high probability.
\end{lemma}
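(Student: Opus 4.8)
The plan is to prove a \emph{universality} statement: with high probability every node of $G$ keeps polynomially many un-interacted neighbours up to time $t$, after which any bounded-size tree can be embedded greedily into $G[S(t)]$. Throughout I fix a small auxiliary constant $\varepsilon' > 0$ and write $d^\star = (\lambda n^\phi)^{1-\varepsilon'}$; note $d^\star \in n^{\Omega(1)}$ since $\phi$ is a fixed constant.

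First I would fix a node $v \in V$ and apply \lemmaref{lemma:do-it-all} with $U = N(v)$, whose size $N_v = \deg(v)$ satisfies $\lambda n^\phi \le N_v \le n$. The quantity $X(t)$ of that lemma then counts exactly the neighbours of $v$ lying in $S(t)$. Because $\log N_v \ge \phi\log n - O(1) = \Omega(\log n)$, choosing the global constant $c$ small enough (e.g.\ $c \le (\phi/2)\,c(\varepsilon',\lambda)$) guarantees that $t \le c n\log n$ stays inside the validity range $t < c(\varepsilon',\lambda)\,n\log N_v$ of \lemmaref{lemma:do-it-all}. The lemma then yields
\[
\Pr\big[\,v \text{ has fewer than } d^\star \text{ neighbours in } S(t)\,\big]
\le \Pr\big[X(t) \le N_v^{1-\varepsilon'}\big]
\in \exp\!\big(-\Omega(N_v^{1-\varepsilon'})\big)
\subseteq \exp\!\big(-\Omega(n^{\phi(1-\varepsilon')})\big),
\]
which is super-polynomially small because $\phi(1-\varepsilon')$ is a positive constant.

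Next I would union-bound over all $n$ nodes. Since each failure probability is $\exp(-\Omega(n^{\phi(1-\varepsilon')}))$, the event $\mathcal{G}$ that \emph{every} node has at least $d^\star$ neighbours inside $S(t)$ holds with probability at least $1 - n\exp(-\Omega(n^{\phi(1-\varepsilon')})) = 1 - o(1)$. Conditioned on $\mathcal{G}$, I would embed any tree $T$ on at most $d^\star$ vertices into $G[S(t)]$ greedily: process the vertices of $T$ in BFS order, placing the root at an arbitrary node of $S(t)$ (non-empty under $\mathcal{G}$), and placing each subsequent vertex $w$, a child of an already-placed $u$, at any neighbour of its image $g(u)$ that lies in $S(t)$ and is not yet used. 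Under $\mathcal{G}$ the node $g(u)$ has at least $d^\star$ neighbours in $S(t)$, whereas at most $|V(T)| - 1 < d^\star$ vertices have been used so far; hence an unused in-$S(t)$ neighbour always exists, and every embedded edge has both endpoints in $S(t)$, so it lies in $G[S(t)]$. Taking the target exponent $\varepsilon + c$ to be any constant strictly below $\phi(1-\varepsilon')$ ensures $n^{\varepsilon+c} \le d^\star$ for all large $n$, which proves the claim.

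The main obstacle is that the embedded vertices — and thus the neighbourhoods queried during the greedy step — are themselves determined by the random schedule, so one cannot naively treat the in-$S(t)$ degree bounds as independent of the embedding choices. The device that sidesteps this is to secure the \emph{uniform} guarantee $\mathcal{G}$ first, by union-bounding over all nodes; once $\mathcal{G}$ holds, the embedding reduces to a purely deterministic greedy argument requiring no further randomness, and the adversarial choice of which tree to embed and where to place its vertices is harmless because the degree-into-$S(t)$ bound dominates the tree size by construction.
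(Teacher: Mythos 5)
Your proposal is correct and follows essentially the same route as the paper's proof: apply \lemmaref{lemma:do-it-all} to the neighbourhood of each node (size at least $\delta \ge \lambda n^\phi$), union-bound the superpolynomially small failure probabilities over all $n$ nodes, and then embed the tree greedily in BFS order, using the fact that every placed node retains polynomially many un-interacted neighbours while the tree size is a smaller polynomial. The only (harmless) difference is that the paper additionally invokes \lemmaref{lemma:do-it-all} with $U = V$ to guarantee $|S(t)| \ge n^{1-\varepsilon}$, a step your argument correctly identifies as redundant since the per-node neighbourhood bound already makes $S(t)$ nonempty.
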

\begin{proof}
  Fix $\varepsilon < 1/2$ to be a small enough constant such that $\phi(1 - \varepsilon) > 2\varepsilon$. Note that for all sufficiently large $n$, we have the following properties:
\begin{enumerate}[label=(\alph*)]

    \item By applying \lemmaref{lemma:do-it-all} with $U = V$, we get that there exists a positive constant $c_1 = c_1(\varepsilon, \lambda)$ such that the event
\[
      |S(t)| \ge n^{1-\varepsilon}
\]
happens with high probability for any $0 \le t \le c_1 n \log n$.

\item Note that the minimum degree of $G$ is at least $\delta \ge \lambda n^\phi$ in $G$. For any $v \in V$, by applying \lemmaref{lemma:do-it-all} with the set $U = B(v)$, we get that there exists a positive constant $c_2 = c_2(\varepsilon,\lambda)$ such that the event
\[
|B(v) \cap S(t)| \ge |B(v)|^{1 - \varepsilon} \ge \delta^{1 - \varepsilon} \ge n^{\phi(1-\varepsilon)}
\]
happens with high probability for any $0 \le t \le c_2 n \log n$. By union bound, the event happens for all $v \in V$ with high probability.
\end{enumerate}
Let $c = \min\{c_1,c_2, \varepsilon\}$ and $t \le c n \log n$. Note that $c$ is a positive constant. We condition the rest of the proof that the events in (a) and (b) happen for this $t$.

Assume that $n$ is large enough so that $n^{\phi(1-\varepsilon)} > 2n^{2\varepsilon} > 2n^{\varepsilon + c} = 2k$ holds and suppose $T$ is any tree of size $k = n^{c + \varepsilon}$. We show that there is an isomorphic tree in the subgraph of $G$ induced by $S(t)$. Let $U = \{u_1, \ldots, u_k\}$ be the nodes of the tree $T$ ordered by a breadth-first search and $U_i = \{ u_1, \ldots, u_i\}$. We define $T_i$ to be the subgraph of $T$ induced by the nodes $U_i$. We show by induction that we can map each node $u_i \in V(T)$ to a node $v_i \in V(G)$ such that the set $\{v_1, \ldots, v_i\}$ induces a subgraph in $T$ that contains the edges of $T_i$.
  
  For the base case, we map $u_1$ to an arbitrary node $v_1 \in S(t)$. For the inductive step, suppose that $\{v_1, \ldots, v_i\}$ induces a subgraph that contains the edges of $T_i$. Clearly $u_{i+1}$ has at most $k$ neighbours in $T$, as $T$ has at most $k$ nodes. Let $u_j \in U_i$ be the parent of $u_{i+1}$ in $T$. Note that $j \le i$. From (b), we have that
  \[
  |S(t) \cap B(v_j)| - |U_i| \ge n^{\phi(1-\varepsilon)} - k > k,
  \]
  so we can find a node $v_{i+1} \in S(t) \cap B(v_j)$ that is not in $\{v_1, \ldots, v_i\}$. Map $u_{i+1}$ to this node. Now the subgraph induced by $\{v_1, \ldots, v_{i+1}\}$ contains the edges of $T_{i+1}$.
\end{proof}

\paragraph{Multigraphs of influencers}

To finish the proof, we need more knowledge about how the sets of influencers look like. For each $v \in V$ and $t$, we construct a directed \emph{multigraph of influencers} $\mathcal I_t(v)$ that keeps tracks of the interactions that have influenced $v$ in a more fine-grained manner than the set $I_t(v)$ of influencers alone would. The latter only keeps track of the nodes that can influence the state of $v$ at step $t$, not the sequence of interactions that have influenced the node.

We define this multigraph as follows. Let $(e_t)_{t\ge 1}$ be the stochastic schedule on the interaction graph $G$.  The set of nodes of $\mathcal I_t(v)$ will be the set of influencers $I_t(v)$ and the \emph{multiset} of (directed) edges is given by the following recurrence. Define $\mathcal I_t(v) = \emptyset$ and for all $t \ge 0$, define
\[
  \mathcal I_{t+1}(v) = \begin{cases}
    \mathcal I_{t}(v) \cup \mathcal I_{t}(u) \cup \{(v,u)_{t+1}\} & \textrm{if } e_{t+1} = (v,u) \\
    \mathcal I_{t}(v) & \textrm{otherwise.} \\
  \end{cases}
\]
In the first line, $(v,u)_{t+1}$ is understood to be a directed edge timestamped with the time step $t+1$ of its activation. That is, $\mathcal I_t(v)$ is a \textit{directed multigraph} where edges of multiplicity $> 1$ are allowed as long as they have different timestamps. However, note that there can be no self-loops.

The multigraph $\mathcal I_t(v)$ \emph{fully} captures interactions that happened before time step $t$ and which can affect the state of $v$ at step $t$. Hence, given $\mathcal I_t(v)$, we can determine the state of node $v$ at time $t$. This gives us more information than sets of influencers alone.

It will be easier to analyze $\mathcal I_t(v)$, when its elovution is played in reverse. Fix $t_0 \ge 0$ and define $\mathcal J_t(v)$ to be the induced subgraph constructed on edges of $\mathcal I_t(v)$ that were added after step $t_0 - t$ for $t \le t_0$. Observe that $\mathcal J_{t_0}(v) = \mathcal I_{t_0}(v)$ and that $\mathcal J_t(v)$ follows the recurrence given by  $\mathcal J_0(v) = (\{v\}, \emptyset)$ and
\[
\mathcal J_{t+1}(v) = \begin{cases}
  \mathcal J_t(v) \cup \{(u,w)_{t_0 - t}\}  & \textrm{if } e_{t_0-t} = (u,w) \textrm{ and } \{u,w\} \cap V(\mathcal J_t(v)) \neq \emptyset \\
  \mathcal J_t(v) & \textrm{otherwise}.
\end{cases}
\]
In the first line, $V(\mathcal{J}_t(v))$ denotes the nodes of $\mathcal{J}_t(v)$, and $\mathcal J_t(v) \cup \{(u,w)_{t_0 - t}\}$ is interpreted as adding a directed edge to the multiset as edges of $\mathcal J_t(v)$ along with adding $u$ and $w$ to the set of nodes of $\mathcal J_t(v)$, if one of them does not belong to $V(\mathcal J_t(v))$ yet. We also keep original timestamps on edges of the multigraph.

Call an interaction of two nodes $u$ and $w$ at time step $t - t_0$ \textit{internal} if both $u$ and $w$ are contained in $\mathcal J_v(t)$. Internal interactions create cycles in $\mathcal J_v(t)$. We show that internal interactions almost never happen before $\Omega(n \log n)$ steps, which means that $\mathcal J_v(t_0)$ is tree-like for small enough $t_0$.

\begin{lemma}
  \label{lemma:internal-interactions}
  For any $\varepsilon>0$, any small enough $0 < c < c(\varepsilon, \lambda)$ and any $t_0 \le cn \log n$, with high probability the following events happen for all $v \in V$ simultaneously %
  \begin{itemize}
    \item there are at most $c \log n$ internal interactions in $\mathcal J_{t_0}(v)$, and
      \item $\mathcal J_{t_0}(v)$ has size at most $n^\varepsilon$.
  \end{itemize}
\end{lemma}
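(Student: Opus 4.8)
The plan is to handle the two bullets separately, working throughout with the reverse-time multigraph $\mathcal J_t(v)$. For the size bound, I would first observe that the node set of $\mathcal J_{t_0}(v)$ is exactly $I_{t_0}(v)$, since $\mathcal J_{t_0}(v) = \mathcal I_{t_0}(v)$ and the reverse process only ever adds nodes. Consequently $|V(\mathcal J_{t_0}(v))|=|I_{t_0}(v)|$, and the bound $|V(\mathcal J_{t_0}(v))| \le n^\varepsilon$ is immediate from \lemmaref{lemma:set-of-influencers}: choosing $c$ below the constant $c(\varepsilon,\lambda)$ supplied there gives $\Pr[|I_{t_0}(v)| > n^\varepsilon] \le \exp(-\Omega(\sqrt{n^\varepsilon}))$ for every $t_0 \le cn\log n$, and a union bound over the $n$ nodes keeps this $o(1)$.

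For the internal-interaction bound, the key structural observation is that at reverse-step $t$ the interaction $e_{t_0-t}$ being processed carries a strictly smaller time index than all interactions already built into $\mathcal J_t(v)$ (which is a function of $e_{t_0},\dots,e_{t_0-t+1}$ only); since the schedule is i.i.d., $e_{t_0-t}$ is independent of $\mathcal J_t(v)$. An interaction is internal exactly when both its endpoints already lie in the current node set $V(\mathcal J_t(v))$, and conditioned on $\mathcal J_t(v)$ the probability of this is at most the number of ordered pairs inside $V(\mathcal J_t(v))$ divided by $2m$, that is, at most $|V(\mathcal J_t(v))|^2/(2m) \le |V(\mathcal J_t(v))|^2/(2\lambda n^2)$, using $m \ge \lambda n^2$.

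To turn this per-step estimate into a tail bound I would restrict to the event $A_v = \{|I_{t_0}(v)| \le n^\varepsilon\}$ from the size bound (equivalently, stop the reverse process the first time the node set would exceed $n^\varepsilon$). On $A_v$ the node set never exceeds $n^\varepsilon$ throughout, by monotonicity, so each internal-interaction indicator has conditional probability at most $p := n^{2\varepsilon}/(2\lambda n^2)$ given the past. A standard stochastic-domination argument then shows that the number $N_v$ of internal interactions, on $A_v$, is dominated by $\Bin{t_0}{p}$, whose mean is $t_0 p \le \tfrac{c}{2\lambda}\,n^{2\varepsilon-1}\log n$. Applying the Chernoff bound (\lemmaref{lemma:bernoulli-chernoff}, or the elementary $\binom{t_0}{k}p^k$ estimate) with threshold $k=c\log n$ gives
\[
\Pr[N_v \ge c\log n,\ A_v] \le \left(\frac{e\, t_0 p}{c\log n}\right)^{c\log n} \le \left(\frac{e}{2\lambda}\, n^{2\varepsilon-1}\right)^{c\log n}.
\]
In the regime $\varepsilon < 1/2$ relevant to the application the base tends to $0$, so the exponent makes this probability $n^{-\omega(1)}$, comfortably $o(1/n)$. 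Combining with $\Pr[\overline{A_v}] = \exp(-\Omega(\sqrt{n^\varepsilon}))$ and taking a union bound over all $v \in V$ yields both events simultaneously with high probability.

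The main obstacle I anticipate is the interplay between the random node-set size and the per-step collision probability: the clean bound $|V(\mathcal J_t(v))|^2/(2\lambda n^2)$ holds only while the set stays small, so the analysis must be carried out on (or stopped at) the event $A_v$ rather than unconditionally, and one must check that stopping does not affect the counts on $A_v$. The one genuinely delicate point is absorbing the union-bound factor $n$: this succeeds precisely because the Chernoff exponent is $\Theta((\log n)^2)$ rather than $\Theta(\log n)$, which is where the requirement that the per-step internal probability decay polynomially (i.e.\ $\varepsilon < 1/2$) enters.
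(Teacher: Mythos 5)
Your proposal is correct and takes essentially the same route as the paper's proof: bound the size of $\mathcal J_{t_0}(v)$ via \lemmaref{lemma:set-of-influencers}, condition on that event so the node set stays below $n^\varepsilon$ throughout the reverse process, dominate the internal-interaction count by a $\Bin{t_0}{n^{2\varepsilon}/(2m)}$ random variable, and finish with a tail bound and a union bound over all $v$. Your version is somewhat more careful about justifying the domination (reverse-time independence and the stopping argument) and uses the elementary $\left(e t_0 p / k\right)^k$ binomial tail where the paper applies its multiplicative Chernoff bound, yielding a weaker but still amply sufficient $\exp\left(-\Theta(\log^2 n)\right)$ failure probability; both arguments share the same implicit restriction to $\varepsilon < 1/2$, which is harmless since the statement for larger $\varepsilon$ follows from the statement for smaller $\varepsilon$.
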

\begin{proof}
  Consider some $v \in V$. By \lemmaref{lemma:set-of-influencers}, for any small enough $c$, we have that 
  $I_{t_0}(v)$ contains at most $n^\varepsilon$ nodes with high probability for any $t_0 \le c n \log n$. Hence, $\mathcal J_{t_0}(v)$ also contains at most $n^\varepsilon$ nodes. Thus, for all $0 \le t \le t_0$ we know that $\mathcal J_{t}(v)$ contains at most $n^\varepsilon$ nodes, since the size of $\mathcal J_t(v)$ does not exceed the size of $\mathcal J_{t_0}(v)$. Therefore, the probability that any two nodes in $\mathcal J_t(v)$ interact during any time step $1\le t \le t_0$ is at most
  \[
p = \frac{n^{2\varepsilon}}{2m} \le \frac{1}{\lambda n^{2 - 2\varepsilon}}.
  \]
  The number of internal interactions during the interval $\{1, \ldots, t_0 \}$ is stochastically dominated by a binomial random variable $X \sim \Bin{t_0}{p}$ with mean $\E[X] = t_0p$. By \lemmaref{lemma:bernoulli-chernoff} we have
  \begin{align*}
  \Pr[ X \ge c \log n] &\le \Pr[X \ge \lambda n^{1-2\varepsilon} \cdot \E[X]]
                       \le \exp\left(- \E[X] \cdot (\lambda n^{1-2\varepsilon})^2/3 \right) \\
                       &\le \exp\left(- \frac{\lambda c n^{1-2\varepsilon}\log n }{3} \right) 
                       \in \exp(- \omega(\log n) ).
  \end{align*}
  Now taking the union bound over all $v$ implies the claim.
\end{proof}

Similarly to $\mathcal I_{t_0}(v)$, given $\mathcal J_{t_0}(v)$, we can determine the state of $v$ at time step $t_0$. We say that a multigraph $\mathcal J(v)$ is a \emph{leader generating interaction pattern for node $v$} if $\mathcal{J}_{t_0}(v) = \mathcal{J}(v)$ implies that node $v$ is a leader at step $t_0$. 
We now show that we can unfold such multigraphs into trees, without increasing their size too much. 

\begin{lemma}\label{lemma:unfold}
Suppose $\mathcal J(v)$ is a leader generating interaction pattern for node $v$ with $N$ distinct nodes and $k$ internal interactions. Then there exists a leader generating interaction pattern $\mathcal K(v)$ for node $v$ that has $k-1$ internal interactions and has size at most $2N$.
\end{lemma}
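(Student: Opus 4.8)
The plan is to recast the statement in terms of the cyclomatic number of the underlying multigraph of $\mathcal J(v)$. Since each non-internal interaction attaches a previously unseen node to the growing cone while each internal interaction adds an edge between two nodes already present, the underlying connected multigraph has $N$ nodes and exactly $N-1+k$ edges; that is, $k$ equals its first Betti number, the number of independent cycles. Unfolding one internal interaction then means removing exactly one independent cycle, and the claim is that this can be achieved by splicing in a fresh, \emph{acyclic} gadget that at most doubles the number of nodes.

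Concretely, I would pick the internal interaction $e=(u,w)_s$ whose timestamp $s$ is \emph{largest}. Because the reverse process $\mathcal J_t(v)$ adds edges in order of decreasing timestamp and $e$ is the first internal interaction encountered, the cone $\mathcal D$ built from the interactions with timestamp greater than $s$ contains no internal interaction and is hence a tree rooted at $v$; moreover $u,w\in\mathcal D$ (this is exactly what makes $e$ internal), so $e$ together with the unique $u$--$w$ path in $\mathcal D$ forms a single cycle. Let $g$ be the edge of $\mathcal D$ through which $w$ forwards its post-interaction state towards $v$. I would then adjoin to $\mathcal J(v)$ a fresh, disjoint copy $W'$, built on brand-new agents, of the part of the pattern that computes the state of $w$ immediately after $e$ (its causal support at time $s$), so that the copy's root $w'$ ends in exactly the state $w$ had after $e$; finally I reroute $g$ so that it reads from $w'$ instead of $w$.

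Correctness of the output is then immediate: since $\mathcal J_{t_0}(v)$ (equivalently $\mathcal K(v)$) determines the state of $v$, and the only change is that $g$ now draws the identical state from $w'$, node $v$ still ends as a leader, so $\mathcal K(v)$ is again leader generating. For the size bound, the copied support omits at least the root $v$ and the rerouted edge $g$, hence is a proper subpattern and introduces at most $N$ new nodes, giving at most $2N$ in total. For the count of internal interactions, suppose the spliced gadget is a tree with $p$ nodes and $p-1$ edges; rerouting $g$ keeps the graph connected and changes neither node nor edge totals, so
\[
(E+p-1)-(N+p)+1 = E-N = (N-1+k)-N = k-1,
\]
and the new pattern has exactly $k-1$ independent cycles, i.e. $k-1$ internal interactions.

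The delicate step, and the one I expect to be the main obstacle, is guaranteeing that the spliced gadget is genuinely acyclic, so that the cyclomatic number drops by \emph{exactly} one rather than staying the same (if the gadget re-imports the cycle through $e$) or dropping by more. The difficulty is that the causal support of $w$ may reconverge with that of $u$ at a common ancestor, which would make a naive full copy contain the fundamental cycle of $e$ itself. The fix is to duplicate only the nodes lying strictly on $w$'s side of such reconvergences and to let $W'$ \emph{reuse} the shared ancestors already present in $\mathcal J(v)$, so that $W'$ is a tree hanging off existing nodes; the largest-timestamp choice, together with the fact that the $u$--$w$ path runs through $\mathcal D$ towards $v$ rather than through the shared ancestry, is what makes this separation possible. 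Verifying that this truncated copy still reproduces $w$'s state and that no other internal interaction is duplicated is where the bulk of the careful bookkeeping will lie.
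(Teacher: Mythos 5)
Your proposal contains a genuine gap, and it sits exactly where you flagged it: the acyclicity of the spliced gadget. The root cause is your choice of the internal interaction with the \emph{largest} timestamp. That choice makes the future cone $\mathcal D$ a tree, but it leaves the causal support of $w$ at time $s$ completely uncontrolled: it can contain all of the other $k-1$ internal interactions, and it contains a cycle whenever the causal pasts of $u$ and $w$ reconverge at a common ancestor. A faithful copy of that support on fresh agents therefore duplicates those cycles, and the cyclomatic number of the new pattern is $k-1+c$ where $c$ is the number of independent cycles inside the copied support; nothing is gained unless $c=0$. Your proposed repair — duplicating only ``$w$'s side'' and letting $W'$ reuse shared ancestors already present in $\mathcal J(v)$ — is unsound in the population model, for two reasons. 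Semantically, interactions are not read-only: every interaction updates \emph{both} participants, so attaching extra interactions to an existing ancestor $z$ changes the state $z$ carries into its subsequent (original) interactions, corrupting the rest of the pattern, and moreover $z$ is no longer in the state it was in when it originally fed $w$'s chain, so the copy $w'$ does not reproduce $w$'s post-$e$ state either. Topologically, if $w$'s side reconverges with the rest of the pattern at two or more ancestors, a connected gadget attached at all of them re-creates cycles, so the Betti number need not drop even on paper. There is also a smaller soundness issue: rerouting the single edge $g$ does not sever the causal influence of $w$'s post-$e$ state on $v$ when $w$ has several interactions after time $s$ in the pattern (e.g.\ when $u$ lies below $w$ in $\mathcal D$), so one must in effect replace $w$ by $w'$ in \emph{all} of its post-$s$ interactions, which again forces a faithful — hence possibly cyclic — copy of the full support.

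The paper's proof makes the opposite choice, and that choice is what removes the obstacle: it picks the internal interaction with the \emph{smallest} timestamp $r$. Then any cycle among interactions with timestamps below $r$ would force its minimum-timestamp edge to be internal, contradicting minimality; consequently the causal pasts $\mathcal I(u)$ and $\mathcal I(w)$ are vertex-disjoint trees. These can be copied wholesale onto fresh anonymous agents (faithfulness is automatic since all agents start in the same state and the copies are isomorphic interaction sequences), the edge $e$ is deleted, and two new interactions $(u,w')$ and $(u',w)$ are appended so that \emph{both} $u$ and $w$ receive exactly the update the original interaction gave them. Each copy is a tree attached by a single new edge, so exactly one cycle is destroyed and none created, and disjointness gives the $2N$ bound. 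Your cyclomatic-number framing and the overall ``unfold one cycle by splicing in a copied cone'' template match the paper's, but the construction only closes once the spliced cones are guaranteed tree-like and disjoint, which is precisely what the smallest-timestamp choice — and nothing in your largest-timestamp variant — provides.
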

\begin{proof}
  Consider an internal interaction between nodes $u$ and $w$ in $\mathcal J(v)$ at time $r$, such that $r$ has the smallest value.
  Consider all interactions that influenced nodes $u$ and $w$ before step $r$. Because the interaction between $u$ and $w$ is an internal interaction with the smallest timestamp, the interactions that influenced nodes $u$ and $w$ before step $r$ are given by some multigraphs of influencers $\mathcal I(u)$ and $\mathcal I(w)$, which are trees disjoint in edges and vertices. We now construct a new multigraph $\mathcal{K}$ as follows:
\begin{itemize}

   \item First, remove the interaction between $u$ and $w$ at step $r$ from $\mathcal{J}(v)$. Additionally, for every interaction with timestamp larger than $r$ (i.e. interaction that happened after $u$ and $w$ interacted), shift the value of its timestamp ahead by $2r+1$. This way, every interaction in the current multigraph has timestamp either less than $r$ or greater than $3r+1$. 

   \item Second, create isomorphic and disjoint copies $\mathcal I(u')$ and $\mathcal I(w')$ of the multigraphs of influencers $\mathcal I(u)$ and $\mathcal I(w)$, respectively. Add these interactions (and the new nodes that take part in these interactions) to the multigraph $\mathcal{K}$. Shift all timestamps of $\mathcal I(u')$ ahead by $r$ and shift all timestamps of $\mathcal I(w')$ ahead by $2r$. That way, timestamps of all interactions given by $\mathcal I(u')$ is between $r$ and $2r -1$, timestamps of all interactions given by $\mathcal I(w')$ are between $2r$ and $3r -1$.

   \item Finally, add the interactions $(u,w')$ and $(u',w)$ at time step $3r$ and $3r+1$.
\end{itemize}

\begin{figure}[t]
    \centering
    \begin{subfigure}[b]{0.4\textwidth}
        \centering
        \includegraphics[width=\textwidth]{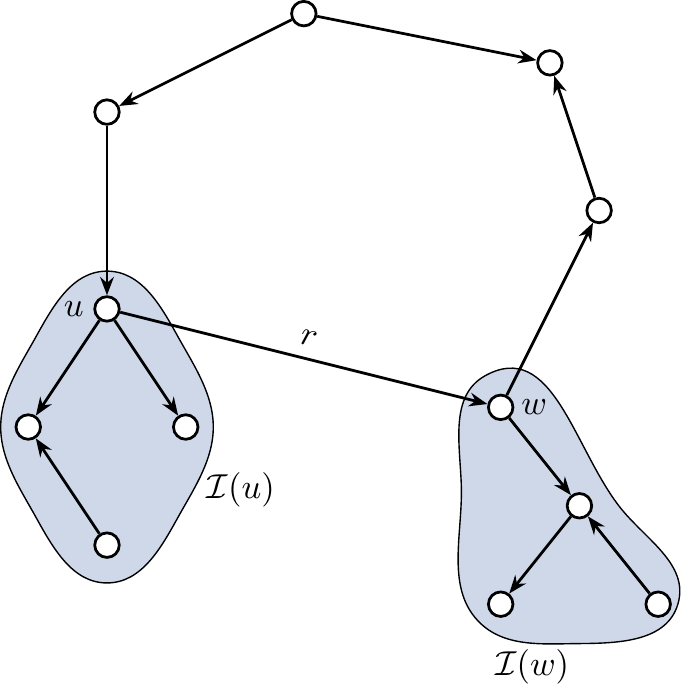}
        \vspace{0.5cm}
        \caption{}
        \label{fig:unfold-before}
    \end{subfigure}
    \hfill
    \begin{subfigure}[b]{0.5\textwidth}
        \centering
        \includegraphics[width=\textwidth]{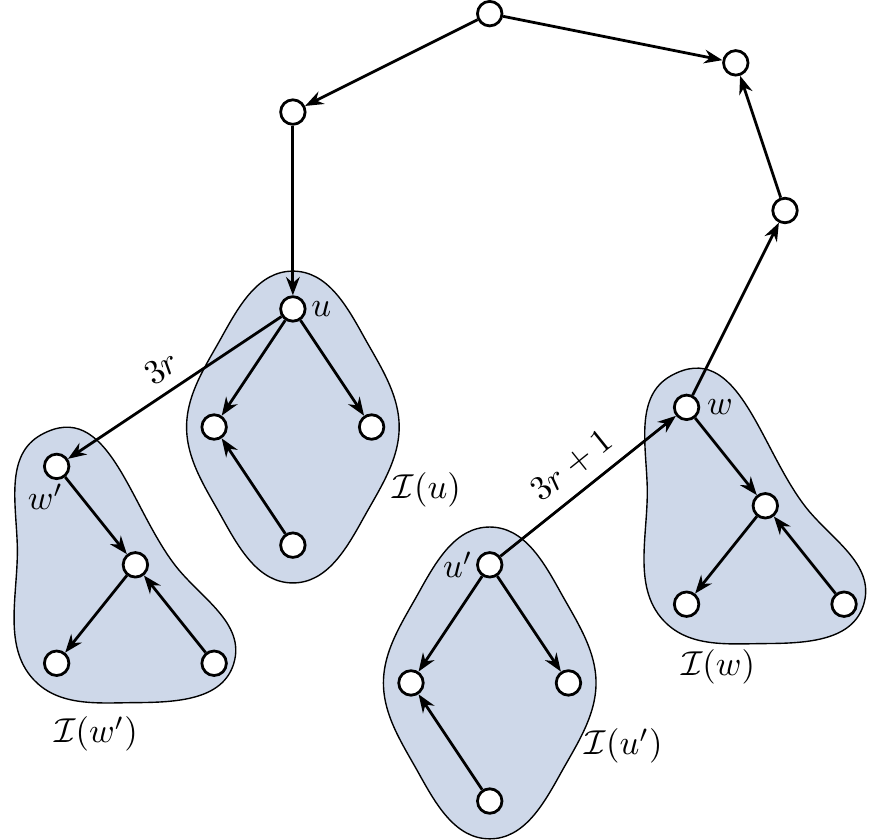}
        \caption{}
        \label{fig:unfold-after}
    \end{subfigure}
    
    \caption{Illustration of the proof strategy of \lemmaref{lemma:unfold}. (a) Multigraph $\mathcal J(v)$ with an internal interaction between $u$ and $w$. (b) Multigraph $\mathcal K$ with one internal interaction less.}
    \label{fig:unfold}
\end{figure}

See \figureref{fig:unfold} for an illustration of the above construction. First, note that interactions in the graph $\mathcal K$ have distinct timestamps. Additionally, note that the new interactions at steps $3r$ and $3r+1$ will update the states of nodes $u$ and $w$ the same way in $\mathcal{K}$, as the interaction $(u,w)$ did previously, because $u'$ and $w'$ are influenced by isomorphic interaction sequences as $u$ and $w$ were in $\mathcal{J}(v)$, and since the nodes are anonymous in the population protocol model and have no inputs in the leader election problem.
In particular, this means that node node $v$ will have the same state after the system has executed interactions in $\mathcal{K}$ as it would have when executing interactions from $\mathcal{J}(v)$. Thus, $\mathcal{K}$ is also a leader generating pattern for $v$.

Finally, observe that $\mathcal{K}$ has at most $2N$ vertices: by adding $\mathcal I(u')$ and $\mathcal I(w')$ to $\mathcal{K}$ we at most doubled the number of vertices in $\mathcal J(v)$, as $\mathcal J(v)$ contained its copies of $\mathcal I(u)$ and $\mathcal I(w)$ initially. Moreover, $\mathcal{K}$ has at most $k-1$ internal interactions, as we removed at least one internal interaction of $\mathcal{J}(v)$ without adding any new ones. 

\end{proof}

We are now ready to prove the lower bound.

\denselowerbound*
\begin{proof}
  By \lemmaref{lemma:tree-embedding}, there exist constants $\varepsilon > 0$ and $c_0>0$ such that for any $t \le c_0 n \log n$, the set $S(t)$ contains any tree of size $n^{\varepsilon+c_0}$ with high probability. Moreover, let $c_1>0$ be the constant obtained from \lemmaref{lemma:internal-interactions} with this $\varepsilon$. Let $c = \min \{c_0, c_1\}$.
  Let $\mathcal{P}$ be any a leader election protocol on $G$. Consider the configuration $x$ at any fixed time step $t \le c n \log n$.  Note that for such $t$, the statements of \lemmaref{lemma:internal-interactions} and \lemmaref{lemma:tree-embedding} hold with high probability; we condition the rest of the proof on the event that both statements hold. We show that in such case $x$ is not a stable configuration. 
  
  Suppose $x$ is stable and $v$ is the elected leader. Now $\mathcal J(v) = \mathcal J_{t}(v)$ is a leader generating pattern. By \lemmaref{lemma:internal-interactions}, $\mathcal J(v)$ has at most $c \log n$ internal interactions and size at most $n^\varepsilon$ with high probability.
  By applying \lemmaref{lemma:unfold} repeatedly for $c \log n$ times, we can obtain a new pattern $\mathcal J'(v)$ with size $n^{\varepsilon + c}$ without any internal interactions. That is, the edges in $\mathcal J'(v)$ (when the orientation of the edges is ignored) form a tree $T$ of size at most $n^{\varepsilon + c}$; note that absence of internal interactions also means absence of edges of multiplicity higher than 1.

  By \lemmaref{lemma:tree-embedding}, the subgraph induced by $S(t)$ contains a subgraph isomorphic to $T$. In particular, this means that we can construct a leader generating interaction pattern $\mathcal J(w)$ isomorphic to $\mathcal J(v)$ for some node $w \in S(t)$ using only nodes of $S(t)$.
  This means that there is an execution, where node $w \in S(t)$ will be elected as a leader. 
  Hence, $x$ is not a stable configuration with high probability.
Thus, the stabilization time $T_\mathcal{P}$ of the protocol $\mathcal{P}$ satisfies
\[
\E[T_\mathcal{P}] = \sum_{i=0}^\infty \Pr[T_\mathcal{P} > i] \ge \sum_{i=0}^t \Pr[ T_\mathcal{P} > i] \ge Ct 
\]
for some constant $C>0$. Thus, $\E[T_\mathcal{P}] \in \Omega(n \log n)$.
\end{proof}

\section{Lower bound for constant-state protocols}\label{sec:constant-lb}

In this section, we give lower bounds for constant-state protocols that stabilize in finite expected time on any connected graph. In the clique, the classic approach has been to utilize the so-called \emph{surgery technique} of Doty and Soloveichik~\cite{doty2018stable}, later extended by Alistarh et al.~\cite{alistarh2017time} to show lower bounds for super-constant state protocols. 
Roughly, surgeries consists of carefully ``stitching together'' transition sequences, in order to completely eliminate states whose count decreases too fast (e.g., the leader state), thus resulting in incorrect executions (e.g., executions without a leader).

When moving beyond cliques, applying surgeries is difficult. The key challenge is that in addition to keeping track of the counts of states, we also need to control for the \emph{spatial distribution} of generated states in order to determine if a given configuration is stable. For example, if we know that an interaction $(a,b) \to (c,d)$ produces a leader, then in the case of a clique it suffices to check if states $a$ and $b$ are present in the overall population to determine if this interaction can produce a new leader. However, in the case of general interaction graphs, the rule $(a,b) \to (c,d)$ can only produce a leader if some nodes with states $a$ and $b$ in the configuration are \emph{adjacent}.

We circumvent this obstacle by considering a \emph{random graph setting}, where the interaction graph itself is \emph{probabilistic}.  Instead of showing a lower bound for a given graph, we give a lower bound that holds in \emph{most} graphs, where ``most'' is interpreted as having graphs coming from a certain probability distribution. We will  focus on the  Erd\H{o}s--R\'enyi random graph model $G_{n,p}$. {As we are only interested in connected graphs $G$, we adopt the convention that stabilization time $T_{\mathcal{P}}(G) = \infty$ if $G$ is disconnected.} 
Our main result is the following.

\begin{restatable}{theorem}{nsquarelowerbound}\label{thm:nsquared-lb}
  Fix a constant $p > 0$ and let $G \sim G_{n,p}$ conditioned on $G$ being connected. Then there is no protocol $\mathcal P$ such that all of the following hold:
  \begin{enumerate}
      \item the protocol $\mathcal P$ stabilizes on any connected graph;
      \item transition function of $\mathcal P$ is independent of underlying communication graph;
      \item the expected stabilization time of $\mathcal P$ is $o(n^2)$.
  \end{enumerate}
\end{restatable}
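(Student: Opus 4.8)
The plan is to argue by contradiction via an extension of the surgery technique of Doty and Soloveichik~\cite{doty2018stable} to dense random graphs. Unlike the $\Omega(n\log n)$ bound of \theoremref{thm:dense-lower-bound-general}, which exploits the large set $S(t)$ of \emph{untouched} nodes that survives for $t = O(n\log n)$ steps, here no such set remains: after $\gamma n^2$ steps every node has interacted $\Theta(\gamma n)$ times in expectation, so $S(t)$ is essentially empty. Consequently the argument must track \emph{state counts} rather than untouched nodes, since in a configuration with $O(1)$ states and $n$ nodes some state necessarily retains linear count even once all nodes have interacted. Suppose for contradiction that a protocol $\mathcal P$ satisfying (1) and (2) stabilizes in $o(n^2)$ expected steps on $G\sim G_{n,p}$. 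By Markov's inequality, for any fixed $\gamma>0$ we have $\Pr[\text{stab.\ time}\le \gamma n^2]\to 1$, so with probability $1-o(1)$ a stable configuration $y$ with a unique leader is reached within $\gamma n^2$ steps. I would derive a contradiction by exhibiting a transition sequence \emph{along edges of $G$}, continuing from $y$, that produces a second leader, thereby contradicting the stability of $y$.

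\textbf{The adjacency lemma.} The one concrete new ingredient is a purely graph-theoretic fact about $G_{n,p}$ that decouples the randomness of the graph from that of the schedule. Namely, for any constants $p,\beta>0$, with probability $1-o(1)$ over $G\sim G_{n,p}$ the following holds \emph{simultaneously for all} vertex sets $A,B$ with $|A|,|B|\ge \beta n$: there are $\Omega(n^2)$ edges with one endpoint in $A$ and one in $B$ (an internal edge if $A=B$), and in particular at least one usable edge. This follows from a Chernoff bound (the relevant edge count is a binomial with mean $\ge p\beta^2 n^2$) together with a union bound over the at most $4^n$ pairs of subsets, since $4^n\exp(-\Omega(n^2))\to 0$. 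I would condition on this event for the remainder of the argument; because it quantifies over \emph{all} large sets, it holds independently of whatever (schedule-dependent) configuration $y$ turns out to be, which is precisely what removes the correlation between $G$ and the execution.

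\textbf{Count-robust surgery on $G$.} The heart of the proof is a reduction showing that clique-style reachability remains valid on $G$ as long as every transition is applied between two \emph{abundant} states. Formally, the plan is to prove: if $c\Rightarrow c'$ on the clique $K_n$ through a sequence in which, at every intermediate step, each interacting state has current count at least $\beta n$, then $c\Rightarrow c'$ is also realizable on $G$ under the adjacency event. The proof is an induction on the length of the sequence: to fire a transition between the current $a$-class and $b$-class — sets of size $\ge\beta n$ — the adjacency lemma furnishes a usable edge, and the scheduler may sample either orientation; applying it alters only $O(1)$ counts, so abundance is preserved across the (short) surgery. With this reduction in hand I would invoke the surgery technique of~\cite{doty2018stable,alistarh2017time}: a configuration reached in only $o(n^2)$ steps still retains its leader-generating potential, so there is a short transition sequence, using only linearly-abundant precursor states present in $y$, that creates an additional leader. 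By the reduction this sequence runs on $G$, yielding a two-leader configuration reachable from $y$, contradicting its stability. Since this situation arises with probability $1-o(1)$ under the assumption of $o(n^2)$ expected stabilization, the assumption is untenable and $\E[\text{stab.\ time}]\in\Omega(n^2)$; conditioning on $G$ being connected changes probabilities only by a $1/(1-o(1))$ factor and is harmless.

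\textbf{The main obstacle.} The crux is making the surgery \emph{count-robust}: the power of the clique argument stems from the fact that \emph{any} two present states can be made to interact, whereas on $G$ the adjacency lemma only guarantees interactions between states that are \emph{both} linearly abundant. Thus the real work is to re-engineer the Doty--Soloveichik surgery so that it never requires an interaction between two rare (sub-linear count) states, and to confirm that the bottleneck steps it relies on — meetings of rare states — still take $\Omega(n^2)$ expected time on $G_{n,p}$, where the interaction rate between two bounded-count state classes is only $O(1/n^2)$. Establishing that a configuration reached in $o(n^2)$ steps must retain enough abundant precursor states to spawn a second leader, and that this can be done within a sublinear number of abundant-only transitions, is the most delicate part of the adaptation.
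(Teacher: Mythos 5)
Your proposal contains a genuine gap at its core step: the claim that a stable configuration $y$ reached in $o(n^2)$ steps ``retains its leader-generating potential'' via a transition sequence on $G$ that uses only linearly-abundant states. This cannot be made to work. First, the claim is essentially the negation of what a fast protocol forces: what must be (and in the paper is) proven is that in any stable configuration reached in $o(n^2)$ steps, \emph{every} leader-generating set contains a state of \emph{constant} count, below $2^{|\Lambda|}$ (\lemmaref{lemma:low-count-states}). If instead some leader-generating set were entirely in linear count and could be fired by abundant-abundant interactions on $G$, then $y$ would simply not be stable; so any hypothetical leader-producing sequence out of a genuinely stable $y$ necessarily involves constant-count states, exactly the regime where your adjacency lemma is silent. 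A union bound over pairs of constant-size vertex sets fails (indeed, w.h.p.\ there \emph{exist} constant-size sets with no edge between them when $p<1$), and — the deeper problem — whether the specific edges among the relevant constant-count agents are present in $G$ is \emph{correlated} with the event that the execution reached $y$. Handling this correlation for constant-size edge sets is the actual technical crux, and it is what the paper's \lemmaref{lemma:exec-prob} and \lemmaref{lemma:configuration-condition} supply: conditioned on a configuration being reached at time $t$ and on $m \ge N$, any fixed edge set $F$ is fully present with probability at least $p^{|F|}(1-t/N)^{|F|}$, which is $\Theta(1)$ for $|F| \in O(1)$, $t \in o(n^2)$, $N \in \Theta(n^2)$. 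Your adjacency lemma for linear-size sets (the paper's \lemmaref{lemma:dense-erdos-graphs}) is used only for the easier step of showing that fully dense configurations are reached quickly (\lemmaref{lemma:fully-dense-config}).

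Second, the surgery cannot be re-engineered to be ``count-robust'' in your sense, and the paper does not attempt this. The purpose of the Doty--Soloveichik surgery is to drive the constant-count blocking states $\xi$ to count \emph{zero}, and the interactions it stitches in pair \emph{rare} agents with supplied partners (the transitions $\alpha_i$ of \lemmaref{lemma:ordering} consume the rare state $d_i$) — precisely not abundant-abundant interactions. The paper's route avoids performing any surgery on $G$ at all: since the protocol must stabilize on \emph{every} connected graph and every execution on $G$ is also a valid transition sequence on the clique, the bottleneck-free sequences from fully dense $x_k$ to low-count stable $y_k$ (\lemmaref{lemma:bottleneck-free-transition}) are transferred to the clique, the standard surgery is run there (\lemmaref{lemma:ordering}, \lemmaref{lemma:claims}), and one obtains a clique-reachable configuration with zero count of $\xi$ — a leaderless configuration from which no leader can ever again be generated. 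The contradiction is thus with correctness on the \emph{clique}, and it is a \emph{dead leaderless} configuration, not the two-leader configuration reachable from $y$ on $G$ that you aim for; second-leader arguments of the latter kind rely on untouched symmetric nodes and, as you yourself observe, cannot reach beyond $\Omega(n \log n)$.
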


 This result generalizes the lower bound of Doty and Soloveichik~\cite{doty2018stable} from cliques to {dense} random graphs. As such, we follow a similar approach, but provide new ideas to deal with the structure of the interaction graph. First, we show that any protocol starting from a uniform initial configuration, passes through a ``fully dense'' configuration with very high probability on a sufficiently-dense Erd\H{o}s--R\'enyi random graph. Second, we show that if the protocol stabilizes too fast, then there exist reachable configurations with many states in low count. Finally, we use  surgeries to show that such a  protocol must fail even on the clique.

\subsection{First step: Reaching fully dense configurations}

Fix the protocol $\mathcal{P}$ as in \theoremref{thm:nsquared-lb}. Let $\Lambda$ be the set of states of the protocol.
Without loss of generality, we assume that all nodes start in the same initial state $s_\text{init}$, and that every state in $\Lambda$ is producible by some schedule.

For $\alpha > 0$, a configuration $x$ is said to be $\alpha$-dense with respect to a set $\Lambda' \subseteq \Lambda$ if every state in $\Lambda'$ is present in count at least $\alpha n$. A configuration is \emph{fully $\alpha$-dense} with respect to  $\Lambda'$ if additionally all other states have count zero. We now show that the protocol reaches a fully $\alpha$-dense configuration with respect to $\Lambda$ with probability $1 - \exp(-\Omega(n))$ on a random Erd\H{o}s--R\'enyi graph. For any two sets $S,T \subseteq V(G)$, we write $e(S,T)$ for the number edges between nodes in $S$ and $T$.

\begin{restatable}{lemma}{DenseErdosGraphs}
  \label{lemma:dense-erdos-graphs}
  Let $p,c > 0$ be constants and $G \sim G_{n,p}$. Then for any disjoint $S, T \subseteq V(G)$ such that $|S|, |T| > cn$ the events 
  \begin{itemize}
  \item $p|S||T|/2 \le e(S,T) \le 3p|S||T|/2$ and  \item $p|S|^2/4 \le  e(S,S) \le 3p|S|^2/4$ 
  \end{itemize}
  hold with probability $1 - \exp(-\Omega(n))$.
\end{restatable}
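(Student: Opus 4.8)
The plan is to recognize $e(S,T)$ and $e(S,S)$ as binomial random variables, apply the Chernoff concentration of \lemmaref{lemma:bernoulli-chernoff} to each fixed pair, and then union-bound over all eligible pairs so that the events hold simultaneously. First I would fix disjoint $S,T \subseteq V(G)$ with $|S|,|T| > cn$. Since $S$ and $T$ are disjoint, there are exactly $|S|\cdot|T|$ potential edges between them, each present independently with probability $p$, so $e(S,T) \sim \Bin{|S||T|}{p}$ with mean $\mu = p|S||T| \ge pc^2 n^2$. Likewise, the $\binom{|S|}{2}$ potential edges inside $S$ are mutually independent, so $e(S,S) \sim \Bin{\binom{|S|}{2}}{p}$ with mean $\mu' = p\binom{|S|}{2} = p|S|(|S|-1)/2$. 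In both cases the mean is $\Theta(n^2)$, which is what ultimately yields an exponentially small failure probability.

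Next I would apply concentration. For $e(S,T)$ the target window $[\,p|S||T|/2,\ 3p|S||T|/2\,]$ is exactly $[(1-\tfrac12)\mu,\ (1+\tfrac12)\mu]$, so the lower-tail bound \lemmaref{lemma:bernoulli-chernoff}(b) with $\lambda = 1/2$ and the standard multiplicative upper-tail bound $\exp(-\mu\lambda^2/3)$ (valid for $0<\lambda\le 1$) with $\lambda = 1/2$ each fail with probability at most $\exp(-\Omega(\mu)) = \exp(-\Omega(n^2))$. For $e(S,S)$ one must check that the mean lies well inside the stated window: for $|S|\ge 2$ we have $p|S|^2/4 \le \mu' \le p|S|^2/2 \le 3p|S|^2/4$, and both gaps $\mu'-p|S|^2/4 = p|S|(|S|-2)/4$ and $3p|S|^2/4 - \mu' = p|S|(|S|+2)/4$ are a constant fraction of $\mu' = \Theta(n^2)$ (each ratio tends to $1/2$ as $|S|\to\infty$). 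Hence leaving the window again forces a constant-relative deviation from $\mu'$, and \lemmaref{lemma:bernoulli-chernoff} gives failure probability $\exp(-\Omega(n^2))$.

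Finally, every eligible pair $(S,T)$ of disjoint sets corresponds to assigning each vertex to $S$, $T$, or neither, so there are at most $3^n$ of them. A union bound shows that the probability that \emph{some} eligible pair violates either event is at most $3^n \cdot \exp(-\Omega(n^2)) = \exp(-\Omega(n^2)) \subseteq \exp(-\Omega(n))$, giving the claim (simultaneously for all eligible pairs). I expect no serious obstacle here, as the argument is essentially routine Chernoff plus union bound; the only mildly delicate points are the bookkeeping for $e(S,S)$ — where the mean is $p\binom{|S|}{2}$ rather than exactly $p|S|^2/2$, so one must verify the window still brackets the mean with constant-fraction margins — and ensuring the $3^n$ union factor is harmlessly absorbed by the $\exp(-\Omega(n^2))$ per-pair bound.
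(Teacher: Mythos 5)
Your proposal is correct and follows essentially the same route as the paper's proof: identify $e(S,T)$ and $e(S,S)$ as binomial random variables with means of order $\Theta(n^2)$, apply multiplicative Chernoff bounds with relative deviation $1/2$, and union-bound over exponentially many set pairs (you count $3^n$, the paper uses the looser $4^n$; both are absorbed by the $\exp(-\Omega(n^2))$ per-pair bound). Your explicit check that the window $[p|S|^2/4,\,3p|S|^2/4]$ brackets the mean $p|S|(|S|-1)/2$ with constant-fraction margins is bookkeeping the paper glosses over with ``an analogous argument,'' but it is the same argument.
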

\begin{proof}
  Note that in an  Erd\H{o}s--R\'enyi random graph, the number $e(S,T)$ of edges between $S$ and $T$ follows the binomial distribution $\Bin{|S||T|}{p}$ with mean $\mu = p |S||T| \in \Theta(n^2)$. Using standard Chernoff bounds, we get that
\[
\Pr\left[\left|e(S,T) - \mu \right| < \frac{\mu}{2}\right] > 1 - 2 \cdot \exp\left(-\frac{\mu}{12}\right) \in 1 - \exp\left(-\Omega\left(n^2\right)\right).
\]
The first claim follows by applying the union bound over at most $4^n$ pairs of sets. 
Observe that $e(S,S)$ follows a binomial distribution of $\Bin{|S|\left(|S| - 1\right)/2}{p}$ with mean $\mu' = p |S|\left(|S| - 1\right)/2 \in \Theta(n^2)$.
Therefore, the second claim also follows by an analogous argument.
\end{proof}

\begin{lemma}\label{lemma:fully-dense-config}
  Let $p>0$ be a constant and $G \sim G_{n,p}$. Then there exists a constant $\alpha = \alpha(p, |\Lambda|)$ such that an execution of $\mathcal{P}$ on $G$ reaches a fully $\alpha$-dense configuration with respect to $\Lambda$ in $O(n)$ steps  with probability $1-\exp(-\Omega(n))$. 
\end{lemma}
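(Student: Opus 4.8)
The plan is to build the dense configuration up \emph{generation by generation}, exploiting the fact that on a dense graph any two abundant states interact at a constant rate per step, so that the count‑vector dynamics on $G$ behave like the clique dynamics analysed by Doty and Soloveichik~\cite{doty2018stable}, up to constant factors.

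\textbf{Setup and graph conditioning.} First I would condition on the ``typical graph'' event of \lemmaref{lemma:dense-erdos-graphs}: for all disjoint node sets $S,T$ with $|S|,|T|\ge cn$ we have $e(S,T)=\Theta(n^2)$ and $e(S,S)=\Theta(n^2)$, an event of probability $1-\exp(-\Omega(n))$ that we henceforth treat as deterministic. The key consequence is a \emph{constant production rate}: if distinct states $a,b$ each have count at least $cn$, then a single scheduler step samples an ordered edge from an $a$-node to a $b$-node with probability at least $e(S_a,S_b)/(2m)\ge q$ for a constant $q=q(p,c)>0$, and the analogous bound via $e(S,S)$ handles the case $a=b$. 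Thus whenever the two inputs of a transition are abundant, that transition fires with constant probability each step. I would also record the matching \emph{consumption} bound: since $\Delta\le n$ and $m=\Theta(n^2)$, a fixed state is consumed in a given step with probability $O(1)$, so over any window of $\ell$ steps its total loss is $O(\ell)$ except with probability $\exp(-\Omega(\ell))$ by \lemmaref{lemma:bernoulli-chernoff}.

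\textbf{Generational production.} After discarding non-producible states, I would organise $\Lambda$ into generations $\Lambda_0=\{s_\text{init}\}\subseteq\Lambda_1\subseteq\cdots\subseteq\Lambda_k=\Lambda$, where $\Lambda_{j+1}$ adds every output state of a transition whose two inputs lie in $\Lambda_j$; since every state is producible and a production schedule is a chain of such transitions, this exhausts $\Lambda$ after $k\le|\Lambda|=O(1)$ generations. The inductive claim is that there are constants $\alpha_0>\cdots>\alpha_k>0$ and times $t_j=O(jn)$ such that, with probability $1-\exp(-\Omega(n))$, every state of $\Lambda_j$ has count at least $\alpha_j n$ at time $t_j$. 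The base case is immediate ($n$ copies of $s_\text{init}$). For the step, over a window of $\Theta(n)$ steps each new state $s\in\Lambda_{j+1}\setminus\Lambda_j$ is produced by a transition whose inputs stay abundant, so by the constant production rate and \lemmaref{lemma:bernoulli-chernoff} it accumulates at least $2\alpha_{j+1}n$ copies except with probability $\exp(-\Omega(n))$; a union bound over the $O(1)$ new states and $O(1)$ generations preserves the failure probability, and the total running time is $t_k=O(n)$.

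\textbf{Simultaneity --- the main obstacle.} The delicate point, and the technical heart of the proof, is that being fully $\alpha$-dense with respect to \emph{all} of $\Lambda$ requires every state, including early-generation ones such as $s_\text{init}$, to exceed $\alpha n$ at one \emph{common} time. This is not automatic: the random scheduler fires every enabled transition, so a state is consumed at a rate proportional to its count, and a naive produce-then-decay bound allows an early state to fall by $\Theta(n)$ during the $\Theta(n)$ steps needed to build the later generations. I would resolve this by a quantitative rate-balance (crossover) argument rather than by stopping reactions, which the model forbids: because the inputs of its producing transitions remain abundant, each producible state is continuously \emph{re-produced}, so its count is driven toward a level $\Theta(n)$ at which production balances consumption and is then sustained at $\Omega(n)$ over a linear-length interval; since all $k=O(1)$ generations appear within the first $O(n)$ steps while every earlier state is still being sustained, these intervals overlap and a common time gives the fully $\alpha$-dense configuration. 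I expect the crux to be making this balance uniform over all states simultaneously --- controlling the accumulated consumption across all generations and choosing the thresholds $\alpha_j$ (shrinking, if necessary geometrically) and window lengths so that production provably dominates --- since this is exactly the place where the clique surgery analysis must be adapted to the randomness of the interaction graph.
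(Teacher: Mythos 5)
Your overall architecture matches the paper's proof: condition on the regularity events of \lemmaref{lemma:dense-erdos-graphs}, build the states up along a chain $\{s_{\text{init}}\}=\Lambda_0\subseteq\cdots\subseteq\Lambda_k=\Lambda$ of generations, and run a Chernoff argument over $\Theta(n)$-step windows. The gap is in the consumption bound, and it propagates into your treatment of the one issue you correctly single out as the heart of the matter. The bound you record --- ``a fixed state is consumed in a given step with probability $O(1)$, so total loss over $\ell$ steps is $O(\ell)$'' --- is too weak: with windows of length $\Theta(n)$ it permits a loss of $\Theta(n)$, which is exactly why you then dismiss the produce-then-decay route. The missing observation is that the consumption probability is \emph{proportional to the current density}: a state present on $dn$ nodes is touched in one step with probability at most $dn\cdot n/m\le 4d/p$, since every node has degree at most $n$ and $m\ge pn^2/4$ on the conditioned event. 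With this refinement, the ``naive'' decay argument is not naive at all and is precisely the paper's proof: over a window of $N=pn/32$ steps, a state at density $f$ suffers at most $nf/4$ loss events with probability $1-\exp(-\Omega(n))$ (Chernoff applied to $\Bin{N}{4f/p}$), each costing at most $2$, so its density at worst halves; choosing thresholds with $f(i+1)\le f(i)/2$ (the paper takes $f(i+1)=f(i)^2p/960$) keeps all earlier generations above threshold throughout each window, and the same density-proportional bound shows the new state's production rate, of order $f(i)^2$, strictly dominates its own consumption rate, of order $f(i+1)$, for as long as it sits below threshold, giving net accumulation to density $f(i+1)$ within the window.

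Your substitute for this --- the rate-balance ``continuous re-production'' argument --- does not work. First, it is simply false that every state in $\Lambda$ is re-produced: $s_{\text{init}}$ belongs to $\Lambda$ and must reach count $\ge\alpha n$ in the fully dense configuration, yet a protocol need not have any transition whose \emph{output} is $s_{\text{init}}$; its count can be monotone non-increasing, and then no production/consumption equilibrium at level $\Theta(n)$ exists for it. (It nonetheless stays at density $\Omega(1)$ for $O(n)$ steps --- but only because of the density-proportional decay bound you set aside.) Second, the argument is circular: sustaining a generation-$j$ state requires its generation-$(j-1)$ inputs to remain abundant, whose abundance in your scheme again rests on re-production, and the recursion bottoms out at $s_{\text{init}}$, where it fails. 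So the crossover step cannot be made rigorous as proposed; replacing it with the density-proportional consumption bound both closes the gap and collapses your argument into the paper's.
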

\begin{proof}
  Suppose all nodes start in the initial state $s_\text{init}$. Without loss of generality, assume that every state in $\Lambda$ is producible by some transition sequence from the initial configuration. Then there exists a sequence of sets $\{s_\text{init} \} = \Lambda_1 \subsetneq \cdots \subsetneq \Lambda_{|\Lambda|} = \Lambda$ such that the state $b_{i+1} \in \Lambda_{i+1} \setminus \Lambda_{i}$ is producible using some interaction between nodes with states from $\Lambda_{i}$.
  Let $f(1)=p/4$ and for $i \ge 1$ define
  \[
f(i+1) = f^2(i) \cdot \frac{p}{960}.
  \]
Note that $f$ is decreasing and $f(i+1) \le f(i)/2$ for every $i \ge 1$.
We assume that for the constant $c=f(|\Lambda|+1)$ the events given in \lemmaref{lemma:dense-erdos-graphs} happen. This implies that all of the following hold with probability $1-\exp(-\Omega(n))$:
\begin{enumerate}
\item The number of edges in $G$ satisfies $pn^2/4 \le m \le 3pn^2/4$.
\item For any disjoint sets $U, U' \subseteq V$ of size at least $nf(i)$, we have that $e(U,U') \ge |U| |U'|p/2$ for $1 \le i \le |\Lambda|+1$.
\item For any set $U \subseteq V$ of size at least $nf(i)$, we have $e(U,U) \ge |U|^2p/4$ for $1 \le i \le |\Lambda|+1$.
\end{enumerate}
We say that an event holds with very high probability (w.v.h.p.) if it holds with probability $1-\exp(-\Omega(n))$. We now prove by induction on $i$ that the protocol reaches an $f(i)$-dense configuration with respect to $\Lambda_i$ with very high probability. This  implies that we reach a $f(|\Lambda|)$-dense configuration with very high probability. The base case $i=1$ of the induction is vacuous, as the initial configuration is $1$-dense with respect to $\Lambda_1 = \{ s_\text{init} \}$.
, and hence, also $f(1)$-dense with respect to $\Lambda_1$. We say that a state $s$ has density $\rho$ in a configuration if there are $\rho n$ nodes in state $s$.

For the inductive step, suppose we have reached a configuration that is $f(i)$-dense with respect to $\Lambda_{i}$ for some $1 \le i < |\Lambda|$. Suppose this happens at step $t$. Let $s \in \Lambda_i$ and define $N = np/32$. By the induction assumption, the density of $s$ is at least $f(i)$ at step~$t$.

First, we show that the density of $s$ remains at least $f(i)/2 \ge f(i+1)$ for the next $N=np$ steps with very high probability. Suppose that the density of $s$ goes below $f(i)/2$ before step $t+N$ and let $t' < t + N$ be the last time when the density is at least $f(i)$. Note that if the count of $s$ decreases, it can decrease by at most two. The claim follows if we can show that w.v.h.p.\ the count of $s$ decreases at most $nf(i)/4$ times during the interval $I = \{ t', t'+1, \ldots, t+N\}$, which would be a contradiction. During this interval, the density of $s$ is at most $f(i)$ and the probability of the density decreasing in a single step is at most the probability that a stochastic scheduler samples an edge incident with a node in state $s$. Since every node in state $s$ has degree at most $n$, the probability is at most 
\[
r(i) = \frac{f(i)n^2}{m} \le \frac{4f(i)}{p} \le 1,
\]
where the bound on $m$ follows from (1) and the fact that $f(i) \le p/4$. The number $X(s)$ of times the density of $s$ decreases during the interval $I$ is stochastically dominated by the binomial random variable $Y \sim \Bin{N}{r(i)}$ with mean
\[
\E[Y] = Nr(i) = \frac{npr(i)}{32} \le \frac{np}{32} \cdot \frac{4f(i)}{p} = \frac{nf(i)}{8}.
\]
By standard Chernoff bounds (\lemmaref{lemma:bernoulli-chernoff}), we get that
\begin{align*}
\Pr\left[X(s) \ge \frac{nf(i)}{4}\right] &\le \Pr\left[Y \ge \frac{nf(i)}{4}\right] \\
&\le \Pr[Y \ge 2 \cdot \E[Y]] \\
&\le \exp(-\E[Y] \cdot 2/3) \in \exp(-\Omega(n)).
\end{align*}
By union bound, the density of every $s \in \Lambda_i$ remains at least $f(i)/2 \ge f(i+1)$  until step $t+N+1$ with very high probability.

To complete the induction, we show that the density of $b_{i+1} \in \Lambda_{i+1} \setminus \Lambda_i$ reaches $f(i+1)$ before step $t + N$ with very high probability. Suppose that the density of $b_{i+1}$ remains less than $f(i+1)$ until step $t+N$. Similarly, as in the argument above, the probability that the count of $b_{i+1}$ decreases given that its density is at most $f(i+1)$ is at most
\[
r(i+1) = \frac{f(i+1)n^2}{m} \le \frac{4f(i+1)}{p} \le 1.
\]
Let $s,s' \in \Lambda_{i}$ be a pair of states whose interaction generates the state $b_{i+1}$ (such states must exist by definition of $b_{i+1}$). The probability that the density of $b_{i+1}$ increases is the probability that some pair of nodes $(v,u)$ is sampled by the stochastic scheduler, where $v$ is in state $s$ and $u$ is in state $s'$. Above, we showed that the density of $s$ and $s'$ is at least $f(i)/2$, so the probability that the density of $b_{i+1}$ increases is at least
\[
q(i+1) = \frac{(nf(i)/2)^2p}{2 \cdot 2m} \ge \frac{(nf(i))^2 p}{16} \cdot \frac{4}{3pn^2} =  \frac{f^2 (i)}{12} \ge \frac{960f(i+1)}{12p} = \frac{80f(i+1)}{p},
\]
where we have used properties (1)--(3), the fact that $f(i)/2 \ge f(i+1) \ge f(|\Lambda|+1)$, and the definition $f(i+1) = f^2(i)p/960$. Suppose the density of $b_{i+1}$ is at most $f(i+1)$ during the interval $t, \ldots, t+N$. We will show that this implies that with very high probability by time step $t + N$ the number of steps during which the count of $b_{i+1}$ increases is at least $2f(i+1)$, and the number of decrease events is at most $f(i+1)/2$. Since every step that decreases the count of $b_{i+1}$ can decrease its count by at most two, this implies that the density is at least $f(i+1)$ by time $t+N$.

Given that the density of state $b_{i+1}$ remains at most $f(i+1)$ during the interval $t, \ldots, t+N$, the number of steps that increase the density stochastically dominates the binomial random variable $Y \sim \Bin{N}{q(i+1)}$. The number of steps that decrease the count of $b_{i+1}$ is in turn stochastically dominated by a binomial random variable $Z \sim \Bin{N}{r(i+1)}$. Now using standard Chernoff bounds, we have that
\begin{align*}
\Pr[Y \le \E[Y]/2] \in  \exp(-\Omega(n)) \quad \textrm{ and } \quad \Pr[Z \ge 2 \cdot \E[Z]] \in \exp(-\Omega(n)).
\end{align*}
That is, w.v.h.p.\ the number of increases is at least $\E[Y]/2$ and the number of decreases at most $2 \cdot \E[Z]$. From the above bounds on $r(i+1)$ and $q(i+1)$, we get that 
\begin{align*}
  \E[Y]/2 - 2\cdot \E[Z] &= \frac{Nq(i+1)}{2} - 2Nr(i+1) \\
  &\ge N \cdot \left(\frac{80f(i+1)}{2p} - \frac{8f(i+1)}{p} \right) 
  = nf(i+1),
\end{align*}
since $N=np/32$. Therefore, the density of $b_{i+1}$ must be at least $f(i+1)$ by step $t + N$ with very high probability.
\end{proof}

In the proof above, the only information we use about the interaction graph is the information provided by \lemmaref{lemma:dense-erdos-graphs}. We note that \lemmaref{lemma:fully-dense-config} also holds for deterministic graphs having similar connectivity properties as guaranteed by \lemmaref{lemma:dense-erdos-graphs}. %
In particular, any dense graph with small spectral gap would also produce fully dense configurations with high probability.  

\subsection{Second step: Controlling for low count states}

For the remainder of the proof, we assume that for infinitely many $n$ protocol $\mathcal{P}$ stabilizes on a random graph $G \sim G_{n,p}$ in $o(n^2)$ expected steps and show that this leads to  a contradiction.  
A set $S \subseteq \Lambda$ is \emph{leader generating} if $S$ either contains a leader state (i.e. a state in which node's output is $\mathsf{leader}$), or given enough nodes in each state of $S$, there exists a finite sequence $\sigma = (e_1, \ldots, e_t)$ of interactions such that these agents generate a leader on the clique.

Alistarh et al.~\cite[Lemma A.7]{alistarh2017time} showed that for any leader generating set $S$ we need at most $2^{|\Lambda|}$ nodes in each state to generate a leader on the clique, i.e., a clique of size $|S| \cdot 2^{|\Lambda|}$ suffices to generate a new leader state. We show that if the protocol stabilizes in $o(n^2)$ steps, then in every leader generating set there is a state of count $< 2^{|\Lambda|}$. Let $A = \{ (u,v) : u,v \in V, u \neq v\}$. For any $\sigma = (e_1, \ldots, e_t) \in A^t$, let $\exec{\sigma}$ be the event that sequence $\sigma$ was sampled by the scheduler in the first $t$ steps, i.e., $\sigma$ is a prefix of the stochastic schedule.

In the following, we want to consider all graphs $G \sim G_{n,p}$ in which $\sigma$ could be executed and infer something about the presence of edges in such graphs. However, we cannot  simply state that $e \in E(G)$ with probability $p$, because the fact that $\sigma$ was sampled by the scheduler introduces bias into the random graph $G \sim G_{n,p}$, i.e., the events $e\in E(G)$ and $\exec{\sigma}$ are correlated. 

The next lemma shows that this correlation is small conditioned on the event that $|E| = m$ is sufficiently large. We write $\sigma \subseteq E$ for the event that all edges in a sequence $\sigma$ are in $E$, i.e., for $e_i = (v,u)$ we know that $\{v,u\}$ is an edge of $G$ for every $1 \le i \le t$.

\begin{lemma}
  \label{lemma:exec-prob}
  Let $\sigma \in A^t$, $N \le n(n-1)/2$ and $F \subseteq A$. Then the random graph $G = (V,E) \sim G_{n,p}$ satisfies
\[
\Pr[F \subseteq E \mid \exec{\sigma} \land m \ge N] \ge p^{|F|}\left(1 - \frac{t}{N}\right)^{|F|}.
\]
\end{lemma}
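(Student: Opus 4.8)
The plan is to write the conditional probability as a ratio of two weighted sums over graphs $G$, where the weight of $G$ with $m = |E(G)|$ edges is its $G_{n,p}$-probability times $\Pr[\exec{\sigma}\mid G]$. The only coupling between $\exec{\sigma}$ and the presence of the edges of $F$ is that a graph with fewer edges makes the fixed prefix $\sigma$ more likely, since each scheduler step is uniform among the $2m$ oriented edges; hence conditioning on $\exec{\sigma}$ biases $G$ towards sparser graphs. The constraint $m \ge N$ is exactly what converts this sparsifying bias into the stated factor $(1-t/N)$ per edge. Throughout I work in the regime $t < N$, in which $1-t/N \in (0,1]$ and the bound is a genuine probability lower bound (this is the only regime used later).

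First I would record that for a fixed $G$ one has $\Pr[\exec{\sigma}\mid G] = (2m)^{-t}$ if every oriented edge of $\sigma$ lies in $E(G)$ and $0$ otherwise. Writing $D$ for the set of (undirected) edges appearing in $F$ but not in $\sigma$ and $d = |D| \le |F|$, this gives
\[
\Pr[F\subseteq E \mid \exec{\sigma}\land m\ge N] = \frac{\sum_{G}\Pr[G]\,(2m)^{-t}\,\mathbb{1}[\sigma\subseteq E,\ F\subseteq E,\ m\ge N]}{\sum_{G}\Pr[G]\,(2m)^{-t}\,\mathbb{1}[\sigma\subseteq E,\ m\ge N]}.
\]
Since $p(1-t/N)\in[0,1]$, the quantity $(p(1-t/N))^{d}$ is decreasing in the exponent, so it suffices to prove the stronger bound $p^{d}(1-t/N)^{d}$. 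I would then split each graph as $G = (H,S)$, where $H$ records the presence of all edges outside $D$ and $S\subseteq D$ records which edges of $D$ are present. Under $G_{n,p}$ these are independent; the event $\sigma\subseteq E$ depends only on $H$, the event $F\subseteq E$ is equivalent to $S=D$, and $m(G)=m(H)+|S|$.

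Setting $w_H(j) = (2(m(H)+j))^{-t}\,\mathbb{1}[m(H)+j\ge N]$, the numerator above equals $\sum_H \Pr[H]\,p^{d}w_H(d)$ and the denominator equals $\sum_H \Pr[H]\,\E_{J\sim\Bin{d}{p}}[w_H(J)]$. By the mediant inequality (if $a_H/b_H \ge c$ for all $H$ with $b_H>0$, then $\sum_H a_H/\sum_H b_H \ge c$; terms with $b_H=0$ force $a_H=0$ and drop out), it suffices to show, for every $H$ with $\sigma\subseteq E(H)$ and $\E[w_H(J)]>0$, the per-$H$ estimate $\E_{J\sim\Bin{d}{p}}[w_H(J)] \le w_H(d)/(1-t/N)^{d}$. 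For such an $H$ every nonzero term has $m(H)+j\ge N$, so
\[
\frac{w_H(j)}{w_H(d)} = \left(1+\frac{d-j}{m(H)+j}\right)^{t} \le \left(1+\frac{d-j}{N}\right)^{t} \le e^{t(d-j)/N},
\]
and $w_H(j)\le w_H(d)\,e^{t(d-j)/N}$ holds trivially when $w_H(j)=0$. Summing against the binomial weights and applying the binomial theorem to $p$ and $(1-p)e^{t/N}$ yields $\E[w_H(J)] \le w_H(d)\,(p+(1-p)e^{t/N})^{d}$.

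The final ingredient is the scalar inequality $p+(1-p)e^{t/N}\le (1-t/N)^{-1}$, which I would obtain from $e^{x}\le (1-x)^{-1}$ for $0\le x<1$ followed by $p + (1-p)(1-t/N)^{-1}\le (1-t/N)^{-1}$ (equivalent to the trivial $1-t/N\le 1$). This gives the per-$H$ estimate and hence the lemma. The main obstacle is conceptual rather than computational: the edge count $m$ enters both the scheduler weight $(2m)^{-t}$ and the conditioning event $m\ge N$, so the edges of $F$ are \emph{not} independent $\mathrm{Bernoulli}(p)$ variables under the conditioning. The decomposition $G=(H,S)$ together with the per-$H$ mediant reduction is precisely what isolates the $d$ edges of interest while turning the bias of $\exec{\sigma}$ into the clean correction $(1-t/N)^{d}$; the threshold $N$ is indispensable, as it is what bounds $(d-j)/(m(H)+j)$ by $(d-j)/N$.
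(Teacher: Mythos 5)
Your proof is correct, and it takes a genuinely different route from the paper's. The paper conditions on $m \ge N$ throughout and factors the conditional probability as a likelihood ratio times $\Pr[F \subseteq E \mid \sigma \subseteq E]$; the latter is bounded below by $p^{|F|}$ by appealing to the non-negative correlation of increasing events, and the likelihood ratio is handled edge by edge: writing $Q_i$ for the event that $\sigma \subseteq E$ and the first $i$ edges of $F$ are present, the paper shows $\Pr[\exec{\sigma} \mid Q_i]/\Pr[\exec{\sigma} \mid Q_{i-1}] \ge 1 - t/N$ via a shift coupling — the law of $m$ given $Q_i \land m \ge N+1$ equals the law of $X+1$ where $X$ has the law of $m$ given $Q_i^* \land m \ge N$, whence $\E[(X+1)^{-t}]/\E[X^{-t}] \ge (1-1/N)^t \ge 1-t/N$ — and then telescopes over the $|F|$ edges. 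You instead condition on the entire graph $H$ outside the undecided edges $D$, which collapses the problem to the one-dimensional comparison $\E_{J \sim \Bin{d}{p}}[w_H(J)] \le w_H(d)\,(1-t/N)^{-d}$, settled by an exact moment-generating-function computation and the scalar inequality $p + (1-p)e^{t/N} \le (1-t/N)^{-1}$. This buys you three things: the factor $p^{d}$ falls out of the decomposition for free, so no correlation inequality is needed; all edges are treated simultaneously, avoiding the sequential bookkeeping of distinct edge counts $k$; and you are explicit that the bound lives in the regime $t < N$ — a point the paper glosses over, since its final step $\prod_i R_i \ge (1-t/N)^{|F|}$ silently requires $1-t/N \ge 0$ (individual lower bounds on the $R_i$ do not multiply when negative). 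What the paper's route buys in exchange is that it never needs the weight function $w_H$ or the per-$H$ mediant reduction, keeping every step at the level of conditional distributions of the single scalar $m$.
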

\begin{proof}
  For ease of exposition, we implicitly condition every probability in the proof on the event $m = |E| \ge N$.  The event $\exec{\sigma}$ implies that pairs in $\sigma$ are contained in $E$, that is, for every $(v,u) \in \sigma$ we have $\{v,u\} \in E$. In other words, all necessary edges to perform the interactions in $\sigma$ are present in the random graph. Use $\sigma \subseteq E$ to denote the event that all edges sampled by $\sigma$ are contained in $E$. Using the definition of conditional probability, we get
\begin{align*}
  \Pr[F \subseteq E \mid \exec{\sigma}] &=\Pr[F \subseteq E \mid \exec{\sigma} \land \sigma \subseteq E ] \\
  &= \frac{\Pr[\exec{\sigma} \land F\subseteq E \land \sigma \subseteq E ]}{\Pr[\exec{\sigma} \land \sigma \subseteq E]} \\
 &= \frac{\Pr[ \exec{\sigma} \mid F\subseteq E \land \sigma \subseteq E] \cdot \Pr[F\subseteq E \land \sigma \subseteq E ]}{\Pr[\exec{\sigma} \mid \sigma \subseteq E] \cdot \Pr[\sigma \subseteq E] }\\
 &= \frac{\Pr[\exec{\sigma} \mid F\subseteq E \land \sigma \subseteq E]}{\Pr[\exec{\sigma} \mid \sigma \subseteq E]} \cdot \Pr[F \subseteq E \mid \sigma \subseteq E ]\\ 
 &\ge \frac{\Pr[\exec{\sigma} \mid F\subseteq E \land \sigma \subseteq E]}{\Pr[\exec{\sigma} \mid \sigma \subseteq E] } \cdot p^{|F|},
\end{align*}
where last step is due to the fact that events $F \subseteq E$ and $\sigma \subseteq E \land m \ge N$ are non-negatively correlated.

Let $F = \{ f_1, \ldots, f_h \}$.  We use $Q_0$ to denote the event that $\sigma \subseteq E$ happens. For $i \ge 1$, define the events  $Q_{i} = Q_{i - 1} \land f_i \in E$ and $Q_{i}^* = Q_{i - 1} \land f_i \not \in E$.  Next, we show that the following holds for all $i \ge 1$:
\[
\frac{\Pr[\exec{\sigma} \mid Q_{i}]}{\Pr[\exec{\sigma} \mid Q_{i - 1}]} \ge \left(1 -\frac{t}{N}\right).
\]
The statement is trivial if $f_{i} \in \sigma$, as in that case $Q_i = Q_{i-1}$. Suppose then that $f_{i} \not \in \sigma$. We now want to lower bound the ratio
\[
\frac{\Pr[\exec{\sigma} \mid Q_i]}{\Pr[\exec{\sigma} \mid Q_{i}^*]}.
\]
Note that if we know $G$ contains exactly $m$ edges and that $\sigma \subseteq E$ holds, then $\Pr[\exec{\sigma}] = 1/(2m)^t$. It follows then that 
\[
\Pr[\exec{\sigma} \mid Q_i] = \E[(2m)^{-t} \mid Q_i \land m \ge N].
\]
(Recall that all our probabilities are implicitly conditioned on $m \ge N$.) Similarly, $\Pr[\exec{\sigma} \mid Q_{i}^*] = \E[(2m)^{-t} \mid Q_i^* \land m \ge N]$. That is, we have
\[
\frac{\Pr[\exec{\sigma} \mid Q_i]}{\Pr[\exec{\sigma} \mid Q_{i}^*]} = \frac{\E[(2m)^{-t} \mid Q_i \land m \ge N]}{\E[(2m)^{-t} \mid Q_i^* \land m \ge N]} = \frac{\E[m^{-t} \mid Q_i \land m \ge N]}{\E[m^{-t} \mid Q_i^* \land m \ge N]}.
\]
Notice that the expectation in the numerator is decreasing in $N$. Hence, we have
\[
\frac{\Pr[\exec{\sigma} \mid Q_i]}{\Pr[\exec{\sigma} \mid Q_{i}^*]} \ge \frac{\E[m^{-t} \mid Q_i \land m \ge N + 1]}{\E[m^{-t} \mid Q_i^* \land m \ge N]}.
\]
Suppose that edges activated in $\sigma$ and $\{f_1, \ldots, f_{i - 1}\}$ together contain exactly $k$ distinct edges. Then the distribution of $m \mid Q_i^* \land m \ge N$ follows the distribution of a shifted binomial random variable $\Bin{\frac{n(n-1)}{2} - k - 1}{p} + k$ conditioned on the fact that it is at least $N$. Let $X$ to be such a random variable. Similarly, the distribution of $m \mid Q_i \land m \ge N + 1$ is exactly the same as of $\Bin{\frac{n(n-1)}{2} - k - 1}{p} + k + 1$ conditioned on the fact that it is at least $N + 1$.  Therefore, it is the same as the distribution of $X + 1$, and so we get 
\begin{align*}
\frac{\Pr[\exec{\sigma} \mid Q_i]}{\Pr[\exec{\sigma} \mid Q_{i}^*]} &\ge \frac{\E[(X + 1)^{-t}]}{\E[X^{-t}]} = \frac{\sum_{x \ge N} \Pr[X = x] (x + 1)^{-t}}{\sum_{x \ge N} \Pr[X = x] x^{-t}}\\ 
                                                                    & \ge \left(1 - \frac{1}{N}\right)^t \ge 1 - \frac{t}{N},
\end{align*}
where the second inequality follows since for any $x \ge N$ the ratio $\frac{(x+1)^{-t}}{x^{-t}}$ is lower bounded by $\left(1 - 1/N\right)^t$. The last inequality follows from the Bernoulli inequality.  Next, consider the ratio
\[
R_i =   \frac{\Pr[\exec{\sigma} \mid Q_i]}{\Pr[\exec{\sigma} \mid Q_{i-1}]}.
\]
Observe that
\begin{align*}
    R_i &= 
    \frac{\Pr[\exec{\sigma} \mid Q_i]}{\Pr[\exec{q} \mid Q_{i}] \cdot \Pr[f_i \in E \mid Q_{i-1}] + \Pr[\exec{\sigma} \mid Q_{i}^*] \cdot \Pr[f_i \not\in E \mid Q_{i-1}]} \\
    &= \frac{1}{\Pr[f_i \in E \mid Q_{i - 1}] + \frac{\Pr[\exec{\sigma} \mid Q_i^*]}{\Pr[\exec{\sigma} \mid Q_i]} \cdot \Pr[f_i \not\in E \mid Q_{i - 1}]}\\
    &\ge \frac{1}{\Pr[f_i \in E \mid Q_{i - 1}] + \frac{1}{1 - t/N} \cdot \Pr[f_i \not\in E \mid Q_{i - 1}]}\\
    &\ge \frac{1}{\frac{1}{1 - t/N}\cdot\left(\Pr[f_i \in E \mid Q_{i - 1}] + \Pr[f_i \not\in E \mid Q_{i - 1}]\right)} = 1 - \frac{t}{N}.
\end{align*}
Therefore,
\[
\frac{\Pr[\exec{\sigma} \mid F\subseteq E \land \sigma \subseteq E]}{\Pr[\exec{\sigma} \mid \sigma \subseteq E] } = \prod_{i = 1}^h R_i = \prod_{i = 1}^h \frac{\Pr[\exec{\sigma} \mid Q_i]}{\Pr[\exec{\sigma} \mid Q_{i-1}]} \ge \left(1 - \frac{t}{N}\right)^{|F|}. \qedhere
\]
\end{proof}

By slight abuse of notation, we define $x$ to be the event that configuration $x$ was reached by the protocol after $t$ steps.

\begin{lemma}
  \label{lemma:configuration-condition}
  Let $t> 0$ and let $x$ be the configuration reachable after $t$ steps.
  For any $N \le n(n-1)/2$ and any sequence of edges $F$, we have
  \[
  \Pr[F \subseteq E \mid x \land m \ge N] \ge p^{|F|}\left(1 -\frac{t}{N}\right)^{|F|}.
  \]
\end{lemma}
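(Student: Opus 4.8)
The plan is to reduce the statement to the bound already proved in \lemmaref{lemma:exec-prob} by decomposing the event $x$ into the interaction sequences that produce it, and then averaging. First I would observe that, since the configuration reached after $t$ interactions is completely determined by the ordered sequence of those $t$ interactions, the event $x$ is a disjoint union
\[
x = \bigsqcup_{\sigma \in \Sigma(x)} \exec{\sigma},
\]
where $\Sigma(x) \subseteq A^t$ is the set of all length-$t$ interaction sequences that transform the initial configuration into $x$. Sequences that use a pair which is not an edge of the realized graph contribute nothing, since $\exec{\sigma}$ already forces every pair of $\sigma$ to lie in $E$; thus the identity holds in the joint probability space of the random graph and the random schedule.

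Next, conditioning every probability on the event $m \ge N$, I would expand the numerator of $\Pr[F \subseteq E \mid x \land m \ge N]$ using this partition together with the chain rule for conditional probability:
\[
\Pr[F \subseteq E \land x \mid m \ge N]
= \sum_{\sigma \in \Sigma(x)} \Pr[F \subseteq E \mid \exec{\sigma} \land m \ge N] \cdot \Pr[\exec{\sigma} \mid m \ge N].
\]
The crucial point is that the lower bound supplied by \lemmaref{lemma:exec-prob} is \emph{uniform} over all $\sigma \in A^t$: it depends only on $p$, $t$, $N$, and $|F|$, not on the particular sequence. Hence every factor $\Pr[F \subseteq E \mid \exec{\sigma} \land m \ge N]$ is at least $p^{|F|}(1 - t/N)^{|F|}$, and pulling this common factor out of the sum yields
\[
\Pr[F \subseteq E \land x \mid m \ge N] \ge p^{|F|}\left(1 - \frac{t}{N}\right)^{|F|} \sum_{\sigma \in \Sigma(x)} \Pr[\exec{\sigma} \mid m \ge N] = p^{|F|}\left(1 - \frac{t}{N}\right)^{|F|} \Pr[x \mid m \ge N].
\]
Dividing both sides by $\Pr[x \mid m \ge N]$ then gives the claimed inequality.

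The argument is essentially a single convex-combination (averaging) step, so the only real obstacle is the bookkeeping that justifies the partition of $x$ into the events $\exec{\sigma}$. In particular, I would make sure that (i) the transition dynamics depend only on the interacting pair and not on the global graph, so that $\Sigma(x)$ is well defined independently of the realization of $G$, and (ii) each $\exec{\sigma}$ entails $\sigma \subseteq E$, so that sequences not realizable in a given $G$ automatically carry zero probability mass and do not break the disjoint-union identity. Once these points are in place, the result follows from \lemmaref{lemma:exec-prob} and linearity, with no further estimation required.
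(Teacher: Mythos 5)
Your proposal is correct and follows essentially the same route as the paper: the paper likewise partitions the event $x$ into the length-$t$ interaction sequences $\sigma_1,\ldots,\sigma_k$ that produce it, applies the uniform bound of \lemmaref{lemma:exec-prob} to each term of the resulting law-of-total-probability expansion (noting that $\exec{\sigma_i} \land x \land m \ge N$ coincides with $\exec{\sigma_i} \land m \ge N$), and concludes since the conditional weights sum to one. Your extra bookkeeping remarks (well-definedness of $\Sigma(x)$ and the fact that $\exec{\sigma}$ forces $\sigma \subseteq E$) are implicit in the paper's argument and do not change its substance.
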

\begin{proof}
  Let $\sigma_1, \ldots, \sigma_k$ be the possible interaction sequences of length $t$ that result in the configuration $x$. 
  Let $H$ denote the event $x \land m \ge N$. 
  By the law of total probability and  \lemmaref{lemma:exec-prob}, we get that     \begin{align*}
      \Pr[F \subseteq E \mid H] &= \sum_{i = 1}^k \Pr[F \subseteq E \mid \exec{\sigma_i} \land H] \cdot \Pr[\exec{\sigma_i} \mid H] \\
    &\ge \sum_{i = 1}^k p^{|F|}\left(1 -\frac{t}{N}\right)^{|F|} \cdot\Pr[\exec{\sigma_i} \mid H]  \\
    &\ge p^{|F|}\left(1 -\frac{t}{N}\right)^{|F|} \cdot \sum_{i = 1}^k  \Pr[\exec{\sigma_i} \mid H] \\
    &= p^{|F|}\left(1 -\frac{t}{N}\right)^{|F|}. \qedhere
    \end{align*}
\end{proof}

The above lemma gives us the power to overcome the challenge described in the introduction:
we have control over the structure of the graph when conditioned on a given configuration. However, we only have meaningful control over a portion of the graph of constant size. Turns out, this is sufficient for us. Say that the state $s$ is \emph{present in low count} in the configuration $x$ if its count is less than $2^{|\Lambda|}$ in $x$. Otherwise, we will say that it is \emph{present in high count}.

Let $T_{\mathcal P}$ be the stabilization time of the protocol. By our assumption $\E[T_{\mathcal P}] \in o(n^2)$. This means there exists some $t \in o(n^2)$ such that $T_{\mathcal P} < t$ with probability $1 - o(1)$. Indeed, take $t = \sqrt{n^2 \E[T_{\mathcal P}]}$. Note that $t \in o(n^2)$ and 
\[
\Pr[T_{\mathcal P} > t] \le \sqrt{\frac{\E[T_{\mathcal P}]}{n^2}} = o(1)
\] 
by Markov inequality. With the above, we can show the next lemma. %

\begin{lemma}
\label{lemma:low-count-states}
Let $x$ be a stable configuration reached by the protocol after $t$ steps.
Let $\xi$ be a set of states in low count in $x$.
Then for any $t \in o(n^2)$, such that configuration is stable after $t$ steps with probability $1-o(1)$, and for all leader generating sets $S$, we have that  $\xi \cap S \not = \emptyset$ with probability $\Theta(1)$ bounded away from zero.\end{lemma}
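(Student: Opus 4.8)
The plan is to fix a single leader generating set $S$ and prove the contrapositive: whenever the low-count states avoid $S$ (that is, $\xi \cap S = \emptyset$, so every state of $S$ is present in count at least $2^{|\Lambda|}$), the reached configuration $x$ is unlikely to be stable. Since the protocol has $O(1)$ states, there are only $O(1)$ leader generating sets, so a union bound over $S$ at the very end handles all of them at once. The starting point is the result of Alistarh et al.~\cite[Lemma A.7]{alistarh2017time}: for any leader generating set $S$ there is a transition sequence using at most $2^{|\Lambda|}$ nodes in each state of $S$ that produces a leader \emph{on the clique}; equivalently, a clique on at most $|S| \cdot 2^{|\Lambda|}$ appropriately-stated nodes suffices to create a fresh leader.

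First I would exhibit a witness clique inside the high-count states. Condition on the event $m \ge N := pn^2/4$, which holds with probability $1 - \exp(-\Omega(n))$ by a Chernoff bound, and on reaching a configuration $x$ with $\xi \cap S = \emptyset$. Each state of $S$ then has count at least $2^{|\Lambda|}$, so we may select, as a deterministic function of $x$, a node set $W = W(x)$ consisting of $2^{|\Lambda|}$ nodes of each state in $S$; thus $|W| \le |S| \cdot 2^{|\Lambda|} = O(1)$. Let $F = F(x)$ be the set of all $\binom{|W|}{2} = O(1)$ pairs inside $W$. Applying \lemmaref{lemma:configuration-condition} to this fixed $F$ gives
\[
\Pr[F \subseteq E \mid x \wedge m \ge N] \;\ge\; p^{|F|}\Bigl(1 - \tfrac{t}{N}\Bigr)^{|F|} \;\ge\; c
\]
for a constant $c = c(p,|\Lambda|) > 0$, where the final bound uses that $|F|$ is constant and $t/N = o(1)$ since $t \in o(n^2)$ while $N \in \Theta(n^2)$. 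On the event $F \subseteq E$ the induced subgraph $G[W]$ is a clique, so from $x$ we may interact the nodes of $W$ among themselves in the order prescribed by the clique leader-generating sequence while leaving all other nodes untouched. This reaches a configuration with a second node in a leader state, changing the output, so $x$ is \emph{not} stable.

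Consequently, for every configuration $x$ with $\xi \cap S = \emptyset$ we have $\Pr[x \text{ is stable} \mid x \wedge m \ge N] \le 1 - c$. Writing $\mathcal{S}$ for the event that the time-$t$ configuration is stable and correct and $B_S$ for the event $\xi \cap S = \emptyset$, summing this over such $x$ gives $\Pr[\mathcal{S} \wedge B_S \wedge \{m \ge N\}] \le (1-c)\Pr[B_S]$. Combining with $\Pr[\mathcal{S}] = 1 - o(1)$ and $\Pr[m < N] = o(1)$, and writing $q = \Pr[\xi \cap S \neq \emptyset]$, the routine bookkeeping
\[
q \;\ge\; \Pr[\mathcal{S} \wedge \overline{B_S}] \;\ge\; 1 - o(1) - (1-c)(1 - q)
\]
rearranges to $q \ge 1 - o(1)/c$, which is bounded away from zero. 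A union bound over the $O(1)$ leader generating sets then gives the claim for all $S$ simultaneously.

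The crux is the second step. On the clique, high-count states generate a leader for free, but on a graph the witnessing nodes must genuinely be adjacent --- precisely the spatial-distribution obstacle highlighted in the introduction. \lemmaref{lemma:configuration-condition} is what resolves it: conditioning on the reached configuration biases the presence of any \emph{fixed} constant-sized edge set only by a factor $(1 - t/N)^{|F|} = 1 - o(1)$, so a constant-sized witness clique among the high-count states still appears with constant probability, which is exactly enough to run the surgery despite controlling only an $O(1)$-sized portion of the graph.
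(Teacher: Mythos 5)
Your proposal is correct and follows essentially the same route as the paper's proof: condition on $m \ge pn^2/4$, use the $2^{|\Lambda|}$-nodes-per-state witness set from Alistarh et al.\ together with \lemmaref{lemma:configuration-condition} to show the witness edges are present (hence $x$ is unstable) with constant probability whenever some leader generating set is entirely in high count, and then convert this into the claimed bound by probabilistic bookkeeping. The only cosmetic differences are that you fix one leader generating set at a time and union bound over the $O(1)$ such sets at the end (valid because your per-set failure probability is $o(1)$, not merely a constant), whereas the paper works directly with the event that all leader generating sets have a low-count state and applies Bayes' law once.
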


\begin{proof}
  By \lemmaref{lemma:dense-erdos-graphs}, $|E| = m > \frac{1}{4}pn^2$ with probability $1 - \exp(-\Omega(n))$. We will condition the rest of the proof on this event. Suppose 
we can find in $x$ a leader generating set $S$ with all states $s \in S$ being present in high count in $x$. That is, every state $s \in S$ has count at least $2^{|\Lambda|}$ in $x$. Pick $2^{|\Lambda|}$ nodes of each state in configuration $x$ and call this set $U$. As mentioned above, by \cite[Lemma A.7]{alistarh2017time}, leader can be elected from nodes in $U$ given that all necessary edges between them are present. Let $F^*$ be set of edges between all nodes in $U$. Then, by \lemmaref{lemma:configuration-condition}
\begin{align*}
    \Pr[U \text{ is a clique} \mid x] &= \Pr[F^* \subseteq E \mid x] 
                                                \ge p^{|F^*|}\left(1 - \frac{4t}{pn^2}\right)^{|F^*|}.
\end{align*}
Since $t \in o(n^2)$, $|F^*| \in O(1)$ and $p \in \Theta(1)$, for large enough $n$ the above probability is lower bounded by some constant. Let $\stable{x}$ denote the event that $x$ is a stable configuration and $\alllowcount{x}$ the event that all leader generating sets in $x$ have at least one state in low count. Then the above implies
\[
\Pr[\overline{ \stable{x}} \mid \overline{\alllowcount{x}}] \in \Theta(1).
\]
By Bayes' law, we get that
\begin{align*}
    \Pr[ \alllowcount{ x} \mid \stable{x}] &= 1 - \Pr[ \overline{ \alllowcount{ x}} \mid \stable{ x}]  \\
 &= 1 - \frac{\Pr[ \stable{ x} \mid \overline{ \alllowcount{ x}} ] \cdot \Pr[\overline{ \alllowcount{ x}}]}{\Pr[\stable{ x}]}  \\
    &\in 1 - \frac{(1 - \Theta(1)) \cdot \Pr[\overline{ \alllowcount{ x}}]}{1 - o(1)},
\end{align*}
which is lower bounded by a constant. Finally, note that if the event $\alllowcount{x}$ happens, then $\xi \cap S \not = \emptyset$ for any leader generating set $S$.
\end{proof}

Finally we are able to prove statement analogous to Lemma 4.4 in \cite{doty2018stable}. Specifically, we borrow some definitions and notation from \cite{doty2018stable}. We call an interaction between states $a$ and $b$ a \emph{$k$-bottleneck} if at the time of the  interaction, the state counts of $a$ and $b$ are upper bounded by $k$. Also in the following, we use $x \Rightarrow y$ to denote that the configuration $y$ is reachable from the configuration $x$ on a clique. Moreover, by $x \Rightarrow_{\sigma} y$ we denote that after executing interactions in $\sigma$ starting from configuration $x$, we end up in configuration $y$.

\begin{lemma}
  \label{lemma:bottleneck-free-transition}
  Suppose $\mathcal{P}$ stabilizes in $o(n^2)$ expected steps on $G \sim G_{n,p}$ for infinitely many $n$. Then there exists an infinite set $K \subseteq \mathbb N$ of indices and set $\xi \subseteq \Lambda$ of states such that for every leader generating set $S$ we have $\xi \cap S \not = \emptyset$ and there exist an infinite sequence of configurations $((x_k, y_k))_{k \in K}$ along with transition sequences $(\sigma_k)_{k \in K}$ such that the following holds 
\begin{enumerate}
    \item $x_k \Rightarrow_{\sigma_k} y_k$ and $\sigma_k$ does not contain any $k$-bottleneck interactions,
    \item $x_k$ contains every state in count at least $k$, and
    \item $y_k$ contains all states from $\xi$ in low count.
\end{enumerate} 
\end{lemma}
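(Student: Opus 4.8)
The plan is to exhibit, for infinitely many $n$, a single genuine execution on $G \sim G_{n,p}$ that simultaneously passes through a fully dense configuration, reaches a stable configuration whose low-count states block every leader generating set, and contains no $k$-bottleneck, and then to read off the required transition sequences from it. Working with the infinitely many $n$ for which $\E[T_{\mathcal P}] \in o(n^2)$, I would first fix $t = t(n) = \sqrt{n^2 \E[T_{\mathcal P}]} \in o(n^2)$, so that $\Pr[T_{\mathcal P} > t] = o(1)$ exactly as in the Markov argument preceding the lemma. I would then choose a threshold $k = k(n)$ tending to infinity slowly enough that $k(n)^2 t(n) = o(n^2)$ and $k(n) \le \alpha n / 2$; such a choice exists because $t \in o(n^2)$ forces $n^2/t \to \infty$.

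The core probabilistic step is to show that, with probability $1-o(1)$, the first $t$ steps contain no $k$-bottleneck. Conditioning on the event $m \ge \tfrac14 p n^2$, which holds with probability $1-\exp(-\Omega(n))$ by \lemmaref{lemma:dense-erdos-graphs}, I would observe that at every step at most $k|\Lambda|$ nodes lie in states of count at most $k$, so the number of ordered edges joining two such nodes is at most $(k|\Lambda|)^2$ \emph{deterministically}, regardless of $G$ and of the current configuration. Hence, conditioned on the entire history, the next sampled edge is a $k$-bottleneck with probability at most $(k|\Lambda|)^2/(2m) \le 2(k|\Lambda|)^2/(pn^2)$, and summing over the $t$ steps bounds the total bottleneck probability by $2t(k|\Lambda|)^2/(pn^2) = o(1)$ by the choice of $k$. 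This is the step I expect to be the main obstacle, since it must cope with the correlation between the fixed random graph and the evolving set of low-count nodes; the resolution is precisely the deterministic cap $(k|\Lambda|)^2$, which decouples the per-step estimate from these correlations.

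With this in hand I would intersect three good events. By \lemmaref{lemma:fully-dense-config} the execution reaches a fully $\alpha$-dense configuration $x$ within $O(n) \le t$ steps with probability $1-\exp(-\Omega(n))$; by \lemmaref{lemma:low-count-states} the configuration $z$ reached at step $t$ is stable and its set $\xi$ of low-count states intersects every leader generating set with probability $\Theta(1)$; and the bottleneck-free event above holds with probability $1-o(1)$. Their intersection has probability bounded away from zero, so a realizing execution exists for each such $n$. From it I set $x_k := x$, which has every state in count at least $\alpha n \ge k$, set $y_k := z$, which is stable with $\xi$ in low count, and let $\sigma_k$ be the portion of the schedule between $x$ and $z$; since the whole length-$t$ prefix is $k$-bottleneck-free, so is $\sigma_k$, and $x_k \Rightarrow_{\sigma_k} y_k$. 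Thus conditions (1)--(3) hold for this $n$ with threshold $k = k(n)$.

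Finally, the set $\xi$ produced this way depends on $n$, whereas the statement demands a single $\xi$ valid along an infinite index set. Since $\xi \subseteq \Lambda$ and $|\Lambda| = O(1)$, there are only finitely many candidate sets, so by pigeonhole one particular $\xi$ arises for infinitely many of the good $n$. Restricting to these $n$, and passing to a subsequence on which the thresholds $k(n)$ are strictly increasing (possible since $k(n) \to \infty$), yields the infinite index set $K = \{k(n)\}$, the common set $\xi$ satisfying $\xi \cap S \neq \emptyset$ for every leader generating set $S$, and the families $(x_k, y_k)_{k \in K}$ and $(\sigma_k)_{k \in K}$ with all the required properties.
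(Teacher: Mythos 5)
Your proposal is correct and follows essentially the same route as the paper's proof: the identical per-step bottleneck estimate of order $k^2|\Lambda|^2/(pn^2)$ after conditioning on $m \ge pn^2/4$ via \lemmaref{lemma:dense-erdos-graphs}, intersected with the good events of \lemmaref{lemma:fully-dense-config} and \lemmaref{lemma:low-count-states}, and the same pigeonhole step over the finitely many candidates for $\xi$. The only cosmetic differences are that you let the threshold $k(n)$ grow with $n$ and apply a direct union bound, whereas the paper fixes each constant $k$, takes $n$ large enough, and phrases the same estimate as a contradiction with the $o(n^2)$ assumption; one small patch: take $t = \max\{n \log n, \sqrt{n^2 \E[T_{\mathcal P}]}\}$ as the paper does, since the hypothesis $\E[T_{\mathcal P}] \in o(n^2)$ alone does not justify your claim that $O(n) \le t$, which you need so that the fully dense configuration is reached within the first $t$ steps.
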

\begin{proof}
  Let $T_\mathcal{P}$ be the stabilization time of the protocol $\mathcal{P}$ and suppose $n$ is such that $\mathcal{P}$ takes $o(n^2)$ expected steps to stabilize.
  Consider the time step
  \[
t = \max \left \{ n \log n , \sqrt{n^2 \cdot \E[T_\mathcal{P}]} \right\}.
  \]
  By Markov inequality, the protocol has stabilized at step $t$ with probability $1 - o(1)$. Note that clearly $t \in \Omega(n \log n)$ and $t \in o(n^2)$.

  We now condition on the event $|E| > pn^2/4$. Since this holds with probability $1 - o(1)$ by \lemmaref{lemma:dense-erdos-graphs}, this will not affect the asymptotics of the expected stabilization time of the protocol. 
  Next, we apply the union bound over \lemmaref{lemma:fully-dense-config} and \lemmaref{lemma:low-count-states} to conclude that with at least constant positive probability, the stochastic schedule $\sigma$ satisfies the following properties provided that $n$ is large enough:
    \begin{enumerate}[label=(\alph*)]
        \item $\sigma$ passes through a fully dense configuration by time $t$ (\lemmaref{lemma:fully-dense-config}), and %
        \item in step $t$, the system is in a stable configuration, where every leader generating set has a state in low count (\lemmaref{lemma:low-count-states}).
    \end{enumerate}
    Suppose every such sequence contains a $k$-bottleneck interaction for some constant $k \in \mathbb N$ indepdendent of $n$. 
    Note that at every time step there are at most $|\Lambda|^2$ pairs of states with state counts at most $k$. Additionally, two nodes in states $a$ and $b$ with state count at most $k$ interact with probability at most $\frac{k^2}{m}$. Then the probability of a $k$-bottleneck interaction occurring at an arbitrary time step is at most 
    \[
    \frac{8k^2 |\Lambda|^2}{pn^2}
    \]
    by definition of $k$-bottleneck transitions and 
    since $m \ge pn^2/4$.
    Since (a) and (b) hold with at least constant positive probability, this implies that the expected stabilization time is of order $\Omega(n^2)$ (similarly to ~\cite[Observation 4.1]{doty2018stable}), contradicting our assumption. Hence, for all large enough $n$, 
    \begin{itemize}
    \item by property (a), we get that there exists a dense configuration $x_k$, and 
    \item by property (b), we get that there exists a configuration $y_k$ where every leader generating set has a state in low count, along with a $k$-bottleneck free transition sequence $\sigma_k$ such that $x_k \Rightarrow_{\sigma_k} y_k$ (as shown above). 
    \end{itemize}
    From such $k$-bottleneck free transition sequences, we may form a sequence $(( x_k,  y_k, \sigma_k))_{k \in \mathbb N}$ of configurations and transition sequences which satisfy first two conditions of the lemma. Additionally, for every leader generating set, $y_k$ has a state from it in low count.
    Notice however that for every $k$ we may have a different set $\xi$ of low count states. This can be resolved using the pigeonhole principle: since there are at most $2^{|\Lambda|}$ possibilities for $\xi \subseteq \Lambda$, we can select $K \subseteq \mathbb{N}$ and an infinite subsequence $((x_k,  y_k, \sigma_k))_{k \in K}$ of $((x_k,  y_k, \sigma_k))_{k \in \mathbb N}$ such that $\xi$ is the same for all $y_k$.
\end{proof}

With \lemmaref{lemma:bottleneck-free-transition}, \theoremref{thm:nsquared-lb} now follows using standard surgery techniques~\cite{doty2018stable}. We give the details in  the next section.

\subsection{Last steps: Proof of Theorem~\ref{thm:nsquared-lb}}

Consider the execution of sequences from \lemmaref{lemma:bottleneck-free-transition} on the clique. This setting is the same as the one considered by Doty and Soloveichik~\cite{doty2018stable} with our \lemmaref{lemma:bottleneck-free-transition} corresponding to their Lemma 4.4. The rest of the proof is now almost identical to proof by Doty and Soloveichik~\cite{doty2018stable}. For the sake of completeness, we now state the necessary results needed to finish the proof of \theoremref{thm:nsquared-lb}; we refer to \cite{doty2018stable} for details.

First, the transition ordering lemma for $G_{n,p}$ follows from \lemmaref{lemma:bottleneck-free-transition} (this corresponds to Lemma 4.5 from \cite{doty2018stable}):

\begin{lemma}
\label{lemma:ordering}
Let $b_1, b_2$ be positive integers such that $b_2 > |\Lambda| b_1$ and $K$, $\xi$ and $(x_k, y_k, \sigma_k)_{k \in K}$ be as in \lemmaref{lemma:bottleneck-free-transition}.
For any $k \in K$, there is an order on $\xi = \{d_1, \ldots, d_l\}$ such that for all $i \in \{1,\ldots,l\}$ there exists a transition $\alpha_i$, %
where node in state $d_i$ transitions to state $o_1$ and node in state $s_i$ transitions to state $o_2$, such that 
\begin{itemize}
\item $s_i, o_1, o_2 \not \in \{d_1, \ldots, d_i\}$, and 
\item $\alpha_i$ occurs at least $(b_2 - b_1 |\Lambda|)/|\Lambda|^2$ times in $\sigma_k$.
\end{itemize}
\end{lemma}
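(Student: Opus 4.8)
The plan is to prove this exactly as the corresponding transition ordering lemma (Lemma~4.5) of Doty and Soloveichik~\cite{doty2018stable}, now fed by the bottleneck-free transition sequences that \lemmaref{lemma:bottleneck-free-transition} supplies for $G_{n,p}$. Fix $k \in K$. I would build the order by \emph{reverse peeling}: assign positions from $l$ down to $1$, maintaining the invariant that at the moment a state receives position $i$ the set of still-unassigned states is exactly $\{d_1,\ldots,d_i\}$. Writing $U$ for this current unassigned set, the required condition $s_i,o_1,o_2 \notin \{d_1,\ldots,d_i\}$ then collapses to a single \emph{clean-peeling} statement applied to $U$: for every nonempty $U \subseteq \xi$ there is a state $d \in U$ and a transition $\alpha$ consuming $d$ whose partner and two products all lie outside $U$, occurring at least $(b_2 - |\Lambda| b_1)/|\Lambda|^2$ times in $\sigma_k$. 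Iterating this peeling step and recording the peeled state and its transition as $d_i$ and $\alpha_i$ produces the whole order.

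To prove the peeling step I would first read off the two inputs from \lemmaref{lemma:bottleneck-free-transition}. Choosing the index $k$ large enough that $k \ge b_2$, the configuration $x_k$ holds every state of $U$ in count at least $b_2$, while $y_k$ holds every state of $\xi$ in low count, at most $b_1$; hence each state of $U$ is net-consumed at least $b_2 - b_1$ times along $\sigma_k$, so the total count $\sum_{d\in U}\#d$ falls by at least $|U|(b_2-b_1)$ and every unit of this fall is realized by a transition consuming more $U$-states than it produces. Second, since $\sigma_k$ contains no $k$-bottleneck and $b_1 \le k$, it contains no $b_1$-bottleneck: every interaction has a reactant whose count exceeds $b_1$. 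The bottleneck-free property is precisely what turns the decreasing transitions into \emph{clean} ones: once the states of $U$ have been driven into low count, any interaction consuming a (now low-count) $U$-state must, by bottleneck-freeness, pair it with a high-count reactant, and a high-count reactant cannot be one of the depleted states of $U$; the $|\Lambda| b_1$ term in the numerator absorbs the bounded amount of count that can still be recirculated among the at most $|\Lambda|$ low-count states of $U$ through products. This isolates at least $b_2 - |\Lambda| b_1$ clean consumptions of $U$-states in total, and since the transition function is deterministic (a transition is determined by its ordered reactant pair, so there are at most $|\Lambda|^2$ of them), some single clean transition occurs at least $(b_2 - |\Lambda| b_1)/|\Lambda|^2$ times; taking $d$ to be the $U$-state it consumes finishes the step.

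The main obstacle is exactly the cleanness claim: one must guarantee that the heavily-used consuming transition avoids \emph{all} of the forbidden set $U$, not merely that $U$'s total count decreases. Naive counting fails, because a transition $d + d' \to o_1 + o_2$ with both reactants in $U$ could in principle carry the entire decrease while never being clean. Ruling out this degenerate case is where the $b_1$-bottleneck-freeness from \lemmaref{lemma:bottleneck-free-transition} is indispensable, as it forces the final descent of each $U$-state to low count through interactions with high-count, hence non-$U$, partners. Apart from this step the argument is identical to~\cite[Lemma~4.5]{doty2018stable}, since \lemmaref{lemma:bottleneck-free-transition} provides precisely the bottleneck-free transition sequences that its proof requires.
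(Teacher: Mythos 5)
Your high-level plan --- reverse peeling, reduction to a clean-peeling claim for every nonempty $U \subseteq \xi$, and a pigeonhole over the at most $|\Lambda|^2$ transitions --- is exactly the structure of the transition ordering lemma of Doty and Soloveichik \cite[Lemma 4.5]{doty2018stable}, which is all the paper itself offers here (it states the lemma and cites \cite{doty2018stable}, giving no self-contained proof). The gap is in how you execute the clean-peeling step. You deliberately relax ``no $k$-bottleneck'' to ``no $b_1$-bottleneck'' and then use $b_1$ as the high-count threshold. That threshold is too low, for two compounding reasons. First, an interaction whose two reactants are \emph{both} states of $U$ with current counts in $(b_1,k]$ is not a $b_1$-bottleneck, so it is not excluded by the property you invoke; your sentence ``a high-count reactant cannot be one of the depleted states of $U$'' only rules out partners that are \emph{currently} at low count, not $U$-states currently at high count. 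Second, your accounting localizes the clean consumptions to the portion of $\sigma_k$ ``once the states of $U$ have been driven into low count,'' but on that suffix the total $U$-mass is already at most $|U|b_1 \le |\Lambda| b_1$, so it can supply only on the order of $|\Lambda| b_1$ consumptions --- never the required $b_2-|\Lambda|b_1$, which is arbitrarily larger once $b_2 > 2|\Lambda| b_1$. Indeed, under $b_1$-bottleneck-freeness alone the claim is false: two states of $U$ whose only consuming transition is $d+d'\to o+o'$ can be driven from count about $k$ down to count $b_1$ entirely by this ``dirty'' transition (both counts stay above $b_1$ at each execution, so no $b_1$-bottleneck ever occurs), leaving only the final short descent to low count to produce clean events.

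The repair is to keep the full strength of \lemmaref{lemma:bottleneck-free-transition} and use $b_2$ (not $b_1$) as the threshold, following \cite{doty2018stable}. Set $\Phi(c)=\sum_{d\in U}c(d)$ and let $t^*$ be the last step at which $\Phi \ge b_2$; this exists because $\Phi(x_k)\ge k \ge b_2$ while $\Phi(y_k)\le |\Lambda| b_1 < b_2$ (your standing assumptions $k\ge b_2$ and $b_1\ge 2^{|\Lambda|}$ are indeed needed here; the paper's statement is loose on this point). After $t^*$, every state of $U$ has count below $b_2\le k$, so $k$-bottleneck-freeness forces every subsequent interaction to have a reactant of count exceeding $k$, which is therefore automatically \emph{not} in $U$; in particular dirty interactions cannot occur at all on this suffix. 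Hence every $\Phi$-decreasing interaction after $t^*$ consumes exactly one $U$-state and produces none, i.e.\ is clean and decreases $\Phi$ by exactly one, and since $\Phi$ falls from at least $b_2$ to at most $|\Lambda| b_1$ over the suffix, there are at least $b_2-|\Lambda|b_1$ clean consumption events (up to an additive constant contributed by the single boundary interaction right after $t^*$, absorbed by taking $k\ge b_2+1$). Your pigeonhole then finishes the step as you wrote it.
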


Recall that a configuration $x$ is a map $V \to \Lambda$. We say two configurations $x_1: V_1 \to \Lambda$ and $x_2: V_2 \to \Lambda$ are \emph{disjoint} if $V_1 \cap V_2 = \emptyset$. Then, for two disjoint configurations $x_1: V_1 \to \Lambda$ and $x_2: V_2 \to \Lambda$, define $x_1 + x_2$ to be a configuration on $V_1 \cup V_2$, where
\[
(x_1 + x_2)(v) = \begin{cases}
    x_1(v) & \textrm{if } v \in V_1 \\
    x_2(v) & \textrm{otherwise.}
\end{cases}
\]

With the above, we get the following lemma, which corresponds to (slightly adapted) Claims 1--3 given in \cite{doty2018stable}:

\begin{lemma}\label{lemma:claims}
Take $((x_k, y_k, \sigma_k))_{k \in K}$ and $\xi$ as in \lemmaref{lemma:bottleneck-free-transition} and 
let $x \Rightarrow y$  denote that configuration $y$ is reachable from configuration $x$ on a clique. Then:
\begin{enumerate}
\item There exists a configuration $q$ and a sequence $(z_k)_{k \in K}$ such that for large enough $k$ we have $x_k + q \Rightarrow z_k$ and the counts of states from $\xi$ are zero in $z_k$.

\item For any configuration $q$, there exist configurations $q'$ and a sequence $(z_k)_{k \in K}$ such that
for large enough $k$ we have $q' + x_k \Rightarrow q' + z_k + q$ and the counts of states in $\xi$ are are zero in $z_k$.

\item  Let $x_k'$ be a copy of $x_k$ disjoint with $x_k$. For large enough $k$, there exists a configuration $w_k$ such that $x_k + x_k' \Rightarrow w_k$ and counts of states from $\xi$ are zero in $w_k$. \end{enumerate}
\end{lemma}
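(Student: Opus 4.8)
The plan is to reduce everything to clique surgery in the style of Doty and Soloveichik~\cite{doty2018stable}, using the transition ordering lemma (\lemmaref{lemma:ordering}) as the sole graph-specific ingredient. Write $\xi = \{d_1, \ldots, d_l\}$ in the order guaranteed by \lemmaref{lemma:ordering}, and for each $i$ let $\alpha_i$ be the associated transition that sends $d_i$ to $o_1$ and $s_i$ to $o_2$, where $s_i, o_1, o_2 \notin \{d_1, \ldots, d_i\}$. The observation that powers all three claims is that firing $\alpha_i$ strictly decreases the count of $d_i$ and never produces any of $d_1, \ldots, d_i$. Consequently, if we eliminate the states of $\xi$ in the order $d_1, d_2, \ldots, d_l$ by repeatedly firing $\alpha_i$ until $d_i$ is exhausted, then each $d_j$ stays at count zero once it first reaches zero.

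\textbf{Claim 1.} First I would run $\sigma_k$: since $x_k \Rightarrow_{\sigma_k} y_k$ and $y_k$ holds every state of $\xi$ below the low-count threshold $2^{|\Lambda|}$, we have $x_k + q \Rightarrow_{\sigma_k} y_k + q$ for any $q$. It then remains to drive each $d_i$ from low count down to zero. Taking $q$ to contain a constant number of copies of each fuel state $s_i$, I process $d_1, \ldots, d_l$ in order: firing $\alpha_i$ exhausts $d_i$ without reviving $d_1, \ldots, d_{i-1}$, while creating at most a constant number of copies of the later states $d_{i+1}, \ldots, d_l$. Because the count of $d_i$ at the moment it is processed never exceeds its low-count value plus the constant accumulated from earlier stages, each stage fires $\alpha_i$ only a constant number of times; hence both the fuel $q$ and the total number of surgical steps are bounded by a function of $|\Lambda|$ alone, independent of $k$. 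The resulting $\xi$-free configuration is the desired $z_k$.

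\textbf{Claims 2 and 3.} For Claim 2 the catalyst $q'$ is built to supply the constant fuel of Claim 1 together with a reservoir component; using that $x_k$ carries every state (in particular $s_{\text{init}}$) in count at least $k$ and that every state is producible from $s_{\text{init}}$, the target configuration $q$ is manufactured from this abundant raw material while a copy of $q'$ is regenerated, after which the Claim 1 elimination removes the residual $\xi$ states. This is precisely the catalytic surgery of~\cite{doty2018stable}, and the only adaptation is that all auxiliary configurations have size independent of $k$. Claim 3 then follows by combining the first two: since $q'$ is of constant size and $x_k$ holds every state in count at least $k$, for large $k$ we may write $x_k = q' + \hat{x}_k$ and apply the Claim 2 derivation (with target $q$ equal to the Claim 1 fuel) to the disjoint copy, giving $x_k + x_k' \Rightarrow x_k + m_k' + q$ with $m_k'$ being $\xi$-free; applying Claim 1 to $x_k + q$ then produces a $\xi$-free $m_k$, and $w_k = m_k + m_k'$ is the required $\xi$-free configuration.

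\textbf{Main obstacle.} Since the transition ordering lemma has already been transported to $G_{n,p}$ in \lemmaref{lemma:ordering}, the remainder is pure clique surgery and carries over from~\cite{doty2018stable}. The point that genuinely needs care is keeping every auxiliary object — the fuel $q$, the catalyst $q'$, and the number of fired transitions — of size independent of $k$, so that they can be absorbed into the high-count configuration $x_k$ in Claim 3. This $k$-independent accounting, constant-fuel in Claim 1 and catalytic in Claim 2, is the crux of the argument.
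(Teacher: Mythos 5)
Your proposal is correct and takes essentially the same route as the paper: the paper itself does not spell out a proof of this lemma but defers to the surgery technique of Doty and Soloveichik built on the transition ordering lemma (\lemmaref{lemma:ordering}), which is exactly what you reconstruct — ordered elimination of $\xi$ with constant (i.e., $k$-independent) fuel for Claim 1, the catalytic surgery of \cite{doty2018stable} for Claim 2, and the combination of the two for Claim 3. Your $k$-independent accounting of fuel and firings is the right crux and matches how the cited argument is carried over to this setting.
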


We note that while we use only   \lemmaref{lemma:claims}.3 in the next proof, \lemmaref{lemma:claims}.3 follows from  \lemmaref{lemma:ordering}, \lemmaref{lemma:claims}.1 and \lemmaref{lemma:claims}.2; we refer to \cite{doty2018stable} for the details. 
With the above, we are now ready to prove the lower bound result.

\nsquarelowerbound*
\begin{proof}
 For the sake of contradiction, suppose that $\mathcal{P}$ stabilizes in $o(n^2)$ steps in $G \sim G_{n,p}$ with probability $1-o(1)$. By \lemmaref{lemma:bottleneck-free-transition} and \lemmaref{lemma:claims}.3, %
 there exists a configuration $w$ that is reachable on the clique in which all states from $\xi$ have zero count, where $\xi$ is some set that intersects every leader generating set.

 The only states that may have a non-zero count in $w$ are sets from $\Lambda \setminus \xi$. Since every leader generating set has a non-empty intersection with $\xi$, the set $\Lambda \setminus \xi$ cannot completely contain any leader generating set. Then, by definition of leader generating sets, a leader cannot be generated from $w$. Additionally, $w$ has zero leaders since every leader state is itself a leader generating set. Therefore, $w$ is a stable reachable configuration without any leaders, a contradiction. 
\end{proof}
 
\section{Conclusions}

In this work, we performed the first focused investigation of time-space trade-offs in the complexity of leader election on general graphs, in the population model. We provided some of the first time and space-efficient protocols for leader election, and the first time complexity bounds that are tight up to logarithmic factors. We introduced  ``graphical'' variants of classic population protocol techniques, such as information dissemination and approximate phase clocks on the upper bound side, and indistinguishability arguments and surgeries for lower bound results.

Our work leaves open the question of \emph{tight} bounds for both space and time complexity on general graph families, particularly in the case of sparse graphs. Another direction is considering other fundamental problems, such as majority, in the same setting, for which our techniques should prove  useful.

\section*{Acknowledgements}

We thank anonymous reviewers for their helpful comments. We gratefully acknowledge funding from the European Research Council (ERC) under the European Union’s Horizon 2020 research and innovation programme (grant agreement No 805223 ScaleML).

\bibliographystyle{plainnat}
\bibliography{references}

\appendix

\section{Proof of Lemma~\ref{lemma:meet-broadcast}}\label{apx:meet-broadcast}

\begin{proof}
For a function $f \colon V \to \mathbb{R}$ let $f(\bar{x})$ denote the average of $f$ over all $\deg(x)$ neighbours $w \in B(x)$ of $x$. Note that for $u \neq v$, the expected hitting times satisfies
$\hit{u,v} = \hit{\bar{u},v} + 1$.
Using the fact that the random walks are reversible, it can be shown analogously to Lemma 2 in \cite{coppersmith1993collisions} that the hitting times in the population model satisfy
\begin{equation}\label{eq:hit}
\hit{x,y} + \hit{y,z} + \hit{z,x} = \hit{x,z} + \hit{z,y} + \hit{y,x}
\end{equation}
for any $x,y,z \in V$. To see why, the probability of sampling any sequence of edges $e_1, \ldots, e_k$ that has the walk start from $x$ and end in $x$ is $1/m^k$. The number of walks $x \leadsto y \leadsto z \leadsto x$ and  $x \leadsto z \leadsto y \leadsto x$ are equal. Hence, the expected length of both types of walks are equal, which yields (\ref{eq:hit}).
The above implies that the relation $x \le y$ of the nodes defined by $x \le y \Leftrightarrow \hit{x,y} \le \hit{y,x}$ gives a preorder on the nodes. In particular, there exists a minimum value $z$ in this preorder such that $\hit{x,z} \ge \hit{z,x}$ for all $x \in V$.
Fix such a minimum node $z$. For any $x,y \in V$, define
\begin{align*}
  \Phi(x,y) &= \hit{x,y} + \hit{y,z} - \hit{z,y} \\
            &= \hit{y,x} + \hit{x,z} - \hit{z,x} = \Phi(y,x),
\end{align*}
where the second line follows from the identity~(\ref{eq:hit}).
By choice of $z$, the potential $\Phi(x,y)$ is non-negative. Moreover, the potential satisfies $\Phi(x,y) = \Phi(\bar{x},y)+1 = \Phi(x, \bar{y}) + 1$.

Now we show that $\vec M(x,y) \le \Phi(x,y)$ for all $x,y \in V$, which implies the claim of the lemma, as $\Phi(x,y) \le 2 \cdot \hit{G}$. For the sake of contradiction, suppose this is not true. Let $\alpha > 0$ be the maximum value attained by $\vec M(x,y) - \Phi(x,y)$ over $x,y \in V$. Choose $x,y$ to be the closest pair of nodes that satisfies $\vec M(x,y) = \Phi(x,y) + \alpha$. Note that $x \neq y$ and $\vec M(x',y) \le \Phi(x',y) + \alpha$ for all $x' \in V$. However, there is some neighbour $w$ of $x$ that is closer to $y$ than $x$, so $\vec M(w,y) < \Phi(w,y) + \alpha$. However, this yields a contradiction, as
\[
\vec M(x,y) = 1 + \vec M(\bar{x}, y) < 1 + \Phi(\bar{x},y) + \alpha = \Phi(x,y) + \alpha = \vec M(x,y). \qedhere
\]
\end{proof}

\section{Proof of Lemma~\ref{lemma:k-domination}}\label{apx:k-domination-lemma}

Let $\stateparam > 0$ be an integer and $K$ be the number of fair coin flips needed to observe $\stateparam$ consecutive heads. We now show the following technical lemma.

\kdominationlemma*

Define $f(k) = \Pr[K \ge k]$ for all $k \ge 0$. Clearly, $f(k) = 1$ for $0 \le k \le \stateparam$.  We will now show that for all $k \ge \stateparam$ the function $f(k)$ satisfies
\[
\left(1 - \frac{1}{2^\stateparam} \right)^k \le f(k) \le \left(1 - \frac{1}{2^{\stateparam+1}} \right)^{k-\stateparam}.
\]
This then establishes \lemmaref{lemma:k-domination}, as the lower bound equals $\Pr[Z_0 \ge k]$ and the upper bound equals $\Pr[Z_1 + \stateparam \ge k]$.

\begin{lemma}\label{lemma:k-identity}
  For all $k \ge \stateparam$, we have the identity
  \[
  f(k+1) = f(k) - \frac{f(k-\stateparam)}{2^{\stateparam+1}}.
  \]
\end{lemma}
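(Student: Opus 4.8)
The plan is to reduce the identity to a formula for the point probability $\Pr[K=k]$ and then read the recurrence off the telescoping relation
\[
\Pr[K=k] = \Pr[K \ge k] - \Pr[K \ge k+1] = f(k) - f(k+1).
\]
Thus it suffices to show that, in the relevant range of $k$,
\[
\Pr[K=k] = \frac{f(k-\stateparam)}{2^{\stateparam+1}}.
\]
Recall that $K$ is the first time a run of $\stateparam$ consecutive heads is completed in an i.i.d.\ fair coin sequence, so $f(k-\stateparam) = \Pr[K \ge k-\stateparam]$ is exactly the probability that no run of $\stateparam$ heads is completed within the first $k-\stateparam-1$ flips.

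First I would decompose the event $\{K=k\}$ by its final block. For the first run of $\stateparam$ heads to be completed exactly at step $k$ (with $k > \stateparam$), three things must happen simultaneously: (i) flips $k-\stateparam+1, \ldots, k$ are all heads, the completing run; (ii) flip $k-\stateparam$ is a tail, since otherwise a run would already have been completed at step $k-1$ or earlier; and (iii) no run of $\stateparam$ heads is completed within the first $k-\stateparam-1$ flips. The crux is to argue that, given (ii), no completed run can straddle the boundary: any run of $\stateparam$ consecutive heads finishing at some step $\le k-1$ must lie entirely among flips $1, \ldots, k-\stateparam-1$, because flip $k-\stateparam$ is a tail and flips $k-\stateparam+1, \ldots, k-1$ comprise only $\stateparam-1$ heads. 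Consequently, events (i)--(iii) involve three pairwise disjoint blocks of coin flips (of sizes $\stateparam$, $1$, and $k-\stateparam-1$, together exhausting the first $k$ flips) and are therefore independent.

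Multiplying the three probabilities then gives $\Pr[K=k] = \tfrac{1}{2^\stateparam}\cdot\tfrac12\cdot\Pr[K > k-\stateparam-1] = f(k-\stateparam)/2^{\stateparam+1}$, using $\Pr[K > k-\stateparam-1] = \Pr[K \ge k-\stateparam] = f(k-\stateparam)$. Substituting into $f(k+1) = f(k) - \Pr[K=k]$ yields the stated identity. The main obstacle is precisely the straddling argument underlying (ii)--(iii): one must verify carefully that forcing a tail at position $k-\stateparam$ cleanly separates the completing run from the earlier history, so that the only residual constraint is the absence of a completion among the first $k-\stateparam-1$ flips. I would also flag that this decomposition needs $k-\stateparam \ge 1$, so it establishes the identity for $k \ge \stateparam+1$; the boundary value $k=\stateparam$ is best checked directly (one has $f(\stateparam)=1$ and $f(\stateparam+1)=1-2^{-\stateparam}$), and these explicit values together with the identity for $k \ge \stateparam+1$ are all that the subsequent induction for the bounds on $f$ requires.
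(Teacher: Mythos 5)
Your proof is correct and is essentially the paper's own argument: decompose the event $\{K=k\}$ into a final block of $\stateparam$ heads, a single tail immediately preceding it, and a prefix of length $k-\stateparam-1$ containing no completed run, then use independence of these three disjoint blocks to get $\Pr[K=k]=f(k-\stateparam)/2^{\stateparam+1}$, and conclude via $f(k+1)=f(k)-\Pr[K=k]$. Your boundary caveat is in fact a correction to the paper, which asserts the identity for all $k\ge\stateparam$: the required tail sits at position $k-\stateparam$, so the argument only applies for $k\ge\stateparam+1$, and at $k=\stateparam$ the identity genuinely fails, since $\Pr[K=\stateparam]=2^{-\stateparam}$ whereas the formula would give $2^{-\stateparam-1}$. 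As you observe, this does not break the subsequent inductions bounding $f$ (they only need the recurrence for $k\ge\stateparam+1$ together with the explicit values near the boundary, with base-case constants adjusted accordingly), but your restricted statement is the accurate one.
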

\begin{proof}
  Recall that $K$ is the length of the shortest prefix of the infinite sequence $(X_i)_{i \ge 1}$ of i.i.d. random variables that contains $\stateparam$ consecutive ones, where $X_i$ denotes if the nodes was a initiator on its $i$th interaction and $\Pr[X_i = 1] = 1/2$ for all $i \ge 1$.  Note the function $f \colon \mathbb{N} \to [0,1]$ is non-increasing. Suppose that $K=k$, then the last $\stateparam$ coin flips must have been heads and the $(\stateparam+1)$th coin flip from end must have been zero. Moreover, the subsequence $X_1, \ldots, X_{k-\stateparam-1}$ cannot contain $\stateparam$ consecutive ones. Therefore, for $k \ge \stateparam$, we have the identity
\[
  \Pr[K = k] = \frac{1}{2^{\stateparam+1}} \cdot \Pr[ K \ge k - \stateparam] = \frac{f(k-\stateparam)}{2^{\stateparam+1}}.
\]
For all $k \ge \stateparam$, this gives the recurrence relation
\[
f(k+1) = f(k) - \frac{f(k-\stateparam)}{2^{\stateparam+1}}. \qedhere
\]
\end{proof}

\begin{lemma}
  For all $k \ge 0$, the function $f(k)$ satisfies
  \[
\left(1 - \frac{1}{2^\stateparam} \right)^k \le f(k).
  \]
\end{lemma}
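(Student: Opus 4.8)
The plan is to prove the stronger \emph{multiplicative} statement $f(k+1) \ge r\, f(k)$ for every $k \ge 0$, where $r = 1 - 2^{-\stateparam}$; chaining this inequality from $f(0)=1$ gives $f(k) \ge r^k f(0) = r^k$, which is exactly the claimed bound. I would establish the multiplicative statement by strong induction on $k$, using the recurrence $f(k+1) = f(k) - f(k-\stateparam)/2^{\stateparam+1}$ from \lemmaref{lemma:k-identity} in the inductive step.

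For the base cases $0 \le k \le \stateparam-1$, both indices $k$ and $k+1$ are at most $\stateparam$, so $f(k)=f(k+1)=1$ and hence $f(k+1)=1 \ge r = r\,f(k)$. For the inductive step with $k \ge \stateparam$, I want $f(k+1) \ge r\,f(k)$. Substituting the recurrence and rearranging, this is equivalent to the single inequality $f(k-\stateparam) \le 2 f(k)$. To obtain it, I apply the induction hypothesis $f(j+1)\ge r\,f(j)$ to the $\stateparam$ consecutive indices $j = k-\stateparam,\dots,k-1$ (all smaller than $k$ and nonnegative since $k\ge\stateparam$), which telescopes to $f(k) \ge r^{\stateparam} f(k-\stateparam)$.

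The crux is then the purely numerical fact $r^{\stateparam} = (1-2^{-\stateparam})^{\stateparam} \ge 1/2$ for every integer $\stateparam \ge 1$, since it converts $f(k) \ge r^{\stateparam} f(k-\stateparam)$ into $f(k-\stateparam)\le 2 f(k)$. I would prove it from Bernoulli's inequality $(1-2^{-\stateparam})^{\stateparam} \ge 1 - \stateparam 2^{-\stateparam}$ together with the elementary bound $\stateparam \le 2^{\stateparam-1}$ (a one-line induction), which gives $\stateparam 2^{-\stateparam} \le 1/2$ and hence $r^{\stateparam}\ge 1/2$. Feeding $f(k-\stateparam)\le 2 f(k)$ back into the recurrence yields $f(k+1) \ge f(k) - 2^{-\stateparam-1}\cdot 2 f(k) = (1-2^{-\stateparam}) f(k) = r\,f(k)$, closing the induction.

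I expect the main obstacle to be the \emph{negative} sign of the $f(k-\stateparam)$ term in the recurrence: the naive induction on $f(k)\ge r^k$ would need an \emph{upper} bound on $f(k-\stateparam)$, which the hypothesis $f(k-\stateparam)\ge r^{k-\stateparam}$ does not provide. Strengthening the invariant to the multiplicative form $f(k+1)\ge r\,f(k)$ is precisely what lets the hypothesis control $f(k-\stateparam)$ from above, and the near-tightness of $r^{\stateparam}\ge 1/2$ (equality at $\stateparam=1$) shows that no cruder bound on $f(k-\stateparam)$ would suffice. As a sanity check, there is a recurrence-free alternative: writing $f(k)$ as the probability that none of the $k-\stateparam$ length-$\stateparam$ all-heads windows among the first $k-1$ flips occurs, the events ``window $i$ is not all heads'' are decreasing and hence positively associated by Harris's inequality, giving $f(k) \ge (1-2^{-\stateparam})^{k-\stateparam} \ge r^k$ at once; I nonetheless prefer the recurrence-based argument above, as it keeps the proof self-contained and consistent with the companion upper bound.
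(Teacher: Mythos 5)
Your proof is correct, and it reorganizes the paper's argument in a genuinely cleaner way while using the same core ingredients. Both proofs rest on the recurrence of \lemmaref{lemma:k-identity}, the pivotal window inequality $f(k-\stateparam) \le 2 f(k)$, and the numerical fact $\stateparam \le 2^{\stateparam-1}$. The difference is in the inductive invariant: the paper first runs a \emph{separate} induction to establish $f(k) > f(k-\stateparam)/2$ for all $k \ge \stateparam$ --- in the inductive step it unfolds the recurrence $\stateparam$ times to write $f(k) = f(k-\stateparam) - 2^{-\stateparam-1}\sum_{i=1}^{\stateparam} f(k-\stateparam-i)$, bounds each summand by $f(k-2\stateparam)$ using monotonicity of $f$, and then applies the hypothesis at index $k-2\stateparam$ --- and only afterwards converts this into the per-step bound $f(k+1) \ge (1-2^{-\stateparam})f(k)$ and chains. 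You instead take the per-step multiplicative bound itself as the strong-induction invariant, recover the window inequality by telescoping the hypothesis over the $\stateparam$ indices $k-\stateparam,\dots,k-1$ to get $f(k) \ge (1-2^{-\stateparam})^{\stateparam} f(k-\stateparam)$, and close with the Bernoulli estimate $(1-2^{-\stateparam})^{\stateparam} \ge 1 - \stateparam 2^{-\stateparam} \ge 1/2$. This buys a single self-contained induction with no appeal to monotonicity of $f$ and no $\stateparam$-fold unfolding of the recurrence; what the paper's two-stage version buys is that its window invariant mirrors the structure of the companion upper-bound proof, so the two halves of \lemmaref{lemma:k-domination} read in parallel. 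Your Harris/FKG aside is also a valid, genuinely recurrence-free alternative (decreasing events over the $k-\stateparam$ windows are positively associated, giving the even slightly stronger $(1-2^{-\stateparam})^{k-\stateparam}$ directly), though as you note it imports a tool the paper avoids.
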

\begin{proof}
  We now show by induction that $f(k) > f(k-\stateparam) / 2$ for $k \ge \stateparam$. As the base case, note that for $\stateparam \le k \le 2\stateparam$  \lemmaref{lemma:k-identity} gives
\[
f(k) = f(\stateparam) - \frac{k-\stateparam}{2^{\stateparam+1}} \ge 1 - \frac{\stateparam}{2^{\stateparam+1}} > f(k-\stateparam)/2,
\]
since $f(i) = 1$ for all $i \le \stateparam$. As the induction hypothesis, suppose $f(k') > f(k'-\stateparam)/2$ holds for all $k \ge k' \ge \stateparam$. The inductive step now follows by observing that for $k > 2\stateparam$ we have

\begin{align*}
f(k) &= f(k - 1) - \frac{1}{2^{\stateparam+1}} f(k-\stateparam-1)\\
      &= f(k - 2) -\frac{1}{2^{\stateparam+1}} f(k-\stateparam-1) - \frac{1}{2^{\stateparam+1}} f(k-\stateparam-2) \\
      &\ldots\\
      &= f(k-\stateparam) - \frac{1}{2^{\stateparam+1}} f(k-\stateparam-1) - \frac{1}{2^{\stateparam+1}} f(k-\stateparam-2)  - \ldots - \frac{1}{2^{\stateparam+1}} f(k-2\stateparam)\\
      &\ge f(k-\stateparam) - \frac{\stateparam}{2^{\stateparam+1}} f(k - 2\stateparam) \\
      &> f(k - \stateparam) - \frac{\stateparam}{2^\stateparam} f(k - \stateparam) \\ 
      &\ge f(k - \stateparam) - \frac{1}{2}f(k - \stateparam) = \frac{1}{2} f(k-\stateparam),
\end{align*}
where the last inequality is given by the induction hypothesis. This completes the induction. Since $f$ is non-increasing, we get the that for any $k > \stateparam$
\[
f(k+1) = f(k) - \frac{f(k-\stateparam)}{2^{\stateparam+1}} \ge f(k) \cdot \left(1 - \frac{1}{2^\stateparam} \right).
\]
Moreover, note that above holds for $k \le \stateparam$ as well since $f(k) = 1$ for any $k \le \stateparam$. Thus,
\[
f(k) \ge \left(1 - \frac{1}{2^\stateparam} \right)^{k}. \qedhere
\]
\end{proof}

\begin{lemma}
For all $0 \le k \le \stateparam$, we have $f(k)=1$ and for all $k \ge \stateparam$, the function $f(k)$ satisfies
\[
f(k) \le \left(1 - \frac{1}{2^{\stateparam+1}} \right)^{k-\stateparam}.
  \]
\end{lemma}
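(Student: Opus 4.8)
The plan is to dispatch the two parts separately, with the first being essentially definitional and the second a one-line induction off the recurrence already established in \lemmaref{lemma:k-identity}.

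For the claim that $f(k)=1$ for all $0 \le k \le \stateparam$, I would simply observe that $K$ counts the coin flips needed to witness $\stateparam$ consecutive heads, so deterministically $K \ge \stateparam$ (one cannot see $\stateparam$ heads in fewer than $\stateparam$ flips). Hence $f(k) = \Pr[K \ge k] = 1$ whenever $k \le \stateparam$, which also supplies the base case $f(\stateparam) = 1$ for the upper bound.

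For the upper bound, the key is to combine monotonicity of $f$ with the identity $f(k+1) = f(k) - f(k-\stateparam)/2^{\stateparam+1}$ valid for $k \ge \stateparam$. Since $f$ is non-increasing and $k - \stateparam \le k$, we have $f(k-\stateparam) \ge f(k)$, so
\[
f(k+1) = f(k) - \frac{f(k-\stateparam)}{2^{\stateparam+1}} \le f(k)\left(1 - \frac{1}{2^{\stateparam+1}}\right)
\]
for every $k \ge \stateparam$. Iterating this contraction from the base value $f(\stateparam)=1$ and writing $k = \stateparam + j$ yields
\[
f(\stateparam + j) \le \left(1 - \frac{1}{2^{\stateparam+1}}\right)^{j},
\]
which is exactly the desired bound $f(k) \le (1 - 1/2^{\stateparam+1})^{k-\stateparam}$ after substituting $j = k - \stateparam$. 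A clean way to present this is a short induction on $j \ge 0$: the base $j=0$ is $f(\stateparam)=1$, and the inductive step applies the displayed contraction.

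There is no genuine obstacle here; the only point requiring care is to confirm that the recurrence from \lemmaref{lemma:k-identity} is indeed available for all $k \ge \stateparam$ (so that the induction never steps outside its range of validity) and that monotonicity of $f$ is used in the correct direction, namely $f(k-\stateparam) \ge f(k)$ rather than the reverse. Notably, this upper-bound argument is strictly simpler than the matching lower bound, which needed the auxiliary estimate $f(k) > f(k-\stateparam)/2$; here monotonicity alone suffices because we are bounding $f$ from above. Together with the already-proved lower bound, this completes the sandwich $Z_0 \preceq K \preceq Z_1 + \stateparam$ asserted in \lemmaref{lemma:k-domination}.
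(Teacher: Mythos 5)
Your proof is correct, and for the upper bound it takes a genuinely different and simpler route than the paper's. The paper unrolls the recurrence of \lemmaref{lemma:k-identity} a full $\stateparam$ steps, writing $f(k+1) = f(k+1-\stateparam) - 2^{-(\stateparam+1)}\sum_{i=1}^{\stateparam} f(k+1-\stateparam-i)$, bounds the sum by monotonicity, invokes the Bernoulli inequality to pass from $1 - \stateparam/2^{\stateparam+1}$ to $\left(1-1/2^{\stateparam+1}\right)^{\stateparam}$, and then runs an induction that steps back by $\stateparam$ at a time, which forces a separate base case covering the whole window $\stateparam \le k \le 2\stateparam$ (itself handled via Bernoulli). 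You instead apply monotonicity directly to the single-step recurrence: since $f(k-\stateparam) \ge f(k)$, you get $f(k+1) \le f(k)\left(1 - 1/2^{\stateparam+1}\right)$ for every $k \ge \stateparam$, and a one-step induction from $f(\stateparam)=1$ finishes; the first claim $f(k)=1$ for $k \le \stateparam$ is, as you say, just the deterministic fact $K \ge \stateparam$. Your remark about the asymmetry is also exactly right: for the upper bound monotonicity points in the favourable direction, so no auxiliary estimate is needed, whereas the paper's matching lower bound genuinely requires the extra estimate $f(k) > f(k-\stateparam)/2$ because there $f(k-\stateparam)$ must be bounded from above, which monotonicity cannot do. Your argument yields the identical bound with less machinery (no Bernoulli, no windowed base case); the paper's heavier unrolling buys nothing here beyond structural parallelism with its lower-bound proof.
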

\begin{proof}
The proof of the upper bound proceeds in a similar fashion. We prove by induction that
\[
f(k) \le \left(1 - \frac{1}{2^{\stateparam+1}}\right)^{k-\stateparam}
\]
for all $k \ge \stateparam$. For the base case of the induction, note that for $\stateparam \le k \le 2\stateparam$ we have from \lemmaref{lemma:k-identity} and the fact that $f(i)=1$ for $0 \le i \le \stateparam$ that
\begin{align*}
  f(k) = f(\stateparam) - \frac{k-\stateparam}{2^{\stateparam+1}} = 1 - \frac{k-\stateparam}{2^{\stateparam+1}} \le  \left(1 - \frac{1}{2^{\stateparam+1}}\right)^{k-\stateparam},
\end{align*}
where the inequality follows from the Bernoulli inequality $1+rx \le (1+x)^r$ for $x \ge -1$ and $r \ge 0$. For the inductive step, we get that
\begin{align*}
  f(k+1) &= f(k + 1 - \stateparam) - \frac{1}{2^{\stateparam+1}} \cdot \sum_{i=1}^\stateparam f(k + 1 - \stateparam - i) \\
  &\le f(k+1-\stateparam) - \frac{\stateparam}{2^{\stateparam+1}} \cdot f(k+1-\stateparam) \\
  &\le f(k+1-\stateparam) \cdot \left(1-\frac{\stateparam}{2^{\stateparam+1}}\right) \\
  &\le f(k+1-\stateparam) \cdot \left(1-\frac{1}{2^{\stateparam+1}}\right)^\stateparam \\
  &\le  \left(1-\frac{1}{2^{\stateparam+1}}\right)^{k + 1 - 2\stateparam} \cdot \left(1-\frac{1}{2^{\stateparam+1}}\right)^\stateparam \\
  &= \left(1-\frac{1}{2^{\stateparam+1}}\right)^{k + 1 - \stateparam}
\end{align*}
by using the Bernoulli inequality and the induction hypothesis.
\end{proof}

\kdominationlemma*
\begin{proof}
  The claim follows using the definition of $f(k) = \Pr[K \ge k]$ and stochastic domination, since
  \begin{align*}
\Pr[Z_0 \ge k] &= \left(1 - \frac{1}{2^\stateparam} \right)^k \le f(k) = \Pr[K \ge k]  \\
&\le \left(1 - \frac{1}{2^{\stateparam+1}} \right)^{k-\stateparam} 
= \Pr[Z_1 + \stateparam \ge k]. \qedhere
\end{align*}
\end{proof}

\end{document}